%% file: main.tex
\documentclass[11pt]{article}

\usepackage{agmacros}
\usepackage{andrews_bib}
\DeclareMathOperator{\initialRoot}{init}
\DeclareMathOperator{\scaledSystem}{scaled}
\DeclareMathOperator{\powerseriesRoot}{root}
\DeclareMathOperator{\composedResultant}{composed}

\title{Hilbert's Nullstellensatz is in the Counting Hierarchy}
\author{Robert Andrews\thanks{Cheriton School of Computer Science, University of Waterloo. Part of this work was supported by the Simons Institute for the Theory of Computing, and was conducted when the author was visiting the Institute. Email: randrews@uwaterloo.ca.} \and Abhibhav Garg\thanks{Cheriton School of Computer Science, University of Waterloo. Email: abhibhav.garg@uwaterloo.ca.} \and \'{E}ric Schost\thanks{Cheriton School of Computer Science, University of Waterloo. Email: eschost@uwaterloo.ca.}}
\date{}

\begin{document}

\pagenumbering{roman}

\maketitle

\begin{abstract}
  We show that Hilbert's Nullstellensatz, the problem of deciding if a system of multivariate polynomial equations has a solution in the algebraic closure of the underlying field, lies in the counting hierarchy.
  More generally, we show that the number of solutions to a system of equations can be computed in polynomial time with oracle access to the counting hierarchy.
  Our results hold in particular for polynomials with coefficients in either the rational numbers or a finite field.
  Previously, the best-known bounds on the complexities of these problems were $\PSPACE$ and $\FPSPACE$, respectively.
  Our main technical contribution is the construction of a uniform family of constant-depth arithmetic circuits that compute the multivariate resultant.
\end{abstract}

\tableofcontents

\newpage

\listoftodos
\newpage

\pagenumbering{arabic}

\section{Introduction} \label{section: introduction}
\input{sections/introduction}

\section{Preliminaries} \label{section: preliminaries}
\input{sections/preliminaries}

\section{Uniformity of basic operations on arithmetic circuits} \label{section: arithmetic circuits}
\input{sections/uniformalgebraic}

\section{The multivariate resultant} \label{section: resultant basics}
\input{sections/resultantbasics}

\section{Constant-depth circuits for the resultant} \label{section: resultant constant depth}
\input{sections/resultant}

\section{Deciding the Nullstellensatz in the counting hierarchy} \label{section: counting hierarchy}
\input{sections/circuits_to_counting}

\section{Applications of the Nullstellensatz} \label{section: applications}
\input{sections/applications}

\printbibliography

\end{document}

%% file: sections/introduction.tex
\subsection{Background}

Polynomial equations are ubiquitous throughout mathematics and the sciences.
They capture useful geometric and physical relationships, such as the Pythagorean theorem or the altitude of a projectile over time, and so have been the subject of investigation for centuries.
The quadratic formula, Gaussian elimination, and the Euclidean algorithm, all important tools for solving equations, were known to ancient civilizations and are among the oldest algorithms to survive to the present day.
Renaissance mathematicians later discovered formulas for the solution of cubic and quartic equations, and a great deal of effort was expended on the search for the quintic formula before Abel proved no such formula exists.
The parallel development of numerical techniques, such as Newton's method, provided a means of solving high-degree equations even in the absence of an explicit formula for their solution, and they have since become an essential part of modern computational science.

Interest in solving polynomial equations has endured over time, and as computer science developed, mathematicians were naturally led to questions about the computability and complexity of solving equations.
Already at the turn of the 20\ts{th} century, predating the computer age, \textcite{Hilbert02} posed a list of twenty-three problems that might guide mathematical progress.
Solving systems of equations was the subject of his tenth problem.
In modern terms, Hilbert asked if there is an algorithm that takes as input polynomials $f_1, \ldots, f_m \in \bZ\bs{x_1, \ldots, x_n}$ and decides if the system of equations $f_1 = \cdots = f_m = 0$ has an integer solution.
\textcite{Mat70}, building on work of \textcite{DPR61}, showed that this problem is undecidable.
However, if we seek solutions in other domains, such as the real or complex numbers, this problem becomes decidable and has a different story to tell.

Over the complex numbers---and more generally, an algebraically closed field---the starting point for the solution of polynomial equations is Hilbert's Nullstellensatz.
Given polynomials $f_1, \ldots, f_m \in \bC\bs{x_1, \ldots, x_n}$, the Nullstellensatz says that the system of equations $f_1 = \cdots = f_m = 0$ has no solution in $\bC^n$ if and only if there are polynomials $g_1, \ldots, g_m \in \bC\bs{x_1, \ldots, x_n}$ such that
\[
  f_{1}\br{\vx} g_{1}\br{\vx} + f_{2}\br{\vx} g_{2}\br{\vx} + \cdots + f_{m}\br{\vx} g_{m}\br{\vx} = 1.
\]
Thus, we can decide if the system $f_1 = \cdots = f_m = 0$ has a solution by instead searching for polynomials $g_1, \ldots, g_m$ that satisfy $\sum_{i=1}^m f_i g_i = 1$.
Because of the key role the Nullstellensatz plays in deciding the solvability of systems of equations, the computational problem of deciding whether a system of polynomial equations has a solution is likewise referred to as Hilbert's Nullstellensatz, which we abbreviate as $\HN$.

The Nullstellensatz reduces our original nonlinear problem to a linear one: inspecting the coefficient of the monomial $\vec{x}^{\vec{a}}$ on both sides of the equation $\sum_{i=1}^m f_i g_i = 1$ yields a linear equation in the unknown coefficients of the polynomials $g_1, \ldots, g_m$.
To turn this into an algorithm, we need to bound the degrees of the polynomials $g_1, \ldots, g_m$, should they exist.
\textcite{Hermann1926} was the first to do so.
She showed that if the $f_i$ have degree bounded by $d$ and do not have a common solution, then one can find $g_i$ of degree at most $(2 d)^{n 2^n}$.\footnote{The proofs in Hermann's original paper are incomplete, and the gaps were later filled by \textcite{seidenberg74}.}
This degree bound reduces the task of deciding the solvability of $f_1 = \cdots = f_m = 0$ to solving a system of linear equations of double-exponential size, a task that can be carried out in double-exponential time.
Since then, numerous works have proved stronger bounds on the degrees (and heights) of the $g_i$, leading to bounds of the form $\deg(g_i) \le d^n$ \cite{Brownawell87,CGH88,Kollar88,FG90,Sombra99,KPS01,Jelonek05}.
Naturally, these single-exponential degree bounds imply that $\HN$ can be decided in exponential time, rather than double-exponential time, in the size of the input.

Although degree bounds of the form $\deg(g_i) \le d^n$ are essentially tight for the Nullstellensatz, there is more to say about its complexity, as linear equations can be solved in a surprisingly efficient manner.
In 1976, \textcite{Csanky76} designed an ingenious parallel algorithm to compute the determinant of an $n \times n$ matrix in $O(\log^2 n)$ time.
The ensuing decade saw the design of parallel algorithms for many problems of linear algebra, including the algorithm of \textcite{BGH82} that solves linear systems in $O(\log^2 n)$ parallel time.
Combined with an observation due to \textcite{Borodin1977} that parallel algorithms can be simulated in small space, this leads to an algorithm that solves linear systems using only $O(\log^2 n)$ space.\footnote{The algorithms of \textcite{Csanky76,BGH82} were stated as arithmetic algorithms, without regard to bit complexity, but they can likewise be implemented in $O(\log^2 n)$ parallel time when accounting for bit complexity.}
Scaled up to the exponentially-large systems of equations appearing in the Nullstellensatz (where the matrices have size $d^{\poly\br{n}}$), this results in a $\PSPACE$ algorithm to decide $\HN$ over an algebraically closed field, which is the state of the art for this problem.

Much less is known about the computational hardness of solving systems of equations.
By arithmetizing $\SAT$, it is easy to see that $\HN$ is $\NP$-hard and that counting the number of solutions to a system of equations (if this number is guaranteed to be finite) is $\#\P$-hard.
These are the strongest lower bounds known for the problem, and for good reason.
\textcite{koiran96} showed that, assuming the Generalized Riemann Hypothesis, $\HN$ over the complex numbers is in the complexity class $\AM$.
Under plausible hardness assumptions, \textcite{MV05} showed that $\AM = \NP$, so Koiran's work establishes that $\HN$ is conditionally $\NP$-complete over $\bC$.
In light of this, any lower bound stronger than $\NP$-hardness for $\HN$ would imply either a surprising collapse of complexity classes, or that one of the two conjectures underlying Koiran's $\NP$ algorithm is false.

Over other fields, the complexity of $\HN$ is less well understood.
The aforementioned degree bounds are valid over any field, so the Nullstellensatz can be decided in $\PSPACE$ over any algebraically closed field, as long as the inputs lie in a subfield where arithmetic can be performed efficiently.
Koiran's algorithm was recently extended by \textcite{MBNSW25} to decide the Nullstellensatz over $\overline{\bC(y_1, \ldots, y_k)}$ in $\AM$, again assuming the Generalized Riemann Hypothesis.
The situation is murkier over fields of positive characteristic.
For example, over finite fields, we only know that $\HN$ is $\NP$-hard and can be decided in $\PSPACE$.
No stronger lower bounds, nor better algorithms, are known, even conditionally.

Hilbert's Nullstellensatz, in addition to being a problem of intrinsic interest, also plays a central role in algebraic complexity, similar to the one occupied by boolean satisfiability in classical complexity theory.
Over a commutative ring $R$, the problem $\HN_R$---that is, deciding whether a system of polynomial equations over $R$ has a solution in $R$---is complete for $\NP_R$, the analogue of $\NP$ in the Blum--Shub--Smale model of computation \cite{BSS89,BCSS98}.
The counting variant of $\HN_R$, denoted by $\#\HN_R$, where one must count the number of solutions to a system of equations over a ring $R$, is similarly important in the theory of counting problems.
\textcite{BC06} defined the complexity classes $\#\P_\bR$ and $\#\P_\bC$, the counting versions of $\NP_\bR$ and $\NP_\bC$, respectively, and showed that $\#\HN_\bR$ and $\#\HN_\bC$ are complete for their respective classes.
The Nullstellensatz is also a useful algorithmic primitive in computational algebraic geometry: many problems of interest, such as computing the dimension of an algebraic variety over an algebraically closed field \cite{koiran1997randomized}, use the Nullstellensatz as an essential subroutine.

\subsection{Our results}

We show that a variety of problems related to solving systems of polynomial equations, chief among them the Nullstellensatz, can be decided in $\CH$, the counting hierarchy.
We work in the Turing machine model, representing polynomials by the binary encoding of their coefficients.
Throughout, we consider systems of equations that have coefficients in the rational numbers, a number field, a finite field, or a polynomial ring over such a field, all of which can be efficiently encoded in binary, and are interested in their solvability over the algebraic closure of the coefficient field.

We use the dense representation of polynomials, where all monomials and their coefficients are listed, including monomials with a coefficient of zero.
One could also consider more expressive encodings of a multivariate polynomial, either writing only the nonzero monomials of the polynomial (the \emph{sparse representation}) or describing an arithmetic circuit that computes it (the \emph{straight-line program representation}).
By adding polynomially-many extension variables, both the sparse and straight-line program representations can be converted to the dense representation.
This conversion preserves both the satisfiability and number of solutions (if this number is finite) of the original system of equations.
\roberttodo[inline]{I promoted the footnote on representations of polynomials to a paragraph on its own.}

Our main result is that Hilbert's Nullstellensatz---that is, whether a system of equations over a field $\bF$ has a solution in the algebraic closure $\overline{\bF}$---can be decided in $\CH$.

\begin{theorem}[see \cref{theorem: nullstellensatz in ch}] \label{theorem: nullstellensatz intro}
  Let $\bF$ be one of the fields $\bQ$, $\bQ(y_1,\ldots,y_k)$, a number field $\bK$, $\bK(y_1, \ldots, y_k)$, the finite field $\bF_q$, or $\bF_q(y_1,\ldots,y_k)$.
  Then Hilbert's Nullstellensatz over $\bF$ can be decided in $\CH$.
\end{theorem}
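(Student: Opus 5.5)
The plan follows the route the abstract suggests. First, reduce $\HN$ to the vanishing of a multivariate resultant, or of a family of resultants. Next, compute the resultant by uniform constant-depth arithmetic circuits. Finally, evaluate those circuits in the counting hierarchy.

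\emph{Reduction to resultants.} Given $f_1,\ldots,f_m \in \bF[x_1,\ldots,x_n]$ of degree at most $d$, homogenize with a new variable $x_0$. The affine system has a solution over $\overline{\bF}$ exactly when the projective system has a zero off the hyperplane $x_0=0$. The standard way to reach a square system is to take $n+1$ generic linear combinations $g_i=\sum_j \lambda_{ij} f_j$. The projective variety $V(f)$ is nonempty iff $V(g_1,\ldots,g_n)\cap V(\ell)$ is nonempty for a generic linear form $\ell$, with the caveat that the points at infinity must be handled.

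To handle infinity, I would use a u-resultant or generalized-characteristic-polynomial style construction. Perturb to $g_i+t\,x_i^{d}$ and consider $\operatorname{Res}(g_0+t x_0^d,\ldots,g_n+t x_n^d)$ with $g_0=u\cdot x$. The lowest-order coefficient in $t$ is a polynomial in $u$. It is nonzero exactly when the affine system has a point, detected through the factor $u_0$ and the linear forms in $x_0$.

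To avoid genericity failures over finite fields, take the $\lambda_{ij}$ as fresh indeterminates, or from a polynomial-size extension field. The resulting test is whether a polynomial, expressed as a coefficient of a resultant in a few auxiliary variables, is identically zero.

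\emph{Constant-depth circuits.} The main technical step, and the one I expect to be the obstacle, is to show that the multivariate resultant of $n+1$ homogeneous polynomials of degree at most $d$ is computed by a uniform family of constant-depth arithmetic circuits of size $d^{\poly(n)}$. Here $d^{O(n)}$ is the Macaulay matrix dimension, so this size is exponential in the input, but the circuit is \textsc{dlogtime}-uniform relative to that size.

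Macaulay's formula writes the resultant as a quotient of two determinants, but it needs nonsingular minors, and determinants are not known to be in constant depth. I would instead use the Poisson product formula:
\[
  \operatorname{Res} = \operatorname{Res}(\bar f_1,\ldots,\bar f_n)^{\deg f_0}\cdot\prod_{\xi} f_0(\xi).
\]
The product ranges over the affine roots of the dehomogenized $f_1,\ldots,f_n$. Writing those roots as power series, via a suitable start system such as $x_i^{d_i}-1$ deformed along a homotopy parameter, makes the roots truncated power series computable by Newton iteration. Power-series inversion, iterated products, and interpolation all lie in uniform constant depth, using known results on iterated multiplication and division in uniform $\mathsf{TC}^0$ scaled up. So I would express the resultant recursively through these products in the homotopy parameter, then recover the polynomial coefficients by interpolation. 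The section titles (\texttt{init}, \texttt{scaled}, \texttt{root}, \texttt{composed}) suggest this is exactly the route.

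\emph{From circuits to $\CH$.} A uniform constant-depth arithmetic circuit of exponential size, over $\bQ$ or $\bF_q$ (with $y$-variables handled by evaluation at many points or treated as coefficients), can be evaluated bit-wise in $\CH$. The argument scales the fact that uniform $\mathsf{TC}^0$ computes iterated sums and products of integers. I would use CRT or Chinese-remainder representations modulo small primes: each layer of unbounded sum or product corresponds to a constant number of $\PP$-oracle levels. Zero-testing the final coefficient then places $\HN$ in $\CH$. Number fields and function fields are handled by representing elements as polynomials modulo the defining polynomial, or as coefficient vectors in the $y_i$, which adds only constant depth.
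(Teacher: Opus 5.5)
The gap is in the step you flag as the obstacle: Newton iteration (or Hensel lifting) does not give constant depth. You need the power-series roots modulo $t^{P+1}$, where $P=\sum_i\prod_{j\neq i}d_j$ is the $t$-degree of the homotopy resultant. Quadratic Newton takes $\Theta(\log P)=\Theta(n\log d)$ rounds for this (linear lifting takes $P$), and each round substitutes the previous approximation into the system and its Jacobian. These rounds compose, so the depth grows like $n\log d$, even though each round uses constant-depth primitives such as power-series inversion, iterated products and interpolation. Every layer costs a level of majority quantifiers in your simulation, so this only recovers the known $\PSPACE$ bound. The missing idea is a closed-form, non-iterative formula for the coefficients of the implicit function. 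The paper first shifts and rescales, $\tilde h_c=\mathcal{J}_c^{-1}h(x+r_c)$, so that the wanted root has zero constant term and the Jacobian at the origin is the identity. It then applies the explicit implicit function theorem of Aizenberg--Yuzhakov (a multivariate Lagrange inversion): $\tilde\rho_{c,i,N}=\sum_{|b|\le 2N}(-1)^{|b|}\,\mathrm{coeff}\bigl(x_i\prod_k(\tilde h_{c,k}-x_k)^{b_k}\Delta_c,\;t^Nx^b\bigr)$, with $\Delta_c$ the Jacobian determinant. This is one layer of sums of products of size $P^{\poly(n)}$, followed by coefficient extraction through uniform interpolation, so it has constant depth.

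Three further points also need repair.

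First, the start system must survive the recursion. The start system for $\res(\overline H_0,\dots,\overline H_{n-1})$ is the restriction of the top-level one to $x_n=0$. Your $x_i^{d_i}-x_n^{d_i}$ restricts to $x_i^{d_i}$, whose dehomogenization has a single root of multiplicity $d_0\cdots d_{n-2}$, so no power-series roots exist at the next level. Its roots of unity are also unavailable to constant-free circuits over $\bQ$. The paper instead uses $G_{i,j}=\prod_{k\in A_i}L_{j,k}$ with $L_{j,m}=x_0+mx_1+\dots+m^jx_j$. This restricts compatibly, $L_{j,m}|_{x_j=0}=L_{j-1,m}$, and has simple rational roots indexed by $c$. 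Its Jacobians are diagonal times Vandermonde, and all of this is produced from $c$ by constant-free uniform circuits.

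Second, your affine reduction never tests the remaining equations. A $u$-resultant of $u\cdot x$ with $n$ random combinations $g_1,\dots,g_n$ only sees affine points of $V(g_1,\dots,g_n)$. That set is typically nonempty even when $V(f)=\emptyset$ and $m>n$, and the limit set is only guaranteed to contain the isolated points. Ierardi's reduction, used in the paper, adds $\sum_k\gamma_kx_0^{d-\deg f_k}F_k$ to the $u$-form. For each $j$ it takes $j$ random combinations and $n-j$ random affine hyperplanes, so that a top-dimensional component of $V(f)$ is cut down to isolated points. The test is then whether $(\Tt_w\Tt_t\res)(0)=0$ for some $j$.

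Third, over $\bF_q$ you cannot run the $\bQ$-circuit directly, since it divides by factorials and interpolates at $1,\dots,N$ with $N$ far beyond $p$. Instead compute over $\bZ$ and reduce modulo $p$, using that the resultant over any field is the image of the integer resultant.
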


Before proceeding to our other results, we pause briefly to recall the counting hierarchy.
The counting hierarchy $\CH$ was first defined by \textcite{Wagner86} to characterize the complexity of combinatorial counting problems.
The $k$\ts{th} level of $\CH$, denoted $\C_k\P$, is defined inductively as
\begin{align*}
  \C_0\P &\coloneqq \P \\
  \C_1\P &\coloneqq \PP \\
  \C_k\P &\coloneqq \PP^{\C_{k-1}}.
\end{align*}
The counting hierarchy is the union of these classes, i.e., $\CH \coloneqq \bigcup_{k \ge 0} \C_k\P$.
Alternatively, one can define $\CH$ as the class of languages that are decidable by polynomial-time Turing machines with a constant number of polynomially-bounded majority quantifiers, analogous to how $\PH$ captures languages that are decidable in polynomial time by Turing machines that may use a constant number of polynomially-bounded existential and universal quantifiers.

Where does the counting hierarchy sit in the landscape of complexity classes?
Toda's result that $\PH \subseteq \P^{\PP}$ \cite{Toda91} shows that the polynomial-time hierarchy is contained in the second level of $\CH$, so we have the inclusion $\PH \subseteq \CH$.
It is also easy to see that $\CH \subseteq \PSPACE$ as a consequence of $\PP \subseteq \PSPACE$.
As the name suggests, the counting hierarchy captures the complexity of counting: as a consequence of the definition, if $\FP = \#\P$ (i.e., if the permanent of a matrix can be computed in polynomial time), then we have the collapse $\P = \CH$.
For more details on the counting hierarchy, we refer the reader to \textcite{AW90,Toran91}.

Returning to our results, Hilbert's Nullstellensatz shows that a system of polynomial equations $f_1, \ldots, f_m \in \bF[x_1, \ldots, x_n]$ is unsatisfiable if and only if the constant polynomial $1$ is in $\radideal{f_1, \ldots, f_m}$, the radical of the ideal generated by $f_1, \ldots, f_m$.
More generally, the Nullstellensatz says that a polynomial $h \in \bF[x_1, \ldots, x_n]$ vanishes on the common zeroes of $f_1, \ldots, f_m$, called the \emph{variety} of $f_1, \ldots, f_m$ and denoted by $\var{f_1, \ldots, f_m}$, if and only if $h \in \radideal{f_1, \ldots, f_m}$.
The problem of deciding if $h \in \radideal{f_1, \ldots, f_m}$, known as the \emph{radical ideal membership problem}, is another basic problem in computational algebraic geometry.
The previously-mentioned degree bounds for the Nullstellensatz also apply to the radical ideal membership problem, with an additional factor to account for the degree of the polynomial $h$.
These degree bounds imply that radical ideal membership, like $\HN$, can be solved in $\PSPACE$.
By combining our main theorem with the Rabinowitsch trick, we improve this bound, showing that radical ideal membership can be decided in $\CH$.

\begin{theorem}[see \cref{lemma: application radical membership}]
  Let $\bF$ be one of the fields $\bQ$, $\bQ(y_1,\ldots,y_k)$, a number field $\bK$, $\bK(y_1,\ldots,y_k)$, the finite field $\bF_q$, or $\bF_q(y_1,\ldots,y_k)$.
  Then the radical ideal membership problem over $\bF$ can be decided in $\CH$.
\end{theorem}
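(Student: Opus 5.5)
The plan is a direct reduction to \cref{theorem: nullstellensatz intro} via the Rabinowitsch trick. Let the input be $h, f_1, \ldots, f_m \in \bF[x_1, \ldots, x_n]$ in the dense representation. Introduce a fresh variable $t$ and form the system
\[
  f_1 = \cdots = f_m = 0, \qquad 1 - t h = 0
\]
in $\bF[x_1, \ldots, x_n, t]$. We claim $h \in \radideal{f_1, \ldots, f_m}$ if and only if this system has no solution in $\overline{\bF}^{n+1}$. The decision procedure is then: build the enlarged system, query the $\CH$ procedure for $\HN$ from \cref{theorem: nullstellensatz intro}, and output the negation of its answer. Since $\CH$ is closed under complement (each $\C_k\P$ is, as $\PP$ is closed under complement relative to oracles), this places radical ideal membership in $\CH$.

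For correctness, first suppose $h \in \radideal{f_1, \ldots, f_m}$. Then $h$ vanishes on $\var{f_1, \ldots, f_m}$, so every point satisfying $f_1 = \cdots = f_m = 0$ has $th = 0 \ne 1$. Hence the enlarged system has no solution. Conversely, if the enlarged system is unsolvable, then the weak Nullstellensatz gives
\[
  \sum_i g_i f_i + g_0 (1 - t h) = 1 .
\]
Substituting $t = 1/h$ in $\bF(\vx)$ and clearing denominators by a power $h^N$ yields $h^N \in \ideal{f_1, \ldots, f_m}$. Equivalently, one can argue geometrically: if $h(\vec a) \ne 0$ at some common zero $\vec a$, then $(\vec a, 1/h(\vec a))$ is a solution of the enlarged system.

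The only step needing care is that the reduction is polynomial time in the dense encoding. The new polynomial $1 - th$ has degree $\deg h + 1$ in $n + 1$ variables. If the dense format lists all monomials up to the total degree, then adding one variable and increasing the degree by one increases the number of listed coefficients only polynomially relative to the input, which already contains the dense lists of $h$ and the $f_i$ in $n$ variables. Coefficients are copied verbatim from $h$ and remain in $\bF$, so the field class is unchanged and the reduction is clearly computable in $\P$. A $\P$ reduction followed by a $\CH$ query and a negation stays in $\CH$, which completes the plan.
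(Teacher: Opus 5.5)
Your proposal is correct and follows essentially the same route as the paper: the paper also applies the Rabinowitsch trick, forming the system $f_1 = \cdots = f_m = 1 - zg = 0$ with a fresh variable $z$ and deciding its (un)satisfiability with the $\CH$ algorithm for Hilbert's Nullstellensatz. Your remarks on the closure of $\CH$ under complement and on the polynomial size of the enlarged system in the dense encoding supply details that the paper leaves implicit.
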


In contrast, \textcite{MM82} proved that the general ideal membership problem is $\EXPSPACE$-complete.
Because of this, a similar improvement in the complexity of the ideal membership problem would apply to the class $\EXPSPACE$ itself.

Once we know that a system of equations $f_1, \ldots, f_m \in \bF\bs{x_1, \ldots, x_n}$ has a solution, a natural next step is to understand how many solutions this system has.
One way to make this question precise is to ask for the dimension of the variety $\var{f_1, \ldots, f_m}$, which is a coarse measure of the size of the solution set.
A straightforward application of our main theorem lets us compute this dimension in $\FP^\CH$.

\begin{theorem}[see \cref{lemma: application dimension}]
  Let $\bF$ be one of the fields $\bQ$, $\bQ(y_1,\ldots,y_k)$, a number field $\bK$, $\bK(y_1, \ldots, y_k)$, the finite field $\bF_q$, or $\bF_q(y_1,\ldots,y_k)$.
  Given a set of polynomials $f_{1}, \dots, f_{m}$ in $\bF\bs{x_{1}, \dots,x_{n}}$, one can compute the dimension of the variety $\var{f_{1}, \dots, f_{m}}$ in $\FP^{\CH}$.
\end{theorem}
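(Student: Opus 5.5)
We reduce dimension computation to polynomially many instances of $\HN$, each decided by the $\CH$ algorithm of \cref{theorem: nullstellensatz intro}, and let a polynomial-time machine with a $\CH$ oracle combine the answers. The guiding fact is that $V = \var{f_1, \ldots, f_m} \subseteq \overline{\bF}^n$ has dimension at least $k$ if and only if it meets a generic affine linear subspace of codimension $k$. More precisely, $\dim V \ge k$ if and only if $V \cap \var{\ell_1, \ldots, \ell_k} \neq \emptyset$ for linear forms $\ell_i(\vx) = \sum_j a_{ij} x_j - b_i$ whose coefficients lie outside a proper Zariski-closed subset of the coefficient space. Conversely, if $\dim V < k$, then a generic codimension-$k$ affine subspace misses $V$. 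So the algorithm loops over $k = n, n-1, \ldots, 0$, tests whether $V$ meets the $k$-codimensional slice, and outputs the largest $k$ for which the test succeeds. If no test succeeds, $V$ is empty, which the $k=0$ query already detects.

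To avoid choosing random points, we make the slice coefficients indeterminates. We pass to the field $\bF(y_{ij}, z_i)$ and ask $\HN$ over its algebraic closure for the system $f_1, \ldots, f_m, \ell_1, \ldots, \ell_k$ with $\ell_i = \sum_j y_{ij} x_j - z_i$. This is why \cref{theorem: nullstellensatz intro} allows fields of the form $\bQ(y_1,\ldots,y_k)$, $\bK(y_1, \ldots, y_k)$ and $\bF_q(y_1,\ldots,y_k)$. Adjoining polynomially many new transcendentals keeps us in the permitted class of fields, and the instance remains polynomial size in the dense representation, since each $\ell_i$ is linear in $\vx$ and linear in the new parameters.

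The step needing care is correctness of the generic-slice criterion over the algebraic closure of the function field. Let $W \subseteq \overline{\bF}^n \times \mathbb{A}^{k(n+1)}$ be the incidence variety of pairs (point of $V$, slice through it).
\begin{itemize}
\item If some irreducible component of $V$ has dimension $d \ge k$, then $W$ has dimension at least $d + kn$. Its projection to parameter space has fibers of dimension $d-k$ generically, so the projection is dominant.
\item Dominance means the generic point of parameter space lies in the image. The system then has a solution over $\overline{\bF(y,z)}$, because by the Nullstellensatz inconsistency would give $1 = \sum g_i f_i + \sum h_i \ell_i$ with coefficients in $\bF(y,z)$, and specializing at a general image point yields a contradiction.
\item If $\dim V < k$, then $\dim W < k(n+1)$, so the projection is not dominant. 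Some nonzero polynomial in $(y,z)$ vanishes on the image, and over the generic point the slice is empty.
\end{itemize}
Emptiness over the function field does not follow directly from non-dominance. We argue it as follows: a point over $\overline{\bF(y,z)}$ would define a map from a variety dominating the parameter space into $W$, contradicting non-dominance.

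This translation between solvability over $\overline{\bF(y,z)}$ and dominance is the main obstacle, though it is standard elimination theory. Given it, the algorithm makes $n+1$ oracle queries to a $\CH$ language, so it runs in $\FP^{\CH}$.
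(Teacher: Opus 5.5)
Your argument is correct, but it takes a different route from the paper's.

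\textbf{The paper's route.} The paper proves \cref{lemma: application dimension} by a randomized Turing reduction. It fixes a set $B$ of size at least $100n^2d^n$, passing to an extension via \cref{lemma: passing to bigger finite field} when $\bF$ is too small. It then draws affine linear polynomials $\ell_1,\dots,\ell_n$ with random coefficients from $B$. From \cref{lemma: intersect dimension drop} it reads off that, with high probability, $f_1=\cdots=f_m=\ell_1=\cdots=\ell_k=0$ is satisfiable exactly when $\dim\var{f_1,\dots,f_m}\ge k$. The resulting $\HN$ instances over $\bF$ itself are decided in $\CH$, and the error is absorbed as in \cref{theorem: nullstellensatz in ch}.

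\textbf{Your route.} You replace the random coefficients by indeterminates and use the fact that the main theorem already covers purely transcendental extensions with polynomially many parameters. This makes your outer reduction deterministic. It removes the choice of $B$, the field extension, and the error bookkeeping at this level; the randomness is now hidden inside the $\HN$ oracle. The cost is that you work over a field with $k(n+1)$ extra transcendentals. You must also prove the generic-slice criterion over $\overline{\bF(y,z)}$ yourself, instead of quoting a lemma the paper already needs.

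\textbf{Steps to tighten.}

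First bullet. The phrase ``fibers of dimension $d-k$ generically'' presupposes the nonemptiness you are trying to prove. Instead, let $W'$ be the part of $W$ lying over a $d$-dimensional component $V'$; it is irreducible of dimension $d+kn$. Take one slice meeting $V'$ in a nonempty set of dimension $d-k$; one exists, e.g.\ by \cref{lemma: intersect dimension drop}. The fiber-dimension inequality for the projection $\pi$ of $W'$ onto parameter space then gives $\dim\overline{\pi(W')}\ge (d+kn)-(d-k)=k(n+1)$, which is dominance.

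Last bullet. There is a cleaner argument than building a dominating variety. If a nonzero $c(y,z)$ vanishes on $\pi(W)$, the Nullstellensatz in $\overline{\bF}[x,y,z]$ gives $c^N\in\langle f_1,\dots,f_m,\ell_1,\dots,\ell_k\rangle$ for some $N$. Dividing by $c^N$ puts $1$ in this ideal over $\overline{\bF}(y,z)\subseteq\overline{\bF(y,z)}$, so the system has no solution there.

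Dense encoding. Your claim that the instances stay polynomial-size in the dense encoding holds literally only when $\bF$ has no parameters of its own. If $\bF$ already carries parameters with high-degree coefficients, a total-degree dense encoding over all parameters can blow up superpolynomially. You should note that the circuits of \cref{corollary uniformity of composed resultant} still have singly exponential size. Their input bits are also polynomial-time computable from the compact input, and that is all the $\CH$ simulation needs.
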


When the variety $\var{f_1, \ldots, f_m}$ is zero-dimensional, it consists of a finite set of points, so it makes sense to ask for the number of solutions to the system $f_1 = \cdots = f_m = 0$.
Unlike the last two applications of our main theorem, which follow by simple reductions to $\HN$, we are not aware of a simple reduction from counting points on a variety to $\HN$.
Despite this, our techniques easily extend to this problem, allowing us to compute the number of points on a variety in $\FP^\CH$.

\begin{theorem}[see \cref{theorem: counting nullstellensatz in ch}] \label{theorem: counting intro}
  Let $\bF$ be one of the fields $\bQ$, $\bQ(y_1,\ldots,y_k)$, a number field $\bK$, $\bK(y_1,\ldots,y_k)$, the finite field $\bF_q$, or $\bF_q(y_1,\ldots,y_k)$.
  Given a set of polynomials $f_{1}, \dots, f_{m}$ in $\bF\bs{x_{1}, \dots,x_{n}}$, one can compute the cardinality of the variety $\var{f_1, \ldots, f_m}$ in $\FP^{\CH}$.
\end{theorem}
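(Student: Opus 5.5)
The plan is to reduce counting to the same machinery that decides $\HN$: multivariate resultants computed by uniform constant-depth arithmetic circuits, evaluated in $\CH$. As a first step, we use the dimension result to handle infinite varieties. We compute $\dim \var{f_1, \ldots, f_m}$ in $\FP^{\CH}$. If it is positive, the variety is infinite and we output $\infty$. If the system is unsatisfiable, which $\HN$ detects, we output $0$. Otherwise the variety is zero-dimensional with at most $d^n$ points by Bézout, so its cardinality has polynomially many bits. Before continuing we reduce to $n$ equations, replacing $f_1, \ldots, f_m$ by $n$ generic linear combinations $g_i = \sum_j c_{ij} f_j$ plus extra generic combinations used to filter out spurious points.

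The second step is to obtain a $u$-resultant. Adjoin a generic linear form $\ell = u_0 + u_1 x_1 + \cdots + u_n x_n$, homogenize, and consider the resultant $R(u) = \operatorname{Res}(g_1^h, \ldots, g_n^h, \ell^h)$. When the homogenized system has finitely many projective zeros, $R$ factors as $\prod_{\xi} \ell(\xi)^{\mu_\xi}$ over those zeros, counted with multiplicity. We specialize $u_1, \ldots, u_n$ to random small integers, or to elements of an extension field in the finite field case, so that $\ell$ separates the points; then $R$ becomes a univariate polynomial $r(t)$ in $t = -u_0$. Points at infinity must be removed: they contribute the factors corresponding to the zeros of the leading forms, and we divide them out, or avoid them by a generic change of coordinates combined with the scaling and perturbation tricks already used for $\HN$.

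The third step extracts the count. The number of distinct roots of $r$ equals $\deg r - \deg \gcd(r, r')$, which is computable from subresultants and so is again a determinant-type quantity. To discard roots coming from spurious zeros of $g_1, \ldots, g_n$ that are not common zeros of all the $f_j$, we repeat the construction with several random choices of the combination coefficients and take the gcd of the resulting $r$'s. The result is $\prod_{\xi \in \var{f}} (t - \ell(\xi))$, with multiplicities, and its squarefree degree is the desired cardinality. The coefficients of $r$ are computed by the uniform constant-depth circuits for the resultant constructed in the paper. By the $\CH$ evaluation of such circuits, each bit of each coefficient is decidable in $\CH$. The gcd and degree computations are polynomial-size in $d^n$ only through constant-depth determinant-style circuits, so they also stay in $\CH$, and an $\FP^{\CH}$ machine assembles the answer bit by bit.

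The main obstacle will be the genericity and degeneracy handling in the second step. The generic linear combinations, the separating form, and the elimination of solutions at infinity or with excess components must all be achieved with derandomized or explicitly enumerated choices. Since $\CH$ is closed under majority quantification, my first attempt is to try polynomially many explicit candidate choices from a fixed grid. Among these, a non-degenerate choice is certified by nonvanishing of certain resultants or discriminants, which is checkable in $\CH$, and the count is taken from the first certified choice.
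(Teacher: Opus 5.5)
Your outline matches the skeleton of the paper's argument: random square subsystems, a separating linear form, a gcd of two $u$-resultants to remove spurious points, then a distinct-root count. However, two of the steps you rely on do not go through as you describe them.

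\textbf{First gap: infinity is not a genericity issue.} Your plan is to certify a non-degenerate choice from a grid, but such a choice need not exist. Take $f_1 = x_1x_2 + x_2 - 1$, $f_2 = x_1x_3 + x_3$, $f_3 = x_1^2 - 1$ over $\bQ$. The affine variety is the single point $(1, \tfrac12, 0)$. Yet every nonzero scalar combination of the $f_j$ has top-degree part divisible by $x_1$. So for every choice of coefficients, and after any affine change of coordinates, the homogenized square system vanishes on the whole line $\{x_0 = x_1 = 0\}$, and your $u$-resultant is identically zero.

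The perturbation you mention only in passing is therefore indispensable, and it rests on a structural fact your proposal never states. Set $G_i = g_i^h + t\, x_i^{\deg g_i}$ and $L = x_0 + u\sum_i \beta_i x_i$. Then the trailing coefficient $\Tt_t \res(G_1,\ldots,G_n,L)$ is nonzero for generic $\beta$. By Ierardi's lemma it is a product of linear forms indexed by a finite limit set. That set contains every isolated point of $V(g_1^h,\ldots,g_n^h)$, hence every affine solution. Limit points at infinity contribute only factors $u$, which $\mathrm{TP}_u$ strips off. This lemma is what must replace ``finitely many projective zeros'' in your second step.

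\textbf{Second gap: the distinct-root count.} The polynomial $r$ has degree up to about $d^n$. Computing $\gcd(r,r')$ through subresultants therefore means evaluating minors of exponentially large matrices. That is exactly the kind of computation that only gave $\PSPACE$ before, and polynomial-size constant-depth circuits for the determinant are ruled out by known lower bounds. So ``determinant-type, hence constant depth'' has no justification.

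The paper instead uses two facts:
\begin{itemize}
\item univariate \emph{resultants} (not general minors) have polylogtime-uniform constant-depth circuits;
\item both the gcd and the multiplicity filtration are ratios of trailing coefficients of resultants.
\end{itemize}
Specifically, $\gcd(f,g)(y) = \Tt_z\res_x(z(y-x)+f,\,f+ug)\,/\,\Tt_z\res_x(z+f,\,f+ug)$. The part of $f$ of multiplicity $>k$ is the analogous ratio with $f$ as the second argument and a $v$-weighted combination of the first $k$ Hasse derivatives of $f$ in place of $f$ in the first argument. These are uniform $\TC^0$ computations, which are then composed with the $\CH$ oracles for the coefficients of the exponential-degree polynomials.

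Your formula $\deg r - \deg\gcd(r,r')$ is also wrong in characteristic $p$: for $r = (t-a)^p$ it returns $0$ instead of $1$. Multiplicities divisible by $p$ do occur over $\bF_q$, and the Hasse derivatives are what make the paper's count characteristic-free.

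\textbf{Randomness.} Derandomization is unnecessary, and your certification step is doubtful: checking that $L$ separates the solutions already requires the count. The paper keeps the reduction randomized and uses that the functional analogue of $\BPP^{\CH}$ lies in $\FP^{\CH}$.
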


Counting points on a variety is a basic problem of counting complexity whose precise classification is not well understood.
\textcite{BC06} introduced the complexity class $\GCC$, for \emph{geometric counting complex problems}, and showed that counting points on a complex variety is complete for $\GCC$.\footnote{\textcite{BC06} also introduced the counting classes $\#\P_\bC$ and $\#\P_\bR$ as analogues of $\#\P$ in the Blum--Shub--Smale model of computation. Among other results, they proved that counting points on complex or real varieties are complete problems for $\#\P_\bC$ and $\#\P_\bR$, respectively.}
The class $\GCC$ and its functional analogue $\FP^{\GCC}$ contain natural counting problems in algebraic geometry, including computing the geometric degree of a variety \cite{BC06}, the Euler characteristic of a variety \cite{BCL05}, and the Hilbert polynomial of a smooth equidimensional variety \cite{BL07}.
By arithmetizing $\#\SAT$, it is easy to see that counting points on a variety is at least as hard as counting the number of satisfying assignments to a boolean formula, so $\#\P \subseteq \GCC$.
Prior to our work, the best known upper bound on $\GCC$ was $\GCC \subseteq \FPSPACE$ \cite{BC06}.
\cref{theorem: counting intro} improves this to $\GCC \subseteq \FP^{\CH}$.

\cref{theorem: counting intro} explains why our techniques only prove an upper bound of $\CH$ on the complexity of $\HN$, as opposed to placing $\HN$ in the polynomial-time hierarchy.
The same techniques underlie the proofs of \cref{theorem: nullstellensatz intro,theorem: counting intro}.
Because \cref{theorem: counting intro} addresses a $\#\P$-hard problem, the best upper bound we can hope to prove with these techniques---excepting a surprising collapse of complexity classes---is $\#\P$.

Finally, we mention a straightforward application of \cref{theorem: nullstellensatz intro} to computing tensor rank, an important problem in algebraic complexity related to the determination of the exponent of matrix multiplication.
By a direct reduction to $\HN$, we can compute the rank of a given tensor in $\FP^\CH$.

\begin{theorem}[see \cref{lemma: application tensor rank}]
  Let $\bF$ be one of the fields $\bQ$, $\bQ(y_1,\ldots,y_k)$, a number field $\bK$, $\bK(y_1,\ldots,y_k)$, the finite field $\bF_q$, or $\bF_q(y_1,\ldots,y_k)$.
  Given a tensor $T \in \bF^{d_1} \otimes \cdots \otimes \bF^{d_k}$, one can compute the tensor rank of $T$ over $\overline{\bF}$ in $\FP^\CH$.
\end{theorem}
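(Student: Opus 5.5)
Tensor rank reduces directly to $\HN$. The tensor rank of $T$ over $\overline{\bF}$ is the least $r$ such that $T = \sum_{\ell=1}^r u^{(1)}_\ell \otimes \cdots \otimes u^{(k)}_\ell$ with $u^{(j)}_\ell \in \overline{\bF}^{d_j}$. For a fixed $r$, introduce indeterminates $u^{(j)}_{\ell,i}$ for $\ell \le r$, $j \le k$, $i \le d_j$. Then write down the system
\[
  \sum_{\ell=1}^r u^{(1)}_{\ell,i_1} u^{(2)}_{\ell,i_2} \cdots u^{(k)}_{\ell,i_k} - T_{i_1,\ldots,i_k} = 0, \qquad (i_1,\ldots,i_k) \in [d_1]\times\cdots\times[d_k].
\]
Every solution of this system over $\overline{\bF}$ is a rank-$r$ decomposition of $T$, and conversely. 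So $\operatorname{rank}(T) \le r$ if and only if this system is satisfiable over $\overline{\bF}$, and \cref{theorem: nullstellensatz intro} decides satisfiability in $\CH$.

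The only issue is the size of the encoding. There are $N = d_1\cdots d_k$ equations, which is exactly the input size, in $r(d_1+\cdots+d_k)$ variables, and each equation has degree $k$. In the dense representation, however, a degree-$k$ polynomial in that many variables lists a number of monomials that grows like $(r\sum_j d_j)^k$. This can exceed $\poly(N)$ when $k$ is large, for instance when many $d_j$ equal $2$.

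To avoid this blowup, I would use the extension-variable trick mentioned after the statement of results, which turns the system into one of degree at most $2$. For each $\ell$ and each prefix $(i_1,\ldots,i_s)$, introduce a new variable $z_{\ell,i_1,\ldots,i_s}$ together with the constraints
\[
  z_{\ell,i_1} = u^{(1)}_{\ell,i_1}, \qquad z_{\ell,i_1,\ldots,i_s} = z_{\ell,i_1,\ldots,i_{s-1}}\, u^{(s)}_{\ell,i_s},
\]
and replace each original equation by the linear equation $\sum_\ell z_{\ell,i_1,\ldots,i_k} = T_{i_1,\ldots,i_k}$. The number of new variables is at most $r\sum_s d_1\cdots d_s \le rkN$. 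Since the equations are now quadratic, the dense encoding has size polynomial in $N$ and $r$. The new variables are determined uniquely by the $u$'s, so satisfiability is preserved.

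It remains to bound the candidate values of $r$. Using the standard basis we have $\operatorname{rank}(T) \le N/\max_j d_j \le N$, so only $r \in \{0,1,\ldots,N\}$ needs to be considered. The $\FP^{\CH}$ algorithm constructs the degree-$2$ system for each such $r$ in polynomial time. Because $\CH$ is a single class, it fixes one level $\C_c\P$ of the hierarchy that suffices for all these instances, and it queries that level as an oracle. It returns the least $r$ for which the system is satisfiable; monotonicity in $r$ would also allow binary search, though that is not needed. Coefficients of the system are entries of $T$ or constants, so the field hypotheses of \cref{theorem: nullstellensatz intro} are met.

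I expect the main obstacle to be the degree and encoding-size issue, which the quadratic reformulation handles. The remaining steps are routine.
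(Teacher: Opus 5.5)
Your proof is correct and follows the paper's route: write down the system expressing $T$ as a sum of $r$ rank-one tensors, decide its solvability with the $\CH$ algorithm for $\HN$, and range over $r \le d_1\cdots d_k$. The paper decides whether the rank equals $r$ by testing rank at most $r$ and at most $r-1$, which is equivalent to your search for the least satisfiable $r$. The one difference is in your favour: the paper writes the degree-$k$ equations down directly and asserts they can be produced in polynomial time, which holds for the dense encoding only when $k$ is bounded, whereas your quadratic reformulation with prefix-product variables spells out the extension-variable conversion that the paper only mentions in its introduction.
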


\subsection{Proof overview}

As we saw, we can decide if a system of polynomials $f_1, \ldots, f_m \in \bC\bs{x_1,\ldots,x_n}$ of degrees at most $d$ has a solution by deciding if the linear system
\[
  f_{1}\br{\vx} g_{1}\br{\vx} + f_{2}\br{\vx} g_{2}\br{\vx} + \cdots + f_{m}\br{\vx} g_{m}\br{\vx} = 1.
\]
is solvable, where the unknowns are the coefficients of the polynomials $g_1, \ldots, g_m \in \bC\bs{x_1, \ldots, x_n}$ of degrees at most $d^{n}$, which results in a system of equations of size $d^{\poly{\br{n}}}$.
To improve the complexity of $\HN$, we will take advantage of the fact that these linear systems are not arbitrary, but instead highly structured.

\subsubsection{The resultant}

To see what makes linear systems like $\sum_{i=1}^m f_i g_i = 1$ special, let us consider the problem of deciding if a system of two equations $f_1, f_2 \in \bC[x]$ in one variable has a solution.
The determinant of the linear system $f_1(x) g_1(x) + f_2(x) g_2(x) = 1$ is a well-known function of the coefficients of $f_1$ and $f_2$ called their \emph{resultant}, denoted by $\res(f_1, f_2)$.
The resultant has the remarkable property that $\res(f_1, f_2) = 0$ if and only the system $f_1(x) = f_2(x) = 0$ has a solution.
By definition, the resultant is the determinant of a polynomially-large matrix, so it can be computed in polylogarithmic space using Csanky's algorithm \cite{Csanky76} for the determinant.
This corresponds to a scaled-down version of the $\PSPACE$ algorithm for $\HN$.

Although the resultant is defined as a determinant, the resultant can be computed more efficiently than the determinant itself.
\textcite{AW24} showed that the resultant of two polynomials can be computed by arithmetic circuits of constant depth and polynomial size, something which is provably impossible for the determinant \cite{LST25}.
Because iterated addition and multiplication of rational numbers can be computed by threshold circuits of constant depth and polynomial size, i.e., in the class $\TC^0$ \cite{HAB02}, this implies that the resultant can also be computed in $\TC^0$ when the input is represented in binary. 
As we will soon see, the class $\TC^0$ corresponds to a scaled-down version of the counting hierarchy $\CH$, so this algorithm is evidence that it may be possible to improve the complexity of $\HN$.
The improved algorithm for the resultant relies on the \emph{Poisson formula}, which expresses $\res(f_1, f_2)$ in terms of the complex roots of $f_1$ and $f_2$.
Suppose $f_1$ and $f_2$ are monic, and that $f_1$ factors as $f_1(x) = \prod_{i=1}^d (x - \alpha_i)$ over $\bC$.
In this case, the Poisson formula expresses the resultant of $f_1$ and $f_2$ as
\[
  \res(f_1, f_2) = \prod_{i=1}^d f_2(\alpha_i).
\]
Although not immediately obvious, this identity can be used to compute the resultant in constant depth, since the coefficients of $f_1$ provide useful information about its roots $\alpha_1, \ldots, \alpha_d$.

The resultant can be generalized to many polynomials in several variables, and designing algorithms to compute it will be essential for our work.
Suppose $F_0, \ldots, F_N \in \bC\bs{x_0, \ldots, x_n}$ are homogeneous polynomials of degrees $d_0, \ldots, d_n$, respectively.
There is a polynomial function of their coefficients, likewise called the resultant and denoted $\res(F_0,\ldots,F_n)$, such that $\res(F_0, \ldots, F_n) = 0$ exactly when the system $F_0(\vx) = \cdots = F_n(\vx) = 0$ has a nonzero solution.\footnote{Because the polynomials $F_0, \ldots, F_n$ are homogeneous, the all-zeroes point is always a solution of the system $F_0 = \cdots = F_m = 0$. The resultant detects when this system has a nontrivial solution, or equivalently, when this system has a solution in projective space $\bP^n$.}
Thus, if we can compute the resultant efficiently, we can also decide $\HN$, at least in the case of homogeneous systems where the number of polynomials and variables match.
The multivariate resultant can be expressed as the quotient of two minors of a structured matrix of size $\binom{d_0 + \cdots + d_n}{n}$, called the \emph{Macaulay matrix} of $F_0, \ldots, F_n$.
At worst, this matrix is exponentially large compared to $\sum_{i=0}^n \binom{n + d_i}{n}$, the number of coefficients of the polynomials $F_0, \ldots, F_n$, so any one of its minors can be computed in $\PSPACE$.
Designing an improved algorithm to compute the multivariate resultant in $\FP^\CH$ will be the main step in our proof that $\HN \in \CH$.

\subsubsection{The counting hierarchy}

Before describing how to compute the resultant in $\FP^\CH$, let us see what sort of algorithmic power $\CH$ provides.
Just as the polynomial-time hierarchy $\PH$ is connected to bounded-depth boolean circuits built from AND, OR, and NOT gates \cite{FSS84}, the counting hierarchy is related to bounded-depth boolean circuits built from threshold and negation gates.
For us, it will be useful to view $\CH$ through its characterization in terms of polynomially-bounded majority quantifiers: a language $L \subseteq \bc{0,1}^*$ is in $\CH$ if there is a polynomially-bounded function $p$ and a polynomial-time Turing machine $M$ such that
\[
  x \in L \iff \maj y_1 \in \bc{0,1}^{p(|x|)} \cdots \maj y_k \in \bc{0,1}^{p(|x|)} \text{$M(x, y_1, \ldots, y_k)$ accepts},
\]
where the majority quantifier $\maj y$ stipulates that the subsequent formula is true for the majority of choices of the variable $y$.
With a single majority quantifier, we can compute the majority function (and more generally, any threshold function) over an exponentially-large set of bits, as long as any single bit in this set can be computed in polynomial time.
By repeating this observation, we can evaluate an exponentially-large constant-depth threshold circuit in $\CH$, as long as the connectivity properties of this circuit can be decided by a polynomial-time Turing machine.
More precisely, if a boolean function $f : \bc{0,1}^* \to \bc{0,1}$ can be computed by a polylogtime-uniform family of threshold circuits of constant depth and exponential size, then $f \in \CH$.

Bounded-depth threshold circuits are surprisingly powerful.
Basic operations on rational numbers, such as iterated addition, iterated multiplication, and division with remainder, can all be computed by logtime-uniform families of $\TC^0$ circuits \cite{HAB02}.
This implies that threshold circuits can simulate arithmetic circuits while only losing a constant factor in depth and a polynomial factor in size.
In particular, in $\CH$, we can simulate a polylogtime-uniform family of arithmetic circuits of bounded depth and exponential size.
This is the main property of $\CH$ that we use in our algorithms.

\subsubsection{Computing the resultant in constant depth}

Now we return to the task of computing the multivariate resultant in $\FP^\CH$.
From our discussion on the power of $\CH$, it suffices to construct a polylogtime-uniform family of arithmetic circuits of constant depth and exponential size that computes the resultant, and this is the route we will take.

Just as in the case of two polynomials, the multivariate resultant can be expressed in terms of the evaluations of one polynomial at the common roots of the others.
To set notation, let
\begin{align*}
  \overline{F}_i(\vx) &\coloneqq F_i(x_0,\ldots,x_{n-1}, 0) \\
  f_i(\vx) &\coloneqq f_i(x_0,\ldots,x_{n-1},1).
\end{align*}
Then, assuming the smaller resultant $\res(\overline{F}_0,\ldots,\overline{F}_{n-1})$ is not zero, the \emph{Poisson formula} shows the resultant factors as
\[
  \res(F_0,\ldots,F_n) = \res(\overline{F}_0,\ldots,\overline{F}_{n-1})^{d_{n-1}} \prod_{\valpha \in \var{f_0,\ldots,f_{n-1}}} f_n(\valpha)^{m(\valpha)},
\]
where $\var{f_0,\ldots,f_{n-1}} \subseteq \bC^{n}$ is the set of common roots of $f_0,\ldots,f_{n-1}$ (which is necessarily finite by the assumption on the smaller resultant), and $m(\valpha)$ is the multiplicity of $\valpha$ as a solution to the system $f_0 = \cdots = f_{n-1} = 0$.
By recursively applying the Poisson formula to the smaller resultant $\res(\overline{F}_0, \ldots, \overline{F}_{n-1})$, we see that the original resultant $\res(F_0, \ldots, F_n)$ we wanted to compute can be expressed as a product of terms of the form $\prod_{\valpha \in \var{f_0,\ldots,f_{n-1}}} f_n(\valpha)^{m(\valpha)}$.
If we can compute one such a term using a bounded-depth arithmetic circuit, then by computing all terms in parallel and multiplying them, we obtain a bounded-depth arithmetic circuit for the resultant itself.

To compute a product of the form $\prod_{\valpha \in \var{f_0,\ldots,f_{n-1}}} f_n(\valpha)^{m(\valpha)}$, it appears that we need to solve the system of equations $f_0 = \cdots = f_{n-1} = 0$.
In the case $n = 1$, this can avoided by using the Girard--Newton identities \cite{AW24}.
While there are similar identities in the case $n \geq 2$ (see \cite{AK81} and references therein), they are more complicated, and it is not clear if they can be implemented in constant depth.
Instead, we will use a computationally explicit version of the implicit function theorem to express the solutions of $f_0 = \cdots = f_{n-1} = 0$ as power series in the coefficients of $f_0, \ldots, f_{n-1}$, where initial segments of these power series can be computed by constant-depth circuits.
A similar idea appears in recent work of \textcite{BKRRSS25a}, who used Lagrange inversion to show that low-depth arithmetic circuits are closed under factorization.
Their work, particularly its use of an explicit version of the implicit function theorem, was a key inspiration for the results of this paper. \roberttodo{I added an extra sentence extolling the influence of \cite{BKRRSS25a}}

The implicit function theorem requires the Jacobian of $(f_0, \ldots, f_{n-1})$ to be invertible at $\valpha$, which may not be true of the system we are given.
To remedy this, we use the method of homotopy continuation to obtain the roots of $f_0 = \cdots = f_{n-1} = 0$ by solving a different---but related---system of equations.
We will choose another system of homogeneous equations $G_0, \ldots, G_n \in \bC\bs{x_0, \ldots, x_n}$, unrelated to the $F_i$, where the solutions of $G_0 = \cdots = G_n = 0$ are explicitly known in advance and at which the Jacobian is invertible.
Letting $t$ be a fresh variable, we then consider the system of equations $H_0 = \cdots = H_n = 0$, where $H_i$ is given by
\[
  H_i(t, \vx) \coloneqq (1 - t) \cdot G_i(\vx) + t \cdot F_i(\vx).
\]
This system is easy to solve at $t = 0$, since it simplifies to $G_0 = \cdots = G_n = 0$, whose solutions are known to us.
At $t = 1$, we recover the original system we wanted to solve.
In particular, if we can instead compute the resultant $\res(H_0, \ldots, H_n)$, then evaluating at $t = 1$ will recover $\res(F_0, \ldots, F_n)$.

Let 
\begin{align*}
  g_i(\vx) &\coloneqq G_i(x_0, \ldots, x_{n-1}, 1) \\
  h_i(t,\vx) &\coloneqq H(t,x_0,\ldots,x_{n-1},1).
\end{align*}
For each solution $\valpha \in \bC^n$ of the system $g_0 = \cdots = g_{n-1} = 0$, there is a corresponding power series solution $\vvarphi_{\valpha}(t) \in \bC\bsd{t}^n$ of $h_0 = \cdots = h_{n-1} = 0$, and all solutions of $h_0 = \cdots = h_{n-1} = 0$ arise in this way.
The coefficients of the power series solution $\vvarphi_{\valpha}(t)$ are polynomial functions of the coefficients of the original system $f_0, \ldots, f_{n-1}$.
Importantly, the first $(n+1) d^{n}$ coefficients of these power series can be computed from the coefficients of $f_0, \ldots, f_{n-1}$ using an arithmetic circuit of constant depth and size bounded by $d^{\poly(n)}$, where $d = \max(d_0, \ldots, d_{n-1})$.
This allows us to compute an approximate solution $\tilde{\vvarphi}_{\valpha}(t) \in \bC[t]^n$ that agrees with $\vvarphi_{\valpha}(t)$ up to degree $(n+1) d^{n}$.
By evaluating $h_n$ on each of the approximate solutions $\tilde{\vvarphi}_{\valpha}(t)$ and using the Poisson formula, we compute a polynomial $\tilde{\res}(H_0, \ldots, H_n)$ that agrees with the resultant $\res(H_0, \ldots, H_n)$ modulo $t^{(n+1) d^n}$.
Because the resultant is a polynomial of degree at most $(n+1) d^n$, we can use interpolation to recover the terms of $\tilde{\res}(H_0,\ldots,H_n)$ of degree at most $(n+1)d^n$ in $t$, which necessarily equal the resultant $\res(H_0,\ldots,H_n)$.

This approach produces a family of arithmetic circuits of constant depth and exponential size to compute the resultant.
To conclude that the resultant can be computed in $\FP^\CH$, we need to ensure that this family of circuits is sufficiently uniform.
Most parts of this algorithm are easily shown to be uniform.
The only part that is not immediately uniform is our use of evaluation-interpolation to compute the coefficients of a polynomial from its evaluations.
Doing this requires constructing a uniform family of constant-depth arithmetic circuits to compute the inverse of a Vandermonde matrix, which we do using an explicit description of the inverse of the Vandermonde matrix at the points $1, 2, \ldots, N$ in terms of Stirling numbers.

So far, we have seen how to compute the resultant of polynomials over $\bQ$ in $\FP^\CH$.
The same algorithm would likewise produce constant-depth arithmetic circuits to compute the resultant over all fields, but the notion of uniformity for arithmetic circuits becomes more difficult to work with over finite fields.
In particular, our circuits would need access to constants from an extension field, and one must bound the uniformity of these constants in some way.
In contrast, over $\bQ$, constant-free circuits are sufficient for our purposes, and one only needs to bound the uniformity of the circuit's structure.
To compute the resultant over other fields, such as the finite field $\bF_p$, we use the fact that the resultant is essentially the same polynomial over all fields.
In particular, for polynomials $F_0, \ldots, F_n \in \bF_p[x_0,\ldots,x_n]$, if we lift $F_i$ to the polynomial $\hat{F}_i \in \bZ[x_0,\ldots,x_n]$, then the resultant satisfies
\[
  \res(F_0, \ldots, F_n) = \res(\hat{F}_0, \ldots, \hat{F}_n) \bmod{p}.
\]
Thus, to compute the resultant over $\bF_p$ in $\FP^\CH$, we lift the input to the integers, compute the resultant over the integers, and then reduce this value modulo $p$.

\subsubsection{From homogeneous to affine systems}

Computing the resultant in $\FP^\CH$ allows us to decide if a system of $n$ homogeneous equations in $n$ variables has a nonzero solution, or in other words, if it has a solution in projective space $\bP^{n-1}$.
It is not difficult to extend this to an algorithm that handles non-square projective systems: if we are given $m$ equations $F_1 = \cdots = F_m = 0$ in $n$ variables where $m > n$, one can show that the system $G_1 = \cdots = G_n = 0$ obtained by taking each $G_i$ to be a random linear combination of the $F_j$ has the same set of solutions with high probability.
Deciding the solvability of inhomogeneous equations is more difficult.
The natural approach is to homogenize a given inhomogeneous system, but this does not necessarily preserve the solvability of the system of equations.
For example, the system $x + y - 1 = x + y - 2 = 0$ has no solution, but its homogenization $x + y - z = x + y - 2z = 0$ has the nonzero solution $(1, -1, 0)$.

Fortunately, this problem has already been solved.
The \emph{generalized characteristic polynomial} of \textcite{canny90}, later extended by \textcite{Ierardi89}, allows us to decide if an inhomogeneous system of equations has a solution by performing a resultant computation on a slight perturbation of the given system.
A small variation on the construction of the generalized characteristic polynomial produces a univariate polynomial whose roots are in one-to-one correspondence with solutions of the given system of equations, assuming this number is finite.
This reduces the task of counting solutions to a multivariate system of equations to counting the number of distinct roots of a univariate polynomial of exponentially-large degree, but whose coefficients can be computed in $\FP^\CH$.
To do this in $\FP^\CH$, we adapt the squarefree factorization algorithm of \textcite{AW24} that computed the squarefree decomposition of a univariate polynomial using arithmetic circuits of constant depth and polynomial size.
Scaled up to polynomials of exponential degree, this results in an $\FP^\CH$ algorithm that counts the number of distinct roots of a given polynomial.

\subsection{Organization}

The rest of this paper is organized as follows.
\cref{section: preliminaries} introduces notation and also introduces the notion of uniformity we use for boolean and arithmetic circuit families.
It also covers preliminary material in computational algebra.
In \cref{section: arithmetic circuits}, we show that basic operations, including polynomial interpolation, can be performed by uniform families of constant-depth arithmetic circuits.
\cref{section: resultant basics} introduces the multivariate resultant, collecting well-known properties of the resultant and describing how the resultant can be used to decide the Nullstellensatz and count solutions to zero-dimensional systems of equations.
In \cref{section: resultant constant depth}, we construct a uniform family of constant-depth arithmetic circuits that compute the multivariate resultant.
\cref{section: counting hierarchy} transfers this circuit family to the boolean setting, obtaining $\CH$ algorithms for the multivariate resultant and, as a consequence, the Nullstellensatz.
Finally, we conclude in \cref{section: applications} with a few straightforward applications of our algorithm for the Nullstellensatz.

%% file: sections/preliminaries.tex
\subsection{Notation and conventions} \label{subsection: notation}

Throughout this work, we use $\bF$ to denote a field.
Various results that we use and present have differing requirements on the field $\bF$
(for  example, some results require the field to be large enough).
Each statement will clearly state the requirement on the field $\bF$.
If we make no such qualification in the statement of a particular result, then that result holds for any field.
We denote by $\overline{\bF}$ the algebraic closure of $\bF$.

The \emph{coefficient subfield} of a field $\bF$ is the largest algebraic extension of the prime field of $\bF$ within $\bF$.
For example, if $\bF = \bQ\br{y}$, then the coefficient subfield of $\bF$ is $\bQ$, and if $\bF = \bF_{p^{a}}$, then $\bF$ is its own coefficient subfield.

We write $\bF\bs{x_1, \ldots, x_n}$ and $\bF\br{x_1, \ldots, x_n}$, respectively, for the ring of polynomials and field of rational functions in the variables $x_1, \ldots, x_n$ and with coefficients in $\bF$.
We abbreviate vectors as $\vec{x} = (x_1, \ldots, x_n)$.
For $\vec{a} \in \bN^n$, we write $\vec{x}^{\vec{a}}$ for the monomial $x_1^{a_1} \cdots x_n^{a_n}$.
We write $\abs{\va}$ for the sum $\sum_{i=1}^{n} a_{i}$.

By a \emph{form} of degree $d$, we mean a homogeneous element of $\bF\bs{x_{1}, \dots, x_{n}}$ of degree $d$.
We denote forms by capital letters and inhomogeneous polynomials by lowercase letters.
Often, we will work with a pair of polynomials where one is the (de-)homogenization of the other.
In these cases, we use upper- and lowercase variants of the same letter as a mnemonic device to indicate this relationship.
For example, if $F \in \bF\bs{x_{0}, x_{1}, \ldots, x_{n}}$ is a form, we will denote its dehomogenization $F(1, x_{1}, \dots, x_{n})$ by $f(x_1, \ldots, x_{n})$.
Likewise, if $f \in \bF\bs{x_1, \ldots, x_n}$ is an inhomogeneous polynomial of degree $d$, we write $F$ for its homogenization $x_0^d f(x_1 / x_0, \ldots, x_n / x_0)$.
In some settings we might choose to homogenize and dehomogenize using a different variable, which will be made clear in context.

For polynomials $f_1, \ldots, f_m \in \bF\bs{x_1, \ldots, x_n}$, we write $\Var(f_1, \ldots, f_m)$ for the set of common zeroes of $f_1, \ldots, f_m$ in $\overline{\bF}^n$.
When dealing with forms $F_1, \ldots, F_m \in \bF\bs{x_0, \ldots, x_n}$, we instead consider their zero set $\Var(F_1, \ldots, F_m)$ in the projective space $\bP_{\overline{\bF}}^{n}$.
We usually suppress the field dependence from this notation and just write $\bP^{n}$ for $n$-dimensional projective space.

We use $\vand{x_{1}, \dots, x_{n}}$ to denote the Vandermonde matrix defined using $x_{1}, \dots, x_{n}$.
Specifically, the entry in position $i, j$ in the above matrix is $x_{i}^{j-1}$.

We will frequently have to refer to the trailing term and trailing part of a polynomial with respect to some variable.
For convenience, we define the following piece of notation.

\begin{definition} \label{definition: trailing term}
  Suppose $f \in \bF\bs{t, y_{1}, \dots, y_{m}}$ is a non-zero polynomial.
  The \emph{trailing term of $f$ with respect to $t$}, denoted by $\Tt_t f(y_1, \ldots, y_m)$, is the coefficient of the lowest-degree term in $f$ when viewed as a polynomial in $t$.
  Concretely, if $f = t^{i} f_{i}(y_{1}, \dots, y_{m}) + t^{i+1} f_{i+1}(y_{1}, \dots, y_{m}) + \cdots + t^{d} f_{d}(y_{1}, \dots, y_{m})$ where $f_i(\vy) \neq 0$, then $\Tt_{t} f = f_{i}$.

  The \emph{trailing part of $f$ with respect to $t$}, written $\TP_t f$, is the polynomial obtained by factoring out the highest power of $t$ from $f$.
  In the above setting, we have $\TP_t f = f_i + t f_{i+1} + \cdots + t^{d - i} f_d$.

  By convention, the trailing term and the trailing part of the zero polynomial are zero.
\end{definition}

We follow the convention that a variety is any set of solutions of a system of polynomial equations over an algebraically closed field.
In particular, varieties need not be irreducible.
We also follow the convention that the degree of a reducible projective or affine variety (denoted $\deg V$) is the sum of the degrees of all irreducible components.
The greatest common divisor of any two univariate polynomials with coefficients in a field $\bF$ will always be monic.

\subsection{Uniformity of boolean and arithmetic circuit families} \label{subsection: arithmetic}

In this subsection, we discuss the notion of uniformity for boolean and arithmetic circuit families, starting with the boolean case.
We restrict our discussion to families of boolean threshold circuits.

There are a number of different ways of measuring the uniformity of boolean circuits \cite{vollmer1999}.
In our work, we measure the uniformity of a boolean circuit by the complexity of its \emph{direct connection language}, defined below.
This notion is particularly well-suited for very weak circuit classes, such as constant-depth circuits.
We follow the presentation of \textcite{vollmer1999}, but with one important modification: while most work on uniformity addresses circuit families indexed by a single parameter $n \in \bN$, we will work with circuit families indexed by an arbitrary number of parameters $n_1, \ldots, n_k \in \bN$, and so we adapt the notion of the direct connection language to this setting.
Before we define the direct connection language, we must define the notion of \emph{admissible encodings} of circuits.

\begin{definition}[{\cite[Definition~2.14]{vollmer1999}}] \label{defintion: boolean admissible encoding}
  Let $\mathcal{C} = (C_{k, n_1, \ldots, n_k})_{k, n_1, \ldots, n_k \in \bN}$ be a family of boolean circuits of size $s \coloneqq s(k, n_1, \ldots, n_k)$.
  An \emph{admissible encoding of $\cC$} is a numbering of the gates of each circuit $C = C_{k, n_{1}, \dots, n_{k}}$ in the family with the following properties.
  \begin{enumerate}[noitemsep]
    \item
      If $C$ has $m$ input gates, then they are numbered $0, \dots, m-1$.
    \item
      If $C$ has $m'$ output gates, then they are numbered $m, \dots, m + m' - 1$.
    \item
      There is a constant $c$, depending only on the circuit family, such that the binary representation of the number of any gate in $C$ has length at most $c \cdot \log\br{s}$.
      \qedhere
  \end{enumerate}
\end{definition}

For circuit families of threshold circuits with admissible encodings, we define the direct connection language as follows.
\begin{definition}[Direct connection language] \label{defintion: boolean direct connection language}
  Let $\mathcal{C} = (C_{k, n_1, \ldots, n_k})_{k, n_1, \ldots, n_k \in \bN}$ be a family of threshold circuits with an admissible encoding.
  Suppose a numbering of the gate types (AND, OR, NOT, MAJ, 0, 1) is fixed.
  The \emph{direct connection language of $\mathcal{C}$}, denoted by $L_{DC}(\mathcal{C})$, is the language consisting of binary encodings of tuples $(k, n_1, \ldots, n_k, a, p, b)$, where
  \begin{enumerate}[noitemsep]
    \item
      $a$ is the number of a gate in $C_{k, n_{1}, \dots, n_{k}}$.
    \item
      If $p \neq \epsilon$, then $b$ is the number of a gate in $C_{k, n_{1}, \dots, n_{k}}$ that is a direct predecessor of $a$.
    \item
      If $p = \epsilon$, then $b$ encodes the gate type of $a$ (using the fixed numbering of gate types).
      \qedhere
  \end{enumerate}
\end{definition}

By modifying the allowed gate types, the above definition can also be extended to other types of families of boolean circuits, however we will only work with threshold circuits in this article.
The uniformity of a circuit family can be measured using the complexity of the direct connection language as follows.

\begin{definition} \label{definition: uniform threshold circuit}
  Let $\mathcal{C} = (C_{k, n_1, \ldots, n_k})_{k, n_1, \ldots, n_k \in \bN}$ be a family of threshold circuits of size $s \coloneqq s(k, n_1, \ldots, n_k)$ with an admissible encoding.
  We say that the circuit family $\mathcal{C}$ is \emph{polylogtime-uniform} if 
  \begin{enumerate}[noitemsep]
    \item
      its direct connection language $L_{DC}(\mathcal{C})$ can be decided in time polynomial in $\log(s) + \log(k) + \sum_{i=1}^k \log(n_i)$, and
    \item
      the number of inputs and outputs of $C_{k,n_1, \ldots, n_k}$, and an upper bound on the size $s$ can be computed in time polynomial in $\log(s) + \log(k) + \sum_{i=1}^k \log(n_i)$, given $(k, n_1, \ldots, n_k)$ as input.
  \end{enumerate}
  We say that the circuit family $\mathcal{C}$ is \emph{logtime-uniform} if both of these tasks can be done in time linear in $\log\br{s} + \log\br{k} + \sum_{i=1}^{k} \log\br{n_{i}}$.
\end{definition}

For circuit families indexed by a single parameter, the above definition requires that a Turing machine can decide the direct connection language in time polylogarithmic in the size of the circuit and the index within the circuit family.
We remark that the second item above is not usually part of the standard definition.
We include it to make it easier to compose circuits uniformly.

In the preceding definitions, the first index is the count of the number of remaining indices.
In the circuits we design, we will extend this slightly, and allow the number of indices to be a function of the first index.
This will always be a very simple function, in all applications the number of indices will be linear in the first index, for example twice the first index plus two.
In the cases where the number of indices is just constant, for example families indexed by a single parameter, we do not use the first index as the count of the number of indices.
For example, a single indexed family of circuits is just written $\br{C_{n}}_{n}$, as opposed to $\br{C_{1, n}}_{n}$ as the above notation might suggest.
We also use vector notation as short hand for multi-indexed families.
For example, we use $\br{C_{\vn}}_{\vn}$ to denote a family indexed by $k, n_{1}, \dots, n_{k}$, and use $C_{\vn}$ to denote a specific circuit in the family.
In all cases, the exact number of indices will always be clear from context.

\begin{remark} \label{remark: gate naming convention}
  The definition of admissible encodings gives us a lot of freedom when picking how the gates have to be numbered.
  Other than the input and output gates, we are allowed to number the gates in any way, as long as the restriction on the length is respected.
  This freedom allows us to number our gates using more than just the integers $1$ through $s$: it makes it possible to interpret tuples of integers as gate names, as we explain now.
  In what follows, to avoid confusion, when we number gates by objects other than natural numbers, we will refer to the \emph{name} of a gate instead of its number.

  Here is for instance how we can design circuits where gate names are pairs of natural numbers.
  There exist efficient schemes to encode pairs of natural numbers as a single natural number, and these encoding schemes are efficiently invertible when an inverse exists.
  If we fix such a scheme, we can use pairs such as $(i, j)$ as names, with the understanding that the actual number of the gate is encoding of the pair $(i, j)$ as a single natural number.
  
  Care must be taken to respect the condition on the input and output numbering: we have to ensure that the encoding scheme is such that these gates are mapped to $0, \dots, m-1$ and $m, \dots, m+m'-1$.
  In all instances where we rely on such naming conventions, given the name of an input, resp.\ output, gate (by means of a pair, or tuple, of integers) we will always be able to decide that this is an input, resp.\ output, and determine its index in the allotted time; the same will hold for the converse direction.
  For gates other than inputs and outputs, we will use a scheme as described above to map a pair (or tuple) of integers to a single integer greater that $m+m'-1$ and conversely (recall that the number of inputs and outputs can be computed in polylogarithmic time).
\end{remark}
\begin{remark} \label{remark: find type of gate}
  Suppose we are working with a polylogtime-uniform circuit family.
  The gate name and the index of the circuit can be used to efficiently determine the type of a gate as follows.
  Given the index $(k, n_1, \ldots, n_k)$ of a circuit and the name $a$ of a gate in the circuit, we iterate over all gate types $b$ and check if the tuple $(k, n_1, \ldots, n_k, a, \epsilon, b)$ is in the direct connection language.
  Since there are a constant number of gate types, this allows us to compute the type of a gate, given its name.

  On input $(k, n_1, \ldots, n_k, a)$, we can also decide if a candidate gate name $a$ is valid in time polynomial in $\log(s) + \log(k) + \sum_{i=1}^k \log(n_i)$.
  We first compute an upper bound $s'$ on $s$; this can be done in the allotted time, by definition; then, we test if the length of $a$ is at most $c \log(s')$, for the constant $c$ from \cref{defintion: boolean admissible encoding}.
  Since $\log(s')$ itself is  polynomial in $\log(s) + \log(k) + \sum_{i=1}^k \log(n_i)$, this also fits in our time bound.
  If not, we reject.
  Else, we proceed as above, forming the strings $(k, n_1, \ldots, n_k, a, \epsilon, b)$ for all gate types $b$ and checking if they belong to the direct connection language.
  
  Similarly, we can always determine if a given gate is an input or output gate, since we can determine the number of inputs and outputs to a particular circuit.
\end{remark} 
\begin{remark} \label{remark: ordering input outputs}
  In our circuits, we will also carefully order the inputs and outputs to make other computations easier.
  For example, if the output of a circuit represents the entries of a matrix, we might choose to arrange the outputs in row major order.
  If the outputs of the circuits represent coefficients of a polynomial, we will fix a monomial ordering and order the outputs accordingly.
\end{remark}

We now define uniform families of arithmetic circuits.
We start with the definition of an arithmetic circuit.

\begin{definition} \label{definition: arithmetic circuit}
  Let $\bF$ be a field and let $\bF(\vec{x})$ be the field of rational functions in the variables $x_1, \ldots, x_n$ over $\bF$.
  An \emph{arithmetic circuit over $\bF$} is a directed acyclic graph.
  The vertices of this graph of in-degree zero are either called \emph{input gates} and are labeled by a variable $x_i$, or  called \emph{constant gates} and are labeled by a field element $\alpha \in \bF$.
  Vertices of positive in-degree are called \emph{internal gates} and are labeled by an operation from $\bc{+, \times, \div}$.
  Vertices of out-degree zero are called \emph{output gates}.
  Each gate of the circuit naturally computes an element of $\bF(\vec{x})$, assuming no division by zero takes place in the circuit, which we require.
  If $\bc{f_1, \ldots, f_m}$ are the functions computed by the output gates of the circuit, we say that the circuit \emph{computes} the functions $f_1, \ldots, f_m$.
  The \emph{size} of the circuit is the number of wires in the circuit.
  The \emph{depth} of the circuit is the length of the longest path from an input to an output of the circuit.

  If the only constants labeling the input gates of the circuit are $0, +1$, and $-1$, we say that the circuit is \emph{constant-free}.
  For a fixed variable $x_{i}$, if the subtree rooted at the denominator of each division gate in the circuit does not contain the input gate corresponding to $x_{i}$, then we say that the circuit is \emph{division-free with respect to $x_{i}$.}
  If $X'$ is a subset of variables and the circuit is division-free with respect to every variable in $X'$, we say the circuit is \emph{division-free with respect to $X'$.}
  If the circuit is division-free with respect to all variables, we say the circuit is \emph{weakly division-free}.
\end{definition}

When working with families of arithmetic circuits, uniformity also has to take into account the complexity of constructing the field elements used in the circuit (the elements $\alpha$ in \cref{definition: arithmetic circuit}).
Various definitions of uniformity that take the field elements into account are discussed in \cite{vonzurGathen86survey}.
In this work, we will restrict ourselves to circuits that are constant-free.
Any field element that we want to use within the circuit has to be constructed using $+1$ and $-1$.
This sidesteps the issue of having to define a notion of uniformity for the field elements used in the circuit family.
We can thus work with a notion of uniformity for arithmetic circuit families that closely mirrors the notion for boolean circuit families.

We will later want to simulate our arithmetic circuits using boolean threshold circuits.
We will define our notion of uniformity in a way that affords this simulation.
In the boolean case, the direct connection language only answered queries of the form \emph{is a gate of type \emph{b}?} and \emph{is gate \emph{b} a predecessor of gate \emph{a}?}.
For the sake of the simulation, we place a slightly more stringent requirement on the uniformity of arithmetic circuit families.
Given a gate $a$ with $k$ predecessors, we want to be able to answer queries of the form \emph{is gate \emph{b} the $i\ts{th}$ predecessor of gate \emph{a}?}.
For this, we of course require an ordering of all predecessors of each gate in the circuit.
This will be part of the definition of an admissible encoding of an arithmetic circuit.

\begin{definition}[Admissible encoding] \label{defintion: arithmetic admissible encoding}
  Let $\bF$ be a field and let $\mathcal{C} = (C_{k, n_1, \ldots, n_k})_{k, n_1, \ldots, n_k \in \bN}$ be a family of constant-free arithmetic circuits over $\bF$ of size $s \coloneqq s(k, n_1, \ldots, n_k)$.
  An \emph{admissible encoding of $\cC$} is a numbering of the gates of each circuit $C = C_{k, n_1, \ldots, n_k}$ in the family with the following properties.
  \begin{enumerate}[noitemsep]
    \item
      If $C$ has $m$ input gates, then they are numbered $0, \dots, m-1$.
    \item
      If $C$ has $m'$ output gates, then they are numbered $m, \dots, m + m' - 1$.
    \item
      There is a constant $c$, depending only on the circuit family, such that the binary representation of the number of any gate in $C$ has length at most $c \cdot \log\br{s}$.
  \end{enumerate}
  Further, for each gate $a \in C$, we fix a numbering of the set of immediate predecessors of $a$ in $C$.
  We require that this numbering of predecessors is contiguous, that is, it starts at $1$ and ends at the arity of $a$.
  We do not require the numbering of predecessors to respect the numbering of the gates themselves: if $b_{1}$ and $b_{2}$ are two predecessors of $a$, with $b_{1}$ preceding $b_2$ in the numbering of the gates of $C$, we do not require $b_{1}$ to precede $b_2$ in the numbering of the predecessors of $a$.
  For division gates, we require that the numerator occurs before the denominator in the numbering of the predecessors.
\end{definition}

\begin{definition} \label{defintion: direct connection language}
  Let $\bF$ be a field and let $\mathcal{C} = (C_{k, n_1, \ldots, n_k})_{k, n_1, \ldots, n_k \in \bN}$ be a family of constant-free arithmetic circuits over $\bF$ with an admissible encoding.
  Suppose a numbering of the gate types ($+, -, \times, \div, +1, 0, -1$) is fixed.
  The \emph{direct connection language of $\mathcal{C}$}, denoted by $L_{DC}(\mathcal{C})$, is the language consisting of binary encodings of tuples $(k, n_1, \ldots, n_k, a, p, b)$, where
  \begin{enumerate}[noitemsep]
    \item
      $a$ is the number of a gate in $C_{k, n_{1}, \dots, n_{k}}$.
    \item
      If $p \neq \epsilon$, then $b$ is the number of the $p\ts{th}$ predecessor gate in $C_{k, n_{1}, \dots, n_{k}}$ of $a$.
    \item
      If $p = \epsilon$, then $b$ encodes the gate type of $a$ (using the fixed numbering of gate types).
      \qedhere
  \end{enumerate}
\end{definition}

As in the boolean case, we measure the uniformity of a circuit family $\mathcal{C} = (C_{k, n_1, \ldots, n_k})_{k, n_1, \ldots, n_k \in \bN}$ by the complexity of its direct connection language $L_{DC}(\mathcal{C})$.

\begin{definition} \label{definition: uniform arithmetic circuit}
  Let $\bF$ be a field and let $\mathcal{C} = (C_{k, n_1, \ldots, n_k})_{k, n_1, \ldots, n_k \in \bN}$ be a family of constant-free arithmetic circuits of size $s \coloneqq s(k, n_1, \ldots, n_k)$ over $\bF$ with an admissible encoding.
  We say that the circuit family $\mathcal{C}$ is \emph{polylogtime-uniform} if 
  \begin{enumerate}[noitemsep]
    \item
      its direct connection language $L_{DC}(\mathcal{C})$ can be decided in time polynomial in $\log(s) + \log(k) + \sum_{i=1}^k \log(n_i)$, 
    \item
      the number of inputs and outputs of $C_{k,n_1, \ldots, n_k}$, and an upper bound on the size $s$ can be computed in time polynomial in $\log(s) + \log(k) + \sum_{i=1}^k \log(n_i)$, given $(k, n_1, \ldots, n_k)$ as input.
    \item
      the arity of a gate can be computed in time polynomial in $\log(s) + \log(k) + \sum_{i=1}^k \log(n_i)$, given $(k, n_1, \ldots, n_k)$ and the gate name as input.
  \end{enumerate}
  We say that the circuit family $\cC$ is \emph{logtime-uniform} if these tasks can be carried out in time linear in $\log(s) + \log(k) + \sum_{i=1}^k \log(n_i)$.
\end{definition}

\cref{remark: gate naming convention,remark: find type of gate,remark: ordering input outputs} apply verbatim in the arithmetic setting.

\subsection{Integer, rational, and finite field arithmetic} \label{subsection: finite field arithmetic}

In this section, we discuss how elements of various domains are represented by Turing machines and boolean circuits.
We also outline some basic integer and rational arithmetic operations that are in $\TC^{0}$.

Elements of $\bZ$ will be represented in base two with an additional sign bit.
For an integer $a \in \bZ$, the \emph{height} of $a$ is defined to be $\ceil{\log_{2}(\abs{a})}$, therefore, an integer of height $h$ is represented using $h+1$ bits (taking into account the sign bit).
The height of a polynomial with coefficients in $\bZ$ is defined to be the maximum of the heights of its coefficients.
In this representation, the following holds.

\begin{theorem} \label{theorem: integer ops in tc0}
  The following functions are in logtime-uniform $\TC^0$.
  \begin{enumerate}
    \item 
      \textsc{Iterated Integer Addition}: Given $n$ integers $a_1, \ldots, a_n$ as input, each consisting of at most $n$ bits, compute the sum $a_1 + \cdots + a_n$.
    \item 
      \textsc{Iterated Integer Multiplication}: Given $n$ integers $a_1$, \ldots, $a_n$ as input, each consisting of at most $n$ bits, compute the product $a_1 \cdots a_n$.
    \item
      \textsc{Integer Division}: Given integers $a$ and $b$ as input, compute $\lfloor a/b \rfloor$.
    \item 
      \textsc{Polynomial Division with Remainder}: given integers $a_0, \ldots, a_n$ and $b_0, \ldots, b_m$ as input, where $b_m \neq 0$ and $n \geq m$, compute $d \coloneqq b_m^{n-m+1}$ and integers $q_0, \ldots, q_{n-m}, r_0,\ldots,r_{m-1}$ such that for
      \begin{align*}
        f(x) &\coloneqq a_n x^n + \cdots + a_1 x + a_0 &
        q(x) &\coloneqq q_{n-m} x^{n-m} + \cdots + q_1 x + q_0 \\
        g(x) &\coloneqq b_m x^m + \cdots + b_1 x + b_0 &
        r(x) &\coloneqq r_{m-1} x^{m-1} + \cdots + r_1 x + r_0,
      \end{align*}
      we have $d \cdot f(x) = q(x) g(x) + r(x)$.
    \item
      \textsc{Iterated Polynomial Multiplication}: given integers $a_{1,0}, \ldots, a_{m,n}$ as input, compute integers $b_0, \ldots, b_{mn}$ such that
      \[
        b_{mn} x^{mn} + b_{mn-1} x^{mn - 1} + \cdots + b_1 x + b_0 = \prod_{i=1}^m (a_{i, n} x^n + a_{i, n-1} x^{n-1} + \cdots + a_{i, 1} x + a_{i, 0}).
      \]
  \end{enumerate}
\end{theorem}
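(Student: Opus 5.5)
Items 1--3 are the classical results of \textcite{HAB02} (with earlier non-uniform and $\P$-uniform versions due to Beame--Cook--Hoover and Chiu--Davida--Litow), so I will cite them rather than reprove them. Iterated addition is the basic one: split each integer into blocks of $O(\log n)$ bits. Column sums of single bits are computed with majority gates, and the resulting $O(\log n)$-bit partial sums are combined with a constant number of carry-save rounds. Logtime uniformity holds because each gate name is just a tuple of indices. Iterated multiplication is the deep part. The plan of \textcite{HAB02} is to use Chinese remaindering: compute each $a_i \bmod p$ for the first polynomially many primes $p$ of size $O(\log n)$; multiply the residues by taking discrete logarithms (table lookup) and summing them; and reconstruct the product from its residues via iterated addition. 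The hardest step is making the CRT reconstruction and the discrete-log tables logtime-uniform, which HAB achieve by first proving that division is in $\TC^0$ at all. Integer division then follows by computing a Newton or Taylor approximation $1/b \approx \sum_j (1 - b2^{-k})^j 2^{-k}$ using iterated multiplication and addition, and correcting the result by $\pm 1$.

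For item 4, I reduce to items 1--3 by Kronecker substitution. Pick an exponent $N$ that is polynomially large and exceeds the bit length of every intermediate coefficient, and evaluate $f$ and $g$ at $x = 2^N$. The pseudo-quotient satisfies
\[
  q = \frac{d f - r}{g},
\]
and the coefficients of $q$ and $r$ can be written explicitly in terms of determinants of Toeplitz minors. Rather than use that formula, I would compute $d f(2^N)$ and $g(2^N)$ using iterated multiplication and addition, then run integer division to get $\lfloor d f(2^N) / g(2^N) \rfloor$. The coefficients of $q$ are then read off from the $N$-bit blocks.

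The main obstacle is that the coefficients of $q$ and $r$ can be negative, so the blocks are not directly the coefficients. I handle this with the usual offset trick: add $2^{N-1}$ to every coefficient slot before splitting into blocks, and subtract it afterwards. Alternatively, I compute $q$ coefficient by coefficient via the recurrence $q_{n-m-j} = (b_m^{\cdot} a_{\cdot} - \sum \cdots)$. That recurrence is sequential, though, so Kronecker substitution is the route I would commit to. Once $q$ is known, $r = d f - q g$ is obtained with one more multiplication.

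For item 5, I use Kronecker substitution again. Each factor is evaluated at $2^N$, with $N$ larger than $\log_2$ of the largest possible $|b_k|$ plus a sign margin (this is polynomial in the input size). The integers are multiplied using item 2, and the $b_k$ are recovered from $N$-bit blocks, again with the offset trick for signs; block extraction is trivially logtime-uniform. Each reduction is a constant-depth composition of logtime-uniform $\TC^0$ circuits whose gate names are explicit index tuples, so logtime uniformity is preserved.
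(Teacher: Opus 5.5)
Your treatment of items 1--3 matches the paper, which simply cites \cite{reif92} and \cite{HAB02}. Your Kronecker-substitution argument for item 5 is also correct, including the offset $\sum_k 2^{N-1}2^{kN}$ that makes every block nonnegative; the paper instead cites \cite[Corollary~6.5]{HAB02} for items 4 and 5.

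The gap is in item 4. With $X = 2^N$, the integer $\lfloor d f(X)/g(X)\rfloor$ is in general \emph{not} $q(X)$. From $d f(X) = q(X)g(X) + r(X)$ you get $d f(X)/g(X) = q(X) + r(X)/g(X)$. For large $N$ you have $|r(X)/g(X)| < 1$, but this ratio can be negative, and then the floor equals $q(X) - 1$. For example, take $f = x$ and $g = x+1$. Then $d = 1$, $q = 1$ and $r = -1$, yet $\lfloor X/(X+1)\rfloor = 0$. The offset trick only fixes the signs of the individual blocks. It does nothing about this off-by-one in the integer being unpacked, so on such inputs the coefficients you read off, and hence $r = d f - q g$, are wrong.

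The repair is cheap. Let $N'$ be such that every $|b_i|$, $|q_i|$, $|r_i|$ is below $2^{N'}$, and take $N \ge N' + 3$. Then $|r(X)| < 2^{N'+1}X^{m-1}$ and $|g(X)| \ge X^{m-1}(X - 2^{N'+1})$, which gives $|r(X)| < |g(X)|/2$. So $q(X)$ is the integer nearest to $d f(X)/g(X)$. After negating numerator and denominator when $g(X) < 0$, you can compute it with one division as $\lfloor (2 d f(X) + g(X))/(2g(X))\rfloor$.

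Alternatively, compute $Q = \lfloor d f(X)/g(X)\rfloor$, unpack both $Q$ and $Q+1$, and keep the candidate $q$ for which $d f - q g$ (computed via item 5) has degree less than $m$. With either correction the reduction stays in logtime-uniform $\TC^0$.
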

\begin{proof}
  Item 1 is classical, see \cite{reif92}.
  Items 2 and 3 are the main result of \cite{HAB02}.
  Items 4 and 5 are from \cite[Corollary~6.5]{HAB02}.
  For all items above, the number of input and output bits are easily computable from the index of a circuit in the corresponding family.
\end{proof}

The operation performed in item 4 is \emph{pseudodivision}.
The polynomials $q$ and $r$ in the notation above are called the \emph{pseudoquotient} and \emph{pseudoremainder} of $f$ and $g$ respectively.

Multivariate polynomials will be represented in the dense representation.
In other words, the coefficient of every monomial up to a specified degree, including the ones that are zero, will be listed.
To multiply polynomials in $\bZ\bs{y_{1}, \dots, y_{k}}$, we use the Kronecker substitution to reduce to the univariate case.
If we want to multiply $m$ such polynomials of degree $d$, then we substitute $z^{\br{md+1}^{i}}$ for each $y_{i}$.
This gives us $m$ polynomials of degree $\br{md}^{O(k)}$, which we can multiply using the univariate multiplication circuit provided by \cref{theorem: integer ops in tc0}.
The coefficients of the $\vy$ monomials in the product can be read off from the coefficients of $z$ in the product after the above substitution.

An element of $q \in \bQ$ will be represented by a pair of integers $a, b$ with $b \neq 0$ such that $q = a / b$, where $a, b$ may have common factors.
The reason we allow $a, b$ to have common factors is that computing integer GCD is not known to be in $\TC^{0}$, therefore the circuits we are working with cannot convert a fraction to lowest terms.
The height of a pair of integers $a, b$ is simply the maximum of the heights of $a$ and $b$.

Iterated multiplication and addition of rational numbers in the above representation can be carried out by polylogtime-uniform $\TC^{0}$ circuits as follows.
For iterated multiplication, we separately multiply the numerators and denominators using the circuits for iterated integer multiplication (\cref{theorem: integer ops in tc0}).
This directly gives us a representation of the product.
For iterated addition, we first bring all elements to a common denominator.
This common denominator is the product of the denominators of the input, and so a representation of each input over this common denominator can be computed using the circuit for iterated integer multiplication from \cref{theorem: integer ops in tc0}.
Once we have this representation, we can add the numerators using the circuit for iterated integer addition, again from \cref{theorem: integer ops in tc0}.

By the primitive element theorem, every number field $\bK$ is of the form $\bQ\bs{\alpha} \cong \bQ\bs{z} / \ideal{g(z)}$, where $\alpha$ is an algebraic number with minimal polynomial $g(z)$.
Elements of a number field $\bK$ will therefore be represented by polynomials in $\bQ\bs{z}$ of degree less than $\deg g$.
Whenever we work with such a number field, we will assume that $g(z)$ is given as part of the input.

Elements of $\bF_{p}$, where $p$ is a prime, will be represented by an integer between $0$ and $p-1$ inclusive.
The \emph{height} of an element of $\bF_{p}$ is defined to be $\log\br{p}$.

The finite field $\bF_{p^{a}}$ is isomorphic to $\bF_{p}\bs{z} / \ideal{g(z)}$ for some irreducible polynomial $g(z) \in \bF_{p}\bs{z}$ of degree $a$.
Elements of the field $\bF_{p^{a}}$ will therefore be represented by polynomials in $\bF_{p}\bs{z}$ of degree less than $a$.
Whenever we work with such a finite field, we will assume that $g(z)$ is given as part of the input and that $g(z)$ is monic.
The \emph{height} of an element of $\bF_{p^{a}}$ is defined to be $a \log\br{p}$.
For each of the rings $R$ discussed above, the height of a polynomial with coefficients in $R$ is the maximum of the heights of the coefficients.

Our algorithms will occasionally require the fields we work with to have coefficient subfields that are larger than some bound $B$.
If our inputs lie in a field with coefficient subfield $\bF_{p^{a}}$ where $p^{a}$ is less than $B$, then we will instead pass to a field extension that is large enough and work over that field instead.
The following lemma shows that this can be done efficiently by a Turing machine.

\begin{lemma}
  \label{lemma: passing to bigger finite field}
  Given a finite field $\bF_{p^{a}}$ via an irreducible polynomial $g(y) \in \bF_p[y]$ of degree $a$, and a bound $B$, there is a Las Vegas algorithm that explicitly constructs a finite field $\bF_{p^{b}}$ such that $p^{b} \geq B$ and $\bF_{p^{a}}$ is a subfield of $\bF_{p^{b}}$.
  The algorithm also constructs an $\bF_{p}$-linear map $\phi$ (represented as a matrix in the powers of $y$ bases) that maps $\bF_{p^{a}}$ to an isomorphic subfield of $\bF_{p^{b}}$.
  The expected running time of the algorithm is polynomial in $a \log p$ and $\log |B|$.
\end{lemma}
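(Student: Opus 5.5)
The construction takes $\bF_{p^{b}}$ with $b$ a multiple of $a$. Concretely, set $b \coloneqq a \cdot c$, where $c \ge 1$ is the least integer with $p^{ac} \ge B$. Then $c \le \lceil \log B / (a \log p)\rceil + 1$, so $b = O(a + \log B)$, which is polynomial in the input size. Because $a \mid b$, the field $\bF_{p^b}$ contains a unique subfield of order $p^a$, so the containment $\bF_{p^a} \subseteq \bF_{p^b}$ holds automatically. The work is to present $\bF_{p^b}$ explicitly and to compute the embedding $\phi$.

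\emph{Step 1: construct $\bF_{p^b}$.} We need a monic irreducible $h(z) \in \bF_p[z]$ of degree $b$. Repeatedly sample a uniformly random monic polynomial of degree $b$ and test it for irreducibility with Rabin's test (or Ben-Or's). The test checks $z^{p^b} \equiv z \pmod h$ and $\gcd(z^{p^{b/r}} - z, h) = 1$ for every prime $r \mid b$. Factoring $b$ is fine because $b$ is polynomially bounded, and the powers are computed by repeated squaring modulo $h$. So each test costs $\poly(b \log p)$ time. A random monic polynomial of degree $b$ is irreducible with probability at least $1/(2b)$, so the expected number of trials is $O(b)$. This is the Las Vegas part: on success the output is certified correct.

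\emph{Step 2: find a root of $g$ in $\bF_{p^b} = \bF_p[z]/\ideal{h}$.} Since $g$ is irreducible of degree $a$ and $a \mid b$, $g$ splits into distinct linear factors over $\bF_{p^b}$. We find one root $\beta$ with a randomized equal-degree factorization over $\bF_{p^b}$, namely Cantor--Zassenhaus. For odd $p$, take a random $r$ and compute $\gcd(g, (r)^{(p^b-1)/2} - 1)$. For $p = 2$, use the trace-map variant. Each split succeeds with constant probability. Recursing on the smaller factor reaches a linear factor after $O(\log a)$ expected rounds, for a total expected cost of $\poly(a, b, \log p)$. The output is verifiable, since we can check $g(\beta) = 0$, so the algorithm stays Las Vegas. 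I expect this step to be the main obstacle, though only because of bookkeeping: the characteristic-2 case needs the trace variant, and one must confirm that the arithmetic cost over $\bF_{p^b}$ is polynomial in $b \log p$.

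\emph{Step 3: the embedding.} Define $\phi : \bF_p[y]/\ideal{g} \to \bF_p[z]/\ideal{h}$ by $y \mapsto \beta$. It is a well-defined ring homomorphism because $g(\beta) = 0$, and it is injective because its domain is a field. It is therefore an isomorphism onto the subfield of order $p^a$. As an $\bF_p$-linear map, its matrix in the power bases has columns $1, \beta, \beta^2, \ldots, \beta^{a-1}$, each reduced modulo $h$ and written in the basis $1, z, \ldots, z^{b-1}$. Computing these takes $a$ multiplications in $\bF_{p^b}$. Summing the three steps, the expected running time is polynomial in $a \log p$ and $\log B$.
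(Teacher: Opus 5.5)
Your proposal is correct and follows essentially the same route as the paper: take $b$ to be a multiple of $a$ with $p^b \ge B$, sample random degree-$b$ polynomials until one is certified irreducible, find a root $\beta$ of $g$ in $\bF_p[z]/\langle h \rangle$, and define $\phi$ by $y \mapsto \beta$. The only difference is that the paper cites von zur Gathen--Gerhard for the costs of irreducibility testing and root finding, whereas you spell out Rabin's test and Cantor--Zassenhaus explicitly.
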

\begin{proof}
  Recall that $\bF_{p^{n}}$ is a subfield of $\bF_{p^{m}}$ if and only if $m$ is a multiple of $n$.
  We therefore pick $b$ to be the first multiple of $a$ that is larger than $\log |B| / \log p$.
  To construct an irreducible polynomial of degree $b$, we simply sample a random polynomial of degree $b$ in $\bF_{p}\bs{y}$ and test for irreducibility.
  By \cite[Corollary~14.39]{vzGG13}, this can be performed using $\widetilde{O}\br{b^{3} + b^{2} \log p}$ operations in $\bF_{p}$ in expectation.
  Let this irreducible polynomial be $h(y)$.
  
  To construct the map between $\bF_{p^{a}}$ and $\bF_{p^{b}}$, we simply find a root of $g$ in $\bF_{p^{b}}$.
  That such a root exists follows from Fermat's little theorem and the structure of the factorization of polynomials in finite fields, see for example \cite[Theorem~14.2]{vzGG13}.
  By \cite[Corollary~14.16]{vzGG13}, finding this root requires $\widetilde{O}\br{ab\log{p}}$ operations in $\bF_{p^{b}}$ in expectation.
  If $\alpha$ is such a root, then the map $\phi: \bF_{p}\bs{y} / \ideal{g(y)} \to \bF_{p^{b}}$ sending $y$ to $\alpha$ is an isomorphism between $\bF_{p^{a}}$ and a subfield of $\bF_{p^{b}}$.
\end{proof}
We remark that the method above is far from the optimal way of constructing a compatible field extension.
However, it will suffice for our applications.

%% file: sections/uniformalgebraic.tex
In this section, we show that basic operations on arithmetic circuits can be carried out in a uniformity-preserving manner.
In particular, we show that polynomial interpolation---a standard tool used to construct low-depth arithmetic circuits---admits a uniform implementation.
As a consequence, we obtain uniform families of arithmetic circuits to compute the elementary symmetric polynomials and the inverse of a symbolic Vandermonde matrix.

The main technical tool we use these constructions is an explicit formula for the entries of the inverse of the Vandermonde matrix $\vand{1, \dots, n}$ in terms of Stirling numbers of the first kind.
Combined with an explicit formula for Stirling numbers, we obtain a circuit family that computes the entries of the inverse of $\vand{1, \dots, n}$.
We caution the reader that the following proof will be painstakingly detailed, only because it is the first proof that uses that the above notions of uniformity.

\begin{lemma} \label{lemma: special vand inverses are uniform}
  There exists a polylogtime-uniform family of constant-free, constant-depth circuits $\cC = (C_n)_{n \in \bN}$ over $\bQ$ where $C_{n}$ has no inputs, $n^2$ outputs, size $\poly(n)$ and computes the entries of the inverse of the Vandermonde matrix $\vand{1, \dots, n}$.
\end{lemma}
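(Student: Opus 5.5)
We write down an explicit formula for each entry of $\vand{1,\dots,n}^{-1}$ as a sum of products of small integers and their inverses, and realize it by a constant-depth constant-free circuit. For $V = \vand{1,\dots,n}$ with entries $V_{i,j} = i^{j-1}$, solving $Vc = e_k$ means finding the coefficients of the Lagrange basis polynomial
\[
  L_k(x) = \prod_{m \neq k} \frac{x - m}{k - m}.
\]
So the $(j,k)$ entry of $V^{-1}$ is the coefficient of $x^{j-1}$ in $L_k(x)$. The denominator is
\[
  \prod_{m\neq k}(k-m) = (-1)^{n-k}(k-1)!\,(n-k)!.
\]
The numerator is $\prod_{m=0}^{n}(x-m) / \bigl(x(x-k)\bigr)$, and its coefficients are expressed through the Stirling numbers of the first kind $s(n+1,\ell)$, the coefficients of the falling factorial $x(x-1)\cdots(x-n)$. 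Synthetic division by $x-k$ gives
\[
  [x^{j-1}]\,\frac{\prod_{m=1}^n(x-m)}{x-k} \;=\; \sum_{\ell=j+1}^{n+1} s(n+1,\ell)\,k^{\ell-j-1}.
\]
This yields the closed form
\[
  (V^{-1})_{j,k} = \frac{(-1)^{n-k}}{(k-1)!\,(n-k)!}\sum_{\ell=j+1}^{n+1} s(n+1,\ell)\,k^{\ell-j-1}.
\]

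The remaining ingredient is a constant-depth formula for $s(n+1,\ell)$ itself. Computing it as an elementary symmetric polynomial in $1,\dots,n$ would require interpolation, which is exactly what we are trying to build. I would therefore use the explicit Schläfli-type formula
\[
  s(N,N-\ell) = \sum_{i=0}^{\ell}(-1)^i\binom{N-1+i}{\ell+i}\binom{N+\ell}{\ell-i}\,S(\ell+i,i),
\]
together with the explicit formula for Stirling numbers of the second kind,
\[
  S(a,b) = \frac{1}{b!}\sum_{r=0}^{b}(-1)^{b-r}\binom{b}{r}r^{a}.
\]
Every quantity is now a polynomially-sized sum of products of integers bounded by $\poly(n)$, their powers, and inverses of factorials. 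Each of these is built constant-free as follows.

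1. An integer $m \le 2n+2$ is a single addition gate of fan-in $m$ over copies of the constant $+1$.
2. A power $r^a$ and a factorial $a!$ are each a single product gate of fan-in at most $\poly(n)$.
3. A binomial coefficient is a single division gate, placed on top.
4. Sums and alternating signs use $+$ gates and multiplication by the constant $-1$.

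Since divisions only involve constants, there is no division-by-zero issue. The whole circuit has constant depth (around ten layers) and size $\poly(n)$.

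The main obstacle is uniformity: designing the gate names so that the direct connection language is decidable in time $\poly(\log n)$. I would follow \cref{remark: gate naming convention} and name each gate by a tuple recording its layer type and its $O(1)$ integer parameters. For example, $(\mathtt{pow}, r, a, t)$ denotes the $t$-th factor of $r^a$, and $(\mathtt{S}, a, b, r)$ denotes the $r$-th summand of $S(a,b)$. Output gates are named by $(j,k)$ in row-major order and mapped to the numbers $0,\dots,n^2-1$, since there are no inputs.

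Given a tuple $(n, a, p, b)$, a machine decodes $a$, checks that its parameters lie in range, and reads off the gate type. It then computes the name of the $p$-th predecessor by $O(1)$ arithmetic on $O(\log n)$-bit numbers and compares it with $b$. The arity of each gate is likewise a simple arithmetic expression in its parameters, and the number of outputs ($n^2$) and the size bound are immediate. Verifying that every layer type admits such local predecessor arithmetic is tedious but routine, and this is where I expect most of the detailed work to lie.
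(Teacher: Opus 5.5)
Your proposal is correct and follows essentially the same route as the paper. Both arguments use three ingredients: a closed formula for the entries of $V_n^{-1}$ in terms of Stirling numbers of the first kind; an explicit expression for these through Stirling numbers of the second kind (your Schl\"afli formula is exactly the paper's identity after reindexing $j = N+i$, in the signed rather than unsigned convention); and a constant-free, constant-depth circuit with tuple-named gates whose predecessors are found by local arithmetic on the names.

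The only real difference is where the inverse-Vandermonde formula comes from. You derive it yourself from Lagrange interpolation and synthetic division, which needs only the single row $s(n+1,\cdot)$ plus the powers $k^{\ell-j-1}$. The paper instead cites the Macon--Spitzbart/Eisinberg--Fedele formula $(-1)^{i+j}\sum_{k\ge\max(i,j)}\frac{1}{(k-1)!}\binom{k-1}{i-1}s(k,j)$, which uses a full table of Stirling numbers. Either works, provided you special-case $0!$, $k^0$ and $0^0$ with the constant gate $+1$, since internal gates must have positive fan-in.
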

\begin{proof}
  An explicit formula for the entries of $\Vand^{-1}\br{1, \dots, n}$ was given by \textcite{macon1958inverses}, and the following statement is from \cite[Lemma~4]{eisinberg1998vandermonde}.
  Letting $V_{n}$ denote $\vand{1, \dots, n}$, we have
  \begin{equation} \label{equation: vandermonde inverse}
    \br{V_{n}^{-1}}_{j, i} = \br{-1}^{i+j} \sum_{k = \max\br{i, j}}^{n} \frac{1}{\br{k-1}!} \binom{k-1}{i-1} s(k, j),
  \end{equation}
  where $s(\cdot, \cdot)$ denotes the Stirling number of the first kind.
  These Stirling numbers can be computed using the formula
  \begin{equation} \label{equation: stirling number}
    s(a, b) = \sum_{j=a}^{2a - b} \binom{j-1}{b-1} \binom{2a - b}{j} \sum_{m=0}^{j-a} \frac{\br{-1}^{m+a-b} m^{j-b}}{m! \br{j - a - m}!},
  \end{equation}
  which follows from combining explicit formulas for Stirling numbers of the second kind with symmetric formulas relating the two kinds of Stirling numbers (see \cite[Equation~8.21]{charalmbides02}).

  We implement the circuit $C_n$ using these formulas.
  Our gate names will be tuples of numbers.
  \begin{enumerate}
    \item 
      We start with $2n$ copies of the constants $+1$ and $-1$.
      For each $i \in [2n]$, the gates with names $(0, i)$ and $(1, i)$ are constant gates labeled by the constants $+1$ and $-1$, respectively.
    
    \item
      Next, we compute $2n$ copies of each number from $1$ to $2n$.
      For each $i, j \in [2n]$, the gate with name $(2, i, j)$ is a $+$ gate whose inputs are the gates with names $(0, k)$ for $k \leq i$.
      The gate $(2,i,j)$ computes $i$.
      The $p\ts{th}$ predecessor of $(2, i, j)$ is $(0, p)$.
    
    \item
      We then compute factorials.
      For each $i \in [2n]$, the gate with name $(3, i)$ is a $\times$ gate whose inputs are the gates named $(1, j)$ for $j \leq i$.
      The gate $(3, i)$ computes $i!$.
      The $p\ts{th}$ predecessor of $(3, i)$ is $(1, p)$.
    
    \item
      We now compute binomial coefficients.
      For each $i \in [2n]$ and $j \leq i$, we have gates with names $(4, 1, i, j)$ and $(4, 2, i, j)$.
      The gate $(4, 1, i, j)$ is a $\times$ gate whose inputs are, in order, the gates $(3, j)$ and $(3, i-j)$.
      The gate $(4, 2, i, j)$ is a $\div$ gate whose numerator is $(3, i)$ and whose denominator is $(4, 1, i, j)$.
      Since $(3, i)$ computes $i!$ and $(4,1,i,j)$ computes $(i-j)! \cdot j!$, the gate $(4,2,i,j)$ computes $\binom{i}{j}$.
    
    \item
      Next, we compute numbers of the form $i^j$ as $i$ and $j$ vary over $[2n]$.
      In particular, for each $i, j \in [2n]$, the gate $(5, i, j)$ is a $\times$ gate with inputs $(2, i, k)$ for each $k \leq j$ (the predecessors will be numbered in this order also).
      In this way, the gate $(5, i, j)$ computes $i^{j}$.

    \item
      Now we compute the Stirling numbers using \cref{equation: stirling number}.
      For each $a, b \in [n]$ where $b \leq a$, we compute the $(j, m)$ term in the double sum, where $j \in \bc{a, \ldots, 2a-b}$ and $m \in \bc{0,\ldots,j-a}$, as follows.
      \begin{enumerate}
        \item
          The gate $(6, a, b, j, m, 1)$ is a $\times$ gate that has inputs $(3, m)$ and $(3, j-a-m)$.
          This gate computes the product of factorials $m! \cdot (j-a-m)!$.
        \item
          The gate $(6, a, b, j, m, 2)$ is a $\times$ gate whose first input is $(5, m, j-b)$ and whose second input is either $(0, 1)$ or $(1, 1)$, depending on the parity of $m+a-b$.
          This gate computes $\br{-1}^{m+a-b}m^{j-b}$.
        \item
          The gate $(6, a, b, j, m, 3)$ is a $\div$ gate with numerator $(6, a, b, j, m, 2)$ and denominator $(6, a, b, j, m, 1)$.
          This gate computes $\frac{(-1)^{m+a-b} m^{j-b}}{m! (j-a-m)!}$.
        \item
          Finally, the gate $(6, a, b, j, m, 4)$ is a $\times$ gate with inputs $(6, a, b, j, m, 3)$, $(4, 2, 2a-b, j)$, and $(4, 2, j-1, b-1)$.
          This gate computes $\binom{j-1}{b-1} \binom{2a-b}{j} \frac{(-1)^{m+a-b}}{m! (j-a-m)!}$, which is precisely the $(j, m)$ term appearing in the double sum used to compute the Stirling number $s(a,b)$.
      \end{enumerate}
      For each of the gates above, the predecessors are numbered in the order we have presented them.
      To compute the Stirling number $s(a,b)$, we add a $+$ gate $(7, a, b)$ whose inputs are $(6, a, b, j, m, 4)$ for all $j \in \bc{a, \ldots, 2a-b}$ and $m \in \bc{0,\ldots,j-a}$.
      The predecessors of $(7, a, b)$ are numbered in lexicographic order based on $(j, m)$.
      Given $a$, $b$, and $p$ in binary, computing the $p\ts{th}$ element in the lexicographic order of the pairs that satisfy the constraint can be done in polynomial time.

    \item
      Finally, we compute the entries of the inverse of the Vandermonde matrix using \cref{equation: vandermonde inverse}.
      For each $i, j \in [n]$, we compute the $k$\ts{th} term in the summation, where $k \in \bc{\max(i,j), \ldots, n}$, as follows.
      \begin{enumerate}
        \item
          The gate $(8, i, j, k, 1)$ is a $\div$ gate with numerator $(0, 1)$ and denominator $(3, k-1)$.
          This gate computes $\frac{1}{(k-1)!}$.
        \item
          The gate $(8, i, j, k, 2)$ is a $\times$ gate whose inputs are $(8, i, j, k, 1)$, $(4, 2, k-1, i-1)$, and $(7, k, j)$.
          This gate computes $\frac{1}{(k-1)!} \binom{k-1}{i-1} s(k,j)$.
      \end{enumerate}
      To compute the $(i,j)$ entry of $V_n^{-1}$, we add a $+$ gate $(9, i, j)$ whose inputs are $(8, i, j, k, 2)$ for all $k \in \bc{\max(i,j), \ldots, n}$.
      We add a final $\times$ gate $(10, i, j)$ with inputs $(9, i, j)$ and either $(0,1)$ or $(1,1)$, depending on the parity of $i+j$.
      By \cref{equation: vandermonde inverse}, the gate $(10, i, j)$ correctly computes the $(j,i)$ entry of $V_n^{-1}$.

      Finally, for each of the gates above, its predecessors are numbered in the order we presented them.
  \end{enumerate}

  From the preceding description of the circuit computing $V_n^{-1}$, it is clear that the size of the circuit is polynomial in $n$ and the depth is bounded by a universal constant.
  The above description also serves to bound the uniformity of the circuit family.
  There are a fixed number of rules that determine the type of a gate and its predecessors.
  These directly translate into a construction of a Turing machine that can, in polynomial time in $\log(n)$, determine the type of a gate or decide if one gate is a predecessor of another.
  The only gates with more than a constant number of predecessors are those computing various constants and the gates with names $(7, i, j)$ and $(9, i, j)$.
  For each of these gates, the circuit description shows how to compute the $p\ts{th}$ predecessor of $a$ in time polynomial in the binary length of the gate names and the index of the circuit.
  Therefore we can compute the $p^{th}$ predecessor of $a$ and check if this matches with $b$ to decide the direct connection language.
  The circuit construction explicitly states the arity of every gate, so this can also be computed efficiently.
  Finally, the size of the circuit is also apparent from the above construction, therefore the required upper bound on the size can also be computed efficiently.
\end{proof}

Uniformity of interpolation will follow from the above lemma.
We start by showing how the coefficients of a single variable can be interpolated, and then handle the general case.

\begin{lemma} \label{lemma: polynomial interpolation uniform}
  Let $\cC = (C_{\vn})_{\vn}$ be a polylogtime-uniform family of arithmetic circuits over $\bQ$ of size $s_{\vn}$ and depth bounded by a constant $\Delta$, and with one output.
  Let $f_{\vn}$ be the rational function computed by $C_{\vn}$, 
  let $y$ be a distinguished variable, and suppose that every $C_{\vn}$ is division-free with respect to $y$, so that $f_{\vn} \in \bQ(\vx)\bs{y}$.

  Let $d_{\vn}$ be an upper bound on the degree of $f_{\vn}$ in $y$, and let $f_{\vn,0}, \ldots, f_{\vn,d_{\vn}} \in \bQ(\vx)$ be such that
  \[
    f_{\vn}\br{\vx, y} = \sum_{i=0}^{d_{\vn}} f_{\vn, i}\br{\vx} y^{i}.
  \]
  Finally, suppose that a binary representation of $d_{\vn}$ can be computed from a binary representation of $\vn$ in time polynomial in the binary length of $\vn$ and $\log \br{d_{\vn}}$.
  Then there exists a polylogtime-uniform family $\cD = (D_{\vn})_{\vn}$ of arithmetic circuits over $\bQ$ of size $\poly(s_{\vn}, d_{\vn})$ and depth at most $\Delta + O(1)$ such that $D_{\vn}$ computes $f_{\vn, 0}, \dots, f_{\vn, d_{\vn}}$.
\end{lemma}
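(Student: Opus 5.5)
We will use evaluation and interpolation at the points $y = 1, 2, \ldots, d_{\vn}+1$. Set $N \coloneqq d_{\vn} + 1$. Since $f_{\vn}$ has degree at most $d_{\vn}$ in $y$, we have
\[
  \br{f_{\vn}(\vx, 1), \ldots, f_{\vn}(\vx, N)}^{T} = V_N \cdot \br{f_{\vn,0}, \ldots, f_{\vn,d_{\vn}}}^{T},
\]
with $V_N = \vand{1, \dots, N}$. It follows that $f_{\vn,i} = \sum_{k=1}^{N} (V_N^{-1})_{i+1,k}\, f_{\vn}(\vx, k)$. The circuit $D_{\vn}$ will be assembled from three blocks.

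\textbf{Evaluation block.} This block consists of $N$ copies of $C_{\vn}$. In the $k$\ts{th} copy, the input gate for $y$ is replaced by a $+$ gate whose $k$ predecessors are copies of the constant $+1$. Because $C_{\vn}$ is division-free with respect to $y$, no denominator of any division gate depends on $y$. Hence every denominator computes the same nonzero rational function of $\vx$ that it computes in $C_{\vn}$, and no division by zero is introduced.

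\textbf{Constant block.} This is the circuit $C_N$ from \cref{lemma: special vand inverses are uniform}, which computes the entries of $V_N^{-1}$.

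\textbf{Combination layer.} For each $i \in \bc{0, \ldots, d_{\vn}}$ and $k \in [N]$ there is a $\times$ gate taking the gate for $(V_N^{-1})_{i+1,k}$ and the output of the $k$\ts{th} copy. For each $i$ there is an output $+$ gate summing these products over $k$.

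The depth is $\Delta + O(1)$. The evaluation block contributes $\Delta$ plus one for the substituted constants, and the constant block and combination layer each contribute a constant. The size is $N \cdot s_{\vn} + \poly(N) + O(N^2) = \poly(s_{\vn}, d_{\vn})$.

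Uniformity is handled through gate names, following \cref{remark: gate naming convention}. A gate inside the $k$\ts{th} copy of $C_{\vn}$ is named $(\mathsf{eval}, k, a)$, where $a$ is its number in $C_{\vn}$. A gate of $C_N$ is named $(\mathsf{vand}, a)$. Gates of the $k$\ts{th} substituted input $y \mapsto k$ are named $(\mathsf{pt}, k, \cdot)$. Combination gates are named $(\mathsf{prod}, i, k)$ and $(\mathsf{out}, i)$.

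To decide a query $(\vn, a, p, b)$ in the direct connection language, we first parse the name and then dispatch:
\begin{itemize}
  \item For an $\mathsf{eval}$ gate, we query $L_{DC}(\cC)$ on $(\vn, a, p, b')$, translating names in both directions. If $a$ is the input gate for $y$, it is replaced by the $(\mathsf{pt}, k)$ construction.
  \item For a $\mathsf{vand}$ gate, we query $L_{DC}$ of the family from \cref{lemma: special vand inverses are uniform} at index $N$.
  \item Combination gates are handled directly: the $p$\ts{th} predecessor of $(\mathsf{out}, i)$ is $(\mathsf{prod}, i, p)$.
\end{itemize}
Gate arities and the number of inputs and outputs are obtained in the same way. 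The inputs are those of $C_{\vn}$ other than $y$, and the outputs number $N$. All of this runs in time polynomial in $\log s_{\vn} + \log d_{\vn} + \sum_i \log n_i$. The assumption that $d_{\vn}$ is computable from $\vn$ in time polynomial in the binary length of $\vn$ and $\log d_{\vn}$ is exactly what allows us to obtain $N$, and thus the index passed to the Vandermonde family, within this budget.

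The main obstacle is the bookkeeping needed to respect the admissible-encoding conventions. The inputs and outputs of $D_{\vn}$ must occupy the prescribed initial numbers, while the inputs and outputs of the copies of $C_{\vn}$, and the outputs of $C_N$, become internal gates. We handle this by reserving the low numbers for the genuine inputs (the $\vx$ variables, in the same order as in $C_{\vn}$ with $y$ removed) and for the $(\mathsf{out}, i)$ gates. All other names are mapped injectively, via a polynomial-time pairing function, to numbers above this range. The length of such an encoding is $O(\log s_{\vn} + \log d_{\vn})$, so it stays within $c \log(\mathrm{size})$.

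A minor point is that $y$ may actually occur in several input positions of $C_{\vn}$. In that case we treat all of them as the distinguished input and replace each one. A second minor point is that $C_{\vn}$ may not mention $y$ at all. In that case we still compute the interpolation, and it correctly yields $f_{\vn,0} = f_{\vn}$ and zero for the other coefficients.
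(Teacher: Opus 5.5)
Your proposal is correct and follows essentially the same route as the paper's proof. Both evaluate $d_{\vn}+1$ copies of $C_{\vn}$ with $y$ replaced by gates computing $1,\dots,d_{\vn}+1$, take a copy of the circuit for $\vand{1,\dots,d_{\vn}+1}^{-1}$ from \cref{lemma: special vand inverses are uniform}, combine them in an inner-product layer, and decide the direct connection language by a name-translating simulation. Your explicit handling of the admissible-encoding conventions for the inputs and outputs of $D_{\vn}$ is a little more careful than the paper's, but the argument is the same.
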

\begin{proof}
  Fix $C = C_{\vn}$, $f = f_{\vn}$, and $d = d_{\vn}$.
  We describe the circuit $D = D_{\vn}$.

  The circuit $D$ starts by computing the numbers $1, \dots, d+1$.
  As in the proof of \cref{lemma: special vand inverses are uniform}, we have a set of gates labeled by the constant $1$, and we use sum gates to compute $1, \ldots, d+1$ from these constants.
  The constant gates are named $(1, i)$ for $i \in [d+1]$, and the gates computing $1, \ldots, d+1$ are named $(2, i)$ for $i \in [d+1]$.
  For these sum gates, the $p\ts{th}$ predecessor of $(2,i)$ is $(1, p)$, as long as $p \leq i$.

  The circuit $D$ then contains $d+1$ copies of $C$.
  In the $i\ts{th}$ copy, the input gate $y$ is converted to sum gate with a single input, namely the gate $(2, i)$ that computes the number $i$.
  The other input gates of $C$ are also converted to sum gates whose only input is the input gate of $D$ labeled by the same variable.
  Therefore, these copies of $C$ compute the partial evaluations $f(\vx, 1), \dots, f(\vx, d+1)$.
  The condition that $y$ is not in any denominator of $C$ ensures that the circuit does not divide by zero.
  The gates in these copies of $C$ are named as follows: if $v$ is the name of a gate in $C$, then the corresponding gate in the $i\ts{th}$ copy of $C$ in $D$ is named $(3, i, v)$.

  The circuit $D$ also has a copy of the circuit constructed in \cref{lemma: special vand inverses are uniform} that computes the inverse of the Vandermonde matrix $\vand{1, \dots, d+1}$.
  We denote this circuit by $V$.
  For each gate named $v$ in $V$, its corresponding copy in $D$ is named $(4, v)$.

  The polynomial $f_{\vn, i}(\vx)$ is the inner product of a column of the inverse of the Vandermonde matrix, and the outputs of the copies of $C$.
  These are computed using $d+1$ product gates and one sum gate each.
  For each $i, j \in [d+1]$, the gate $(5, i, j)$ is a $\times$ gate connected to the output of the $j\ts{th}$ copy of $C$ and the $(j, i)$ entry of the inverse of the Vandermonde matrix, and its predecessors are numbered in this order.
  For each $i \in [d+1]$, the gate $(6, i)$ is a $+$ gate with inputs $(5, i, j)$ for all $j \in [d+1]$.
  The $p\ts{th}$ predecessor of $(6, i)$ is $(5, i, p)$.
  The gate $(6, i)$ computes the polynomial $f_{\vn, i}(\vx)$.

  The claims on the size and depth of the circuit family $\cD$ follow from the above description and the corresponding bounds on the subcircuits of $D$.
  We now argue that $\cD$ is a polylogtime-uniform family by describing how the direct connection language is decided.

  Let $T_{C}$ and $T_{V}$ denote the Turing machines that decide the direct connection languages of $C$ and $V$, respectively.
  We will design a Turing machine $T_D$ that decides the direct connection language of $D$.
  Let $(\vn, a, p, b)$ be an input to $T_D$.
  From the first entry in the gate name $a$, we can determine where in $D$ the gate lies: whether it is part of computing the constants, a gate within a copy of $C$, part of the circuit that computes the inverse of the Vandermonde matrix, or part of the inner product computation.
  If $a$ is part of the subcircuit that computes the constants, then deciding if the input is a YES instance is straightforward.

  Suppose $a$ is part of the $i\ts{th}$ copy of $C$.
  From $a$, we can extract the name $v_a$ of the gate within $C$ of which $a$ is a copy.
  The machine ensures $i \leq d+1$ (recall that $d=d_\vn$ can be computed in polynomial time in the binary length of $\vn$ and $d_\vn$), and rejects otherwise.
  If $v_{a}$ corresponds to the input variable $y$, then $T_{D}$ accepts if $p = \epsilon$ and $b$ denotes the type $+$, or if $p = 1$ and $b$ is the gate computing the constant $i$; all other inputs are rejected.
  If $v_{a}$ corresponds to any other input gate, then similarly we accept if either $p = \epsilon$ and $b$ denotes $+$, or if $p = 1$ and $b$ is the input gate in $D$ labeled by the same variable.
  If $v_{a}$ is not an input gate and if $p = \epsilon$, then the machine $T_D$ simulates $T_{C}$ on the input $(\vn, v_{a}, p, b)$ and accepts or rejects accordingly.
  If $p \neq \epsilon$, then $T_D$ checks if $b$ is also a gate in the $i\ts{th}$ copy of $C$.
  If not, the machine rejects.
  Otherwise, the machine $T_D$ simulates $T_{C}$ on $(\vn, v_{a}, p, v_{b})$, where $v_b$ is the name of the gate in $C$ corresponding to $b$.

  If $a$ is instead part of the circuit $V$, we instead will simulate $T_V$ on an appropriate input to decide if $(\vn, a, p, b)$ is in the direct connection language of $D$.
  This again requires extracting the name of the gate within the circuit $V$ from $a$, and potentially doing the same for $b$ if $p \neq \epsilon$.
  The requirement that the binary representation of $d_{\vn}$ can be written down in time polynomial in the binary representation of $\vn$ ensures that the input to $T_{V}$ can be computed by $T_{D}$ in the allotted time.

  Finally, if $a$ is part of the inner product subcircuit, then it is either connected to the outputs of the copies of $C$ and $V$ (which is the case for gates labeled $(5, i, j)$) or to other gates in the inner product subcircuit (which is the case for gates labeled $(6, i)$).
  In the first case, to decide if $b$ is a predecessor of $a$, we can check if $b$ is an output gate of the correct subcircuit.
  If $a = (5, i, j)$ then based on whether $p = 1$ or $2$, $b$ is either the output of the $j\ts{th}$ copy of $C$ or the $(j, i)$ entry of the inverse of the Vandermonde matrix.
  The latter case is handled similarly.
  The above description explicitly states the arity of every gate, so this can likewise be computed efficiently (with oracle calls to decide the arity of gates in $C$ if required).
  If $C_{\vn}$ has $m$ inputs then $D_{\vn}$ has $m-1$ inputs, and $D_{\vn}$ always has $d_{\vn} + 1$ outputs.
  Finally, a bound on the size of $D_{\vn}$ follows from a bound on the size of $C_{\vn}$, the parameter $d_{\vn}$, and the size of the circuit for the inverse of the Vandermonde matrix, therefore such a bound can easily be computed.
\end{proof}

Our proofs of uniformity will often be similar to the one above: we will construct our circuits using previously-constructed uniform subcircuits, along with some other simple machinery.
In all cases, the direct connection language will be decided by simulating the Turing machines that decide the direct connection languages of the subcircuits.
Therefore, the remaining proofs will not be as low-level as the two proofs of uniformity above.

Uniformity of multivariate interpolation follows from \cref{lemma: polynomial interpolation uniform} using Kronecker substitution.
\begin{lemma} \label{lemma: multivariate polynomial interpolation uniform}
  Let $\cC = (C_{\vn})_{\vn}$ be a polylogtime-uniform family of arithmetic circuits over $\bQ$ of size $s_{\vn}$ and depth bounded by a constant $\Delta$, and with one output.
  Let $f_{\vn}$ be the rational function computed by $C_{\vn}$, let $\vy$ be a distinguished set of $m_{\vn}$ variables, and suppose that every $C_{\vn}$ is division-free with respect to $\vy$, so that $f_{\vn} \in \bQ(\vx)\bs{\vy}$.

  Let $d_{\vn}$ be an upper bound on the degree of $f_{\vn}$ in $\vy$, and for each $\valpha \in \bN^{m_{\vn}}$ with $\abs{\valpha} \leq d_{\vn}$, let $f_{\vn, \valpha} \in \bQ(\vx)$ be the coefficient of $\vy^{\valpha}$ in $f_{\vn}$.
  Finally, suppose that binary representations of $d_{\vn}$ and $m_\vn$ can be computed from a binary representation of $\vn$ in time polynomial in the binary length of $\vn$ and $\log \br{d_{\vn}}$.

  Then there exists a polylogtime-uniform family $\cD = (D_{\vn})_{\vn}$ of arithmetic circuits over $\bQ$ of size $\poly\br{s_{\vn}, \br{d_{\vn} + 1}^{m_{\vn}}}$ and depth at most $\Delta + O(1)$ such that $D_{\vn}$ computes $f_{\vn, \valpha}$ for all $\valpha \in \bN^{m_{\vn}}$ with $\abs{\valpha} \leq d_{\vn}$.
\end{lemma}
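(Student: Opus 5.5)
We reduce to the univariate case of \cref{lemma: polynomial interpolation uniform} via Kronecker substitution. Write $m = m_{\vn}$, $d = d_{\vn}$, and set $B \coloneqq d+1$. Introduce a fresh variable $z$ and substitute $y_j \mapsto z^{B^{j-1}}$ for $j \in [m]$. Since every exponent vector $\valpha$ with $\abs{\valpha} \le d$ has all entries at most $d < B$, the map $\valpha \mapsto \sum_j \alpha_j B^{j-1}$ is injective on these exponents. Hence the coefficient of $z^{\sum_j \alpha_j B^{j-1}}$ in $g(\vx, z) \coloneqq f_{\vn}(\vx, z, z^{B}, \ldots, z^{B^{m-1}})$ is exactly $f_{\vn,\valpha}$. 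The degree of $g$ in $z$ is at most $D \coloneqq d B^{m-1} < B^m = (d+1)^m$, and a binary representation of $D$ is computable in time polynomial in $\log d$ and $m$. Since $m \le \log_2 B^m \le \log(\text{size bound})$, this fits the time budget.

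\textbf{Step 1: a circuit family $\cC'$ computing $g$.} Take a copy of $C_{\vn}$ with input $z$ and one $\times$ gate $(1,j,k)$ for each $j\in[m]$, $k \in [B^{j-1}]$. Gate $(1,j)$ has $B^{j-1}$ predecessors, all equal to the input $z$, so it computes $z^{B^{j-1}}$; repeated wires are handled by the predecessor numbering. Each input gate $y_j$ of $C_{\vn}$ is replaced by a unary $+$ gate fed by $(1,j)$.

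\begin{itemize}
\item \emph{Size.} The size is $s_{\vn} + O(m B^{m})$.
\item \emph{Depth.} The depth is $\Delta + 2$.
\item \emph{Divisions.} The circuit remains division-free with respect to $z$, because $z$ enters only where the $y_j$ did.
\end{itemize}

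Uniformity follows the pattern of \cref{lemma: polynomial interpolation uniform}. Given a gate name, we decide whether it is a copy of a gate of $C_{\vn}$, the input $z$, or a power gate. We then simulate $T_C$, or answer directly: the $p$\ts{th} predecessor of $(1,j)$ is $z$ whenever $p \le B^{j-1}$, which is a comparison of numbers with polynomially many bits. The arity, input/output counts, and a size bound are likewise computable.

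\textbf{Step 2: univariate interpolation.} Apply \cref{lemma: polynomial interpolation uniform} to $\cC'$ with distinguished variable $z$ and degree bound $D$. This gives a polylogtime-uniform family computing all coefficients $g_0, \ldots, g_D$ in size $\poly(s_{\vn} + mB^m, D) = \poly(s_{\vn}, (d+1)^m)$ and depth $\Delta + O(1)$.

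\textbf{Step 3: select and reorder outputs.} Define $D_{\vn}$ to output $g_{\kappa(\valpha)}$ for each $\valpha$ with $\abs{\valpha}\le d$, where $\kappa(\valpha) = \sum_j \alpha_j B^{j-1}$. The output gates must be numbered contiguously, so each output gate for $\valpha$ is a unary $+$ gate fed by the coefficient gate $g_{\kappa(\valpha)}$. The remaining coefficients are left as non-output gates. Because the circuit requires out-degree zero for outputs, we note that the unused coefficient gates would otherwise count as outputs; we add them as extra predecessors to a single $\times$-by-zero gate, or simply drop them from the circuit.

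\textbf{Main obstacle.} The hard part is the uniform indexing in Step 3. Given an output index $p$, we must compute the $p$\ts{th} exponent vector $\valpha$ in a fixed order on $\{\valpha : \abs{\valpha}\le d\}$, and conversely, in time polynomial in $m$ and $\log d$. The plan is to order the $\valpha$ lexicographically and rank or unrank via counting formulas $\binom{r+k}{k}$. These binomials have polynomially many bits and can be computed by iterated products in polynomial time, giving a standard combinatorial ranking in $m$ rounds, each done by binary search. If this turns out to be awkward, the fallback is to index outputs by all $\valpha \in \{0,\ldots,d\}^m$ (i.e., by $\kappa$ itself) and accept the extra outputs, which are zero for $\abs{\valpha}>d$. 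The statement allows this within the size bound, provided the output ordering convention of \cref{remark: ordering input outputs} permits it.
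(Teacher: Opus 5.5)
Your proposal is correct and follows the paper's proof: substitute $y_j \mapsto z^{(d+1)^{j-1}}$, build a uniform circuit for the substituted polynomial, then apply the univariate interpolation lemma with distinguished variable $z$. The paper handles output indexing exactly as in your fallback, identifying $\valpha$ by writing the $z$-exponent in base $d+1$, so the ranking/unranking over $\{\valpha : |\valpha| \le d\}$ is optional; likewise, your tighter degree bound $d(d+1)^{m-1}$ versus the paper's $d(d+1)^{m}$ makes no difference.
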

\begin{proof}
  We use Kronecker substitution.
  Define $g_{\vn}$ as
  \[
    g_{\vn}(\vx, z) \coloneqq f_{\vn}\br{\vx, z, z^{d_{\vn}+1}, z^{\br{d_{\vn}+1}^{2}}, \dots, z^{\br{d_{\vn}+1}^{m_{\vn}-1}}}.
  \]
  The polynomial $g_{\vn}$ has degree at most $d_{\vn} \cdot \br{d_{\vn}+1}^{m_{\vn}}$ and can be computed by a constant-free circuit of size $\poly\br{s_\vn, \br{d_{\vn} + 1}^{m_{\vn}}}$ and depth $\Delta + O(1)$ as follows.
  We construct a circuit with inputs $\vx, z$.
  Using $+$ gates to copy $z$ and $\times$ gates, we can create gates that compute the powers $z^{\br{d_{\vn}+1}^{i}}$ for all $i \in [m_{\vn}-1]$.
  The numbering of the predecessor gates will be the natural one, similar to how we computed powers and factorials in previous proofs.

  We then have a copy of the circuit for $f_{\vn}$, with all input gates changed to $+$ gates and wired to the inputs $\vx$ and these powers of $z$.
  The gates are named as in previous proofs: the names indicate what power of $z$ the gate is involved in computing, or that the gate is part of the circuit for $f_{\vn}$, in which case the name will contain the name of the corresponding gate within the circuit for $f_{\vn}$.
  This circuit family that computes $\bc{g_{\vn}}_{\vn}$ is polylogtime-uniform.
  The direct connection language can be decided by simulating the Turing machine for the circuit family $\cC$ for gates in the subcircuit that computes $f_{\vn}$.

  We now apply \cref{lemma: polynomial interpolation uniform} to this circuit family (the family is division-free with respect to $z$, and we can  compute the binary representation of $d_{\vn} \cdot \br{d_{\vn}+1}^{m_{\vn}}$ in time polynomial in the binary length of $\vn$ and $m_\vn \log \br{d_{\vn}}$).
  This gives us a circuit family that computes the coefficients of $g_{\vn}$ as a polynomial in $z$, which are in bijection with the coefficients $f_{\vn, \valpha}$.
  The gate names in the circuit computing the coefficients of $g_{\vn}$ encode the power of $z$ whose coefficient is being computed.
  Given this, it is easy to compute the monomial $\valpha$ whose coefficient is being computed at each gate: this just involves changing a number to base $d_{\vn} + 1$ and listing the digits.
\end{proof}

\cref{lemma: multivariate polynomial interpolation uniform} allows us to extract the coefficient of every monomial in the distinguished variables.
For some applications, we will need to obtain the coefficients of a given polynomial when treated as a univariate polynomial in many different distinguished variables $y$.
The following lemma shows that this task can be performed uniformly.

\begin{lemma} \label{lemma: one at a time polynomial interpolation uniform}
  Let $\cC = (C_{\vn})_{\vn}$ be a polylogtime-uniform family of arithmetic circuits over $\bQ$ of size $s_{\vn}$ and depth bounded by a constant $\Delta$.
  Let $f^{(k)}_{\vn}$ be the rational functions computed by $C_{\vn}$, for $k \in [m'_\vn]$, let $\vy$ be a distinguished set of $m_{\vn}$ variables, and suppose that every $C_{\vn}$ is division-free with respect to $\vy$, so that each $f^{(k)}_{\vn}$ is in $\bQ(\vx)\bs{\vy}$.

  Let $d_{\vn}$ be an upper bound on the degree of all $f^{(k)}_{\vn}$ in $\vy$, and for each $0 \leq i \leq d_{\vn}$,  $j \in [m_{\vn}]$ and $k \in [m'_\vn]$, let $f^{(k)}_{\vn, j, i} \in \bQ(\vx)\bs{y_{1}, \dots, y_{j-1}, y_{j+1}, \dots, y_{m_{\vn}}}$ be such that
  \[
    f^{(k)}_{\vn} = \sum_{i=0}^{d_{\vn}} f^{(k)}_{\vn, j, i} y_{j}^{i}.
  \]
  Finally, suppose that binary representations of $d_{\vn}$ and $m_\vn$ can be computed from a binary representation of $\vn$ in time polynomial in the binary length of $\vn$ and $\log \br{d_{\vn}}$.

  Then there exists a polylogtime-uniform family $\cD = (D_{\vn})_{\vn}$ of arithmetic circuits over $\bQ$ of size $\poly\br{s_{\vn}, m_\vn, d_{\vn}}$ and depth $\Delta + O(1)$ such that $D_{\vn}$ computes $f^{(k)}_{\vn, j, i}$ for all $0 \leq i \leq d_{\vn}$, $j \in [m_{\vn}]$ and $k \in [m'_\vn]$.
\end{lemma}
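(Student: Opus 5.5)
The plan is to reduce to univariate interpolation, handling each pair $(j,k)$ separately and then combining the results into one circuit. The key observation is that $f^{(k)}_{\vn,j,i}$ is the coefficient of $y_j^i$ when $f^{(k)}_\vn$ is viewed as a polynomial in $y_j$ over $\bQ(\vx)[y_1,\dots,y_{j-1},y_{j+1},\dots,y_{m_\vn}]$. We can therefore treat every variable other than $y_j$ as an ordinary input variable.

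First, for fixed $j \in [m_\vn]$ and $k \in [m'_\vn]$, consider the subcircuit $C^{(j,k)}_\vn$ of $C_\vn$ consisting of the gates that lie below the $k$\ts{th} output. Rather than actually pruning, I would take a full copy of $C_\vn$ and designate its $k$\ts{th} output as the single output. This is uniform: gate names in the copy are tagged by $(j,k)$, and the direct connection language is decided by simulating the machine for $\cC$.

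This circuit is division-free with respect to $y_j$, and its degree in $y_j$ is at most $d_\vn$. It therefore meets the hypotheses of \cref{lemma: polynomial interpolation uniform}, with $y_j$ as the distinguished variable. I would regard $(\vn, j, k)$ as an enlarged index. Two things then need checking in time polynomial in the binary length of $(\vn,j,k)$ and $\log d_\vn$: the family indexed this way is polylogtime-uniform (which needs $j \le m_\vn$ and $k \le m'_\vn$, both computable by hypothesis and by the second uniformity item), and $d_\vn$ is computable. \cref{lemma: polynomial interpolation uniform} then gives a polylogtime-uniform family $D_{\vn,j,k}$ of size $\poly(s_\vn,d_\vn)$ and depth $\Delta+O(1)$ that computes $f^{(k)}_{\vn,j,0},\dots,f^{(k)}_{\vn,j,d_\vn}$.

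Finally, $D_\vn$ is the disjoint union of the circuits $D_{\vn,j,k}$ over all $j,k$, sharing the common input gates. Each copy's inputs become $+$ gates of arity one, wired to the shared inputs, exactly as in the proof of \cref{lemma: polynomial interpolation uniform}. Gates are named $(j,k,v)$, and outputs are ordered lexicographically in $(j,k,i)$ so that output numbers can be converted to and from $(j,k,i)$ by arithmetic. The total size is $m_\vn m'_\vn\cdot\poly(s_\vn,d_\vn)$. Since $m'_\vn \le s_\vn$, this is $\poly(s_\vn,m_\vn,d_\vn)$.

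The main obstacle is the bookkeeping for uniformity. Deciding the direct connection language of the union means decoding $(j,k)$ from a gate name and simulating the machine for $D_{\vn,j,k}$. The input and output numbering of \cref{defintion: arithmetic admissible encoding} also has to be respected, and we need the counts $m_\vn$, $m'_\vn$, $d_\vn$ and the size bound to be computable within the time budget; this follows from the hypotheses and \cref{remark: gate naming convention}.
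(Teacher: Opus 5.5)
Your proposal is correct, and the mathematics matches the paper's: evaluate at $y_j = 1,\dots,d_{\vn}+1$ and multiply by the inverse of $\vand{1,\dots,d_{\vn}+1}$. The organization is different, though.

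\textbf{How the paper does it.} The paper does not use \cref{lemma: polynomial interpolation uniform} as a black box. It builds $D_{\vn}$ directly from three pieces:
\begin{enumerate}
\item $m_{\vn}(d_{\vn}+1)$ copies of $C_{\vn}$, one for each pair (evaluation point $i$, variable $y_j$);
\item a single copy of the Vandermonde-inverse circuit;
\item inner-product gates indexed by $k$.
\end{enumerate}
Each evaluation copy therefore serves all $m'_{\vn}$ outputs at once. The uniformity argument is then rerun directly, in the style of the proof of \cref{lemma: polynomial interpolation uniform}.

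\textbf{How your route differs.} You re-index the family by $(\vn,j,k)$, apply \cref{lemma: polynomial interpolation uniform} to each pair, and take a uniform disjoint union. This is more modular, since the interpolation wiring never has to be re-verified. The cost is a factor $m'_{\vn}$ more copies of $C_{\vn}$, and $m_{\vn}m'_{\vn}$ copies of the Vandermonde circuit instead of one. You correctly absorb this into $\poly(s_{\vn},m_{\vn},d_{\vn})$.

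\textbf{One point to patch.} A full copy of $C_{\vn}$ with only the $k$-th output designated still contains the other former outputs. These are gates of out-degree zero, and under \cref{definition: arithmetic circuit} they count as output gates. So the single-output hypothesis of \cref{lemma: polynomial interpolation uniform}, and the output-numbering requirement of the admissible encoding, are not literally met. Either of these fixes works:
\begin{itemize}
\item note that the proof of that lemma only ever wires the designated output, so the other out-degree-zero gates are harmless;
\item add one gate computing $f^{(k)}_{\vn} + 0\cdot(\text{product of the dangling gates})$, at $O(1)$ extra depth.
\end{itemize}
The paper's construction never meets this issue, because every output of every copy of $C_{\vn}$ is consumed by the inner-product gates.
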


\begin{proof}
  This proof is similar to that of \cref{lemma: polynomial interpolation uniform}.
  Fix $C = C_{\vn}$, $f = f_{\vn}$, and $d = d_{\vn}$.
  We describe the circuit $D = D_{\vn}$.

  The circuit $D$ has set of gates named $(1, i)$ that are constant gates labeled by the constant $1$, and a set of $+$ gates $(2, i)$ for $i \in [d]$ that compute the numbers $1, \dots, d+1$.
  The predecessors are numbered the obvious way.
  The circuit $D$ then has $m_{\vn} \br{d_{\vn} + 1}$ copies of $C$, one for every pair $(i, j)$ with $0 \leq i \leq d_{\vn}$ and $1 \leq j \leq m_{\vn}$.
  The gates in the $(i,j)$\ts{th} copy are named $(3, i, j, v)$, where $v$ is the name of the corresponding gate in $C$.
  In the $(i,j)$\ts{th} copy of $C$, the input gates labeled by $y_j$ are replaced by an arity-one summation gate whose child is the gate $(2,i)$ computing the integer $i$.

  The circuit $D$ then has a copy of the circuit $V$ from \cref{lemma: special vand inverses are uniform} that computes the inverse of the Vandermonde matrix $\vand{1, \dots, d_{\vn}+1}$.
  These gates are named $(4, v)$, where $v$ is the name of the corresponding gate in $V$.
  Following this, the circuit $D$ has gates $(5, k, *)$ and $(6, k, *)$ that compute all required products of the evaluations and the matrix inverse.
  Again the predecessors will be numbered just as in the proof of \cref{lemma: polynomial interpolation uniform}.

  The claimed bounds on the size and depth of $D$ follow from the description above.
  It remains to show how the direct connection language of $D$ is decided.
  Let $(\vn, a, p, b)$ be an input to the machine deciding the direct connection language.
  If $a$ is of the form $(t, *)$ with $t \neq 3$, then deciding if this is a YES instance can be done the exact same way as in the proof of \cref{lemma: polynomial interpolation uniform}.
  If $a$ is of the form $(3, i, j, v)$, and if $v$ is an input gate within $C$, we can use the indices $i, j$ to ensure that the input is wired to the correct constant or variable (we first check that $0\le i \le d_\vn$ and $1 \le j \le m_\vn$, which takes time polynomial in the binary length of $\vn$, by assumption).
  In particular, if the input is $x_{t}$ with $t \neq j$, then the gate must be wired to $x_{t}$ itself, otherwise if $t = j$ then the input is wired to the gate computing the constant $i$.
  The rest of the computation is also the exact same as in the proof of \cref{lemma: polynomial interpolation uniform}.
\end{proof}

A corollary of the above interpolation results is a uniform family of circuits that compute the elementary symmetric polynomials.
\begin{lemma} \label{lemma: esym uniform}
  There exists a polylogtime-uniform family of constant-free, weakly division-free, constant-depth arithmetic circuits $\cC = (C_n)_n$ over $\bQ$ such that $C_{n}$ has size $\poly(n)$ and computes the elementary symmetric polynomials $e_{1}, \dots, e_{n}$ on $n$ variables.
\end{lemma}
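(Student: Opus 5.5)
We use the generating-function identity
\[
  \prod_{i=1}^{n} (1 + x_i y) = \sum_{k=0}^{n} e_k(x_1, \ldots, x_n)\, y^k,
\]
and extract the coefficients in $y$ with \cref{lemma: polynomial interpolation uniform}.

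The first step is to build a polylogtime-uniform family $(C'_n)_n$ of constant-free, division-free circuits with inputs $x_1, \ldots, x_n, y$ that computes $P_n(\vx, y) \coloneqq \prod_{i=1}^n (1 + x_i y)$. The gates are named as follows.
\begin{enumerate}
  \item The gates $(0, i)$ are constant gates labeled $+1$.
  \item The gates $(1, i)$ are $\times$ gates whose first and second predecessors are $x_i$ and $y$.
  \item The gates $(2, i)$ are $+$ gates whose first and second predecessors are $(0, i)$ and $(1, i)$.
  \item A single output $\times$ gate has, as its $p$\ts{th} predecessor, the gate $(2, p)$, for $p \in [n]$.
\end{enumerate}
This circuit has depth $3$, size $O(n)$, and no divisions at all, so in particular it is division-free with respect to $y$. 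Its direct connection language is decided by parsing the gate name and comparing indices, which takes time polynomial in $\log n$. The number of inputs ($n+1$), the number of outputs ($1$), the size bound, and the arity of each gate can all be written down directly from $n$.

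The second step is to apply \cref{lemma: polynomial interpolation uniform} with distinguished variable $y$ and degree bound $d_n = n$, which is trivially computable from $n$. This yields a polylogtime-uniform family $(D_n)_n$ of constant-depth, $\poly(n)$-size circuits that output the coefficients of $P_n$ in $y$, namely $e_0, e_1, \ldots, e_n$. To meet the statement exactly, we discard the output for $e_0$, or relabel the outputs so that only $e_1, \ldots, e_n$ are output gates. This is a trivial change to the output numbering and preserves uniformity.

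The remaining point is \emph{weak division-freeness}, and I expect it to be the only subtle step. The circuit $D_n$ contains division gates, but they occur only in the Vandermonde-inverse subcircuit of \cref{lemma: special vand inverses are uniform}. That subcircuit has no inputs, so every denominator is a subtree containing only constant gates. The copies of $C'_n$ contain no division gates, and their input $y$ is replaced by constants. Hence no denominator subtree contains an input variable, and $D_n$ is division-free with respect to every $x_i$, which is exactly the definition of weakly division-free. The circuit is also constant-free, because $C'_n$ uses only the constant $+1$ and the construction in \cref{lemma: polynomial interpolation uniform} builds every other constant from $\pm 1$.
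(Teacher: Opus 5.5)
Your proposal is correct and follows essentially the same route as the paper. The paper also builds a trivial division-free circuit for a product of linear factors, though it uses $\prod_{i=1}^n (x_i + t)$, whose $t$-coefficients are the $e_k$ in reverse order, rather than your $\prod_{i=1}^n (1 + x_i y)$, and then invokes \cref{lemma: polynomial interpolation uniform}. Your explicit check that the only divisions lie in the input-free Vandermonde-inverse subcircuit, so the result is weakly division-free, fills in a point the paper leaves implicit.
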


\begin{proof}
  We use the constant-depth circuits for the elementary symmetric polynomials designed by Ben-Or.
  There exists a polylogtime-uniform family $\cD = \br{D_{n}}_{n}$ of constant-free, polynomial-size, constant-depth circuits over $\bQ$ such that $D_{n}$ has $n+1$ inputs $x_{1}, \dots, x_{n}, t$, and computes $\prod_{i=1}^{n} \br{x_{i} + t}$.
  Indeed $D_{n}$ can just compute this polynomial using $n$ sum gates and a single product gate.
  The degree of $D_{n}$ is exactly $n$, and $\cD$ is division-free.
  Therefore, $\cD$ satisfies all the assumptions in \cref{lemma: polynomial interpolation uniform}, and we can invoke \cref{lemma: polynomial interpolation uniform} to obtain $\cC$.
\end{proof}

Using \cref{lemma: esym uniform}, we can also construct a uniform family of circuits that compute the inverse of a generic Vandermonde matrix.

\begin{lemma} \label{lemma: vandermonde inverse uniform}
  There exists a polylogtime-uniform family of constant-free, constant-depth arithmetic circuits $\cC = (C_n)_n$ over $\bQ$ such that $C_{n}$ has size $\poly(n)$ and computes the entries of the inverse of the matrix $\vand{x_{1}, \dots, x_{n}}$.
\end{lemma}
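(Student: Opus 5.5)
The plan is to write every entry of $\vand{x_1,\dots,x_n}^{-1}$ as an explicit polynomial in elementary symmetric polynomials divided by an explicit product. Let $V = \vand{x_1,\dots,x_n}$, so $V_{i,j} = x_i^{j-1}$. Consider the Lagrange polynomials
\[
  L_i(t) = \prod_{k \neq i} \frac{t - x_k}{x_i - x_k} = \sum_{j=1}^{n} c_{j,i}\, t^{j-1}.
\]
Since $L_i(x_l) = \delta_{il}$, we get $\sum_j x_l^{j-1} c_{j,i} = \delta_{il}$. This says exactly that $VC = I$, so $C = V^{-1}$ and $(V^{-1})_{j,i} = c_{j,i}$.

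Expanding the numerator $\prod_{k\neq i}(t-x_k)$ gives
\[
  c_{j,i} = \frac{(-1)^{n-j}\, e_{n-j}(x_1,\dots,\widehat{x_i},\dots,x_n)}{\prod_{k\neq i}(x_i - x_k)},
\]
with $e_0 = 1$. This formula determines the circuit $C_n$, which we build in four steps.

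First, for each $i \in [n]$ we include a copy of the circuit from \cref{lemma: esym uniform} on the $n-1$ variables $x_1,\dots,\widehat{x_i},\dots,x_n$. Its gates are named $(1,i,v)$, where $v$ is the gate name in the esym circuit. Its input gates become arity-one $+$ gates wired to $x_k$, with $k$ equal to the input index if that index is less than $i$ and to the index plus one otherwise. This wiring is computable in polynomial time from $i$ and the gate name.

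Second, we build the differences $x_i - x_k$. A gate $(2,i,k,1)$ multiplies $x_k$ by a constant $-1$, and a gate $(2,i,k,2)$ adds $x_i$ to that result. A $\times$ gate $(3,i)$ then multiplies $(2,i,k,2)$ over all $k\neq i$, with its $p$\ts{th} predecessor indexed by $k = p$ if $p<i$ and $k = p+1$ otherwise.

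Third, each entry $(j,i)$ gets its sign and numerator. A $\times$ gate $(4,j,i)$ multiplies the output of $(1,i,\cdot)$ computing $e_{n-j}$ by $(-1)^{n-j}$, taken from a constant gate. The case $j=n$ uses $e_0=1$ directly.

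Fourth, a $\div$ gate $(5,j,i)$ takes $(4,j,i)$ as numerator and $(3,i)$ as denominator, and it is the output for entry $(j,i)$.

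The outputs are arranged in row-major order as in \cref{remark: ordering input outputs}. The size is $n\cdot\poly(n) + O(n^2) = \poly(n)$. The depth is the constant depth of the esym circuit plus $O(1)$. Only constants $\pm1$ are used.

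Uniformity is checked as in \cref{lemma: polynomial interpolation uniform}. Given $(n,a,p,b)$, the first coordinate of $a$ tells which block the gate belongs to. For block $1$ we strip the prefix and simulate the direct-connection machine of the esym family with index $n-1$, handling the relabeled input gates as described above. The other blocks have explicit predecessor rules computable in time polynomial in $\log n$, and gate arities are equally explicit.

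The only delicate points are the re-indexing that skips $x_i$ and the requirement that no division by zero occurs. The denominators $\prod_{k\neq i}(x_i-x_k)$ are nonzero as rational functions, so the circuit is well defined over $\bQ(\vx)$. This construction is not weakly division-free, and the lemma does not claim that it is.
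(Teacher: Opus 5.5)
Your proposal is correct and takes the same approach as the paper. The paper likewise writes each entry of $\vand{x_1,\dots,x_n}^{-1}$ as a signed elementary symmetric polynomial in the $n-1$ variables other than one $x_j$, divided by $\prod_{k\neq j}(x_j-x_k)$, uses $n$ copies of the circuit from \cref{lemma: esym uniform} plus $n$ product-of-differences subcircuits, and proves uniformity by simulating the subcircuits' direct-connection machines. Your Lagrange-interpolation derivation of the entry formula (which the paper only quotes) and your explicit treatment of $e_0$ and the index shift skipping $x_i$ are useful details.
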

\begin{proof}
  The circuit uses the following two facts: the determinant of $V \coloneqq \vand{x_{1}, \dots, x_{n}}$ is $\prod_{i < j} \br{x_{i} - x_{j}}$, and the entries of the inverse of this matrix can be written as
  \[
    V^{-1}_{i, j} = \frac{\br{-1}^{n-i} e_{n-i}\br{x_{1}, \dots, x_{j-1}, x_{j+1}, \dots, x_{n}}}{\prod_{k \neq j} \br{x_{j} - x_{k}}}.
  \]
  We now describe the circuit $C_{n}$.
  This circuit contains $n$ copies of the circuit for the elementary symmetric polynomials in $n-1$ variables constructed in \cref{lemma: esym uniform}.
  To the $j\ts{th}$ copy, the inputs provided are $x_{1}, \dots, x_{j-1}, x_{j+1}, \dots, x_{n}$.
  The circuit also has $n$ subcircuits where the $j^{th}$ one computes $\prod_{k \neq j} \br{x_{j} - x_{k}}$.
  Finally, the circuit combines these to compute the entries of the inverse.
  The claims on the size and depth are straightforward given the above description.
  For uniformity, the argument is essentially the same as the one in the proof of \cref{lemma: polynomial interpolation uniform}: the Turing machine that has to decide the direct connection language of $\cC$ can simulate the Turing machines that decide the direct connection languages of the constituent subcircuits.
  Note that the output gate that computes the entry at position $(i, j)$ in the inverse will have $(i, j)$ encoded in its name.
\end{proof}

%% file: sections/resultantbasics.tex
This section discusses the theory of resultants.
We first introduce the multivariate resultant, which is a generalization of the usual resultant to multiple polynomials in several variables that tests when a system of homogeneous polynomial equations has a solution.
We will then discuss how the multivariate resultant can be used to test when systems of inhomogeneous equations have solutions.
Finally, we will discuss how the multivariate resultant can be used to count the number of solutions in zero-dimensional systems.

For a friendly introduction to the multivariate resultant, we refer the reader to \textcite[Chapter 3]{CLO05}.
A comprehensive treatment of resultants can be found in the book of \textcite{GKZ94}, in the survey of \textcite{IK93}, or in the work of \textcite{Jouanolou91}.

\subsection{Multivariate resultants} \label{subsection: multivariate resultant}

The resultant of two univariate polynomials $f, g$ is a polynomial in the coefficients of $f$ and $g$ that vanishes exactly when $f$ and $g$ have a common solution in the algebraic closure $\overline{\bF}$ of the base field $\bF$.
The multivariate resultant generalizes this, and detects when a homogeneous system of equations has projective solutions.

Throughout this section, we fix a natural number $n \in \bN$ and a choice of natural numbers $d_0, \ldots, d_n \in \bN$.
For $d \in \bN$, we write $M_{n, d}$ for the set of homogeneous degree-$d$ monomials in the $n+1$ variables $x_{0}, \dots, x_{n}$.
We take $\vec{u} = \setbuild{u_{i, \valpha}}{i \in \bc{0, \ldots, n}, \valpha \in M_{n, d}}$ to be a set of variables corresponding to the coefficients of $n+1$ homogeneous polynomials of degrees $d_0, \ldots, d_n$, respectively.

With this notation in hand, we now define the multivariate resultant.

\begin{definition}[{see, e.g., \cite[Proposition 2.3]{Jouanolou91}}] \label{definition: multivariate resultant all fields}
  Let $\bF$ be any field.
  The \emph{resultant} $\res_{d_0, \ldots, d_n} \in \bF[\vec{u}]$ is the unique polynomial satisfying the following conditions.
  \begin{enumerate}
    \item
      If $F_0, \ldots, F_n \in \bF\bs{x_0, \ldots, x_n}$ are homogeneous polynomials of degrees $d_0, \ldots, d_n$, respectively, then $\res_{d_{0}, \dots, d_{n}}(F_0, \ldots, F_n) = 0$ if and only if the system of equations $F_0(\vec{x}) = \cdots = F_n(\vec{x}) = 0$ has a solution in $\bP^{n}_{\overline{\bF}}$.
      By $\res_{d_0, \ldots, d_n}(F_0, \ldots, F_n)$ we mean the evaluation of $\res_{d_0, \ldots, d_n}$ obtained by setting $u_{i, \valpha}$ equal to the coefficient of the monomial $\vec{x}^{\valpha}$ in $F_i$.
    \item
      $\res_{d_0, \ldots, d_n}(x_0^{d_0}, \ldots, x_n^{d_n}) = 1$.
    \item
      The polynomial $\res_{d_0, \ldots, d_n}$ is irreducible in $\overline{\bF}[\vu]$.
  \end{enumerate}
  Moreover, if we write $\res_{d_0, \ldots, d_n}^{\bQ}$ and $\res_{d_0, \ldots, d_n}^{\bF}$ for the resultants over $\bQ$ and $\bF$, respectively, then the former has integer coefficients and $\res_{d_0, \ldots, d_n}^{\bF}$ is the image of $\res_{d_0, \ldots, d_n}^{\bQ}$ under the natural ring homomorphism $\bZ\bs{\vec{u}} \to \bF\bs{\vec{u}}$.
\end{definition}

When $n$ and the degrees $d_{0}, \dots, d_{n}$ are clear from context, we will drop the subscript and use $\res$ to denote the resultant $\res_{d_{0}, \dots, d_{n}}$.
The existence of a polynomial satisfying \cref{definition: multivariate resultant all fields} is not obvious.
We refer the reader interested in the existence of the resultant to \textcite{Jouanolou91}.
The final statement in \cref{definition: multivariate resultant all fields}, the fact that the resultant over any field is the image of the resultant over the integers, will be crucial in our proofs.

The multivariate resultant generalizes the determinant.
Consider a collection of homogeneous linear forms $L_0, \ldots, L_n \in \bF\bs{x_0, \ldots, x_n}$ given by $L_i(\vec{x}) = \sum_{j=0}^n a_{i,j} x_j$.
The system of equations $L_0(\vec{x}) = \cdots = L_n(\vec{x}) = 0$ has a nonzero solution exactly when
\[
  \det
  \begin{pmatrix}
    a_{0, 0} & a_{0, 1} & \cdots & a_{0, n} \\
    a_{1, 0} & a_{1, 1} & \cdots & a_{1, n} \\
    \vdots & \vdots & \ddots & \vdots \\
    a_{n, 0} & a_{n, 1} & \cdots & a_{n, n}
  \end{pmatrix}
  = 0.
\]
Likewise, from \cref{definition: multivariate resultant all fields}, we know that this system has a nonzero solution exactly when $\res_{1, \ldots, 1}(L_0, \ldots, L_n) = 0$.
This is no coincidence: the polynomials $\res_{1, \ldots, 1}(\vec{u})$ and $\det_{n+1}(\vec{u})$ are the same!

It is clear from the definition of the resultant that it is a polynomial in $\sum_{i=0}^n \binom{n + d_i}{d_i}$ variables.
Less obvious is its degree, which is provided by the following lemma.

\begin{lemma}[{see, e.g., \cite[Chapter 3, Theorem 3.1]{CLO05}}] \label{lemma: resultant degree}
  The resultant $\res_{d_0, \ldots, d_n}(\vec{u})$ is homogeneous of degree $d_0 \cdots d_{i-1} d_{i+1} \cdots d_n$ with respect to the variables $\setbuild{u_{i, \valpha}}{\valpha \in M_{n, d_i}}$ and is homogeneous of total degree $\sum_{i=0}^n d_0 \cdots d_{i-1} d_{i+1} \cdots d_n$.
\end{lemma}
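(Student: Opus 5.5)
The plan is to get both claims from two standard properties of the resultant: it is multi-homogeneous, and it behaves predictably under scaling of the variables $x_j$. The homogeneity in the block of coefficients of $F_i$ should follow from the irreducibility and vanishing characterization in \cref{definition: multivariate resultant all fields}. Scaling $F_i$ by a nonzero constant $\lambda$ does not change the projective zero set. Hence $\res(F_0, \ldots, \lambda F_i, \ldots, F_n)$ vanishes exactly where $\res$ vanishes. Viewing $\res(\ldots,\lambda u_i,\ldots)$ as a polynomial in $\vu$ over $\overline{\bF}(\lambda)$, irreducibility of $\res$ together with the Nullstellensatz forces it to equal $c(\lambda)\,\res(\vu)$. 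Comparing homogeneous components then shows that $\res$ is homogeneous in $\vu_i$, of some degree $D_i$. The total degree is then $\sum_i D_i$, so it remains to compute each $D_i$.

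To compute $D_n$ (the other indices follow by symmetry), I would specialize $F_0,\ldots,F_{n-1}$ to a generic system whose solutions are explicit. A convenient choice is $F_i = \prod_{k=1}^{d_i} \ell_{i,k}$ with generic linear forms $\ell_{i,k}$, while $F_n$ stays generic. With this choice, $F_0 = \cdots = F_{n-1} = 0$ has exactly $d_0\cdots d_{n-1}$ distinct projective points $p_j$, each the intersection of one chosen linear form from each product. I would then prove a Poisson-type product formula $\res(F_0,\ldots,F_n) = c \prod_j F_n(\tilde p_j)$, where $\tilde p_j$ are fixed affine representatives and $c$ is independent of $F_n$. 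The argument is as follows. As a polynomial in the coefficients of $F_n$, the left side vanishes exactly on the union of the hyperplanes $\{F_n(\tilde p_j)=0\}$. These are distinct irreducible linear forms in $\vu_n$, so the left side is $c\prod_j F_n(\tilde p_j)^{e_j}$ with every $e_j \ge 1$. This already gives the lower bound $D_n \ge d_0\cdots d_{n-1}$.

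\textbf{Main obstacle.} The difficulty is showing that all $e_j = 1$, i.e.\ the matching upper bound on $D_n$. The Nullstellensatz gives only the radical, so reducedness is the real issue. I would try a further degeneration that factors $F_n$ as well. In the case where all $F_i$ are products of linear forms, I expect multiplicativity $\res(\ldots, FG, \ldots) = \res(\ldots,F,\ldots)\res(\ldots,G,\ldots)$. This would follow from the same irreducibility and zero-set argument applied to the factorization, plus a degree count by induction on $\sum d_i$, and it reduces everything to the linear case. In the linear case $\res_{1,\ldots,1} = \det$, which is known to have degree $1$ in each row. Multiplicativity in the $n$-th slot, with the other slots split into $d_0\cdots d_{n-1}$ linear choices, then gives $D_n = d_0\cdots d_{n-1}$ exactly. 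If this degeneration route becomes messy, the fallback is to invoke the cited treatment \cite[Chapter 3, Theorem 3.1]{CLO05} via the Macaulay matrix: its determinant has degree $\binom{n+d_0+\cdots+d_n - n}{n}$ split between the blocks, and the quotient by the extraneous minor leaves precisely the stated degrees.
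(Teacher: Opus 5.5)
\textbf{The upper bound on the degree is not established.} Your homogeneity step is sound, though you should argue pointwise for each $\lambda \in \overline{\bF}\setminus\{0\}$, since the Nullstellensatz is not available over $\overline{\bF}(\lambda)$. Your lower bound $D_n \ge d_0\cdots d_{n-1}$ is also sound.

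Multiplicativity does not follow from ``the same irreducibility and zero-set argument plus a degree count.'' Irreducibility and zero sets only give $\res_{\vec{d}}(\ldots,F_i'F_i'',\ldots) = c\,\res_{\vec{d}'}(\ldots,F_i',\ldots)^{a}\,\res_{\vec{d}''}(\ldots,F_i'',\ldots)^{b}$ with $a,b \ge 1$. The induction cannot pin down $a$ and $b$. Write $D_k(\vec{d})$ for the degree in the $k$th block. The coefficients of $F_i'F_i''$ are bilinear in those of $F_i'$ and $F_i''$, so comparing degrees in these two blocks gives $D_i(\vec{d}) = a\prod_{l\ne i}d_l = b\prod_{l\ne i}d_l$. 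Comparing degrees in any other block $k$ then gives $D_k(\vec{d}) = a\prod_{l\ne k}d_l$.

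This is consistent for every integer $a\ge1$, and the normalization $\res(x_0^{d_0},\ldots,x_n^{d_n})=1$ only fixes $c$. Already for $n=1$ and $\vec{d}=(2,1)$, nothing you use excludes $a=2$. All your tools are blind to multiplicities, so no upper bound on any $D_k$ ever enters. The obstacle you identified is relocated, not resolved.

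\textbf{What would close it.} You need a genuinely multiplicity-sensitive input. One route stays close to your plan.
\begin{enumerate}
\item It suffices to work over $\bQ$. In positive characteristic the resultant is the reduction of the integer one, which is nonzero by the normalization and so has the same multidegree.
\item By Gauss's lemma, $\res$ stays irreducible in $K[\vu_n]$ for $K=\overline{\bQ}(\vu_0,\ldots,\vu_{n-1})$.
\item In characteristic zero this makes $\res$ squarefree over $\overline{K}$. A repeated factor would force $\res$ to divide some nonzero $\partial\res/\partial u_{n,\valpha}$, which has smaller degree in $u_{n,\valpha}$.
\item Over $\overline{K}$, your zero-set argument gives $\res = c\prod_p F_n(\tilde{p})^{e_p}$, with $p$ ranging over the common zeros of the generic $F_0,\ldots,F_{n-1}$. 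There are finitely many, or else $\res$ would vanish for every $\vu_n$.
\item Hence every $e_p=1$, and $D_n$ equals the number of these zeros. This is at most $d_0\cdots d_{n-1}$ by B\'ezout's inequality, so your lower bound gives equality. The same argument works for each block.
\end{enumerate}

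The other route is your fallback, which is exactly what the paper does: it gives no proof and simply cites \cite[Chapter 3, Theorem 3.1]{CLO05}. There the relevant point is not the total size $\binom{d_0+\cdots+d_n}{n}$ of the Macaulay matrix. It is that exactly $d_0\cdots d_{n-1}$ of its rows involve $\vu_n$, while the extraneous factor is free of $\vu_n$. The other blocks follow by reordering the forms.
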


Our work will focus on the design of uniform circuits of constant depth and size $d^{\poly\br{n}}$ that compute the resultant.
To design such circuits, we will use non-determinantal formulations of the resultant, since determinants provably require super-polynomial size to compute using circuits of bounded depth \cite{LST21,Forbes24} and are conjectured to not be computable by quasipolynomial-size circuits of bounded depth.
In particular, we will make use of the following identity, known as the \emph{Poisson formula}, that expresses the resultant $\res_{d_0, \ldots, d_n}$ as a product of two terms: one is the smaller resultant $\res_{d_0, \ldots, d_{n-1}}$, and the other is related to the values attained by $F_n$ on the common zeroes of $F_0, \ldots, F_{n-1}$.
The special case of the Poisson formula corresponding to $n = 1$ was used to design constant-depth circuits for the bivariate resultant \cite{AW24,BKRRSS25b}; its multivariate generalization was already used by \textcite{JS07}, and will likewise be key in designing low-depth circuits for the multivariate resultant.

\begin{theorem}[see, e.g., {\cite[Chapter 3, Section 3, Exercise 8]{CLO05}}] \label{theorem: multivariate poisson formula}
  Let $F_0, \ldots, F_n \in \bF[\vec{x}]$ be homogeneous polynomials of degrees $d_0, \ldots, d_n$, respectively.
  For $i \in \bc{0, 1, \ldots, n}$, let
  \begin{align*}
    \overline{F}_i(\vec{x}) & \coloneqq F_i(x_0, \ldots, x_{n-1}, 0) \\
    f_i(\vec{x}) & \coloneqq F_i(x_0, \ldots, x_{n-1}, 1).
  \end{align*}
  Suppose $\res_{d_0, \ldots, d_{n-1}}(\overline{F}_0, \ldots, \overline{F}_{n-1}) \neq 0$.
  Then the resultant $\res_{d_0, \ldots, d_n}(F_0, \ldots, F_n)$ satisfies the identity
  \[
    \res_{d_0, \ldots, d_n}(F_0, \ldots, F_n) = \res_{d_0, \ldots, d_{n-1}}(\overline{F}_0, \ldots, \overline{F}_{n-1})^{d_n} \cdot \prod_{\valpha \in \Var(f_0, \ldots, f_{n-1})} f_n(\valpha)^{m(\valpha)},
  \]
  where $\Var(f_0, \ldots, f_{n-1}) \subseteq \overline{\bF}^n$ is the finite set of common zeroes of $f_0, \ldots, f_{n-1}$ in the algebraic closure $\overline{\bF}^n$ and $m(\valpha) \in \bN$ is the multiplicity of $\valpha$ in $\Var(f_0, \ldots, f_{n-1})$.
\end{theorem}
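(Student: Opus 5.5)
The plan is to first prove the identity for generic $F_0, \ldots, F_{n-1}$, where all common zeroes are simple, and then specialize to arbitrary coefficients. Throughout, $D \coloneqq d_0 \cdots d_{n-1}$.

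\emph{Generic case.} Work over $K \coloneqq \overline{\bF(\vec{u}_0, \ldots, \vec{u}_{n-1})}$, where $\vec{u}_i$ are the coefficient variables of $F_i$, and leave the coefficients $\vec{u}_n$ of $F_n$ as indeterminates.
\begin{itemize}
  \item Since $\res(\overline{F}_0, \ldots, \overline{F}_{n-1})$ is a nonzero polynomial, the system $F_0 = \cdots = F_{n-1} = 0$ has no solution on the hyperplane $x_n = 0$.
  \item By B\'ezout, the system therefore has exactly $D$ affine solutions counted with multiplicity. For generic coefficients these are $D$ distinct simple points $\valpha_1, \ldots, \valpha_D \in K^n$.
  \item For each $j$, the quantity $\ell_j(\vec{u}_n) \coloneqq f_n(\valpha_j)$ is a nonzero linear form in $\vec{u}_n$ over $K$.
  \item Distinct points give non-proportional forms. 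To see this, evaluate on the coefficient vectors of monomials $x_0^{a_0} \cdots x_{n-1}^{a_{n-1}} x_n^{d_n - |\va|}$, which separate points.
\end{itemize}

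Now view $R \coloneqq \res_{d_0, \ldots, d_n}(F_0, \ldots, F_n)$ as a polynomial in $\vec{u}_n$ over $K$.
\begin{itemize}
  \item By property 1 of \cref{definition: multivariate resultant all fields}, $R$ vanishes wherever some $\ell_j$ vanishes, since then $F_n$ vanishes at the projective point $(\valpha_j : 1)$.
  \item Each $\ell_j$ is irreducible, and they are pairwise coprime, so $\prod_j \ell_j$ divides $R$.
  \item By \cref{lemma: resultant degree}, $R$ is homogeneous of degree exactly $D$ in $\vec{u}_n$. Hence $R = c \cdot \prod_j \ell_j$ for some $c \in K$ independent of $\vec{u}_n$.
\end{itemize}

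\emph{Identifying $c$.} Specialize $F_n = x_n^{d_n}$, so that $f_n \equiv 1$ and $c = \res(F_0, \ldots, F_{n-1}, x_n^{d_n})$. Then use two standard facts (see \cite[Chapter 3]{CLO05}):
\begin{itemize}
  \item multiplicativity in the last argument gives $\res(F_0, \ldots, F_{n-1}, x_n^{d_n}) = \res(F_0, \ldots, F_{n-1}, x_n)^{d_n}$;
  \item reducing modulo $x_n$ gives $\res(F_0, \ldots, F_{n-1}, x_n) = \res(\overline{F}_0, \ldots, \overline{F}_{n-1})$, after fixing the sign via the normalization $\res(x_0^{d_0}, \ldots, x_n^{d_n}) = 1$.
\end{itemize}
This proves the formula in the generic case, where all $m(\valpha) = 1$.

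\emph{Specialization.} To handle arbitrary $F_0, \ldots, F_{n-1}$ with possibly non-reduced zeroes, reinterpret the product with multiplicities as a norm.
\begin{itemize}
  \item On the Zariski-open set $U$ where $\res(\overline{F}_0, \ldots, \overline{F}_{n-1}) \neq 0$, the algebra $A \coloneqq \bF[x_0, \ldots, x_{n-1}]/\langle f_0, \ldots, f_{n-1}\rangle$ is finite and free of rank $D$, with a basis of monomials that can be chosen uniformly over $U$.
  \item $\prod_{\valpha} f_n(\valpha)^{m(\valpha)}$ equals $\det$ of multiplication by $f_n$ on $A$. This follows by decomposing $A$ into local components of length $m(\valpha)$, on which $f_n - f_n(\valpha)$ is nilpotent.
  \item That determinant is a rational function in the coefficients, regular on $U$.
  \item Both sides of the identity are therefore regular on $U$ and agree on a dense subset of $U$ (the generic case), so they agree on all of $U$.
\end{itemize}

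The main obstacle is this last step: proving rigorously that $A$ is free of rank $D$ with a basis that varies regularly over $U$. Finiteness should follow from the absence of zeroes at infinity; freeness is Macaulay's theorem for regular sequences, giving the Hilbert series $\prod_i (1 - t^{d_i})/(1 - t)^n$. If that becomes heavy, the fallback is to avoid $A$ altogether and instead treat the right-hand side as a continuous function of the coefficients over $\bC$, using continuity of roots with multiplicities. A Lefschetz-principle and integrality argument, combined with the last statement of \cref{definition: multivariate resultant all fields}, would then transfer the identity from $\bC$ to arbitrary $\bF$.
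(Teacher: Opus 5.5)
The paper gives no proof of this statement; it is imported from \cite{CLO05}. Your outline is a correct, self-contained proof along the classical lines, and it works in every characteristic once the points below are addressed.
\begin{itemize}
\item For generic $F_0,\dots,F_{n-1}$, the $D$ distinct simple affine zeroes give $D$ pairwise non-proportional linear forms $f_n(\valpha_j)$ in the coefficients of $F_n$, each dividing $\res_{d_0,\dots,d_n}$.
\item The degree count of \cref{lemma: resultant degree} leaves only a constant, which the specialization $F_n = x_n^{d_n}$ identifies as $\res(\overline{F}_0,\dots,\overline{F}_{n-1})^{d_n}$.
\item Arbitrary coefficients are then reached through $\det(m_{f_n})$ on $A$.
\end{itemize}
Compared with the bare citation, your version shows exactly which inputs are used: \cref{definition: multivariate resultant all fields}, \cref{lemma: resultant degree}, multiplicativity, and $\res(F_0,\dots,F_{n-1},x_n) = \res(\overline{F}_0,\dots,\overline{F}_{n-1})$. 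Some treatments deduce multiplicativity \emph{from} the Poisson formula, so you should derive these last two facts directly. Both follow from irreducibility, a degree comparison in the coefficients of one argument, and the normalization $\res(x_0^{d_0},\dots,x_n^{d_n}) = 1$.

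Two small points in the generic step. First, the existence of $D$ distinct simple zeroes needs an argument in positive characteristic. The set of coefficient vectors admitting a non-simple zero is closed, being the image of a closed set under a projection that forgets a $\bP^n$ factor. The system $\prod_k (x_i - a_{ik} x_n)$, $0 \le i \le n-1$, with distinct $a_{ik}$ avoids this set. Second, non-proportionality of the $\ell_j$ uses $d_n \geq 1$.

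One claim in the specialization step is false as stated: in general there is no monomial basis of $A$ that works uniformly over $U$. Take $n = 2$, $d_0 = 2$, $d_1 = 1$. Both $(F_0,F_1) = (x_1^2 - x_2^2,\, x_0)$ and $(x_0^2 - x_2^2,\, x_1)$ lie in $U$. In the first quotient every monomial divisible by $x_0$ vanishes, and in the second every monomial divisible by $x_1$ vanishes, so no two monomials form a basis of both.

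You do not need a uniform basis. Consider $\mathcal{A} \coloneqq \mathcal{O}(U)[x_0,\dots,x_{n-1}]/\langle f_0,\dots,f_{n-1}\rangle$.
\begin{itemize}
\item It is a finite $\mathcal{O}(U)$-module, since there are no zeroes on $x_n = 0$ and the projection is proper.
\item Its fibre at every $u \in U$ is $A_u$, of dimension $D$ by your regular-sequence/Hilbert-series argument.
\item A finite module of constant fibre rank over a reduced base is locally free.
\item Because $\det(m_{f_n})$ does not depend on the basis, the local determinants glue to a regular function on $U$ whose value at $u$ is the determinant on $A_u$.
\end{itemize}
This function equals $\res(F_0,\dots,F_n)/\res(\overline{F}_0,\dots,\overline{F}_{n-1})^{d_n}$ at the generic point by your first step, hence on all of $U$. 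The decomposition of $A_u$ into local components of length $m(\valpha)$ then finishes the proof.

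With this repair the main route is complete. Your $\bC$-continuity fallback would not reach positive characteristic without reintroducing $\det(m_{f_n})$, so it is best dropped.
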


\subsection{Satisfiability of affine systems} \label{subsection: hn to resultant}

The resultant naturally suggests an algorithm to decide the satisfiability (in the algebraic closure of the coefficient field) of homogeneous systems of $n+1$ equations in $n+1$ variables, which are known as \emph{square} systems.
Given equations $F_0(\vx) = F_1(\vx) = \cdots = F_n(\vx) = 0$, we simply compute the resultant $\res(F_0,\ldots,F_n)$ and report that this system is satisfiable if and only if $\res(F_0, \ldots, F_n) = 0$.
To solve a larger system $F_0 = \cdots = F_m = 0$ of homogeneous equations, one can reduce to the square case by forming $n+1$ random linear combinations of the $F_i$.
With good probability, the resulting square system will be equisatisfiable with the original system, so algorithms that compute the resultant allow us to decide the satisfiability of homogeneous systems of equations of any size.
What about systems of inhomogeneous equations?

A natural strategy to solve a system of inhomogeneous equations is to homogenize the system and attempt to use the resultant.
This strategy works for some, but not all, affine systems, depending on the behaviour at infinity of the homogenized equations.
A classical method for using the multivariate resultant to study affine systems is the method of generalized characteristic polynomials, introduced by \textcite{canny90}.
Here, instead of just homogenizing an affine system, an additional perturbation is used to eliminate degenerate behaviour at infinity.
Using this method, \textcite{Ierardi89} obtained a clean reduction from the task of testing satisfiability of affine systems to computing multivariate resultants of polynomials with coefficients in small polynomial rings.
In this subsection, we recall Ierardi's reduction.
We direct the reader to Ierardi's PhD thesis (\cite{ierardi1989complexity}) for an explanation of the rather beautiful geometric ideas that underlie the reduction and related constructions.

We now quote Ierardi's reduction from testing satisfiability of affine systems to computing multivariate resultants.
In the following result and in the rest of this section, we make the following assumptions:
\begin{enumerate}[font=\sffamily,label=(\Alph*)]
\item the field $\bF$ is one of $\bQ$, $\bF_{p^{a}}$, $\bQ\br{y_{1}, \dots, y_{k}}$ or $\bF_{p^{a}}\br{y_{1}, \dots, y_{k}}$ \label{item:assA}
\item in the latter two cases, the input to our algorithms have coefficients that are polynomial in $\vy$.\label{item:assB}
\end{enumerate}
In particular, the height of such input, as introduced in \cref{subsection: finite field arithmetic}, is always well-defined.
\begin{proposition}[\cite{Ierardi89}] \label{lemma: hn to resultant}
  Suppose that  \textsf{\labelcref{item:assA}} and \textsf{\labelcref{item:assB}} hold, and let $f_{1}, \dots, f_{m} \in \bF\bs{x_{1}, \dots, x_{n}}$
  be polynomials of degree at most $d$ and height at most $h$.
  Suppose the coefficient subfield of $\bF$ has at least $15 n d^{n}$ elements.
  There is a polynomial-time Monte Carlo algorithm with success probability $2/3$ that takes as inputs $f_{1}, \dots, f_{m}$, and produces a set of polynomials $G_{i, j} \in \bF\bs{t, w, u, x_{0}, \dots, x_{n}}$ with $0 \leq i \leq n$ and $1 \leq j \leq n$ with the following properties.
  \begin{itemize}
    \item Each $G_{i, j}$ is homogeneous in $x_{0}, \dots, x_{n}$ of degree at most $d$.
    \item Each $G_{i, j}$ has degree at most $n$ in $w$ and degree at most one in $t$ and $u$.
    \item Each $G_{i, j}$ has height at most $h \cdot \br{n \log d}^{c}$ for a universal constant $c$.
    \item The variety $\var{f_{1}, \dots, f_{m}}$ is nonempty if and only if there exists a $j$ such that 
    \[
      \br{\Tt_{w} \Tt_{t} \res\br{G_{0, j}, \dots, G_{n, j}}}\br{0} = 0,
    \]
    where the resultant $\res\br{G_{0,j}, \ldots, G_{n,j}}$ is computed by regarding the $G_{i,j}$ as polynomials in $\vx$ with coefficients in $\bF(t,w,u)$.
  \end{itemize}
\end{proposition}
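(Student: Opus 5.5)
The plan is to reconstruct Ierardi's argument from three ingredients: random linear combinations, generic linear sections, and Canny's generalized characteristic polynomial (GCP) perturbation.

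\textbf{Construction.} First sample random coefficients $c_{k,l}$ from a subset $S$ of the coefficient subfield with $|S| \ge 15 n d^n$, and form $g_k = \sum_l c_{k,l} f_l$ for $1 \le k \le n$. If $m<n$, pad the list of $f$'s with copies before taking combinations. By the standard Bertini/Heintz-type argument, with good probability the following holds for every $j$: the variety $\Var(g_1,\ldots,g_j)$ coincides with $\Var(f_1,\ldots,f_m)$ up to components of dimension exactly $n-j$, and every irreducible component of $\Var(f)$ of codimension $j$ is a component of $\Var(g_1,\ldots,g_j)$. The index $j$ in the statement is meant as a guess for the codimension of some component of $\Var(f)$. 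Since a nonempty variety has a component of some codimension $1\le j\le n$ (the codimension-$0$ case $f\equiv 0$ is trivially detected), it suffices to detect, for each $j$, whether $\Var(g_1,\ldots,g_j)$ has a point lying on $\Var(f)$.

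For the $j$-th system I would take $n$ forms in $x_0,\ldots,x_n$ as follows.
\begin{enumerate}
\item The homogenizations $G_k$ of $g_1,\ldots,g_j$.
\item $n-j$ homogenized random affine hyperplanes whose coefficients depend linearly on the parameter $w$. This cuts the $(n-j)$-dimensional part down to finitely many points, and $w$ keeps the slice generic; this accounts for degree at most $n$ in $w$ after the later manipulations.
\item A $u$-resultant-type form $G_{0,j} = u x_0 + \sum_i r_i x_i$ with random $r_i$.
\end{enumerate}
To each $G_{i,j}$ with $i\ge1$ add the GCP perturbation $t\cdot x_i^{\deg}$, so each $G_{i,j}$ is linear in $t$. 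All coefficients are integer combinations of the input coefficients with small multipliers, which gives the claimed height bound $h(n\log d)^c$ and polynomial running time.

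\textbf{Correctness.} The key step is Canny's theorem: $\Tt_t \res(G_{0,j},\ldots,G_{n,j})$ is a polynomial in $u,w$ that factors as a product of linear factors in $u$, one for each isolated point of the sliced system. The perturbation kills both excess components and the spurious solutions at infinity coming from homogenization. The points at infinity are removed because the perturbation makes the resultant at $t\to\infty$ equal $\res(x_0^{\cdot},\ldots)\ne0$. Taking $\Tt_w$ and then setting $u=0$ should leave a quantity that vanishes exactly when one of these points has $u$-coordinate zero, i.e.\ when there is an actual affine point of $\Var(g_1,\ldots,g_j)$ on the random slice that also lies on $\Var(f)$. To enforce the last condition, I would fold a random combination of the remaining $f$'s into the slice, using Schwartz--Zippel over $S$.

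\textbf{Probability.} Each bad event is the vanishing of a nonzero polynomial of degree at most about $d^n$ (a Bézout bound) in the random choices. A union bound over $j\le n$ with $|S|\ge 15 n d^n$ gives failure probability at most $1/3$.

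\textbf{Main obstacle.} I expect the hardest step to be the exact bookkeeping showing that the double trailing coefficient $\Tt_w\Tt_t$ evaluated at $u=0$ vanishes if and only if a genuine affine solution exists. One has to argue that taking trailing terms commutes appropriately with the Poisson-type factorization. I would handle this by viewing $\res$ as a product over Puiseux-series roots in $t$ and $w$, as in Ierardi's thesis, and checking leading exponents.
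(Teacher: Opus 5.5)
There is a genuine gap in the detection step: your $G_{0,j}$, and the role you give to $w$, do not make $(\Tt_w\Tt_t\res)(0)$ detect points of $\Var(f_1,\dots,f_m)$. By Ierardi's Lemma~2.6 (\cref{lemma: uv resultant lemma}), $\Tt_t\res$ is a product of linear factors, one for each point $\valpha$ of the limit set $V_0^*$, and each factor is the last form evaluated at $\valpha$. With $G_{0,j} = u x_0 + \sum_i r_i x_i$, the factor for $\valpha$ is $u\alpha_0 + \sum_i r_i\alpha_i$. Its value at $u=0$ is $\sum_i r_i\alpha_i$, which is nonzero with high probability for random $r_i$, whether or not $\valpha$ lies on $\Var(f)$. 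So your test would almost always report ``empty''.

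Folding a random combination of the $f$'s ``into the slice'' does not fix this. Among the slicing equations it changes the square system. Inside $G_{0,j}$ next to $\sum_i r_i x_i^d$, an affine solution on $\Var(f)$ still gives $u+\sum_i r_i\alpha_i^d$, whose constant term is nonzero. Dropping the $r_i$ instead lets points at infinity where the top-degree parts of the $f$'s vanish produce identically zero factors.

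Your claim that the $t$-perturbation removes solutions at infinity is also wrong. $\Tt_t$ captures the limit $t\to 0$, not $t\to\infty$, and $V_0^*$ contains every isolated point of $\Var(G_{1,j},\dots,G_{n,j})$, including those on $x_0=0$. This is why the paper's counting version (\cref{lemma: hn counting to resultant}) must strip the resulting factors $u$ with $\mathrm{TP}_u$.

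The missing idea is the paper's form
$G_{0,j} = u_0x_0^d + \sum_{i=1}^n w^i x_i^d + \sum_k \gamma_{k,j}\, x_0^{d-\deg f_k}F_k$, with $w$ absent from the slicing hyperplanes, which are plain random linear polynomials. This is the generic form $L$ of Lemma~2.6 under $u_i\mapsto w^i$ and $v_k\mapsto\gamma_{k,j}$. Because $u_0,w,\dots,w^n$ are linearly independent, no factor is sent to zero, so $\Tt_t$ commutes with the substitution. A factor then reads $u_0\alpha_0^d + \sum_i w^i\alpha_i^d + c_\valpha$, where $c_\valpha = \sum_k\gamma_{k,j}\alpha_0^{d-\deg f_k}F_k(\valpha)$. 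Taking $\Tt_w$ leaves $u_0\alpha_0^d + c_\valpha$ when this is nonzero, and a nonzero constant otherwise; the latter covers points at infinity not seen by the $\gamma$-combination. Evaluating at $u_0=0$ therefore gives zero exactly when some limit point with $\alpha_0\neq 0$ has $c_\valpha=0$. With probability at least $1-d^n/|B|$, that means an affine point lying on $\Var(f)$ (\cref{lemma: affine system detecting isolated roots}). Your codimension guessing with random combinations and random slices matches the paper; this detection mechanism is what your construction lacks.
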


The proof of the above reduction is essentially the content of \cite[\textsection\textsection 2--3]{Ierardi89}.
We provide an outline of the proof below, although we import a technical statement regarding the multivariate resultant without proof (\cref{lemma: uv resultant lemma}).
This technical lemma will also be useful when we show how the resultant can be used to count solutions in a zero dimensional system.

We start by defining a notion of limit set for projective varieties.
\begin{definition} \label{definition: limiting set}
  Let $G_{1}, \dots, G_{n} \in \bF\bs{x_{0}, \dots, x_{n}}$ be forms of degrees $d_{1}, \dots, d_{n}$ respectively.
  Let $V_{0} \coloneqq \var{G_{1}, \dots, G_{n}}$ be the variety defined by $G_{1}, \dots, G_{n}$.
  The \emph{limit set of $V_{0}$} is the subvariety $V_{0}^{*}$ of $V_{0}$ obtained by the following procedure.

  Let $t$ be a new variable.
  Define $\widehat{G_{i}} \coloneqq G_{i} + t \cdot x_{i}^{d_{i}}$ and let $V \coloneqq \var{\widehat{G_{1}}, \dots, \widehat{G_{n}}} \subseteq \bP^{n} \times \bA$.
  Let $V^{*}$ denote the union of the components of $V$ that are not contained within a subspace of the form $\var{t - \tau}$ for any $\tau \in \bF$.
  Equivalently, these are the components of $V$ whose projection onto $\bA$ is surjective.
  Define $V^{*}_{0} \coloneqq V^{*} \cap \var{t = 0}$, which we treat as a variety in $\bP^{n}$.
\end{definition}

\begin{lemma} \label{lemma: limiting set is finite}
  For any forms $G_{1}, \dots, G_{n} \in \bF\bs{x_0,\ldots,x_n}$ of degrees $d_1, \ldots, d_n$, the limit set $V_{0}^{*}$ is a finite subset of $\var{G_{1}, \dots, G_{n}}$ that contains all its isolated points.
  The set $V_{0}^{*}$ has size at most $\prod_{i=1}^{n} d_{i}$.
\end{lemma}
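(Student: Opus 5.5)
We prove the three assertions in order: $V_0^{*} \subseteq \var{G_1, \ldots, G_n}$, finiteness with the bound $\prod_i d_i$, and then the inclusion of the isolated points.

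\textbf{Containment.} Every point of $V^{*}_{0}$ lies in $V \cap \var{t}$. Setting $t = 0$ in $\widehat{G_i}$ gives back $G_i$, so $V^{*}_{0} \subseteq V_0$.

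\textbf{Finiteness and the bound.} Work over the generic point of $\bA$, that is, over the field $\overline{\bF(t)}$. There the forms $\widehat{G_1}, \ldots, \widehat{G_n}$ define only finitely many points of $\bP^n$.

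To see this, first observe that $V$ is cut out by $n$ equations in the $(n+1)$-dimensional space $\bP^n \times \bA$. Every component of $V$ therefore has dimension at least $1$.

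Next, specialise $t \to \infty$: after dividing by $t$, the system becomes $x_1^{d_1} = \cdots = x_n^{d_n} = 0$. Its only solution is the single point $[1:0:\cdots:0]$, with multiplicity $\prod_i d_i$. By upper semicontinuity of fibre dimension, the generic fibre of $V \to \bA$ is then finite. Hence each component of $V^{*}$ has dimension exactly $1$ and is a curve dominating $\bA$.

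By Bézout, the generic fibre has at most $\prod_i d_i$ points, counted with multiplicity. The special fibre $V^{*} \cap \var{t}$ of this flat family of curves over $\bA$ (flat because each component dominates the smooth curve $\bA$) is a finite set of at most $\prod_i d_i$ points, since the degree is preserved under specialisation. If one wants to avoid flatness, an alternative is this: each irreducible curve of $V^{*}$ meets $\var{t}$ in a finite set, because it is not contained in $\var{t}$, and the number of branches is bounded by the degree of its projection to $\bA$. Summing over components gives at most the number of generic points, which is at most $\prod_i d_i$.

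\textbf{Isolated points.} Let $p$ be an isolated point of $V_0$. The pair $(p, 0)$ lies in $V$, so it lies in some component $W$ of $V$, and $\dim W \ge 1$ by Krull's principal ideal theorem. Suppose $W$ did not dominate $\bA$. Then $W$ would be contained in a fibre $\var{t - \tau}$, and since $(p,0) \in W$ we would need $\tau = 0$. So $W \subseteq V_0 \times \{0\}$, and $W$ would be a positive-dimensional irreducible subset of $V_0$ passing through $p$. This contradicts the assumption that $p$ is isolated. Hence $W \subseteq V^{*}$, and $p \in V^{*}_0$.

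\textbf{Main obstacle.} The delicate step is the counting bound $\lvert V_0^{*}\rvert \le \prod_i d_i$. It requires either a flatness/properness argument over $\bA$ or a careful appeal to Bézout for the generic fibre together with conservation of intersection number under specialisation. I would cite a standard form of the latter, namely the principle of conservation of number in Fulton's \emph{Intersection Theory}, applied to the proper map $V^{*} \to \bA$.
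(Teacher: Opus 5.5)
Your proof is correct, but it takes a different route from the paper. The paper proves almost nothing directly here. It cites Ierardi's Lemma~2.3 for the facts that $V_0^*$ is a finite subset of $\var{G_1,\dots,G_n}$ containing every isolated point. It then gets the bound in one sentence, from Bézout's inequality for $\widehat{G_1},\dots,\widehat{G_n}$ viewed over $\bF(t)$.

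You instead prove each part yourself:
\begin{itemize}
\item the isolated-point claim follows from Krull's theorem plus the definition of $V^*$, since a component through $(p,0)$ lying in a fibre would have to lie in $V_0\times\{0\}$;
\item finiteness follows from the degeneration at $t=\infty$ to $x_1^{d_1},\dots,x_n^{d_n}$, which is really why the perturbation $t x_i^{d_i}$ is chosen;
\item the bound follows because $V^*\to\bA$ is finite and flat, so fibre length is constant.
\end{itemize}

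This gives an honest treatment of a step the paper leaves tacit. Bézout over $\bF(t)$ only controls the generic fibre, and some conservation-of-number argument is needed to pass to $t=0$. The cost is heavier machinery, so you should state the ingredients precisely:
\begin{itemize}
\item upper semicontinuity needs properness, so work with $sG_i+x_i^{d_i}$ over $\bA^1_s$, using that $\bP^n$ is proper;
\item flatness over $\bA$ uses the reduced structure on $V^*$, whose components all dominate the base;
\item the generic-fibre bound you need is on length. That is Bézout for a zero-dimensional complete intersection, not Bézout's inequality, which only counts points.
\end{itemize}

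Finally, your ``branches'' alternative is loose in positive characteristic, which the paper must cover. If a component $C\to\bA$ is inseparable, its degree exceeds the number of geometric points of its generic fibre. So either bound fibre cardinalities by the separable degree, or simply keep the length argument, which has no such issue.
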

\begin{proof}
  The first statement is the content of \cite[Lemma~2.3]{Ierardi89}.
  The second statement follows from Bézout's inequality (\cite[Theorem~8.28]{BCS97}) applied to $\widehat{G_{1}}, \dots, \widehat{G_{n}}$: since $V_{0}^{*}$ is finite, its cardinality is bounded above by the Bézout number of $\widehat{G_{1}}, \dots, \widehat{G_{n}}$ seen as polynomials in $\bF(t)[x_0,\ldots,x_n]$.
\end{proof}

We now quote the technical lemma from \textcite{Ierardi89} that produces a polynomial $R^*_0$ whose factors are linear polynomials that correspond to the points of the limit set $V_0^*$.

\begin{lemma}[{\cite[Lemma~2.6]{Ierardi89}}] \label{lemma: uv resultant lemma}
  Let $G_{1}, \dots, G_{n} \in \bF\bs{x_{0}, \dots, x_{n}}$ be forms of degrees $d_{1}, \dots, d_{n}$.
  Let $F_{1}, \dots, F_{m} \in \bF\bs{x_{0}, \dots, x_{n}}$ be additional homogeneous forms of degree $d$.
  Define $\widehat{G_{i}} \coloneqq G_{i} + t x_{i}^{d_{i}}$.
  Define $L \in \bF\bs{x_{0}, \dots, x_{n}, u_{0}, \dots, u_{n}, v_{1}, \dots, v_{m}, t}$ as $L \coloneqq \sum_{i=0}^{n} u_{i} x_{i}^{d} + \sum_{j=1}^{m} v_{j} F_{j}$.
  Define
  \[
    R^{*}_{0} \coloneqq \Tt_{t} \res\br{\widehat{G_{1}}, \dots, \widehat{G_{n}}, L},
  \]
  where the resultant $\res\br{\widehat{G_{1}}, \dots, \widehat{G_{n}}, L}$ is computed by viewing the $G_i$ and $L$ as polynomials in $\vx$ with coefficients in $\bF(\vu,\vv,t)$.

  Then the polynomial $R^{*}_{0} \in \bF[\vu,\vv]$ factors into a product of linear forms over $\overline{\bF}$.
  For each point $\valpha \in V_0^*$ in the limit set of $\var{G_{1}, \dots, G_{n}}$, the linear polynomial $\sum_{i=0}^{n} u_{i} \alpha_{i}^{d} + \sum_{j=1}^{m} v_{j} F_{j}(\valpha)$ is a factor of $R^{*}_{0}$.
  Moreover, every factor of $R^{*}_{0}$ has this structure for some $\valpha \in V_{0}^{*}$.
\end{lemma}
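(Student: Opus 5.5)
The plan is to compute $\res\br{\widehat{G_{1}}, \dots, \widehat{G_{n}}, L}$ through a Poisson-type product over the solutions of $\widehat{G_1} = \cdots = \widehat{G_n} = 0$, taken over the field $\mathbb{K}$ of Puiseux series in $t$ with coefficients in $\overline{\bF}$. I would then read off the trailing coefficient in $t$ as a product of limits of linear forms.

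\emph{Step 1: the generic fibre is finite.} Set $\widetilde{G}_i \coloneqq \widehat{G_i}/t = x_i^{d_i} + t^{-1} G_i$. At $t^{-1} = 0$ the system $\widetilde{G}_1 = \cdots = \widetilde{G}_n = 0$ has only the projective solution $(1:0:\cdots:0)$. By upper semicontinuity of fibre dimension, for generic $t$ the variety $\var{\widehat{G_1}, \dots, \widehat{G_n}}$ is finite. Over $\mathbb{K}$ it therefore consists of finitely many points $\beta^{(1)}(t), \dots, \beta^{(D)}(t)$, counted with multiplicity, where $D = \prod_i d_i$ by Bézout.

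I normalize each $\beta^{(k)}(t) \in \mathbb{K}^{n+1}$ so that its coordinates have minimal valuation exactly $0$. Then $\beta^{(k)}(0)$ is a well-defined nonzero vector, i.e.\ a point of $\bP^n$.

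\emph{Step 2: the product formula.} Use the general Poisson formula for $n$ forms in $n+1$ variables plus one extra form $L$ of degree $d$, with $\widehat{G_1}, \dots, \widehat{G_n}$ having only finitely many common zeroes:
\[
\res\br{\widehat{G_{1}}, \dots, \widehat{G_{n}}, L} = c(t) \prod_{k=1}^{D} L\br{\beta^{(k)}(t)}.
\]
This is the same identity as \cref{theorem: multivariate poisson formula}, after a generic linear change of coordinates so that no root lies on the chosen hyperplane; it is valid because the resultant is invariant up to a nonzero constant. Here $c(t) \in \mathbb{K}$ is nonzero, does not depend on $\vu, \vv$, and absorbs the normalizations of the $\beta^{(k)}$.

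Each factor expands as
\[
L\br{\beta^{(k)}(t)} = \ell_k + (\text{terms of positive valuation in } t),
\qquad
\ell_k \coloneqq \sum_i u_i \beta^{(k)}_i(0)^d + \sum_j v_j F_j\br{\beta^{(k)}(0)}.
\]
The linear form $\ell_k$ is nonzero, since some coordinate $\beta^{(k)}_i(0)$ is nonzero. The lowest-order coefficient of a product over $\mathbb{K}[\vu,\vv]$ is the product of the lowest-order coefficients, and $\mathbb{K}[\vu,\vv]$ is a domain. Hence
\[
R^*_0 = \Tt_t(c) \cdot \prod_{k=1}^{D} \ell_k .
\]
The left side is a polynomial because the resultant lies in $\bF[\vu,\vv,t]$. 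So $R^*_0$ factors into linear forms over $\overline{\bF}$, each of the shape $\ell_\alpha$ for the point $\valpha = \beta^{(k)}(0)$.

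\emph{Step 3: identify the limits with the limit set.} It remains to show that $\{\beta^{(k)}(0)\} = V_0^*$.

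For one inclusion, each Puiseux root $\beta^{(k)}(t)$ traces an irreducible curve in $V$ that dominates $\bA$. That curve lies in a component of $V^*$, so its closure's fibre over $t = 0$, namely $\beta^{(k)}(0)$, lies in $V^*_0$.

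For the other inclusion, let $\valpha \in V_0^*$. It lies on a component $W$ of $V$ surjecting onto $\bA$. Since the generic fibre is finite, $W$ is a curve. By Puiseux's theorem (local parametrization of the branches of $W$ through $(\valpha, 0)$), some branch is given by a Puiseux root of the generic system with limit $\valpha$. Hence $\valpha = \beta^{(k)}(0)$ for some $k$.

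\emph{Main obstacle.} I expect Step 2 to be the delicate point. It requires stating the Poisson product over $\mathbb{K}$ with the correct normalization, showing $c(t) \neq 0$, and handling multiplicities carefully. I would first reduce to \cref{theorem: multivariate poisson formula} by a generic linear change of coordinates in $x_0, \dots, x_n$, over an infinite extension if necessary. The dehomogenized roots then have the hyperplane coordinate equal to $1$, and rescaling to valuation $0$ only changes $c(t)$ by a factor independent of $\vu, \vv$.
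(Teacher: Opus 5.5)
The paper does not prove this lemma; it imports it from Ierardi's Lemma~2.6, so there is no in-paper argument to compare against. Your degeneration argument is the right kind of proof, and in characteristic $0$ it is essentially complete. It uses the product formula over an algebraically closed valued field containing $t$, multiplicativity of initial forms, and an identification of the specializations of the generic roots with $V_0^*$.

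\textbf{The gap: positive characteristic.} The paper needs this case, since the lemma feeds \cref{lemma: affine system detecting isolated roots} and \cref{lemma: hn counting to resultant} for $\bF=\bF_{p^a}$ and $\bF_{p^a}(y_1,\dots,y_k)$. In characteristic $p>0$ the Puiseux field $\bigcup_N \overline{\bF}((t^{1/N}))$ is not algebraically closed. So the claim in Step~1 that the generic fibre consists of $\prod_i d_i$ points with Puiseux coordinates fails in general, and so does the appeal to Puiseux's parametrization theorem in Step~3. This already happens for systems of exactly the lemma's shape:
\begin{itemize}
  \item Take $n=1$ and $G_1 = x_0^{p}x_1 + x_0^{p+1}$, so $d_1 = p+1$. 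On the chart $x_1=1$ the form $\widehat{G_1}$ becomes $y^p(1+y)+t$.
  \item Its Newton polygon gives $p$ roots of valuation $1/p$. These roots are separable over $\overline{\bF}((t))$, since the derivative is $y^p$ and $y=0$ is not a root.
  \item A Puiseux series that is separable over $\overline{\bF}((t))$ already lies in some $\overline{\bF}((t^{1/m}))$ with $p\nmid m$. This is because $\overline{\bF}((t^{1/p^k m}))$ is purely inseparable over $\overline{\bF}((t^{1/m}))$.
  \item The value group of $\overline{\bF}((t^{1/m}))$ is $\frac{1}{m}\mathbb{Z}$, which does not contain $1/p$. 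So these roots are not Puiseux series, and no rescaling of the point $(\theta:1)$ fixes this.
\end{itemize}

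\textbf{The repair.} Replace $\mathbb{K}$ by an algebraically closed valued field that works in every characteristic. For example, take an algebraic closure $\Omega$ of $\overline{\bF}((t))$ with the unique extension $v$ of the $t$-adic valuation; it has value group $\mathbb{Q}$ and residue field $\overline{\bF}$. The Hahn series field $\overline{\bF}((t^{\mathbb{Q}}))$ also works.
\begin{itemize}
  \item Step~2 goes through as you wrote it. The Gauss valuation on $\Omega[\vu,\vv]$ is multiplicative with multiplicative reductions, because $\overline{\bF}[\vu,\vv]$ is a domain.
  \item Moreover $v(c)=v(\res)$ is the integer $t$-order of the resultant. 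Hence $t^{-v(c)}c$ has a nonzero residue $\bar c$ and $R_0^*=\bar c\prod_k \ell_k$, with no fractional powers of $t$ needed.
  \item In Step~3, the inclusion $\{\beta^{(k)}(0)\}\subseteq V_0^*$ only uses reduction of the equations of the closure of an $\Omega$-point, so it is unchanged.
  \item For the reverse inclusion, replace Puiseux's theorem by a valuative argument. Suppose $\valpha$ lies on a dominant component $W$, which is a curve. Choose a point of the normalization of $W$ above $(\valpha,0)$. Its local ring is a discrete valuation ring dominating $\overline{\bF}[t]_{(t)}$. The fraction field of its completion is a finite extension of $\overline{\bF}((t))$, so it embeds in $\Omega$.
  \item This embedding gives an $\Omega$-point of the generic fibre with integral coordinates in a chart containing $\valpha$. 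That point is one of your $\beta^{(k)}$, and its reduction is $\valpha$.
\end{itemize}
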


The above lemma generalizes the method of $U$-resultants for zero-dimensional projective systems.
As a consequence of \cref{lemma: uv resultant lemma}, we can use resultants to detect isolated points of an affine system of equations.

\begin{lemma} \label{lemma: affine system detecting isolated roots}
  Let $B \subseteq \bF \setminus \bc{0}$ be a subset of the coefficient subfield of $\bF$.
  Let $g_{1}, \dots, g_{n}, f_{1}, \dots, f_{m} \in \bF\bs{x_{1}, \dots, x_{n}}$ be polynomials of degree at most $d$.
  Let $\gamma_{1}, \dots, \gamma_{m}$ be nonzero field elements picked independently and uniformly at random from $B$.
  Define $F_i$ and $G_{i}$ to be the homogenizations of the polynomials $f_i$ and $g_i$, respectively, with respect to a new variable $x_{0}$.
  Define $\widehat{G_{i}} \coloneqq G_{i} + t x_{i}^{\deg g_{i}}$ and $H \coloneqq u_{0} x_{0}^{d} + \sum_{i=1}^{n} w^{i} x_{i}^{d} + \sum_{j=1}^{m} \gamma_{j} x_{0}^{d - \deg F_{j}} F_{j}$.
  Define
  \[
    \beta \coloneqq \br{\Tt_{w} \Tt_{t} \res\br{\widehat{G_{1}}, \dots, \widehat{G_{n}}, H}}\br{0},
  \]
  where the resultant $\res\br{\widehat{G_{1}}, \dots, \widehat{G_{n}}, H}$ is computed by viewing the $\widehat{G_{i}}$ and $H$ as polynomials in $\vx$ with coefficients in $\bF(u_0,w,t)$.

  With probability at least $1 - d^{n} / |B|$ over the choice of $\gamma$, the following holds.
  If $\beta = 0$, then $\var{f_{1}, \dots, f_{m}, g_{1}, \dots, g_{n}} \neq \emptyset$, and in particular $\var{f_{1}, \dots, f_{m}} \neq \emptyset$.
  Conversely, if there exists an isolated point $\valpha' \in \var{g_{1}, \dots, g_{n}}$ such that $\valpha' \in \var{f_{1}, \dots, f_{m}}$, then $\beta = 0$.
\end{lemma}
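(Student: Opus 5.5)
We derive the statement from \cref{lemma: uv resultant lemma}, applied with the forms $G_1, \ldots, G_n$ and the single additional form $F \coloneqq \sum_{j=1}^m \gamma_j x_0^{d-\deg F_j} F_j$ of degree $d$. The lemma introduces generic coefficients $u_0, \ldots, u_n$ for the $x_i^d$ and $v_1$ for $F$. Write $R^*_0(u_0, \ldots, u_n, v_1)$ for the resulting trailing term. By \cref{lemma: uv resultant lemma} it factors as
\[
  R^*_0 = c \prod_{\valpha \in V_0^*} \Bigl(\sum_{i=0}^n u_i \alpha_i^d + v_1 F(\valpha)\Bigr)^{e_\valpha},
\]
with $c \neq 0$ and $e_\valpha \geq 1$, and every factor arises from a point of $V_0^*$.

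Our $H$ is the specialization $u_i \mapsto w^i$ for $i \geq 1$ and $v_1 \mapsto 1$. The first step is to justify that $\Tt_t$ commutes with this specialization. The resultant is a polynomial in the coefficients, so specializing $\res(\widehat{G_1}, \ldots, \widehat{G_n}, L)$ gives $\res(\widehat{G_1}, \ldots, \widehat{G_n}, H)$. The trailing $t$-coefficient survives the specialization as long as $R^*_0$ does not vanish under it.

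To check this, substitute into the factorization. Each factor becomes
\[
  \ell_\valpha(u_0, w) = u_0\alpha_0^d + \sum_{i\ge1} w^i\alpha_i^d + F(\valpha).
\]
This is a nonzero polynomial in $u_0, w$, since some coordinate of $\valpha$ is nonzero in projective space. Hence the specialized product is nonzero, and
\[
  \Tt_t \res(\widehat{G_1}, \ldots, \widehat{G_n}, H) = R^*_0|_{\text{spec}} = c\prod_\valpha \ell_\valpha^{e_\valpha}.
\]
The step I expect to be most delicate is that $\Tt_t$ of the specialization really equals the specialization of $\Tt_t$. The lower $t$-coefficients are zero as polynomials, so they remain zero; the trailing one remains nonzero by the computation above. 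So this works.

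Next, apply $\Tt_w$ and set $u_0 = 0$. $\Tt_w$ is multiplicative, so $\beta = c\prod_\valpha (\Tt_w \ell_\valpha)(0)^{e_\valpha}$. We treat two kinds of points $\valpha \in V_0^*$ separately.
\begin{itemize}
\item If $\alpha_0 \neq 0$, normalize $\alpha_0 = 1$, so $\valpha$ corresponds to an affine point $\valpha' \in \var{g_1,\ldots,g_n}$. Then $\Tt_w \ell_\valpha = u_0 + F(\valpha)$, since the $w^0$ term $u_0 + F(\valpha)$ is nonzero as a polynomial in $u_0$. At $u_0 = 0$ this equals $\sum_j \gamma_j f_j(\valpha')$.
\item If $\alpha_0 = 0$, the $w^0$ term is $F(\valpha)$, which is a constant. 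If it is nonzero, the factor evaluates to $F(\valpha)$. If it is zero, $\Tt_w \ell_\valpha = \alpha_k^d$ for the least $k \geq 1$ with $\alpha_k \neq 0$, which is nonzero. Either way the factor is nonzero exactly when it is nonzero for the correct reasons.
\end{itemize}
Careful bookkeeping is needed here. When $\alpha_0 = 0$, $F(\valpha) = \sum_j \gamma_j \,[x_0^{d-\deg F_j}F_j](\valpha)$, which may be zero or not. But even when it is nonzero the factor is nonzero, so points at infinity never make $\beta$ vanish.

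Therefore $\beta = 0$ if and only if some affine point $\valpha' \in V_0^*$ satisfies $\sum_j \gamma_j f_j(\valpha') = 0$. For the converse direction, an isolated $\valpha'$ lies in $V_0^*$ by \cref{lemma: limiting set is finite}. If moreover $\valpha' \in \var{f_1,\ldots,f_m}$, then $\sum_j\gamma_j f_j(\valpha') = 0$, so $\beta = 0$ deterministically.

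For the forward direction we use a union bound. Each affine $\valpha' \in V_0^*$ with $(f_1(\valpha'), \ldots, f_m(\valpha')) \neq 0$ gives a nonzero linear form in $\gamma$. Pick an index $j$ with $f_j(\valpha') \neq 0$ and fix the remaining $\gamma$'s; at most one value of $\gamma_j \in B$ makes the form vanish, so it vanishes with probability at most $1/|B|$. By \cref{lemma: limiting set is finite}, $|V_0^*| \le d^n$, so with probability at least $1 - d^n/|B|$ no such point is a zero of its form. In that event, $\beta = 0$ forces a point $\valpha' \in V_0^* \subseteq \var{g}$ with every $f_j(\valpha') = 0$, so $\var{f,g} \neq \emptyset$.
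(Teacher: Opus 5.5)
Your proof is correct and follows essentially the same route as the paper: factor the trailing $t$-coefficient of the generic resultant via \cref{lemma: uv resultant lemma}, check that no linear factor vanishes under the specialization so that $\Tt_t$ commutes with it, analyze $\Tt_w$ at $u_0 = 0$ factor by factor, and take a union bound over the at most $d^n$ points of $V_0^*$. The only difference is cosmetic. You fold the $\gamma_j$ into a single form $F$ before invoking the lemma, so your union bound runs over the points of $V_0^*$ rather than over the $\vv$-dependent factors of a $\gamma$-independent $R_0^*$. This is equally valid because $V_0^*$ does not depend on $\gamma$, and it gives a slightly cleaner deterministic characterization of when $\beta = 0$.
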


\begin{proof}
  Define $L \coloneqq u_{0} x_{0}^{d} + \sum_{i=1}^{n} u_{i} x_{i}^{d} + \sum_{j=1}^{m} v_{j} x_{0}^{d - \deg F_{j}} F_j$, where the $u_{i}$ and $v_{j}$ are variables.
  Let
  \[
    R \coloneqq \res\br{\widehat{G_{1}}, \dots, \widehat{G_{n}}, L},
  \]
  where this resultant is computed by viewing the $\widehat{G_{i}}$ and $L$ as polynomials in $\vx$ with coefficients in $\bF(\vu,\vv,t)$.
  Let $R^{*}_{0} \coloneqq \Tt_{t}R$ and let $V^{*}_{0}$ be the limit set (as defined in \cref{definition: limiting set}) of $G_1, \ldots, G_n$.

  By \cref{lemma: uv resultant lemma}, we know that $R^{*}_{0}$ factors into linear forms that look like
  \[
    \sum_{i=0}^n u_i \alpha_i^d + \sum_{j=1}^m v_j \alpha_0^{d - \deg F_j} F_j(\valpha),
  \]
  where $\valpha \in V_0^*$ is a point in the limit set of $G_1, \ldots, G_n$.
  If there is a factor of $R^{*}_{0}$ that is independent of $v_{1}, \dots, v_{m}$, then the corresponding point $\valpha$ lies in $\var{x_{0}^{d - \deg F_{1}} F_{1}, \dots, x_{0}^{d - \deg F_{m}} F_{m}}$.
  If this factor further depends on $u_{0}$, then we can write $\valpha = \br{1, \valpha'}$, and the point $\valpha'$ lies in $\var{g_{1}, \dots, g_{n}, f_{1}, \dots, f_{m}}$.
  Conversely, if there is an isolated point $\valpha' \in \var{g_{1}, \dots, g_{n}}$ such that $\valpha' \in \var{f_{1}, \dots, f_{m}}$, then $\br{1, \valpha'} \in V_{0}^{*}$ and the corresponding factor of $R^{*}_{0}$ will depend on $u_{0}$ and be independent of $v_{1}, \dots, v_{m}$.

  Now let $r$ denote the resultant 
  \[
    \res\br{\widehat{G_{1}}, \dots, \widehat{G_{n}}, H},
  \]
  where the polynomials $\widehat{G_{i}}$ and $H$ are viewed as polynomials in $\vx$ with coefficients in $\bF(u_0, w, t)$.
  The form $H$ is obtained from $L$ by substituting $u_{i} \mapsto w^{i}$ for $i \in [n]$ and $v_{j} \mapsto \gamma_{j}$ for $j \in [m]$.
  Since the resultant is a polynomial function of the coefficients, we have $r = R(t, u_{0}, w, w^{2}, \dots, w^{n}, \gamma_{1}, \dots, \gamma_{m})$.
  Under the same substitution, a factor of $R_0^*$ corresponding to $\valpha \in V_0^*$ is mapped to
  \[
    u_0 \alpha_{0}^{d} + \sum_{i=1}^n w^i \alpha_i^d + \sum_{j=1}^m \gamma_j \alpha_0^{d - \deg F_j} F_j(\valpha).
  \]
  Each factor of $R_0^*$ depends on at least one of $u_{0}, \dots, u_{n}$, since $\valpha$ is a point in projective space.
  Combined with the fact that the polynomials $u_{0}, w, \dots, w^{n}$ are linearly independent, it follows that no factor of $R^{*}_{0}$ is mapped to zero under the above substitution.
  In particular, $R^{*}_{0}$ is nonzero under this substitution.
  This implies that $\Tt_{t} r = R^{*}_{0}(u_{0}, w, w^{2}, \dots, w^{n}, \gamma_{1}, \dots, \gamma_{n})$.
  Define $r_0^* \coloneqq \Tt_t r$.
  We have $\beta = (\Tt_w r_0^*)(0)$ by definition.

  Any factor of $R^{*}_{0}$ that depends on one of the variables in $\vv$ is mapped to a polynomial with nonzero constant term with probability at least $1 - 1/|B|$.
  Since $V_{0}^{*}$ has size at most $d^{n}$ (\cref{lemma: limiting set is finite}), every such factor of $R_0^*$ is mapped to a polynomial with a nonzero constant term with probability at least $1 - d^{n} / \abs{B}$.
  We show that the conclusion of the lemma holds whenever this event occurs.

  Assume therefore that the event occurs.
  Suppose $\beta = 0$.
  This implies that some factor of $r_0^*$ not divisible by $w$ has a constant term of zero.
  By assumption, every factor of $R_0^*$ that depends on a variable in $\vv$ results in a factor of $r_0^*$ that has a nonzero constant term, so it must be the case that some factor $\hat{R}$ of $R_0^*$ does not depend on any variable in $\vv$.
  Moreover, the factor $\hat{R}$ depends on $u_0$, since the corresponding factor of $r_0^*$ is not divisible by $w$.
  Together, these conditions imply that the point $\valpha \in V_0^* \subseteq V(G_1, \ldots, G_n)$ corresponding to $\hat{R}$ satisfies $\alpha_0 \neq 0$ and $F_j(\valpha) = 0$ for all $j \in [m]$.
  This means that $\valpha' \in \var{g_1, \ldots, g_n, f_1, \ldots, f_m}$, where $\valpha = (1, \valpha')$.
  This set is thus nonempty as claimed.

  Conversely, suppose there is an isolated point $\valpha' \in \var{g_1, \ldots, g_n}$ such that $\valpha' \in \var{f_1, \ldots, f_m}$.
  Because $\valpha'$ is an isolated point of $\var{g_1, \ldots, g_n}$, we have $(1, \valpha') \in V_0^*$, so there is a factor of $R_0^*$, and thus a factor of $r_0^*$, corresponding to $(1, \valpha')$.
  The corresponding factor of $r_0^*$ has the form
  \[
    u_0 + \sum_{i=1}^n w^i \alpha_i^d.
  \]
  This polynomial is not divisible by $w$ and has a constant term of zero, so it follows that $\beta = 0$, as desired.
\end{proof}

With \cref{lemma: affine system detecting isolated roots}, we can finish the proof of \cref{lemma: hn to resultant}.
The argument will require the following standard results regarding random hyperplane sections and varieties defined by random linear combinations of a given set of polynomials.

\begin{lemma}\label{lemma: intersect dimension drop}
    Let $B \subseteq \bF$ be a finite set.
    Let $V \subseteq \bP^{n}$ be a projective variety of dimension $r$ and degree $D$.
    Suppose $\ell$ is a linear form with each coefficient picked independently and uniformly at random from $B$.
    Then with probability at least $1 - D /\abs{B}$, the intersection $V \cap \var{\ell}$ has dimension $r-1$.
    If $\ell_{1}, \dots, \ell_{r+1}$ are linear forms chosen the same way, then $V \cap \var{\ell_{1}, \dots, \ell_{r+1}} = \emptyset$ with probability at least $1 - (r+1) D / \abs{B}$.

    Let $W \subseteq \bA^{n}$ be a projective variety of dimension $r$ and degree $D$.
    Suppose $\ell$ is a linear polynomial with each coefficient picked independently and uniformly at random from $B$.
    Then with probability at least $1 - 2D /\abs{B}$, the intersection $W \cap \var{\ell}$ has dimension $r-1$.
    If $\ell_{1}, \dots, \ell_{r+1}$ are linear polynomials chosen the same way, then $W \cap \var{\ell_{1}, \dots, \ell_{r+1}} = \emptyset$ with probability at least $1 - 2(r+1) D / \abs{B}$.
\end{lemma}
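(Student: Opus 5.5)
The statement is the standard random hyperplane section lemma. I will prove the projective case first and then reduce the affine case to it.

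\emph{Projective, single hyperplane.} Decompose $V$ into irreducible components $V_1, \ldots, V_s$ with $\sum_k \deg V_k = D$; in particular $s \le D$. Intersecting an irreducible $V_k$ of positive dimension with a hyperplane $\var{\ell}$ drops its dimension by exactly one, unless $V_k \subseteq \var{\ell}$. For a zero-dimensional component (a point $\valpha$), the intersection is empty, i.e.\ "dimension $-1$", exactly when $\ell(\valpha) \ne 0$.

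So it suffices to bound the probability that $\ell$ vanishes identically on some component $V_k$. For each $k$, pick a point $\valpha^{(k)} \in V_k$. The event $V_k \subseteq \var{\ell}$ implies $\ell(\valpha^{(k)}) = 0$. Since $\valpha^{(k)} \neq 0$ in projective space, $\ell(\valpha^{(k)})$ is a nonzero linear polynomial in the random coefficients of $\ell$. By Schwartz--Zippel it vanishes with probability at most $1/|B|$. A union bound over $s \le D$ components gives failure probability at most $D/|B|$.

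Components of dimension below $r$ only lose dimension, so the top dimension of $V \cap \var{\ell}$ becomes exactly $r-1$. If $r = 0$, the intersection is empty.

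\emph{Iterating.} For the statement with $\ell_1, \ldots, \ell_{r+1}$, apply the single-hyperplane case successively to $V^{(i)} \coloneqq V \cap \var{\ell_1, \ldots, \ell_i}$. The one point needing care is the degree bound: by Bézout's inequality (as in \cite[Theorem~8.28]{BCS97}), $\deg V^{(i)} \le D$, since intersecting with hyperplanes does not increase degree. Conditioned on the earlier choices, each step fails with probability at most $D/|B|$. After $r+1$ steps the dimension is $-1$, i.e.\ the intersection is empty, with total failure probability at most $(r+1)D/|B|$ by a union bound.

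\emph{Affine case.} Here the obstacle is that a random affine $\ell = c_0 + \sum_i c_i x_i$ could, for example, miss $W$ entirely when $W$ is a point. I will read "dimension $r-1$" as "dimension at most $r-1$" in the zero-dimensional situation, consistent with the emptiness statement. Note that the statement calls $W \subseteq \bA^n$ a projective variety; I take it to mean an affine variety.

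Take the projective closure $\overline{W} \subseteq \bP^n$, which has dimension $r$ and degree $D$. Its components are the closures of the components of $W$, and their parts at infinity $\overline{W} \cap \var{x_0}$ have dimension $r-1$. For each component $W_k$ of dimension $\ge 1$, I need $\var{\ell}$ to intersect $W_k$ properly, which requires two things:
\begin{itemize}
  \item $\ell$ does not vanish on $W_k$. Exactly as above, this fails with probability at most $1/|B|$.
  \item The homogenized $\ell$ does not contain the whole part of $\overline{W_k}$ at infinity. Otherwise the intersection could lie entirely at infinity and the affine part would lose too much. This also fails with probability at most $1/|B|$, by picking a point at infinity of $\overline{W_k}$ and applying Schwartz--Zippel to $\sum_i c_i \alpha_i$.
\end{itemize}
Together these give the factor $2$, so one hyperplane fails with probability at most $2D/|B|$. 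Iterating $r+1$ times with the same Bézout degree bound gives $2(r+1)D/|B|$.
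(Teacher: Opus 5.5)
Your projective argument is correct. It is the standard one; the paper gives no proof of its own and defers to Heintz's methods. Components not contained in $\var{\ell}$ drop dimension by exactly one. One point per component, Schwartz--Zippel, and a union bound over at most $D$ components give the bound $D/\abs{B}$. B\'ezout keeps the degree at most $D$ through the iteration.

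\textbf{The gap is in the affine case, in your second bullet.} Requiring that $L$ (the homogenized $\ell$) not vanish at one chosen point at infinity of $\overline{W_k}$ only gives $\overline{W_k}\cap\var{x_0}\not\subseteq\var{L}$. It does not stop $\overline{W_k}\cap\var{L}$ from lying entirely at infinity. Here is a concrete failure. Take $W=\var{x_1x_2-1}\subseteq\bA^2$ and $\ell=x_1$. Then $\ell$ does not vanish on $W$. Also $L=x_1$ does not vanish at the point $(0:1:0)$ at infinity of $\overline{W}=\var{x_1x_2-x_0^2}$. So both of your conditions hold. Yet $\overline{W}\cap\var{L}=\{(0:0:1)\}$, and $W\cap\var{\ell}=\emptyset$, which has dimension $-1$ rather than $0$.

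The same happens with all coefficients of $\ell$ nonzero, which matters because the paper's sets $B$ exclude $0$. For instance, take $W=\var{(1+x_1+x_2)x_1-1}$ and $\ell=1+x_1+x_2$, checking non-vanishing at the point $(0:0:1)$ at infinity. What goes wrong in both examples is that $L$ contains one irreducible component of the part at infinity, and that component can absorb all of $\overline{W_k}\cap\var{L}$.

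\textbf{The fix.} The condition you need is that $L$ vanish identically on \emph{no} irreducible component of $\overline{W}\cap\var{x_0}$.
\begin{itemize}
  \item By B\'ezout this set has degree at most $D$, hence at most $D$ components.
  \item Pick a point on each component. Such a point has $\alpha_0=0$, so $L(\valpha)=\sum_{i\ge1}c_i\alpha_i$ is a nonzero linear form in the random coefficients. This bounds the failure probability by $D/\abs{B}$.
  \item Add the event that $\ell$ vanishes on no component of $W$, which fails with probability at most $D/\abs{B}$.
\end{itemize}
Now take $r\ge1$; for $r=0$ the second event alone gives emptiness. Under both events, $\overline{W}\cap\var{L}$ has components of dimension $r-1$, while $\overline{W}\cap\var{L}\cap\var{x_0}$ has dimension at most $r-2$. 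Hence $W\cap\var{\ell}$ has dimension exactly $r-1$. So the $2D/\abs{B}$ bound survives. The second $D/\abs{B}$ comes from a union bound over the components at infinity, not from one extra point per component of $W$. With this corrected single-step statement, your iteration with the affine B\'ezout degree bound goes through as you wrote it.
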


\begin{lemma}
  \label{lemma: input processing}
  Let $B \subseteq \bF$ be a finite set.
  Let $f_{1}, \dots, f_{m} \in \bF\bs{x_{1}, \dots, x_{n}}$ be polynomials of degree at most $d$.
  Suppose $g_{1}, \dots, g_{n+1}$ are linear combinations of $f_{1}, \dots, f_{m}$, with each coefficient picked independently and uniformly at random from $B$.
  Then with probability at least $1 - \br{n+1} d^{n} /\abs{B}$, the following is true.
  For each $s \in [n+1]$, every irreducible component of $\var{g_{1}, \dots, g_{s}}$ of codimension less than $s$ is a component of $\var{f_{1}, \dots, f_{m}}$.
  As a consequence, $\var{g_{1}, \dots, g_{n+1}} = \var{f_{1}, \dots, f_{m}}$.
  The analogous result in the projective setting also holds.
\end{lemma}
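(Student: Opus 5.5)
The plan is induction on $s$, using a dimension-counting argument for each new random combination. Write $V \coloneqq \var{f_1, \ldots, f_m}$ and $W_s \coloneqq \var{g_1, \ldots, g_s}$, with $W_0 = \bA^n$. The invariant to maintain is the statement in the lemma: every irreducible component of $W_s$ of codimension less than $s$ is a component of $V$. Since every component of $W_s$ has codimension at most $s$ by Krull's principal ideal theorem, the invariant says that every component of $W_s$ either lies in $V$ (and is then a component of $V$, because $V \subseteq W_s$ always holds) or has codimension exactly $s$.

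For the inductive step from $s-1$ to $s$, let $Z_1, \ldots, Z_k$ be the components of $W_{s-1}$ that are not contained in $V$. By the invariant each $Z_i$ has codimension exactly $s-1$. For each such $Z_i$, the set of coefficient vectors $\vec{c} \in \overline{\bF}^m$ with $\sum_j c_j f_j \equiv 0$ on $Z_i$ is a proper linear subspace, since $Z_i \not\subseteq V$. So a uniformly random $\vec{c} \in B^m$ lands in it with probability at most $1/\abs{B}$ (Schwartz--Zippel for a nonzero linear form). Whenever $g_s = \sum_j c_j f_j$ does not vanish identically on $Z_i$, every component of $Z_i \cap \var{g_s}$ has codimension exactly $s$. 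The components of $W_{s-1}$ that lie in $V$ are untouched by intersecting with $\var{g_s}$, because $g_s$ vanishes on $V$. Every component of $W_s$ is contained in some $Z_i \cap \var{g_s}$ or in some component of $V$; in the latter case it is a component of $V$, because $V \subseteq W_s$. This re-establishes the invariant.

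To bound the probability, I would count components. By the Bézout inequality \cite[Theorem~8.28]{BCS97}, $\deg W_{s-1} \le d^{s-1} \le d^n$, so $k \le d^n$. A union bound then gives failure probability at most $d^n/\abs{B}$ per step, and summing over $s \in [n+1]$ gives the claimed $1 - (n+1)d^n/\abs{B}$.

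For the consequence, take $s = n+1$. A component of codimension $n+1$ in $\bA^n$ does not exist, so every component of $W_{n+1}$ is a component of $V$, giving $W_{n+1} \subseteq V$; the reverse inclusion is trivial. The projective case is verbatim the same, working with cones or in $\bP^n$, where codimension $n+1$ again means empty.

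The main obstacle is making sure the Bézout degree bound applies to $W_{s-1}$, including its components inside $V$. The concern is whether those embedded-in-$V$ components could inflate the count. However, the union bound only needs the number of components not contained in $V$, and Bézout's inequality bounds the total number of components of $W_{s-1}$ (as a sum of degrees) by $\prod \deg g_i \le d^{s-1}$ regardless. So this should go through.
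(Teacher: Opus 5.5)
Your proof is correct. The paper gives no proof of its own here and instead defers to Heintz and to Lemma~36 of Krick--Pardo. Your induction on $s$ is exactly that standard argument: a Schwartz--Zippel bound of $1/\abs{B}$ for each component of $\var{g_1,\ldots,g_{s-1}}$ not contained in $\var{f_1,\ldots,f_m}$, with B\'ezout's inequality bounding the number of such components by $d^{s-1}$, which yields the stated bound $1-(n+1)d^n/\abs{B}$ (for the projective version, the $f_j$ must be forms of a common degree, as the paper arranges by multiplying by powers of $x_j$).
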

The proof of both \cref{lemma: intersect dimension drop} and \cref{lemma: input processing} are implicit in the methods of \textcite{Heintz83}.
For an explicit proof of the latter, see for example \cite[Lemma~36]{KP96}.
The former can be proved using essentially the same arguments.

\begin{proof}[Proof of \cref{lemma: hn to resultant}]
  If the coefficient subfield of $\bF$ is $\bQ$, let $B$ be the set of natural numbers $\bc{1, \dots, 15 n d^{n}}$.
  If the coefficient subfield of $\bF$ is finite, let $B$ be the set of nonzero elements of the coefficient subfield.
  For a fixed $j \in \bs{n}$, we construct $G_{0, j}, \dots, G_{n, j}$ as follows.
  Let $g_{1, j}, \dots, g_{j, j}$ be random linear combinations of $f_{1}, \dots, f_{m}$, where the coefficients are picked independently and uniformly from $B$.
  Let $g_{j+1, j}, \dots, g_{n, j}$ be random linear polynomials, with coefficients picked independently and uniformly from $B$.
  For $i \in \bs{n}$, let $G_{i,j}$ be defined as $G_{i, j} \coloneqq g_{i, j}^{h} + t x_{i}^{\deg g_{i,j }}$, where $g_{i, j}^{h}$ is the homogenization of $g_{i, j}$ with respect to a new variable $x_{0}$.
  To construct $G_{0, j}$, we pick elements $\gamma_{1, j}, \dots, \gamma_{m, j}$ independently and uniformly at random from $B$, and set $G_{0, j} \coloneqq u_{0} x_{0}^{d} + \sum_{i=1}^{n} w^{i} x_{i}^{d} + \sum_{k=1}^{m} \gamma_{k, j} x_{0}^{d - \deg f_{k}} F_{k}$, where $F_k$ is the homogenization of $f_k$ with respect to the variable $x_0$.
  These polynomials have the claimed degree and height bounds.
  For each $j$, define
  \[
    \beta_{j} \coloneqq \br{\Tt_{w} \Tt_{t} \res\br{G_{1, j}, \dots, G_{n, j}, G_{0, j}}}\br{0}.
  \]
  With probability at least $1 - n d^{n}/|B|$, the conclusion of \cref{lemma: affine system detecting isolated roots} holds for all the $\beta_{j}$.
  For the rest of the proof, we assume that this event occurs.
  If $\beta_{j} = 0$ for any $j$, then $\var{f_{1}, \dots, f_{m}} \neq \emptyset$.
  To complete the proof, it suffices to show that if $\var{f_{1}, \dots, f_{m}} \neq \emptyset$, then $\beta_{j} = 0$ for some $j$.
  We show that this happens with high probability.

  To this end, let $s \coloneqq n - \dim \var{f_{1}, \dots, f_{m}}$.
  By \cref{lemma: input processing}, with probability at least $1 - 2 n d^{n}/|B|$, the polynomials $g_{1, s}, \dots, g_{s, s}$ define an equidimensional variety of codimension $s$.
  Each component of $\var{f_{1}, \dots, f_{m}}$ is contained in some component of $\var{g_{1, s}, \dots, g_{s, s}}$.
  In particular, each component of $\var{f_{1}, \dots, f_{m}}$ of maximal dimension is a component of $\var{g_{1, s}, \dots, g_{s, s}}$.

  Since $g_{s+1, s}, \dots, g_{n, s}$ are linear polynomials with coefficients from $B$, by \cref{lemma: intersect dimension drop} it holds that
  \[
    \dim \br{\var{g_{1, s}, \dots, g_{s, s}} \cap \var{g_{s+1, s}, \dots, g_{n, s}}} = 0
  \]
  with probability at least $1 - 2 n d^{n}/|B|$.
  In particular, the variety $\var{g_{1,s}, \ldots, g_{n, s}}$ is nonempty and every point is isolated.
  Further, this variety contains points from each component of $\var{g_{1, s}, \dots, g_{s, s}}$.
  By the previous observation, it contains points from $\var{f_{1}, \dots, f_{m}}$.
  The converse direction of \cref{lemma: affine system detecting isolated roots} now shows that $\beta_{s} = 0$.
  Using a union bound we deduce that the above algorithm succeeds with probability at least $1 - 5nd^{n} / \abs{B}$, which can be lower bounded by $2/3$ using the size of $B$.
\end{proof}

\subsection{Counting solutions in zero-dimensional systems} \label{subsection: counting to resultant}

As we have seen so far, resultants are a useful tool for deciding the satisfiability of systems of polynomial equations, even in the non-square and affine cases.
If a system of equations is satisfiable, a natural next task is to determine how many solutions the system has.
Of course, the space of solutions may be positive-dimensional, in which case there are an infinite number of solutions.
When the solution set is zero-dimensional (and hence a finite set), we can meaningfully speak about the task of counting the number of solutions.
In this subsection, we show that the resultant is also useful for the task of counting the number of solutions to a zero-dimensional system of equations.

We start with the easier case of zero-dimensional projective systems.

\begin{lemma} \label{lemma: hn counting projective to resultant}
  Suppose that \textsf{\labelcref{item:assA}} and \textsf{\labelcref{item:assB}} hold, and let $F_{1}, \dots, F_{m} \in \bF\bs{x_{1}, \dots, x_{n}}$
  be homogeneous polynomials of degree at most $d$ and height at most $h$.
  Suppose the coefficient subfield of $\bF$ has at least $100 n d^{2n}$ elements.
  Suppose $\var{F_{1}, \dots, F_{m}}$ is a finite nonempty set.
  There is a polynomial-time Monte Carlo algorithm with success probability $2/3$ that takes as inputs $F_{1}, \dots, F_{m}$, and produces two sets of polynomials $G_{0, 1}, \dots, G_{n, 1}$ and $G_{0, 2}, \dots, G_{n, 2}$ with coefficients in $\bF\bs{u}$ with the following properties.
  \begin{itemize}
    \item Each $G_{i, j}$ is homogeneous in $x_{0}, \dots, x_{n}$ of degree at most $d$ and has degree at most one in $u$.
    \item Each $G_{i, j}$ has height at most $h \cdot \br{n \log d}^{c}$ for a universal constant $c$.
    \item The size of $\var{F_{1}, \dots, F_{m}}$ is exactly the number of distinct roots (in $\overline{\bF}$) of
      \[
        \gcd \br{ \res\br{G_{0, 1}, \dots, G_{n, 1}} ,\res\br{G_{0, 2}, \dots, G_{n, 2}}},
      \]
      where these resultants are computed by regarding the $G_{i,j}$ as polynomials in $\vx$ with coefficients in $\bF(u)$.
  \end{itemize}
\end{lemma}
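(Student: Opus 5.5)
We will use the classical $u$-resultant, made univariate by a random specialization. The variety $V \coloneqq \var{F_1,\ldots,F_m} \subseteq \bP^n$ is finite, of size at most $d^n$ by Bézout (\cref{lemma: limiting set is finite}). The first step is to replace the $F_j$ by $n$ random linear combinations with coefficients from a set $B$ of size at least $100nd^{2n}$ in the coefficient subfield. These are $G'_{1},\ldots,G'_{n}$, each made degree $d$ by multiplying by a random linear form power if necessary; alternatively we take all $F_j$ of degree exactly $d$. By the projective version of \cref{lemma: input processing}, with probability at least $1-(n+1)d^n/|B|$, every component of $\var{G'_1,\ldots,G'_n}$ of codimension less than $n$ is a component of $V$. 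Since $V$ is zero-dimensional, $W \coloneqq \var{G'_1,\ldots,G'_n}$ is then a finite set containing $V$, possibly with extra spurious points. We do this twice, independently, obtaining $W_1,W_2 \supseteq V$. The extra points are removed by the gcd: with good probability $W_1 \cap W_2 = V$. To see this, fix $W_1$. Each point $p \in W_1 \setminus V$ has some $F_j(p)\neq 0$, so a fresh random combination $\sum_j c_j F_j$ vanishes at $p$ with probability at most $1/|B|$. A union bound over the at most $d^n$ such points gives failure probability at most $d^n/|B|$.

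The second step builds $G_{0,j} \coloneqq u x_0 + \sum_{i=1}^n \lambda_{i} x_i$ (linear, degree one in $u$) with $\lambda_i$ random from $B$. We use the same $\lambda$ for both $j=1,2$, and set $G_{i,j}\coloneqq G'_{i}$ from the $j$-th batch. By the Poisson formula (\cref{theorem: multivariate poisson formula}, with the roles of the variables permuted so that the linear form is the distinguished polynomial), or equivalently the $u$-resultant property, $\res(G_{1,j},\ldots,G_{n,j},G_{0,j})$ is, up to a nonzero constant, $\prod_{\valpha\in W_j}(u\alpha_0+\sum_i\lambda_i\alpha_i)^{m(\valpha)}$. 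This requires $W_j$ finite, which holds. For a point with $\alpha_0\neq 0$ the factor has the root $u=-\sum_i\lambda_i\alpha_i/\alpha_0$.

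Points at infinity ($\alpha_0=0$) give constant factors, losing roots. To avoid this, we first apply a random linear change of coordinates, by substituting random linear forms for $x_0$ (or adding $\sum_i c_i x_i$ to $x_0$). We must pick this in a polynomial-time computable way with height blowup $(n\log d)^{c}$, which a linear substitution with coefficients in $B$ satisfies. This ensures that no point of $W_1\cup W_2$ lies on $x_0=0$, with failure probability at most $2d^n/|B|$. Next, the map $\valpha\mapsto \sum\lambda_i\alpha_i/\alpha_0$ must be injective on $W_1\cup W_2$. There are at most $\binom{2d^n}{2}$ pairs, each colliding with probability at most $1/|B|$, giving failure at most $2d^{2n}/|B|$. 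Under these events, the distinct roots of the gcd of the two resultants correspond bijectively to $W_1\cap W_2=V$. Distinct values on $W_1 \cup W_2$ ensure that a common root comes from a common point, and multiplicities do not matter for counting distinct roots.

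The union bound gives total failure at most about $(2(n+1)d^n+d^n+2d^n+2d^{2n})/(100nd^{2n}) < 1/3$. The degree, height and $u$-degree claims are immediate from the construction. I expect the main obstacle to be justifying the precise factorization of the resultant into linear factors when $W_j$ may carry multiplicities and points at infinity. I would handle this through \cref{lemma: uv resultant lemma} with $t$-perturbation if the Poisson formula is awkward: taking $\Tt_t$ yields a product over the limit set, which equals $W_j$ in the finite case.
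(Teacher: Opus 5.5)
Your proposal follows the paper's proof. You form two independent batches of $n$ random combinations, get $W_1\cap W_2=V$ from the same union bound over $W_1\setminus V$, append to both batches a linear form in $u$ with shared random coefficients, factor via the Poisson formula, and separate the root values generically before taking the gcd.

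One step must be repaired. You cannot equalize degrees by multiplying by powers of a random linear form $\ell$. The zero set of $\{\ell^{d-\deg F_i}F_i\}_i$ is $V\cup\bigl(V(\ell)\cap\bigcap_{\deg F_k=d}V(F_k)\bigr)$, which is in general larger than $V$. It can even contain the whole hyperplane $V(\ell)$, for instance if no $F_k$ has degree exactly $d$, and then the $W_j$ are no longer finite. Use instead the paper's system $\{x_j^{d-\deg F_i}F_i : i\in[m],\ 0\le j\le n\}$. It has the same projective zero set as the $F_i$, because the $x_j$ have no common projective zero.

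Your other deviations from the paper are sound and slightly cleaner.
\begin{itemize}
\item You push all of $W_1\cup W_2$ off $x_0=0$ with a single random substitution. This is valid because random combinations commute with linear substitutions, so you may condition on $W_1,W_2$. The paper instead moves only $V$ off the hyperplane, then applies \cref{lemma: input processing} a second time to the restrictions $F_i|_{x_0=0}$.
\item Your form $ux_0+\sum_i\lambda_ix_i$ gives every point of $W_j$ the genuine root $-\sum_i\lambda_i\alpha_i/\alpha_0$. With the paper's $x_0+u\sum_i\beta_ix_i$, a point with $\sum_i\beta_i\alpha_i=0$, such as $(1:0:\dots:0)$, contributes the constant factor $1$ and is silently lost. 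The paper's union bound does not cover this case.
\end{itemize}

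Finally, the factorization you flagged as the main obstacle needs no detour through \cref{lemma: uv resultant lemma}. Once $W_j$ avoids $x_0=0$, the resultant of the $G_{i,j}|_{x_0=0}$ is nonzero, so the Poisson formula applies directly. Permuting variables and polynomials changes the resultant only by a sign.
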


\begin{proof}
  If the coefficient subfield of $\bF$ is $\bQ$, let $B$ be the set of natural numbers $\bc{1, \dots, 100 n d^{2n}}$.
  If the coefficient subfield of $\bF$ is finite, let $B$ be the set of nonzero elements of the coefficient subfield.
  We work with the polynomials
  \[
    \setbuild{x_{j}^{d - \deg F_{i}} \cdot F_{i}}{i \in [m], j \in \bc{0,1,\ldots,n}}
  \]
  instead of the original polynomials, as this does not change the zero set but ensures that all polynomials in our system have the same degree.
  In the rest of the proof we continue to use $F_{1}, \dots, F_{m}$ to refer to this new set of polynomials.

  Let $V \coloneqq \var{F_{1}, \dots, F_{m}}$ be the zero set of $F_{1}, \dots, F_{m}$.
  It consists of at most $d^{n}$ points by Bézout's inequality (\cite[Theorem~8.28]{BCS97}).
  We apply a random change of coordinates to ensure that all points in $V$ lie on the chart $x_{0} = 1$.
  We do this by replacing each $x_{i}$ by $\sum_{j=0}^n \gamma_{i, j} x_{j}$, where $\gamma_{i, j}$ are picked from $B$ independently and uniformly at random.
  With probability at least $1 - d^{n} / \abs{B}$, the points in $V$ lie on $x_0 = 1$ after this linear transformation.

  Let $G_{1, 1}, \dots, G_{n, 1}$ be a set of random linear combinations of the equations $F_{1}, \dots, F_{m}$, where each coefficient is picked independently and uniformly from $B$.
  By \cref{lemma: input processing}, with probability at least $1 - 2 n d^{n} /\abs{B}$, the zero set $V_{1} \coloneqq \var{G_{1, 1}, \dots, G_{n, 1}}$ is a finite set and contains $V$ as a subset.
  The system $F_{1}|_{x_{0} = 0} = \cdots = F_{m}|_{x_{0} = 0} = 0$ is unsatisfiable.
  The system $G_{1, 1}|_{x_{0} = 0} = \cdots = G_{n, 1}|_{x_{0} = 0} = 0$ consists of random linear combinations of the $F_i|_{x_{0}=0}$, so by \cref{lemma: input processing}, this system has no roots with probability at least $1 - 2 n d^{n} / \abs{B}$.
  Equivalently, with this probability, every point of $V_{1}$ lies on the chart $x_{0} = 1$.

  Similarly, let $G_{1, 2}, \dots, G_{n, 2}$ be a set of random linear combinations of $F_{1}, \dots, F_{m}$ and let $V_{2}$ be their zero set.
  It again is a finite set that contains $V$ and lies on the chart $x_{0} = 1$.
  Further, we claim that if $V_1$ is a finite set, then $V_{1} \cap V_{2} = V$ with probability at least $1  - d^n / |B|$.
  If $\valpha \in V_{1} \setminus V$ is a point, then there is some $F_{j}$ such that $F_{j}(\valpha) \neq 0$, therefore with probability at least $1 - 1/\abs{B}$, we have $G_{1, 2}(\valpha) \neq 0$.
  The fact that $V_1$ is finite implies that $\abs{V_1} \le d^n$, so the claimed bound on the probability that $V_1 \cap V_2 = V$ follows by a union bound.

  We now sample $\beta_1, \ldots, \beta_n \in B$ independently and uniformly, and define $L \coloneqq x_{0} + \beta_{1} u x_{1} + \beta_{2} u x_{2} + \cdots + \beta_{n} u x_{n}$.
  By the Poisson formula (\cref{theorem: multivariate poisson formula}), the resultant $\res\br{G_{1, 1}, \dots, G_{n, 1}, L}$ can be factored as
  \[
    \res\br{G_{1, 1}, \dots, G_{n, 1}, L} = c \cdot \prod_{\valpha \in V_{1}} L(\valpha)^{m_{\valpha}},
  \]
  where $m_{\valpha}$ is the multiplicity of $\valpha$ in $V_{1}$ and $c \in \bF \setminus \bc{0}$.
  Simplifying, and using the fact that all roots lie on the affine chart $x_0 = 1$, we obtain
  \[
    \res\br{G_{1, 1}, \dots, G_{n, 1}, L} = c \cdot \prod_{\valpha \in V_{1}} \br{1 + u \sum_{i=1}^{n} \beta_{i} \alpha_{i}}^{m_{\valpha}}.
  \]
  Since the $\beta_{i}$ were picked randomly, the sum $\sum_{i=1}^n \beta_{i} \alpha_{i}$ takes distinct values for distinct roots with probability at least $1 - d^{2n} /\abs{B}$.
  Similarly, we have
  \[
    \res\br{G_{1, 2}, \dots, G_{n, 2}, L} = c' \cdot \prod_{\valpha \in V_{2}} \br{1 + u \sum_{i=1}^{n} \beta_{i} \alpha_{i}}^{n_{\valpha}},
  \]
  where $n_{\valpha}$ is the multiplicity of $\valpha$ in $V_2$ and $c' \in \bF \setminus \bc{0}$.
  It is now clear that if we take the GCD of these two resultants, then the number of distinct roots of the GCD in $\overline{\bF}$ is equal to the number of distinct points in $V_{1} \cap V_{2} = V$.
  Therefore we are done by setting $G_{0, 1} = G_{0, 2} = L$ and using a union bound to control the probabilities of all the required events.
\end{proof}

We now give our main reduction, showing that the resultant can be used to count the number of solutions to a zero-dimensional affine system.
The idea will be similar to that in \cref{lemma: hn counting projective to resultant}.
We will take random linear combinations of the given system to reduce to the case when the number of variables and polynomials are the same.
Here, we have to use the generalized characteristic polynomial instead of the resultant itself to compute the sizes of zero sets of these random systems.

\begin{proposition} \label{lemma: hn counting to resultant}
  Suppose that \textsf{\labelcref{item:assA}} and \textsf{\labelcref{item:assB}} hold, and let $f_{1}, \dots, f_{m} \in \bF\bs{x_{1}, \dots, x_{n}}$ be polynomials of degree at most $d$ and height at most $h$.
  Suppose the coefficient subfield of $\bF$ has at least $100 n d^{2n}$ elements.
  Suppose $\var{f_{1}, \dots, f_{m}}$ is a finite nonempty set.
  There is a polynomial-time Monte Carlo algorithm with success probability $2/3$ that takes as inputs $f_{1}, \dots, f_{m}$, and produces two sets of polynomials $G_{0, 1}, \dots, G_{n, 1}$ and $G_{0, 2}, \dots, G_{n, 2}$ with coefficients in $\bF\bs{u, t}$ with the following properties.
  \begin{itemize}
    \item Each $G_{i, j}$ is homogeneous in $x_{0}, \dots, x_{n}$ of degree at most $d$ and has degree at most one in $u$ and $t$.
    \item Each $G_{i, j}$ has height at most $h \cdot \br{n \log d}^{c}$ for a universal constant $c$.
    \item The size of $\var{f_{1}, \dots, f_{m}}$ is exactly the number of distinct roots (in $\overline{\bF}$) of
      \[
        \gcd\br{\mathrm{TP}_{u} \Tt_{t} \res\br{G_{0, 1}, G_{1, 1}, \dots, G_{n, 1}}, \mathrm{TP}_{u} \Tt_{t} \res\br{G_{0, 2}, G_{1, 2}, \dots, G_{n, 2}}},
      \]
      where these resultants are computed by regarding the $G_{i,j}$ as polynomials in $\vx$ with coefficients in $\bF(t,u)$.
  \end{itemize}
\end{proposition}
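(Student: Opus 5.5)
We will mirror the proof of \cref{lemma: hn counting projective to resultant}, replacing the plain resultant by the generalized characteristic polynomial as in \cref{lemma: affine system detecting isolated roots}. As there, let $B$ be $\bc{1,\dots,100nd^{2n}}$ in characteristic zero, or the nonzero elements of the coefficient subfield otherwise. Let $V \coloneqq \var{f_1,\dots,f_m}$; it is finite, so $\abs{V} \le d^n$ by Bézout's inequality.

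For $j \in \bc{1,2}$, let $g_{1,j},\dots,g_{n,j}$ be independent random linear combinations of $f_1,\dots,f_m$ with coefficients from $B$. By \cref{lemma: input processing}, with high probability every component of $\var{g_{1,j},\dots,g_{n,j}}$ of codimension less than $n$ is a component of $V$. Since $V$ is zero-dimensional, no such component exists, so every point of $V$ is an isolated point of $V_j \coloneqq \var{g_{1,j},\dots,g_{n,j}}$. The set $V_j$ may also have positive-dimensional components not contained in $V$.

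Let $G_{i,j} \coloneqq g_{i,j}^h + t x_i^{\deg g_{i,j}}$ for $i \in [n]$. Sample $\beta_1,\dots,\beta_n \in B$ and set $G_{0,j} = G_0 \coloneqq x_0^d + u\sum_{i=1}^n \beta_i x_i^d$. This form is chosen so that $x_0 = 1$ normalizes the factor; its $d$-th powers match the shape of $L$ in \cref{lemma: uv resultant lemma}. Taking $L = \sum_i u_i x_i^d$ with no $F_j$ in \cref{lemma: uv resultant lemma}, and specializing $u_0 \mapsto 1$, $u_i \mapsto u\beta_i$, we get
\[
  \Tt_t \res(G_{0,j},\dots,G_{n,j}) = \prod_{\valpha \in V^*_{0,j}} \Bigl(\alpha_0^d + u\sum_i \beta_i\alpha_i^d\Bigr)^{e_\valpha},
\]
up to a nonzero constant, with nonnegative multiplicities $e_\valpha$. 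The specialization keeps the trailing coefficient nonzero, by the same linear-independence argument as in \cref{lemma: affine system detecting isolated roots}.

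Points of the limit set $V^*_{0,j}$ at infinity ($\alpha_0 = 0$) contribute factors that are multiples of $u$; $\mathrm{TP}_u$ strips these away. Each point with $\alpha_0 = 1$ contributes $1 + u\,\ell(\valpha)$, where $\ell(\valpha) \coloneqq \sum_i \beta_i \alpha_i^d$. If $\ell(\valpha) \neq 0$, this factor has the single root $-1/\ell(\valpha)$.

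It remains to see that these roots are distinct, nonzero, and that the gcd leaves exactly $V$. By \cref{lemma: limiting set is finite}, $V^*_{0,j}$ contains every isolated point of $V_j$, hence all of $V$. We need the affine points of $V^*_{0,1} \cap V^*_{0,2}$ to be exactly $V$. For any affine $\valpha \in V^*_{0,1} \setminus V$, some $f_k(\valpha) \neq 0$, so $g_{1,2}(\valpha) \neq 0$ with probability $1 - 1/\abs{B}$. Union bounding over the at most $d^n$ such points gives the claim.

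Next, $\ell$ must separate the at most $2d^n$ affine points of $V^*_{0,1} \cup V^*_{0,2}$ and be nonzero on them. The functions $\valpha \mapsto \alpha_i^d$ may fail to separate points (for instance, points differing by $d$-th roots of unity). I would fix this by also applying a random linear change of coordinates $x_i \mapsto x_i + \sum_k \gamma_{ik}x_k$ before homogenizing. Then $\ell(\valpha) = \ell(\valpha')$ becomes a nonzero polynomial condition in $\beta$ for each pair, and Schwartz--Zippel bounds the failure probability by roughly $d^{2n}/\abs{B}$ per pair structure. Alternatively one could use the linear form $x_0^{d-1}(x_0 + u\sum \beta_i x_i)$. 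That form does not fit \cref{lemma: uv resultant lemma} as stated, so I will keep the $d$-th powers plus the coordinate change.

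Given distinct nonzero $\ell$-values, each $\mathrm{TP}_u \Tt_t \res$ has root set in bijection with its affine limit points. The gcd's distinct roots then correspond exactly to $V$. The degree and height bounds are immediate from the construction, and a union bound with $\abs{B} \ge 100nd^{2n}$ gives success probability $2/3$.

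The step I expect to be the main obstacle is separation: the $d$-th powers in $G_0$ force the extra argument that a random coordinate change makes $\sum_i \beta_i \alpha_i^d$ injective and nonzero on the limit sets.
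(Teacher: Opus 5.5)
Your skeleton matches the paper's: two independent square systems of random combinations, the perturbation $g_{i,j}^h + t x_i^{\deg g_{i,j}}$, \cref{lemma: uv resultant lemma} to factor $\Tt_{t}\res$ over the limit set, $\mathrm{TP}_u$ to discard limit points at infinity, and the gcd to cut down to $V_1 \cap V_2 = V$. The proof breaks at the one place you diverge, the choice of $G_0$.

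A linear change of coordinates cannot separate two solutions with $\alpha' = \zeta\alpha$, where $\zeta^d = 1$ and $\zeta \neq 1$. After any linear change they are still related by $\zeta$, so $\sum_i \beta_i (\alpha'_i)^d = \sum_i \beta_i \alpha_i^d$ identically in $\beta$ and $\gamma$, and Schwartz--Zippel has nothing to work with. Concretely, take $n = 1$ and $f_1 = x_1^2 - 1$. After rescaling the roots are $\pm c$, both contribute the factor $1 + u\beta_1 c^2$, and your gcd has one distinct root instead of two.

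The observation you are missing is that the degree $d$ of $L$ in \cref{lemma: uv resultant lemma} is a free parameter, unrelated to the degrees of the $G_i$. With no forms $F_j$ you may take it equal to $1$. That is exactly the paper's choice $G_{0,1} = G_{0,2} = x_0 + \beta_1 u x_1 + \cdots + \beta_n u x_n$, the specialization of $\sum_i u_i x_i$. Affine limit points then give factors $1 + u\sum_i \beta_i \alpha_i$. Separating the at most $2d^n$ points of $V_1 \cup V_2$ becomes a linear Schwartz--Zippel bound, with no coordinate change needed.

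Two further steps need repair.

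First, the origin $(1,0,\dots,0)$ yields the constant factor $1$ for every $\beta$, and it is fixed by any linear change of coordinates. So your claim that $\ell$ is nonzero on all affine points fails there, and a solution at $0$ is never counted. The paper's linear form has the same defect: with $n = 1$ and $f_1 = x_1$ the gcd is constant, giving count $0$ instead of $1$. Either fix works:
\begin{itemize}
\item use $x_0 + u(\beta_0 x_0 + \sum_i \beta_i x_i)$ with $\beta_0$ also random, so that every affine point gets the nonzero-with-high-probability coefficient $\beta_0 + \sum_i \beta_i \alpha_i$; or
\item first translate the input by a random vector.
\end{itemize}

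Second, the linear-independence argument of \cref{lemma: affine system detecting isolated roots} does not apply to the specialization $u_0 \mapsto 1$, $u_i \mapsto \beta_i u$, because $\beta_1 u, \dots, \beta_n u$ are linearly dependent. A limit point at infinity gives the factor $u\sum_i \beta_i \alpha_i$ (or $u \sum_i \beta_i \alpha_i^d$ in your version), which vanishes for some $\beta$. You need a union bound over the at most $d^n$ limit points, which are fixed before $\beta$ is drawn, to conclude that with high probability no factor specializes to zero. Only then does $\Tt_{t}$ commute with the specialization.

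Finally, your remark that $V_j$ may have positive-dimensional components contradicts the sentence before it. \cref{lemma: input processing} with $s = n$ shows that $V_j$ is finite with high probability, which is what the paper uses.
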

\begin{proof}
  If the coefficient subfield of $\bF$ is $\bQ$, let $B$ be the set of natural numbers $\bc{1, \dots, 100 n d^{2n}}$.
  If the coefficient subfield of $\bF$ is finite, let $B$ be the set of nonzero elements of the coefficient subfield.
  Define $V \coloneqq \var{f_{1}, \dots, f_{m}}$.
  Let $g_{1, 1}, \dots, g_{n, 1}$ and $g_{1, 2}, \dots, g_{n, 2}$ be two random sets of linear combinations of $f_{1}, \dots, f_{m}$, with coefficients picked from $B$.
  Let $V_{1} \coloneqq \var{g_{1, 1}, \dots, g_{n, 1}}$ and $V_{2} \coloneqq \var{g_{1, 2}, \dots, g_{n, 2}}$.
  By Bézout's inequality and \cref{lemma: input processing}, with probability at least $1 - 6 n d^{n}/\abs{B}$, the zero sets $V_{1}$ and $V_{2}$ are finite and satisfy $V_1 \cap V_2 = V$ (refer to the proof of \cref{lemma: hn counting projective to resultant} for details).

  We now define $G_{i, j} \coloneqq g_{i, j}^{h} + t \cdot x_{i}^{d_{i}}$, where $d_{i}$ is the degree of $g_{i,j}$ and $g_{i,j}^h$ is the homogenization of $g_{i,j}$ with respect to a new variable $x_{0}$.
  Finally, define $L \coloneqq x_{0} + \beta_{1} u x_{1} + \cdots + \beta_{n} u x_{n}$ where $\beta_{1}, \dots, \beta_{n}$ are sampled independently and uniformly at random from $B$.
  Consider now
  \[
    \br{R_{1}}_{0}^{*} \coloneqq \Tt_{t} \res\br{G_{1, 1}, \dots, G_{n, 1}, L}.
  \]
  By \cref{lemma: uv resultant lemma}, we can deduce that $(R_1)^*_0$ factors into a product of linear forms, one for each point in the limiting set $\br{V_{1}}_{0}^{*}$.
  Every point of $V_{1}$ is in this this set, and further these are the only points of $(V_1)^*_0$ that lie on the chart $x_{0} = 1$.
  The remaining factors that correspond to points in $\br{V_{1}}_{0}^{*}$ on the hyperplane $x_{0} = 0$ are just the polynomial $u$.
  Therefore, if we consider $\mathrm{TP}_{u} \br{R_{1}}_{0}^{*}$, every factor is of the form $\br{1 + u \sum_{i=1}^{n} \beta_{i} \alpha_{i}}$ of some $\valpha \in V_{1}$, and for every $\valpha \in V_{1}$ there is a factor of this form.
  Note that we make no claim about the multiplicities.
  With probability at least $1 - d^{2n}/\abs{B}$, the coefficient of $u$ corresponding to each point, namely $\sum_{i=1}^{n} \beta_{i} \alpha_{i}$, is distinct.
  We can do the same operation with $g_{1, 2}, \dots, g_{n, 2}$.
  Then the number of distinct roots of $\gcd\br{\mathrm{TP}_{u} \br{R_{1}}_{0}^{*}, \mathrm{TP}_{u} \br{R_{2}}_{0}^{*}}$ in $\overline{\bF}$ is clearly the size of $V_{1} \cap V_{2}$ with probability at least $1 - 2 d^{2n} / \abs{B}$ over the choice of $\beta_{1}, \dots, \beta_{n}$.
  Hence we are done by setting $G_{0, 1} = G_{0, 2} = L$.
\end{proof}

%% file: sections/resultant.tex
In this section, we design a polylogtime-uniform family of constant-depth arithmetic circuits for the multivariate resultant $\res_{d_{0}, \dots, d_{n}}$.
Our circuits will be indexed by $n, d_{0}, \dots, d_{n}$, and our base field is $\bQ$.
The following subsection will establish some notation and explain the high level idea of our construction.
The subsequent subsections will contain the details.

\subsection{Notation and proof overview} \label{subsection: notation and proof idea}

Let $n$ be a positive integer, and let $d_{0}, \dots, d_{n}$ be positive integers.
Let $\vu_{0}, \dots, \vu_{n}$ be $n+1$ sets of variables, with $\vu_{i} = \br{u_{i, \valpha}}$ for all multi-indices $\valpha = \br{\alpha_{0}, \dots, \alpha_{n}}$ such that $\abs{\valpha} = d_{i}$.
In other words, $\vu_{i}$ consists of one variable for each monomial in $n+1$ variables of degree $d_{i}$.
We let $U$ denote the sum of the sizes of these sets of variables, so $U = \sum_{i=0}^{n} \binom{n + d_{i}}{d_{i}}$.
We use $S$ to denote the polynomial ring $\bQ\bs{\vu_{0},  \dots, \vu_{n}}$, and we write $\bK$ for the field of fractions of $S$.
Recall that the multivariate resultant (\cref{subsection: multivariate resultant}) is an element of the ring $S$ with integer coefficients.

For $j = 0, \dots, n$, we define polynomials $F_{j} \in S\bs{\vx}$ as
\[
  F_{j} = \sum_{\valpha \in \vu_{j}} u_{j, \valpha} x_{0}^{\alpha_{0}} \cdots x_{n}^{\alpha_{n}}.
\]
Any set of forms $P_{0}, \dots, P_{n}$ of degrees $d_{0}, \dots, d_{n}$ can be obtained by specializing $F_{0}, \dots, F_{n}$.
We wish to use the Poisson formula (\cref{theorem: multivariate poisson formula}) to compute the multivariate resultant.
The formula states that the following identity holds.
\[
  \res_{d_0, \ldots, d_n}(F_0, \ldots, F_n) = \res_{d_0, \ldots, d_{n-1}}(\overline{F}_0, \ldots, \overline{F}_{n-1})^{d_n} \cdot \prod_{\valpha \in \Var(f_0, \ldots, f_{n-1})} f_n(\valpha)^{m(\valpha)},
\]
where $\overline{F}_{j} = F_{j}|_{x_{n} = 0}$ and $f_{j} = F_{j}|_{x_{n} = 1}$.
We will apply this formula recursively to compute the first term, therefore we will have to study the polynomials $F_{j}|_{x_{i+1} = 0, \dots, x_{n} = 0}$ and $F_{j}|_{x_{i} = 1, x_{i+1} = 0, \dots, x_{n} = 0}$ for every $i$.
We give these polynomials and some related objects names to be able to refer to them easily.

For any $i, j \in \bc{0,1,\ldots,n}$, we let $\vu_{i,j} \subseteq \vu_i$ denote those variables $u_{i, \valpha}$ with $\alpha_{j+1} = \cdots = \alpha_n = 0$.
In other words, these variables $\vu_{i,j}$ correspond to monomials that only depend on $x_{0}, \dots, x_{j}$.
We let $F_{i,j}$ denote the polynomial
\[
  F_{i, j} = \sum_{\valpha \in \vu_{i, j}} u_{i, \valpha} x_{0}^{\alpha_{0}} \cdots x_{n}^{\alpha_{n}}.
\]
The polynomial $F_{i, j}$ is a generic degree-$d_{i}$ form in the variables $x_{0}, \dots, x_{j}$.
We use $f_{i, j}$ to denote the specialization $F_{i, j}(x_{0}, \dots, x_{j-1}, 1)$.

Let $\vF_{j}$ denote the vector of polynomials $\br{F_{0, j}, \dots, F_{j, j}}$.
This is a vector of $j+1$ generic polynomials in $j+1$ variables of degrees $d_{0}, \dots, d_{j}$.
We write $U_{j}$ for the number of coefficients in this system, so $U_{j} = \sum_{i=0}^{j} \binom{d_{i} + j}{j}$.

Let $\res_{j} \in \bQ\bs{\vu_{0, j}, \dots, \vu_{j, j}}$ denote the resultant $\res_{d_0,\dots,d_j}(\br{F_{0, j}, \dots, F_{j, j}})$ (the degrees $d_0,\dots,d_j$ are fixed throughout).
The polynomial $\res_{j}$ depends on the degrees $d_{0}, \dots, d_{j}$, but we suppress this from notation to avoid clutter.
Given a set of forms $P_{0}, \dots, P_{j} \in \bQ\bs{x_{0}, \dots, x_{j}}$ of degrees $d_{0}, \dots, d_{j}$, we use $\res_{j}\br{\vP}$ to denote the evaluation of $\res_{j}$ at the coefficients of $P_{0}, \dots, P_{j}$.

In the notation above, the second term in the Poisson formula involves the roots of the polynomials $f_{0, n}, \dots, f_{n-1, n}$.
These roots lie in the algebraic closure of $\bK$.
However, it is computationally difficult to describe and perform arithmetic with elements of $\overline{\bK}$.
We will instead work with roots that lie in a power series ring.
These are easier to manipulate, since we usually only require computing them to some bounded precision.

The roots of $f_{0, n}, \dots, f_{n-1, n}$ do not admit power series representations in the ring $\bQ\bsd{\vu_{0}, \dots, \vu_n}$.
This is already apparent even in the case when $n = 1$, and $d_{0} = 2$.
In this case we want the roots of a generic bivariate polynomial $u_{0, (2, 0)}x_{0}^{2} + u_{0, (1, 1)} x_{0} + u_{0, (0, 2)}$.
The expression for the roots involves the square root of the discriminant $u_{0, (1, 1)}^{2} - 4 u_{0, (2, 0)} u_{0, (0, 2)}$, but this element is not a square in the ring $\bQ\bsd{\vu_{0}, \dots, \vu_n}$.

We will instead find power series roots of $f_{0, n}, \dots, f_{n-1, n}$ using a homotopy.
We introduce a new variable $t$, and for each $i = 0, \dots, n$ we define polynomials
\[
  H_{i, n} := \br{1-t} G_{i, n} + t F_{i, n},
\]
where $G_{i, n} \in \bQ\bs{x_{0}, \dots, x_{n}}$ are yet to be defined.
We have $H_{i, n}|_{t=1} = F_{i, n}$, and $H_{i, n}|_{t=0} = G_{i, n}$.
The polynomials $H_{0, n}, \dots, H_{n, n}$ define a homotopy between the roots of $F_{0, n}, \dots, F_{n, n}$ and the roots of $G_{0, n}, \dots, G_{n,n}$.
The system of equations defined by the polynomials $G_{0, n}, \dots, G_{n, n}$ is called the \emph{initial system} the homotopy.
We define $h_{i, n} := H_{i, n}|_{x_{n} = 1}$ and $g_{i, n} := G_{i}|_{x_{n} = 1}$.
The forms $G_{i, n}$ will be chosen such that computing the roots of the equations $g_{0, n}, \dots, g_{n-1, n}$ can be easily performed by constant-free uniform constant-depth circuits.
We will also ensure that all roots of the system are simple, and that the Jacobian at each root can also be computed by such circuits.

A consequence of the simplicity of the roots of $g_{0, n}, \dots, g_{n-1, n}$ is that $h_{0, n}, \dots, h_{n-1, n}$ admit power series roots in $t$, whose constant terms are exactly the roots of $g_{0, n}, \dots, g_{n-1, n}$.
Hensel lifting and Newton iteration give constructive proofs of the existence of these roots.
In fact a Newton iteration with linear rate of convergence gives a constructive proof of the existence of roots of the above system in $S\bsd{t}$, that is, roots that are power series in one variable with coefficients are polynomial in the remaining variables.
While Newton iteration and Hensel lifting give constructive proofs, following either construction only gives us uniform circuits of polylogarithmic depth for these roots.
This is insufficient for our applications.
We will instead compute these roots using an explicit implicit function theorem (\cite[Proposition~20.3]{AiYu83}).

More generally, to handle the recursive terms in the Poisson formula, we define polynomials $G_{i, j}, H_{i,j}$ for all $1 \leq j \leq n$ and $0 \leq i \leq j$ by specializing $G_{0, n}, \dots, G_{n, n}$ and $H_{0, n}, \dots, H_{n, n}$ respectively.
In each case, $G_{0, j}, \dots, G_{j, j}$ will be the initial system for $H_{0, j}, \dots, H_{j, j}$.
We will apply the Poisson formula and the above mentioned explicit formulas for the roots to compute the resultant of $H_{0, n}, \dots, H_{n, n}$.
This computation will be carried out in $S\bsd{t}$ up to a certain precision.
This precision will be chosen such that truncating and setting $t = 1$ will allow us to recover $\res_{n}\br{\vF_{n}}$.

The next subsection describes the initial systems, and shows that their roots, and the Jacobian evaluated at the roots can be computed by uniform constant-depth circuits.
The subsection after that carries out the rest of the discussion above.

We point out that the algorithm in \cite{JS07} already combines homotopy techniques with the Poisson formula to compute resultants.
The homotopy in that article uses starting systems with a similar structure as the polynomials $g_{i,j}$ mentioned above, but bypasses the introduction of the new variable $t$.
If we let $\vnu_0,\dots,\vnu_j$ be the coefficients of these polynomials, their algorithm uses Newton iteration to produce a straight-line program that computes rational function approximations to the roots of $f_{0,j},\dots,f_{j-1,j}$ in $\bQ[[\vu_0-\vnu_0,\dots,\vu_{j-1}-\vnu_{j-1}]]$, injects them into the product formula, and eliminates divisions.
The explicit implicit function theorem in~\cite{AiYu83} also applies for the homotopy in~\cite{JS07}, but in this context, it would result in circuits of double exponential size.

\subsection{The initial systems} \label{subsection: initial system}

We now describe the initial systems, that is, the polynomials $G_{i, j}$.
The following construction is from~\cite{HeJeSaSo02} (and is also briefly mentioned in~\cite{MoSoWa95}), in the more general case of multi-homogeneous systems.

\begin{definition}[Initial System] \label{definition: initial system}
  For any integer $m$ and for $j \in \bc{0,1,\ldots,n}$, we define the linear form $L_{j, m}$ as
  \[
    L_{j, m}(\vx) \coloneqq x_{0} + m x_{1} + \cdots + m^{j-1} x_{j-1} + m^{j} x_{j}.
  \]
  We use $\ell_{j, m}$ to denote the specialization of $L_{j, m}$ at $x_{j} = 1$.

  Define subsets of integers $A_{0}, \dots, A_{n}$ as $A_{0} \coloneqq \bc{1, \dots, d_{0}}$, $A_{1} \coloneqq \bc{d_{0} + 1, \dots, d_{0} + d_{1}}$, and so on.
  The set $A_{i}$ has size $d_{i}$, and all of these sets are pairwise disjoint.
  For $i \in \bc{0,1,\ldots,n}$, we define the form $G_{i, j} \in \bQ\bs{\vx}$ as
  \[
    G_{i, j}(\vx) \coloneqq \prod_{k \in A_{i}} L_{j, k}(\vx).
  \]
  We use $\vG_{j}$ to denote the vector of polynomials $(G_{0, j}, \dots, G_{j, j})$.
  Finally, $g_{i, j}$ denotes the specialization of $G_{i,j}$ to $x_{j} = 1$, and $\vg_{j}$ denotes the vector of polynomials $(g_{0, j}, \dots, g_{j-1, j})$.
\end{definition}

Observe that each $G_{i, j}$ is a form of degree $d_{i}$ in the variables $x_{0}, \dots, x_{j}$.
The reason why we use $\vg_{j}$ to denote $g_{0, j}, \dots, g_{j-1, j}$ (and not the more natural choice of $g_{0, j}, \dots, g_{j, j}$) is that when applying the Poisson formula to $H_{0, j}, \dots, H_{j, j}$, the term corresponding to the product of roots will only involve solving the system $H_{0, j}|_{x_{j} = 1}, \dots, H_{j-1, j}|_{x_{j} = 1}$.

The following lemma captures some basic properties about the zeroes of the systems $\vg_{j}$.
Define $e_{0} \coloneqq 0$.
For $i = 1, \dots, n$ define $e_{i} \coloneqq d_{0} + \cdots + d_{i-1}$.

\begin{lemma} \label{lemma: initial system roots}
  For each $j \in \bc{0,1,\ldots,n}$, let $B_{j} \coloneqq [d_0] \times [d_1] \times \cdots \times [d_{j-1}]$.
  For each $\vc \in B_{j}$, the linear system $\ell_{j, c_{0}} = \ell_{j, e_{1} + c_{1}} = \cdots = \ell_{j, e_{j-1} + c_{j-1}} = 0$ has a single common solution, which we denote by $\vr_{\vc}$.
  The set of such $\vr_{\vc}$ are exactly the set of common zeroes of $\vg_{j}$.
\end{lemma}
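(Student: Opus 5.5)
The plan is to use the factored structure of the initial system. Since $g_{i,j}=\prod_{k\in A_i}\ell_{j,k}$, a point $\vx\in\overline{\bQ}^{j}$ is a common zero of $\vg_j=(g_{0,j},\dots,g_{j-1,j})$ exactly when, for each $i\in\{0,\dots,j-1\}$, some factor $\ell_{j,k}$ with $k\in A_i$ vanishes at $\vx$. We index these choices by $\vc\in B_j$, taking $k=e_i+c_i$ in $A_i$ with $c_i\in[d_i]$. Then
\[
  \Var(\vg_j)=\bigcup_{\vc\in B_j}\Var\br{\ell_{j,c_0},\ell_{j,e_1+c_1},\dots,\ell_{j,e_{j-1}+c_{j-1}}},
\]
and both claims of the lemma follow once we show that each linear system in this union has exactly one solution.

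Fix $\vc$ and set $m_i\coloneqq e_i+c_i$. The $m_i$ are distinct positive integers, because the sets $A_i$ are pairwise disjoint. Recall
\[
  \ell_{j,m}(x_0,\dots,x_{j-1}) = x_0+m x_1+\cdots+m^{j-1}x_{j-1}+m^j .
\]
So the system reads $V\vx=-(m_0^j,\dots,m_{j-1}^j)^T$, where $V=\vand{m_0,\dots,m_{j-1}}$ is the $j\times j$ Vandermonde matrix in the distinct nodes $m_i$. Its determinant is $\prod_{i<i'}(m_{i'}-m_i)\neq 0$, so the system has a unique solution $\vr_{\vc}=-V^{-1}(m_0^j,\dots,m_{j-1}^j)^T\in\bQ^j$.

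Explicitly, the unique solution is the coefficient vector of the monic polynomial $\prod_i(z-m_i)$ with its leading term removed. Indeed, if $x_i$ are the coefficients of $p(z)=z^j+x_{j-1}z^{j-1}+\cdots+x_0$, then $\ell_{j,m}(\vx)=p(m)$, and $p(m_i)=0$ for all $i$ forces $p=\prod_i(z-m_i)$ because $p$ is monic of degree $j$. Thus $\vr_{\vc}$ is given by elementary symmetric polynomials of the $m_i$ up to sign, which will also be useful later for uniformity.

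Combining the displayed union with uniqueness gives $\Var(\vg_j)=\{\vr_{\vc}:\vc\in B_j\}$. For the edge case $j=0$, $B_0$ is the one-point empty product and $\Var(\vg_0)$ is the single point of $\overline{\bQ}^0$, so the statement holds there too. The lemma does not claim the $\vr_{\vc}$ are pairwise distinct, but this is also easy: the monic polynomials with distinct root multisets $\{m_i\}$ are distinct, and different $\vc$ give different sets because the $A_i$ are disjoint.

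I expect no real obstacle in this argument. The only point needing care is the distinctness of the nodes $m_i$, which comes from the disjointness of the $A_i$.
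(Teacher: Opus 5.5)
Your proof is correct and follows the same core argument as the paper: a common zero of $\vg_j$ must kill one linear factor $\ell_{j,e_i+c_i}$ of each $g_{i,j}$, and the resulting $j\times j$ system is Vandermonde in the distinct nodes $e_i+c_i$, hence uniquely solvable. The paper obtains the extra fact that $\vr_{\vc}$ is a zero of no other $\ell_{j,m}$ through a homogenized $(j+1)\times(j+1)$ Vandermonde argument, and reuses it in the Jacobian lemma to get $\kappa_{\vc,i}\neq 0$; your identification $\ell_{j,m}(\vx)=p(m)$ gives this directly as $\ell_{j,m}(\vr_{\vc})=\prod_i (m-m_i)\neq 0$ for $m\notin\{m_i\}$, and also yields distinctness of the $\vr_{\vc}$ and an explicit elementary-symmetric formula.
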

\begin{proof}
  For each choice of $\vc$, the matrix corresponding to the linear system is a Vandermonde matrix, and is therefore invertible.
  This shows that a unique solution exists.
  Further, for any fixed $\vc$, if $\ell$ is a linear form of the type $\ell_{j, m}$ for any $m$ that is not among $e_{0} + c_{0}, \dots, e_{j-1} + c_{j-1}$, then the system of linear equations
  \[
    \ell, \ell_{j, e_{0} + c_{0}}, \dots, \ell_{j, e_{j-1} + c_{j-1}}
  \]
  is unsatisfiable.
  This can be seen by homogenizing the system and observing that the corresponding matrix is again a Vandermonde matrix, so the zero vector is the only solution.
  These two facts also imply that the set of common zeroes of $\vg_{j}$ are exactly $\vr_{\vc}$.
\end{proof}

Next, we compute the Jacobian of the system $\vg_j$ at each of its solutions $\vr_{\vc}$.

\begin{lemma} \label{lemma: initial system jacobian}
  Let the notion be as in the statement of \cref{lemma: initial system roots}.
  For $\vc \in B_{j}$, and $i \leq j$, define $\kappa_{\vc, i}$ as
  \[
    \kappa_{\vc, i} \coloneqq \frac{g_{i, j}}{\ell_{j, e_{i} + c_{i}}}\br{\vr_{\vc}}.
  \]
  Then the Jacobian of $\vg_{j}$ at $\vr_{\vc}$, denoted $\cJ_{\vc}$, is an invertible matrix and factors as 
  \[
    \cJ_{\vc} = \diag\br{\kappa_{\vc, 0}, \dots, \kappa_{\vc, j-1}} \cdot \vand{e_{0} + c_{0}, \dots, e_{j-1} + c_{j-1}}.
  \]
\end{lemma}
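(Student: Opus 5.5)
The plan is a direct computation of the Jacobian using the product rule, followed by a check that every factor is nonzero. Fix $\vc \in B_j$ and write $m_i \coloneqq e_i + c_i$ for $i < j$. By \cref{definition: initial system}, $g_{i,j} = \prod_{k \in A_i} \ell_{j,k}$, and $m_i \in A_i$ because $c_i \in [d_i]$. So $g_{i,j} = \ell_{j,m_i} \cdot q_i$, where $q_i \coloneqq \prod_{k \in A_i \setminus \{m_i\}} \ell_{j,k} = g_{i,j}/\ell_{j,m_i}$ is a polynomial. The product rule gives
\[
  \partial_{x_k} g_{i,j} = q_i \, \partial_{x_k} \ell_{j,m_i} + \ell_{j,m_i} \, \partial_{x_k} q_i .
\]
By \cref{lemma: initial system roots}, $\ell_{j,m_i}(\vr_{\vc}) = 0$, so the second term vanishes at $\vr_{\vc}$. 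What remains is
\[
  \partial_{x_k} g_{i,j}(\vr_{\vc}) = \kappa_{\vc,i} \, \partial_{x_k} \ell_{j,m_i}.
\]

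Next we read off the derivatives of the linear forms. After setting $x_j = 1$, we have $\ell_{j,m}(x_0,\dots,x_{j-1}) = \sum_{k=0}^{j-1} m^k x_k + m^j$. Hence $\partial_{x_k} \ell_{j,m_i} = m_i^k$, which is the $(i,k)$ entry of $\vand{m_0,\dots,m_{j-1}}$ under the paper's convention (entry $(i,k)$ equal to $x_i^{k-1}$, indexed from $1$). Therefore row $i$ of $\cJ_{\vc}$ is $\kappa_{\vc,i}$ times row $i$ of this Vandermonde matrix, which is exactly the claimed factorization $\cJ_{\vc} = \diag(\kappa_{\vc,0},\dots,\kappa_{\vc,j-1}) \cdot \vand{m_0,\dots,m_{j-1}}$.

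For invertibility it suffices that both factors are invertible. The Vandermonde factor is invertible because the sets $A_i$ are pairwise disjoint, so $m_0,\dots,m_{j-1}$ are distinct. The diagonal factor requires $\kappa_{\vc,i} \ne 0$, i.e., $\ell_{j,k}(\vr_{\vc}) \ne 0$ for every $k \in A_i \setminus \{m_i\}$. This is the one step needing an argument. Such a $k$ is distinct from every $m_0,\dots,m_{j-1}$, again by disjointness of the $A_i$. The proof of \cref{lemma: initial system roots} shows that adding any $\ell_{j,k}$ with $k$ outside $\{m_0,\dots,m_{j-1}\}$ to the system $\ell_{j,m_0} = \cdots = \ell_{j,m_{j-1}} = 0$ makes it unsatisfiable: the homogenized matrix is a $(j+1)\times(j+1)$ Vandermonde matrix on distinct nodes. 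Since $\vr_{\vc}$ satisfies the original system, it follows that $\ell_{j,k}(\vr_{\vc}) \ne 0$. Thus every $\kappa_{\vc,i}$ is a product of nonzero rationals, the diagonal factor is invertible, and $\cJ_{\vc}$ is invertible.
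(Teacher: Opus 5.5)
Your proof is correct and matches the paper's argument: differentiate $g_{i,j}$ by the product rule, observe that every term still containing $\ell_{j,e_i+c_i}$ vanishes at $\vr_{\vc}$, read off the Vandermonde row from $\partial_{x_k}\ell_{j,m} = m^k$, and obtain $\kappa_{\vc,i} \neq 0$ from the unsatisfiability argument in the proof of \cref{lemma: initial system roots}. Your explicit remark that the disjointness of the sets $A_i$ makes the Vandermonde nodes distinct fills in a step that the paper leaves implicit when it says invertibility ``follows from the factorization.''
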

\begin{proof}
  The partial derivative of $g_{i, j}$ with respect to $x_{a}$ is given by
  \[
    \partial_{a} g_{i,j} \br{\vx} = \sum_{k'=1}^{d_{i}} \prod_{k \neq k'} \ell_{j, e_{i} + k}(\vx) \cdot \partial_{a} \ell_{j, e_{i} + k'}\br{\vx} = \sum_{k'=1}^{d_{i}} \prod_{k \neq k'} \ell_{j, e_{i} + k}(\vx) \cdot \br{e_{i} + k'}^{a}.
  \]
  When evaluated at $\vr_{\vc}$, the only term that survives in the above summation is the one where $\ell_{j, e_{i} + c_{i}}$ is omitted, since $\ell_{j, e_{i} + c_{i}}$ vanishes at $\vr_{\vc}$.
  In this summand, the first product term when evaluated at $\vr_{\vc}$ is exactly $\kappa_{\vc, i}$.
  This shows the claimed factorization for the Jacobian.
  The proof of \cref{lemma: initial system roots} shows that the $\kappa_{\vc, i}$ are nonzero, since $\vr_{\vc}$ is not a root of any linear polynomial that is a factor of $g_{i, j}$, other than $\ell_{j, e_{i} + c_{i}}$.
  The fact that the Jacobian is invertible now follows from the factorization.
\end{proof}

For each $j$, the solution $\vr_{\vc}$, the Jacobian $\cJ_{\vc}$, and its inverse $\cJ_{\vc}^{-1}$ are all rational functions of $\vc$.
The following lemma describes polylogtime-uniform, constant-free, constant-depth circuits that take as input $\vc$ and compute these rational functions.

\begin{lemma} \label{lemma: uniformity of initial system}
  There exists a polylogtime-uniform family of constant-free circuits $\cC^{\initialRoot}$ indexed by $j, d_{0}, \dots, d_{j-1}$ with the following properties.
  \begin{itemize}
    \item 
      The circuit $C^{\initialRoot}_{j, d_{0}, \dots, d_{j-1}}$ has size polynomial in $j + d_{0} +  \cdots + d_{j-1}$, and depth bounded by a universal constant.
    \item
      The circuit $C^{\initialRoot}_{j, d_{0}, \dots, d_{j-1}}$has $j$ input gates, labeled by the variables $z_{0}, \dots, z_{j-1}$.
      The circuit has $j + 2j^{2}$ output gates.
    \item
      On input $(c_{0}, \dots, c_{j-1}) \in [d_0] \times \cdots \times [d_{j-1}]$, this circuit computes $\vr_{\vc}$, the entries of the Jacobian $\cJ_{\vc}$, and the entries of its inverse $\cJ_{\vc}^{-1}$.
  \end{itemize}
\end{lemma}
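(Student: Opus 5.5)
The plan is to write explicit closed-form expressions for $\vr_{\vc}$, $\cJ_{\vc}$ and $\cJ_{\vc}^{-1}$ as rational functions of the inputs $z_0,\dots,z_{j-1}$. Each expression will use only the integer constants $e_i$ and sums and products of polynomially many terms, so that it can be wired with constant depth. Set $m_i \coloneqq e_i + z_i$ for $i<j$. The gates computing $m_i$ are $+$ gates fed by copies of the constant $1$ (as in \cref{lemma: special vand inverses are uniform}, to produce $e_i = d_0+\cdots+d_{i-1}$) together with the input $z_i$. Given a name and the index $(j,d_0,\dots,d_{j-1})$, one computes $e_i$ in polynomial time in the binary length of the index, so these gates are uniform.

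\emph{The root.} The system $\ell_{j,m_i}(\vx)=0$ reads $\sum_{a<j} m_i^a x_a = -m_i^j$, so $\vr_{\vc} = -\vand{m_0,\dots,m_{j-1}}^{-1}(m_0^j,\dots,m_{j-1}^j)^T$. Plugging the $m_i$ into a copy of the circuit from \cref{lemma: vandermonde inverse uniform} gives the inverse Vandermonde entries. The powers $m_i^a$ for $a\le j$ come from $\times$ gates fed by $+$-copies of $m_i$. A matrix--vector product (one layer of $\times$ gates and one layer of $+$ gates) then gives $\vr_{\vc}$.

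\emph{The Jacobian.} By \cref{lemma: initial system jacobian}, $\cJ_{\vc} = \diag(\kappa_{\vc,0},\dots,\kappa_{\vc,j-1})\vand{m_0,\dots,m_{j-1}}$, and $\kappa_{\vc,i} = \prod_{k\in[d_i],\,k\ne z_i}\ell_{j,e_i+k}(\vr_{\vc})$. The catch is that $z_i$ is a symbolic input, so "$k\ne z_i$" cannot be hard-wired. I would instead use the identity
\[
  \kappa_{\vc,i} = \frac{d}{dx}\Bigl(\prod_{k=1}^{d_i} (x - e_i - k)\Bigr)\Big|_{x=\ldots},
\]
or, more simply, use that $\ell_{j,m}(\vr_{\vc})$ is a polynomial in $m$ of degree $j$ that vanishes at $m_0,\dots,m_{j-1}$ and whose leading coefficient is $1$ (the coefficient of $m^j$ is $x_j=1$). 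Hence
\[
  \ell_{j,m}(\vr_{\vc}) = \prod_{a<j}(m-m_a),
\]
and
\[
  \kappa_{\vc,i} = \prod_{k=1,\ k\ne z_i}^{d_i}\ \prod_{a<j}(e_i+k-m_a).
\]
To remove the excluded factor I would write this as $\prod_{a\ne i}\prod_{k=1}^{d_i}(e_i+k-m_a)\cdot\prod_{k\ne z_i}(e_i+k-m_i)$. The last product equals $\prod_{k\ne z_i}(k-z_i) = (-1)^{z_i-1}(z_i-1)!\,(d_i-z_i)!$, which is still not a fixed-size formula in the symbolic $z_i$. The cleanest fix is Lagrange-style: it is the derivative of $\prod_{k=1}^{d_i}(x-k)$ at $x=z_i$, namely
\[
  \sum_{k'=1}^{d_i}\ \prod_{k\ne k'}(z_i-k),
\]
a depth-$2$ polynomial in $z_i$. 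The same device handles the whole of $\kappa_{\vc,i}$ directly:
\[
  \kappa_{\vc,i} = \sum_{k'=1}^{d_i}\ \prod_{k\ne k'}\ell_{j,e_i+k}(\vr_{\vc}),
\]
because at $\vr_{\vc}$ only the $k'=z_i$ term survives, exactly as in the proof of \cref{lemma: initial system jacobian}. Each $\ell_{j,e_i+k}(\vr_{\vc})$ is a sum of products of constants and gates already built, so this is constant depth and polynomial size. Multiplying the row scalings into the Vandermonde entries $m_i^a$ yields $\cJ_{\vc}$.

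\emph{The inverse.} We have $\cJ_{\vc}^{-1} = \vand{m_0,\dots,m_{j-1}}^{-1}\diag(\kappa_{\vc,i}^{-1})$, which needs $j^2$ $\div$ gates on top of existing gates. The denominators are nonzero on valid inputs by \cref{lemma: initial system jacobian}.

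The main obstacle is expressing the "omit the vanishing factor" step uniformly with symbolic inputs; the derivative-sum trick above resolves it. The rest is uniformity bookkeeping. Gates get tuple names recording which block and which indices they belong to, in the spirit of the proofs of \cref{lemma: special vand inverses are uniform,lemma: polynomial interpolation uniform}. The direct connection language is decided by parsing the name, doing polynomial-time arithmetic on the indices (computing $e_i$ and the ranges $[d_i]$), and simulating the Turing machine of the embedded circuit from \cref{lemma: vandermonde inverse uniform}. There are $j$ outputs for $\vr_{\vc}$ and $j^2$ each for $\cJ_{\vc}$ and $\cJ_{\vc}^{-1}$, giving $j+2j^2$ outputs arranged in row-major order.
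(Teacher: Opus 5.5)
Your proposal is correct and is essentially the paper's construction: $+$ gates for $e_i+z_i$ feed a copy of the generic Vandermonde-inverse circuit, and a matrix--vector product with $-(e_i+z_i)^j$ gives $\vr_{\vz}$. The sum $\sum_{k'=1}^{d_i}\prod_{k\neq k'}\ell_{j,e_i+k}(\vr_{\vz})$ gives $\kappa_{\vz,i}$, since only the $k'=c_i$ term survives at valid inputs. Finally $\cJ_{\vz}^{-1}=\vand{e_0+z_0,\dots,e_{j-1}+z_{j-1}}^{-1}\diag(\kappa_{\vz,0}^{-1},\dots,\kappa_{\vz,j-1}^{-1})$, with uniformity by the usual simulation argument. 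The intermediate detour through $\ell_{j,m}(\vr_{\vc})=\prod_{a<j}(m-m_a)$ is valid but unnecessary, and the derivative rewriting you attach to it is off by the sign $(-1)^{d_i-1}$, so simply drop it.
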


\begin{proof}
  We first describe the circuit $C = C^{\initialRoot}_{j, d_{0}, \dots, d_{j-1}}$.
  Recall that $e_k$ denotes the sum $d_0 + \cdots + d_{k-1}$.
  The circuit $C$ has gates that compute all the integers between $1$ and $e_j$ by repeatedly adding the constant $1$.
  It has $+$ gates that compute the polynomials $e_{0} + z_{0}, \dots, e_{j-1} + z_{j-1}$ by adding $z_j$ to the integer $e_{j-1}$.
  The circuit has gates that compute the polynomials $-(e_0 + z_0)^j, \ldots, -(e_{j-1} + z_{j-1})^j$.
  This will require introducing more copies of the gates that compute $e_{0} + z_{0}$.
  The gates for these computations will be named $(1, *)$ for some $*$ (which itself will be a tuple), analogous to how gates performing similar tasks were named in the proof of \cref{lemma: special vand inverses are uniform}.
  The circuit has a copy of the circuit $C'$ for the inverse of a general Vandermonde matrix of size $j \times j$ (\cref{lemma: vandermonde inverse uniform}), whose inputs are connected to the gates computing the above polynomials.
  These gates will be named $(2, v)$ where $v$ is the name of the gate in the circuit $C'$.
  The output of this subcircuit computes the inverse of the Vandermonde matrix $\vand{e_0 + z_0, \ldots, e_{j-1} + z_{j-1}}$.
  The circuit then multiplies the inverse of the Vandermonde matrix with the vector $(-(e_0 + z_0)^j, \ldots, -(e_{j-1} + z_{j-1})^j)$.
  Denote the entries of the resulting vector by $\vr_{\vz}$.
  These gates will be named $(3, *)$.

  Next, for each $i \in \bc{0, \dots, j-1}$, the circuit has gates that compute the polynomials
  \[
    \sum_{k'=1}^{d_{i}} \prod_{k \neq k'} \ell_{j, e_{i} + k}(x_{0}, \dots, x_{j-1}),
  \]
  and evaluates them at $\vr_{\vz}$.
  Call these evaluations $\kappa_{\vz, i}$, the gates will be named $(4, *)$.
  The circuit then has gates that compute the matrix product 
  \[
    \cJ_{\vz} \coloneqq \diag\br{\kappa_{\vz, 1}, \dots, \kappa_{\vz, j-1}} \cdot \vand{e_{0} + z_{0}, \dots, e_{j-1} + z_{j-1}}
  \]
  Finally, the circuit has gates that compute the inverse of $\cJ_{\vz}$ using the factorization above and the entries of $\vand{e_0+z_0,\ldots,e_{j-1}+z_{j-1}}^{-1}$ computed earlier.
  These gates can also be named as above.
  In all cases, a natural numbering of the predecessor of each gates is clear from the description.

  We argue that evaluating this circuit on $\vc$ computes the root $\vr_{\vc}$, the Jacobian $\cJ_{\vc}$ and its inverse correctly.
  It follows from the definition of $\vr_{\vc}$ that $\vr_{\vz}(\vc) = \vr_{\vc}$.
  To see that $\kappa_{\vz, i}(\vc) = \kappa_{\vc, i}$, we observe that when evaluated at $\vr_{\vc}$, the only summand in the expression for $\kappa_{\vz, i}$ that does not vanish is the one where $\ell_{j, e_{i} + c_{i}}$ is omitted.
  From this and \cref{lemma: initial system jacobian}, it follows that the inverse and Jacobian are computed correctly.

  The claims on the size and depth of the circuit follow from the above construction and the bounds on the sizes of the subcircuits used.
  Uniformity follows by a simulation argument, as in the proof of \cref{lemma: polynomial interpolation uniform}, together with the above description.
  For the gates in the copy of $C'$, that is, gates named $(2, v)$ for some $v$, since $v$ can be deduced from this name, the Turing machine can simulate the machine that decides the direct connection language of the circuit family from \cref{lemma: vandermonde inverse uniform} whenever required.
  For all other gates, the above description of the circuit can be used to design the Turing machine, similar to the proof of \cref{lemma: special vand inverses are uniform}.
\end{proof}

\subsection{The resultant of the system defining the homotopy}

We can now describe the system that defines the homotopy.

\begin{definition} \label{definition: homotopy polynomials}
  For any $i, j \in \bc{0,1,\ldots,n}$, let $H_{i, j}(\vx,t)$ denote the polynomial
  \[
    H_{i,j}(\vx, t) \coloneqq (1-t) G_{i,j}(\vx) + t F_{i, j}(\vx),
  \]
  which we regard as an element of $\bK(t)[\vx]$.
  For any such $i$ and $j$, we use $h_{i, j}$ to denote the specialization of $H_{i, j}$ at $x_{j} = 1$.
  Let $\vH_{j}$ denote the vector of polynomials $\br{H_{0, j}, \dots, H_{j, j}}$ and $\vh_{j}$ denote the vector of polynomials $\br{h_{0,j}, \ldots, h_{j-1,j}}$.
\end{definition}

Specializing $H_{i,j}$ and $h_{i,j}$ to $t = 0$ gives us $G_{i, j}$ and $g_{i, j}$ respectively, and specializing to $t = 1$ gives $F_{i, j}$ and $f_{i, j}$.
Further, for $j \in [n]$, setting $x_j$ to zero in $H_{0,j},\dots,H_{j-1,j}$ gives us $H_{0,j-1},\dots,H_{j-1,j-1}$.
Via the Poisson formula, we can derive an expression for $\res_{n}\br{\vH_{n}}$.
\begin{lemma}
  \label{lemma: poisson formula homotopy}
  For each $j \in \bc{0,1,\ldots,n}$, let $V_{j} \subseteq \overline{\bK\br{t}}^{j}$ denote the set of common zeroes of $h_{0, j}, \dots, h_{j-1, j}$.
  We have
  \begin{multline*}
    \res_{n}\br{\vH_{n}} = \br{1 + t \br{u_{0,\br{d_{0}, 0, \dots, 0}} - 1}}^{\prod_{i=1}^{n} d_{i}} \times \br{\prod_{\vrho \in V_{1}} h_{1, 1}(\vrho)}^{\prod_{i=2}^{n} d_{i}} \\ \times \br{\prod_{\vrho \in V_{2}} h_{2, 2}(\vrho)}^{\prod_{i=3}^{n} d_{i}} \times\cdots \times \br{\prod_{\vrho \in V_{n}} h_{n, n}(\vrho)}.
  \end{multline*}
\end{lemma}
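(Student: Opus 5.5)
Apply the Poisson formula (\cref{theorem: multivariate poisson formula}) repeatedly, over the field $\bK(t)$, to the systems $\vH_n, \vH_{n-1}, \ldots, \vH_1$. The plan is to prove by downward induction on $j$ the identity
\[
  \res_j(\vH_j) = \res_{j-1}(\vH_{j-1})^{d_j} \cdot \prod_{\vrho \in V_j} h_{j,j}(\vrho)^{m(\vrho)},
\]
and then unroll it down to $\res_0(\vH_0)$. Here $\vH_{j-1}$ is the specialization of $H_{0,j},\ldots,H_{j-1,j}$ at $x_j=0$, as noted after \cref{definition: homotopy polynomials}. Unrolling gives exponent $\prod_{i=j+1}^n d_i$ on the $j$\ts{th} product, exactly as in the statement. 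The base case is the resultant of the single univariate form $H_{0,0} = ((1-t)+t\,u_{0,(d_0,0,\ldots,0)})x_0^{d_0}$. For one form in one variable, $\res_{d_0}(a x_0^{d_0}) = a$, by the normalization $\res(x_0^{d_0})=1$ and the degree-one homogeneity in \cref{lemma: resultant degree}. This yields the first factor $1+t(u_{0,(d_0,0,\ldots,0)}-1)$.

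Two points need checking. First, the hypothesis of the Poisson formula, namely $\res_{j-1}(\vH_{j-1}) \neq 0$ in $\bK(t)$. It suffices to specialize $t=0$. There $\vH_{j-1}$ becomes $\vG_{j-1}$, and each $G_{i,j-1}$ is a product of linear forms $L_{j-1,k}$ with all the $k$ distinct. Any $j$ of these forms with distinct indices give a Vandermonde system, which has only the trivial solution, as in the proof of \cref{lemma: initial system roots}. So $\vG_{j-1}$ has no projective zero, and $\res_{j-1}(\vG_{j-1}) \neq 0$. Since the resultant is a polynomial in the coefficients, and these coefficients are polynomial in $t$, $\res_{j-1}(\vH_{j-1})$ is a polynomial in $t$ that is nonzero at $t=0$, hence nonzero. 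This also shows that $V_j$ is finite.

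Second, the statement has no multiplicities $m(\vrho)$, so I must show that every point of $V_j$ is simple. This is the step I expect to be the main obstacle. I will argue by counting. By \cref{lemma: initial system roots,lemma: initial system jacobian}, $\vg_j$ has $d_0\cdots d_{j-1}$ distinct roots, each with invertible Jacobian. By the implicit function theorem (Hensel lifting in $\overline{\bK}\bsd{t}$), each root lifts to a power series root of $\vh_j$. These lifts are pairwise distinct, since their constant terms differ, and each has invertible Jacobian over $\bK\bsd{t}$, hence multiplicity one. This gives $d_0\cdots d_{j-1}$ distinct simple roots in $\overline{\bK(t)}^j$. On the other hand, the Bézout bound, or equivalently the Poisson formula together with the degree of $\res_j$ in the coefficients of $H_{j,j}$ from \cref{lemma: resultant degree}, shows that the sum of multiplicities over $V_j$ is $d_0\cdots d_{j-1}$ when there are no zeros at infinity. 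Therefore $V_j$ consists exactly of these lifted roots, each with $m(\vrho)=1$.

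If matching the multiplicity convention of the Poisson formula with the Jacobian criterion proves awkward, I would fall back on reading $m(\vrho)$ as the local intersection multiplicity and citing that it equals $1$ exactly when the Jacobian is invertible. Substituting $m\equiv 1$ into the unrolled identity then gives the statement.
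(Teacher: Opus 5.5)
Your proposal is correct and follows the paper's route: unroll the Poisson formula over $\bK(t)$ down to the base case $\res(H_{0,0}) = 1 + t(u_{0,(d_0,0,\dots,0)}-1)$. The paper checks the nonvanishing hypothesis by specializing at $t=1$, where $\vH_{j-1}$ becomes the generic system, rather than by your $t=0$ Vandermonde argument, and its proof silently drops the multiplicities $m(\vrho)$, so your Hensel-lifting plus B\'ezout-count argument that every $m(\vrho)=1$ fills in a step the paper leaves implicit (relying only tacitly on \cref{lemma: power series roots}).
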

\begin{proof}
  We have $F_{0, 0} = u_{0,\br{d_{0}, 0, \dots, 0}}x_{0}^{d_{0}}$ and $G_{0, 0} = x_{0}^{d_{0}}$.
  Therefore, $\res\br{H_{0, 0}} = 1 + t \br{u_{0,\br{d_{0}, 0, \dots, 0}} - 1}$, using the fact that the resultant of a single form in one variable is just the coefficient (\cite[Example~2.1]{Jouanolou91}).
  For the systems $\vH_{j}$, the Poisson formula (\cref{theorem: multivariate poisson formula}) applies, since the specialization to $t=1$ results in generic forms.
  Unrolling the recurrence of the Poisson formula gives us the above expression.
\end{proof}

As described above, the sets $V_{j}$ are finite sets of points with coordinates in $\overline{\bK(t)}$.
In this form, they are not amenable to manipulation by arithmetic circuits and Turing machines.
Define $\bA \coloneqq \bQ\bs{\vu_{0}, \dots, \vu_{n}}\bsd{t}$.
Consider the $h_{i, j}$ as polynomials in the ring $\bA\bs{x_{0}, \dots, x_{j-1}}$.
The above observations, combined with the discussion in \cref{subsection: initial system}, allows us to apply Hensel's lemma to this system of equations.

\begin{lemma}
  \label{lemma: power series roots}
  Let $j \in [n]$.
  As elements of $\bA\bs{x_{0}, \dots, x_{j-1}}$, the set of polynomials $\vh_{j}$ admits $d_{0} \cdots d_{j-1}$ roots in $\bA$.
  For each $\vr_{\vc} \in B_{j}$ as defined in \cref{lemma: initial system roots}, there is a root $\vrho_{\vc} \in \bA^{j}$ of $\vh_j$ that satisfies $\vrho_{\vc} \pmod t = \vr_{\vc}$.
  The set of polynomials has no other roots in any extension of $\bA$.
\end{lemma}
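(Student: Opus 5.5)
Existence and uniqueness of each lifted root will come from the multivariate Hensel lemma (the implicit function theorem over the complete local ring $\bA = S\bsd{t}$). Counting, and ruling out other roots, will come from a Bézout-type bound combined with the Poisson formula. Throughout, I view $\vh_j = (h_{0,j},\dots,h_{j-1,j})$ as $j$ polynomials in $x_0,\dots,x_{j-1}$ with coefficients in $\bA$, where $\bA$ is complete with respect to the $t$-adic topology.

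\textbf{Step 1: lifting.} Reducing modulo $t$ gives $\vh_j \bmod t = \vg_j$, since $H_{i,j}|_{t=0} = G_{i,j}$. By \cref{lemma: initial system roots}, each $\vr_{\vc}$ with $\vc \in B_j$ is a common zero of $\vg_j$, and it has rational coordinates, so it lies in $\bQ^j \subseteq S^j = (\bA/t\bA)^j$. By \cref{lemma: initial system jacobian}, the Jacobian $\cJ_{\vc}$ of $\vg_j$ at $\vr_{\vc}$ is an invertible rational matrix. The Jacobian of $\vh_j$ at any lift of $\vr_{\vc}$ therefore has determinant in $\det\cJ_{\vc} + t\bA$, which is a unit of $\bA$. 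I will then run the usual Newton/Hensel iteration
\[
  \vrho^{(k+1)} = \vrho^{(k)} - J_{\vh_j}(\vrho^{(k)})^{-1}\vh_j(\vrho^{(k)}),
\]
or equivalently solve successively for coefficients of $t^k$ using the invertibility of $\cJ_{\vc}$. This produces a unique $\vrho_{\vc} \in \bA^j$ with $\vh_j(\vrho_{\vc}) = 0$ and $\vrho_{\vc} \equiv \vr_{\vc} \pmod t$. Uniqueness among roots congruent to $\vr_{\vc}$ is part of Hensel's lemma. The $\vr_{\vc}$ are pairwise distinct, because \cref{lemma: initial system roots} identifies distinct $\vc$ with distinct solutions. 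Hence the $\vrho_{\vc}$ are pairwise distinct, which gives $\abs{B_j} = d_0\cdots d_{j-1}$ roots.

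\textbf{Step 2: no other roots.} It suffices to show that $\vh_j$ has at most $d_0\cdots d_{j-1}$ common zeroes, counted with multiplicity, in $\overline{\bK(t)}$. Every extension of $\bA$ relevant here embeds, through the fraction field $\mathrm{Frac}(\bA) \supseteq \bK(t)$, into an algebraic closure. The homogeneous system $H_{0,j},\dots,H_{j-1,j}$ consists of forms of degrees $d_0,\dots,d_{j-1}$ in $x_0,\dots,x_j$. To control its behaviour at infinity, I intend to show that $\res_{j-1}(H_{0,j-1},\dots,H_{j-1,j-1}) \neq 0$. This is the system obtained by setting $x_j = 0$, and its $t=1$ specialization is the generic system $\vF_{j-1}$, whose resultant is nonzero. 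So there are no roots at infinity, the projective variety is finite, and Bézout's inequality (\cite[Theorem~8.28]{BCS97}) bounds the number of affine roots by $d_0\cdots d_{j-1}$.

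A more precise argument is also available. The Poisson formula (\cref{theorem: multivariate poisson formula}) expresses $\res_j(\vH_j)$ as a product over $V_j$ with multiplicities, and \cref{lemma: resultant degree} says that this resultant has degree $d_0\cdots d_{j-1}$ in the coefficients of $H_{j,j}$. Specializing $H_{j,j}$ to a generic linear form then shows that the multiplicities sum to $d_0\cdots d_{j-1}$. Since the $\abs{B_j}$ roots from Step 1 already exhaust this count, each root is simple and no further roots exist.

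\textbf{Main obstacle.} The delicate point is the passage between the ring $\bA = S\bsd{t}$ and the field $\overline{\bK(t)}$ used in \cref{lemma: poisson formula homotopy}. Roots over $\bK(t)$ have to be compared with roots in extensions of $\bA$, whose fraction field is larger than $\bK(t)$. I plan to handle this by working throughout in an algebraic closure of $\mathrm{Frac}(\bA)$ and applying the resultant and Bézout arguments over that field. Both are insensitive to the choice of base field, because the resultant is a polynomial in the coefficients and nonvanishing at infinity is preserved. The other check is that the Hensel iteration stays inside $\bA$ with no denominators beyond $\det\cJ_{\vc} \in \bQ^{\times}$, and this holds because $\cJ_{\vc}$ has rational entries.
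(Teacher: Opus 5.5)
Your proposal is correct and follows the paper's proof. The paper gets the $d_0\cdots d_{j-1}$ lifted roots from the multivariate Hensel lemma, and your check that $\cJ_{\vc}$ is invertible modulo $t$ is exactly the hypothesis that lemma needs; it then rules out other roots by Bézout's inequality. Your verification that $\res_{j-1}(\vH_{j-1}) \neq 0$, so that there are no zeroes at infinity and the projective variety is finite, only makes explicit what the paper's one-line appeal to Bézout leaves implicit (your $d_0\cdots d_{j-1}$ simple, hence isolated, roots would also saturate the Bézout bound), and your secondary Poisson-degree argument is not needed.
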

\begin{proof}
  The first statement follows from a multivariate version of Hensel's lemma, see for example \cite[Exercise~7.26]{eisenbud2013commutative}.
  The final statement follows from Bézout's inequality.
\end{proof}

We will use an explicit version of the implicit function theorem to compute the $\vrho_{\vc}$ to the required precision.
To apply this theorem, we have to perform some shifting and scaling to ensure that the root $\vrho_{\vc}$ we are computing has no constant term, and that the Jacobian at the origin after shifting is the identity.

\begin{definition}
  \label{definition: shifted scaled system}
  Let $j \in [n]$ and $\vc \in B_j$.
  Define the shifted and scaled system $\tilde{\vh}_{\vc}$ as
  \[
    \tilde{\vh}_{\vc}(\vx) \coloneqq \cJ_{\vc}^{-1}\vh_{j}\br{\vx + \vr_{\vc}}.
  \]
  This system has $\vrho_{\vc} - \vr_{\vc}$ as a root, which has a vanishing constant term in every coordinate.
  The Jacobian at zero of this system is the identity matrix.
  Let $\Delta_{\vc}$ denote the Jacobian determinant of this system.
  Note that we hide $j$ in the notation $\tilde{\vh}_{\vc}$ for brevity, as it will always be clear from context.
\end{definition}

Each polynomial in $\tilde{\vh}_{\vc}$ lies in $S\bs{t, x_{0}, \dots, x_{j-1}}$.
We now quote an explicit version of the implicit function theorem, specialized to the system described in \cref{definition: shifted scaled system}.

\begin{lemma}[Implicit Function Theorem {\cite[Proposition~20.3]{AiYu83}}]
  \label{lemma: explicit implicit function for roots}
  Let $\tilde{\vrho}_{\vc} = \br{\tilde{\rho}_{\vc,0}(t), \ldots, \tilde{\rho}_{\vc, j-1}(t)}$ be the root of $\tilde{\vh}_{\vc}$ with vanishing constant term in every coordinate.
  Write $\tilde{\rho}_{\vc, i}(t) = \sum_{N \ge 0} \tilde{\rho}_{\vc, i, N} t^N$ with coefficients $\tilde{\rho}_{\vc, i, N} \in S$.
  Then the coefficient $\tilde{\rho}_{\vc, i, N}$ is given by
  \[
    \tilde{\rho}_{\vc,i,N} =
    \sum_{\substack{(b_0,\dots,b_{j-1}) \in \bN^j \\ |\vb| \le 2N}} (-1)^{\abs{\vb}}\, {\rm coeff}(x_i(\tilde h_{\bm c,0}(\vx)-x_0)^{b_0}
    \cdots (\tilde h_{\bm c, j-1}(\vx)-x_{j-1})^{b_{j-1}} \Delta_{\vc}, t^N x_0^{b_0} \cdots x_{j-1}^{b_{j-1}}).
  \]
\end{lemma}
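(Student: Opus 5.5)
This is the multidimensional Lagrange--Good inversion formula, which \textcite{AiYu83} derive from the logarithmic residue. My plan is to re-derive it by a formal residue computation, specialized to $\tilde{\vh}_{\vc}$, and then justify the truncation $\abs{\vb} \le 2N$ by a weight argument.

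\textbf{Setup.} By \cref{definition: shifted scaled system}, each $\tilde h_{\vc,i} \in S\bs{t,x_0,\dots,x_{j-1}}$ can be written as
\[
  \tilde h_{\vc,i}(\vx) = x_i - \varphi_i(\vx),
\]
where $\varphi_i \coloneqq x_i - \tilde h_{\vc,i}$. Since $\tilde{\vh}_{\vc}$ at $t=0$ vanishes at the origin and has identity Jacobian there, every monomial of $\varphi_i$ is either divisible by $t$ or has $\vx$-degree at least $2$. Assign the weights $w(t) = 2$ and $w(x_k) = 1$. Then every monomial of $\varphi_i$ has weight at least $2$.

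\textbf{Residue identity.} For a power series $\Phi(\vx)$ (we will take $\Phi = x_i$), I will establish
\[
  \Phi(\tilde{\vrho}_{\vc}) = \operatorname{Res}_{\vx}\frac{\Phi(\vx)\,\Delta_{\vc}(\vx)}{\prod_k \tilde h_{\vc,k}(\vx)}.
\]
Here $\Delta_{\vc}$ is the Jacobian determinant and the residue is the coefficient of $(x_0\cdots x_{j-1})^{-1}$. The rational function is expanded through the geometric series
\[
  \frac{1}{x_k-\varphi_k} = \sum_{b\ge 0}\frac{\varphi_k^{\,b}}{x_k^{\,b+1}}.
\]
This expansion converges in the $(\vx,t)$-weight topology, since every $\varphi_k$ has weight at least $2$ while $x_k$ has weight $1$.

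Extracting the residue term by term gives
\[
  \Phi(\tilde{\vrho}_{\vc}) = \sum_{\vb}\operatorname{coeff}\bigl(\Phi\,\varphi^{\vb}\Delta_{\vc},\,\vx^{\vb}\bigr).
\]
Writing $\varphi_k = -(\tilde h_{\vc,k} - x_k)$ produces the sign $(-1)^{\abs{\vb}}$. Taking the coefficient of $t^N$ then yields the displayed formula, except for the range of summation.

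To prove the residue identity itself, I would use the standard route.
\begin{enumerate}
  \item For the trivial system $\varphi = 0$, the identity is immediate.
  \item Deform along $\varphi \mapsto s\varphi$ and show the right-hand side is invariant: its $s$-derivative is a sum of formal derivatives, and formal derivatives have zero residue.
  \item Alternatively, change variables $\vy = \tilde{\vh}_{\vc}(\vx)$, which is a formal automorphism after adjoining $t$, and use the transformation rule for residues under such automorphisms, where $\Delta_{\vc}$ is exactly the Jacobian factor.
\end{enumerate}
Uniqueness of the root with zero constant term, from \cref{lemma: power series roots} and Hensel's lemma, identifies the evaluation point with $\tilde{\vrho}_{\vc}$.

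\textbf{Truncation.} The product $\varphi^{\vb}$ consists of monomials of weight at least $2\abs{\vb}$, and $x_i\Delta_{\vc}$ has nonnegative weight. A contributing monomial $t^N\vx^{\vb}$ has weight $2N + \abs{\vb}$, so it can appear only if $2\abs{\vb} \le 2N + \abs{\vb}$, that is, $\abs{\vb} \le 2N$. Hence all other $\vb$ contribute zero to the coefficient of $t^N$, which gives the finite sum in the statement.

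\textbf{Main obstacle.} The delicate part is the rigorous formal setting for the residue. Working with Laurent series in $\vx$ over $S\bsd{t}$ requires a completion, with respect to the mixed weight above, in which both the geometric expansions and the substitution $\vx \mapsto \tilde{\vrho}_{\vc}$ converge. Carrying out the change-of-variables argument there is where I expect most of the work to lie; I would first attempt to cite \textcite[Proposition~20.3]{AiYu83} directly for this step and only verify its hypotheses.
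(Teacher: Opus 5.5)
Your proof is correct in outline, but it takes a genuinely different route from the paper's.

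\textbf{How the routes differ.} The paper does no formal inversion. It takes the complex-analytic statement of \cite[Proposition~20.3]{AiYu83} as given and only transfers it to the formal setting:
\begin{enumerate}
  \item specialize $\vu \mapsto \vnu \in \bC^{U}$;
  \item observe that $\tilde{\vh}_{\vc,\vnu}$ is polynomial in $t$, hence analytic;
  \item identify the analytic root with the specialization of $\tilde{\vrho}_{\vc}$, using uniqueness of the formal root with vanishing constant term;
  \item conclude $\tilde{\rho}_{\vc,i,N}=Q_{i,N}$, since both are polynomials in $\vu$ that agree on all of $\bC^{U}$.
\end{enumerate}
You instead re-derive the formula as the Jacobian form of multivariable Lagrange--Good inversion. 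This is self-contained and works verbatim over any coefficient ring. Your weight argument ($w(t)=2$, $w(x_k)=1$, with each $x_k-\tilde h_{\vc,k}$ of weight at least $2$) also explains the cutoff $\abs{\vb}\le 2N$, which the paper simply inherits from the cited statement. The price is that you must set up a formal residue calculus yourself, and two points in your sketch need fixing.

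\textbf{Step~2 is wrong as written.} Under $\varphi\mapsto s\varphi$ the root moves, so the right-hand side is not $s$-invariant. Already for $j=1$, $\Phi=x$, $\varphi=t$, it equals $\mathrm{Res}_x\, x/(x-st)=st$. Use the change-of-variables argument of step~3 instead. Carry it out in a ring where negative $\vx$-exponents are allowed and the geometric expansions genuinely converge. The weight filtration alone does not give finiteness there, since for example all $x_0^{a}x_1^{-a}$ have weight $0$. A standard device is to substitute $x_k\mapsto\lambda x_k$, $t\mapsto\lambda^2 t$ and work in Laurent series in $\lambda$ whose coefficients are Laurent polynomials in $\vx$ and polynomials in $t$.

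\textbf{Citing \cite{AiYu83} directly is not mere hypothesis-checking.} Your fallback treats it that way, but that result concerns convergent series over $\bC$, whereas $\tilde{\vh}_{\vc}$ has coefficients in $S=\bQ[\vu]$. Invoking it therefore requires exactly the specialization-and-uniqueness transfer described above, which is the entire content of the paper's proof.
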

\begin{proof}
  The result we cite from~\cite[Proposition~20.3]{AiYu83} is complex-analytic, but deriving the formal statement above from it is straightforward.

  In what follows, $j$ and $\vc$ are fixed.
  \cref{lemma: power series roots} pointed out that the coefficients of $\tilde{\rho}_{\vc}$ are polynomials in $\vu_{0}, \dots, \vu_{n}$, that is, $\tilde{\rho}_{\vc,i,N}$ is in $\mathbb{Q}[\vu_{0}, \dots, \vu_{n}]$ for all $i,N$.
  We use $Q_{i,N}$ to denote the polynomials appearing as right-hand sides in the statement of \cref{lemma: explicit implicit function for roots}, so our claim is that $\tilde{\rho}_{\vc,i,N}=Q_{i,N}$ for all $i$ and all $N$ (these polynomials actually only involve $\vu_{0}, \dots, \vu_{j-1}$, but this has no bearing on the proof).

  Choose $\vnu_0,\dots,\vnu_{n}$ with entries in $\bC$ and let $\tilde{\vh}_{\vc,\vnu}$ be the polynomials in $\bC[t][x_0,\dots,x_{j-1}]$ obtained by evaluating $\vu$ at $\vnu$ in  $\tilde{\vh}_{\vc}$.
  Through this evaluation, we see that the unique formal power series root with vanishing constant term to $\tilde{\vh}_{\vc,\vnu}=0$ is $\tilde{\rho}_{\vc}(\vnu,t)$, whose coefficients are $\tilde{\rho}_{\vc,i,N}(\vnu)$, for $i=0,\dots,j-1$ and $N \ge 0$.
  On the other hand, applying \cite[Proposition~20.3]{AiYu83} to $\tilde{\vh}_{\vc,\vnu}$ tells us that the $i$th coordinate of the unique analytic root to these equations with vanishing constant term is
  \begin{align*}
    &
    \sum_{N \ge 0}
    \sum_{\substack{\abs{\vB} \in \bN^j \\ |\vb| \le 2N}} (-1)^{\abs{\vb}}\, {\rm coeff}(x_i(\tilde h_{\bm c,\vnu,0}(\vx)-x_0)^{b_0}
    \cdots (\tilde h_{\bm c,\vnu, j-1}(\vx)-x_{j-1})^{b_{j-1}} \Delta_{\vc,\vnu}, t^N x_0^{b_0} \cdots x_{j-1}^{b_{j-1}}) t^N \\
    &= \sum_{N \ge 0} Q_{i,N}(\vnu) t^N,
  \end{align*}
  where $\Delta_{\vc,\vnu}$ is the Jacobian determinant of $\tilde {\vh}_{\vc,\vnu}$ with respect to $x_0,\dots,x_{j-1}$  (we note that \cite{AiYu83} uses the notation $D^{\alpha, \beta}_{w, z}$ to denote the partial derivative $\partial^{\alpha}_{w} \partial^{\beta}_{z}$).

  Since an analytic root is also a root over the ring of formal power series, we deduce that $\tilde{\rho}_{\vc,i,N}(\vnu)=Q_{i,N}(\vnu)$ for all $i$, $N$, and $\vnu$, so that $\tilde{\rho}_{\vc,i,N}=Q_{i,N}$ for all $i$ and $N$.
\end{proof}

The root $\vrho_{\vc}$ of $\vh_{j}$ is a shift of $\tilde{\vrho}_{\vc}$ by $\vr_{\vc}$.
If $\rho_{\vc, i, N}$ are the coefficients of $t$ in the coordinates of $\vrho_{\vc}$ then $\rho_{\vc, i, N} = \tilde{\rho}_{\vc, i, N}$ for $N \geq 1$, and $\rho_{\vc, 0, N} = \vr_{\vc}$.
Using \cref{lemma: explicit implicit function for roots}, we can thus compute $\vrho_{\vc}$ up to any precision of our choice.
To compute the resultant, we will require the coefficients $\rho_{\vc, i, N}$ for $N$ up to $P \coloneqq \sum_{i=0}^{n} \prod_{j \neq i} d_{j}$, which is the degree of $\res_{n}\br{\vH_{n}}$ in the variable $t$ (\cref{lemma: resultant degree}).
We now describe the circuits carrying out the above computations.
That there exist constant-depth circuits that do the above computations follows directly from the above formula.
The part that takes some work is establishing that there exist uniform circuit families that implement the expression in \cref{lemma: explicit implicit function for roots}.

We start by showing that the shifted and scaled system, along with its Jacobian determinant, can be computed efficiently.
\begin{lemma} \label{lemma: uniformity of shifted initial system}
  There exists a polylogtime-uniform family of constant-free circuits $\cC^{\scaledSystem}$ indexed by $j, d_{0}, \dots, d_{j-1}$ with the following properties.
  \begin{itemize}
    \item 
    The circuit $C^{\scaledSystem}_{j, d_{0}, \dots, d_{j-1}}$ has size polynomial in $U_j$ and $\exp\br{\poly\br{j}}$, and depth bounded by a universal constant.
    \item
      The circuit $C^{\scaledSystem}_{j, d_{0}, \dots, d_{j-1}}$ has $U_{j} + 2j + 1$ input gates, labeled by variables $\vu_{0}, \dots, \vu_{j}$, $x_{0}, \dots, x_{j-1}$, $z_{0}, \dots, z_{j-1}, t$.
      The circuit has $j + 1$ output gates.
      The circuit is division-free with respect to $\vu_{0}, \dots, \vu_{j}, x_{0}, \dots, x_{j-1}, t$.
    \item
      When $z_{0}, \dots, z_{c-1}$ are specialized to $(c_{0}, \dots, c_{j-1}) \in B_{j}$, the circuit computes the polynomials $\tilde{\vh}_{\vc}\br{x_{0}, \dots, x_{j-1}, t}$ and the Jacobian determinant $\Delta_{\vc}\br{x_{0}, \dots, x_{j-1}, t}$.
  \end{itemize}
\end{lemma}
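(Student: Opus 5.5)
We will build $C^{\scaledSystem}_{j,d_0,\dots,d_{j-1}}$ by composing $C^{\initialRoot}_{j,d_0,\dots,d_{j-1}}$ from \cref{lemma: uniformity of initial system} with explicit constant-depth subcircuits for the homotopy polynomials. First, we feed $z_0,\dots,z_{j-1}$ into a copy of $C^{\initialRoot}$, which yields gates computing $\vr_{\vz}$ and the entries of $\cJ_{\vz}^{-1}$ as rational functions of $\vz$. All divisions in $C^{\initialRoot}$ have only $\vz$ in their denominators, so the final circuit stays division-free with respect to $\vu,\vx,t$. Next, we form the shifted inputs $y_a \coloneqq x_a + (\vr_{\vz})_a$ for $a<j$, and set $y_j \coloneqq 1$ (a constant gate).

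We then compute each $h_{i,j}(\vy) = (1-t)\,g_{i,j}(\vy) + t\,f_{i,j}(\vy)$ for $i<j$ in constant depth, in two parts. For the $g$ part, $g_{i,j}(\vy)=\prod_{k\in A_i}\ell_{j,k}(\vy)$. Each $\ell_{j,k}(\vy) = \sum_{a} k^a y_a$ is a sum gate over product gates, where $k^a$ is formed by a product of $a$ copies of a gate computing $k$, exactly as in \cref{lemma: special vand inverses are uniform}. The product over $A_i$ is then a single product gate of fan-in $d_i$. For the $f$ part, $f_{i,j}(\vy)=\sum_{\valpha} u_{i,\valpha}\vy^{\valpha}$ is a single sum gate over monomials. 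Each monomial $\vy^{\valpha}$ is a product gate whose predecessors are copies of the $y_a$, listed with multiplicity $\alpha_a$. There are at most $U_j$ monomials, each of degree $d_i$.

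Multiplying by $\cJ_{\vz}^{-1}$ is a depth-two inner-product layer, so this produces $\tilde{\vh}_{\vz}$. For $\Delta_{\vz}$ we need the partial derivatives $\partial_{x_a}\tilde h_{\vz,i}$. We obtain them from the structured formulas rather than by differentiating circuits. For the $g$ part, $\partial_a g_{i,j}=\sum_{k'}k'^{\,a}\prod_{k\ne k'}\ell_{j,k}$, as in \cref{lemma: initial system jacobian}. For the $f$ part we differentiate monomials directly, using $\alpha_a\vy^{\valpha-e_a}$. The determinant of the resulting $j\times j$ matrix $M$ is computed by the explicit permutation expansion $\sum_{\sigma\in S_j}\operatorname{sgn}(\sigma)\prod_i M_{i,\sigma(i)}$. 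This is a sum gate of fan-in $j!=\exp(\poly(j))$ over product gates of fan-in $j$, which accounts for the $\exp(\poly(j))$ term in the size bound. Alternatively, one could use $\Delta_{\vc}=\det(\cJ_{\vc}^{-1})\det(\text{Jacobian of } \vh_j)$ and apply the same expansion to the unscaled Jacobian.

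Correctness under the specialization $\vz=\vc$ follows from \cref{lemma: uniformity of initial system}. Size and depth follow by summing the pieces.

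The main obstacle is uniformity, specifically the permutation-sum gate. Given $(\vn,a,p,b)$, the machine must decode the $p$-th permutation of $[j]$, for example by the Lehmer code or factorial number system, and determine its sign. It must also decide which matrix entry gate is the $i$-th predecessor of each product gate. All of this takes time $\poly(j\log j)$, which is polylogarithmic in the size since the size is at least $j!$. Similarly, enumerating the $p$-th monomial $\valpha$ of degree $d_i$ in $j+1$ variables, and the $q$-th factor within it, can be done by standard ranking and unranking of multi-indices in time $\poly(j,\log d_i)$. Every other gate is handled by simulating the direct connection language of $C^{\initialRoot}$, as in \cref{lemma: polynomial interpolation uniform}. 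Gate names encode the subcomputation, the indices $i,a,k,\valpha,\sigma$, and the copy number.
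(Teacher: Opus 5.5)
Your proposal is correct. It shares the paper's overall architecture: a copy of $C^{\initialRoot}$ on $\vz$, a subcircuit for $\vh_j$ fed with $\vx+\vr_{\vz}$, multiplication by $\cJ_{\vz}^{-1}$, and the $j!$-term expansion of the determinant. The difference is in how you obtain the Jacobian entries.

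The paper treats the circuit for $\tilde{\vh}_{\vz}$ as a black box. It applies \cref{lemma: one at a time polynomial interpolation uniform} with distinguished variables $x_0,\dots,x_{j-1}$ to extract the coefficients of $\tilde{\vh}_{\vz}$ as a polynomial in each single $x_k$. It then forms $\partial_{x_k}\tilde{\vh}_{\vz}$ as integer-weighted combinations of these coefficients times powers of $x_k$.

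You instead differentiate symbolically, using:
\begin{itemize}
  \item the product rule on $g_{i,j}$, as in \cref{lemma: initial system jacobian};
  \item termwise differentiation of the monomial expansion of $f_{i,j}$;
  \item the chain rule through the shift by $\vr_{\vz}$ and the left factor $\cJ_{\vz}^{-1}$.
\end{itemize}
You should state the last point explicitly: the matrix you expand is $\cJ_{\vz}^{-1}$ times the Jacobian of $\vh_j$ evaluated at $\vx+\vr_{\vz}$.

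What your route buys is a uniformity argument by direct description. It needs no $d+1$ evaluated copies of the circuit and no Stirling-number circuit for the inverse of the Vandermonde matrix at $1,\dots,d+1$. The price is that it depends on the explicit product-of-linear-forms and dense-monomial shape of $\vh_j$.

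What the paper's route buys is modularity. It needs only division-freeness in $\vx$ and a degree bound, so it would survive a change of start system or homotopy. It also reuses interpolation machinery that is needed anyway in \cref{lemma: uniformity of explicit implicit function theorem} and \cref{lemma: uniformity of homotopy resultant}.
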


\begin{proof}
  The construction of $\cC^{\scaledSystem}$ requires us to define auxiliary circuit families $\cC'$ and $\cC''$.
  We start with the construction of $\cC'$.

  The family $\cC'$ is also indexed by $j, d_{0}, \dots, d_{j-1}$ and takes the same input variables as $\cC$.
  We describe $C' = C'_{j, d_{0}, \dots, d_{j-1}}$.
  It contains as a subcircuit a copy of the circuit $C^{\initialRoot}_{j, d_{0}, \dots, d_{j-1}}$, constructed in \cref{lemma: uniformity of initial system}, that takes as input the variables $\vz$.
  Let $\vr_{\vz}$, $\cJ_{\vz}$, and $\cJ^{-1}_{\vz}$ denote the outputs of this subcircuit.
  The circuit $C'$ also contains a subcircuit that computes the polynomials $\vh_{j}$.
  Computing the polynomials $\vh_{j}$ amounts to computing the polynomials $\vf_{j}$ and $\vg_j$, and can be done using the definition of $\vg_j$.
  The subcircuit that computes $\vh_j$ is given as input $\vx + \vr_{\vz}$, so it computes $\vh_j(\vx + \vr_{\vz})$.
  Next, $C'$ computes the matrix-vector product of $\cJ_{\vz}^{-1}$ and $\vh_{j}(\vx + \vr_{\vz})$.
  Denote the resulting output by $\tilde{\vh}_{\vz}$.
  Note that if we specialize $\vz$ to $\vc \in B_{j}$, then $\tilde{\vh}_{\vz}$ is exactly $\tilde{\vh}_{\vc}$.
  It is straightforward to see that the family $\cC'$ described above is polylogtime-uniform, using a simulation argument as in the proof of \cref{lemma: polynomial interpolation uniform}.
  The size of $C'$ is polynomial in $U_{j}$, and the degree in $\vx$ of each output is bounded by $d_{0} + \cdots + d_{j-1}$.
  The depth of $C'$ is bounded by a universal constant.
  Further, the denominators in all division gates are free of any variable in $\vx$.

  We now construct $\cC''$.
  The family $\cC''$ is also indexed by $j, d_{0}, \dots, d_{j-1}$ and takes the same input variables as $\cC$.
  It is obtained by applying \cref{lemma: one at a time polynomial interpolation uniform} to the family $\cC'$, with distinguished variables $x_{0}, \dots, x_{j-1}$.
  The statements at the end of the previous paragraph show that $\cC'$ meets all the required assumptions for this interpolation result to apply.
  The resulting circuit $C'' = C''_{j, d_{0}, \dots, d_{j-1}}$ has size polynomial in $U_{j}$, and computes the coefficients of every output of $C'_{j, d_{0}, \dots, d_{j-1}}$ in each of the variables $x_{0}, \dots, x_{j-1}$.
  This circuit family is also polylogtime-uniform and consists of constant-depth circuits.

  We now describe the circuit $C = C^{\scaledSystem}_{n, j, d_{0}, \dots, d_{j-1}}$.
  The circuit $C$ contains a copy of the circuit $C' = C'_{j, d_{0}, \dots, d_{j-1}}$, whose outputs are also outputs of $C$.
  These gates will be named $(1, v)$ where $v$ is the name of the gate within $C'$.
  Next $C$ contains a copy of $C'' = C''_{j, d_{0}, \dots, d_{j-1}}$.
  These gates will be named $(2, v)$ where $v$ is the name of the gate within the circuit obtained from interpolation.

  Using the coefficients of $x_{k}$ in $\tilde{\vh}_{\vz}$ computed in the subcircuit $C''$, the circuit $C$ then computes the entries of the Jacobian matrix of $\tilde{\vh}_{\vz}$.
  The partial derivatives with respect to $x_{k}$ are just linear combinations, weighted by powers of $x_k$, of the coefficients of $\tilde{\vh}_{\vz}$ viewed as a polynomial in $x_{k}$.
  Finally, the circuit $C$ computes the determinant of this Jacobian using the trivial depth-two circuit for determinants of $j \times j$ matrices, that is, by expanding it out into $j!$ summands.
  The gates for these computations will be named $(3, *)$.

  We now bound the size of $C$.
  The size of the subcircuits $C', C''$, and the circuitry to compute the partial derivatives can be bounded by $\poly\br{U_{j}}$.
  To compute the determinant, we use a depth-two circuit that has size polynomial in $j!$, which can be bounded by $\exp\br{\poly\br{j}}$.
  The size of $C$ is therefore polynomial in $U_{j}$ and $\exp\br{\poly\br{j}}$ as claimed.
  The proof of uniformity is a simulation argument, just as the previous proofs.
  As before, our naming scheme allows us to deduce the name of a gate within each of the subcircuits, which can be used to simulate the Turing machine that decides the direct connection language of each subcircuit whenever required.
\end{proof}

Using \cref{lemma: uniformity of shifted initial system}, we now show that the truncations of the power series roots can be computed by uniform constant-depth circuits.

\begin{lemma} \label{lemma: uniformity of explicit implicit function theorem}
  There exists a polylogtime-uniform family of constant-free circuits $\cC^{\powerseriesRoot}$ indexed by parameters $n, j, d_{0}, \dots, d_{n}$ with the following properties.
  \begin{itemize}
    \item 
      The circuit $C^{\powerseriesRoot}_{n, j, d_{0}, \dots, d_{n}}$ has size bounded by $P^{\poly\br{n}}$, where $P \coloneqq \sum_{i=0}^{n} \prod_{j \neq i} d_{j}$, and depth bounded by a universal constant.
    \item
      The circuit $C^{\powerseriesRoot}_{n, j, d_{0}, \dots, d_{n}}$ has $U_{j} + j+1$ input gates, labeled by variables $\vu_{0}, \dots, \vu_{j-1}, z_{0}, \dots, z_{j-1}, t$.
      The circuit has $j$ output gates.
      The circuit is division-free with respect to $\vu_{0}, \dots, \vu_{j}, t$.
    \item
      When the inputs $z_{0}, \dots, z_{j-1}$ are specialized to $c_{0}, \dots, c_{j-1}$, the circuit computes the truncated roots $\sum_{N \leq P} \rho_{\vc, i, N} t^{N}$ for each $i \leq j-1$.
  \end{itemize}
\end{lemma}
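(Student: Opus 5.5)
We implement the formula of \cref{lemma: explicit implicit function for roots} directly, using \cref{lemma: uniformity of shifted initial system} as the basic building block and \cref{lemma: multivariate polynomial interpolation uniform} to extract coefficients. Fix $n, j, d_0, \dots, d_n$ and write $C = C^{\powerseriesRoot}_{n,j,d_0,\dots,d_n}$. The circuit $C$ contains one copy of $C^{\scaledSystem}_{j,d_0,\dots,d_{j-1}}$, fed the inputs $\vu, \vz, t$ and fresh variables $x_0, \dots, x_{j-1}$. It therefore computes $\tilde{\vh}_{\vz}$ and $\Delta_{\vz}$ as polynomials in $\vx, t$, and the circuit is division-free in $\vu, \vx, t$.

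For every multi-index $\vb \in \bN^j$ with $\abs{\vb} \le 2P$ and every $i < j$, the circuit forms the product
\[
  Q_{i,\vb} \coloneqq x_i \br{\tilde h_{\vz,0} - x_0}^{b_0} \cdots \br{\tilde h_{\vz,j-1} - x_{j-1}}^{b_{j-1}} \Delta_{\vz}
\]
using a single product gate of fan-in $\abs{\vb}+2$ over copies of the subtraction gates. This costs constant depth. The number of such $\vb$ is $\binom{2P+j}{j} \le (2P+1)^j$. Next, we need the coefficient of $t^N \vx^{\vb}$ in $Q_{i,\vb}$ for $N \le P$. Its degree in $(\vx,t)$ is at most $D \coloneqq (2P+2)(d_0 + \cdots + d_{j-1} + 2)$, which is polynomial in $P$ and $n$ and is computable in polynomial time from the index. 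We therefore apply \cref{lemma: multivariate polynomial interpolation uniform} with distinguished variables $(x_0, \dots, x_{j-1}, t)$ and degree bound $D$. The resulting size is $\poly\br{(D+1)^{j+1}} = P^{\poly(n)}$, since $d_i \le P$. The depth increases by $O(1)$.

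Strictly, that lemma is stated for single-output families. To handle many outputs, we either treat the family indexed additionally by $(i, \vb)$ and merge the copies, or we run \cref{lemma: multivariate polynomial interpolation uniform} once per output in parallel, with gate names carrying $(i,\vb)$. From its outputs we select the coefficient of $t^N \vx^{\vb}$ and multiply it by $(-1)^{\abs{\vb}}$ using the constants $\pm 1$. For each $i$ and $N \le P$, a $+$ gate then sums over $\vb$ with $\abs{\vb} \le 2N$, yielding $\tilde\rho_{\vc,i,N}$ after specialization.

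Finally, the circuit forms $\sum_{N \le P} \rho_{\vc,i,N} t^N$. Here we take $\rho_{\vc,i,0} = (\vr_{\vz})_i$, obtained from the copy of $C^{\initialRoot}$ inside $C^{\scaledSystem}$ (or from a separate copy), and $\rho_{\vc,i,N} = \tilde\rho_{\vc,i,N}$ for $N \ge 1$. The powers $t^N$ are computed by product gates of fan-in $N$ over copies of $t$. Correctness on $\vz = \vc$ follows from \cref{lemma: explicit implicit function for roots} together with the correctness of $C^{\scaledSystem}$ under that specialization. Only the constant term involves division, and only by expressions in $\vz$, so $C$ is division-free in $\vu, t$.

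The main obstacle is uniformity, specifically the predecessor numbering of the large fan-in gates. For the product gate $Q_{i,\vb}$, its $p$-th predecessor must be determined from $(i,\vb,p)$. We order the factors as $x_i$, then $b_0$ copies of the first factor, then $b_1$ copies of the second, and so on, with $\Delta$ last, so locating $p$ reduces to comparing $p$ against prefix sums of $\vb$. For the summation gate indexed by $(i,N)$, we need the $p$-th multi-index $\vb \in \bN^j$ with $\abs{\vb} \le 2N$ in some fixed order. With lexicographic order this can be computed in time $\poly(j, \log P)$ by counting compositions with binomial coefficients digit by digit. Since $j \le n$, this is polynomial in the index length. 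All remaining gates are handled by the standard simulation argument used in \cref{lemma: polynomial interpolation uniform}, with gate names tagged by subcircuit and by $(i,\vb,N)$.
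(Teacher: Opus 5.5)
Your proposal is correct and follows the paper's proof essentially step for step. The paper builds an auxiliary family, indexed additionally by $(k,\vb)$, that computes $x_k(\tilde h_{\vz,0}-x_0)^{b_0}\cdots(\tilde h_{\vz,j-1}-x_{j-1})^{b_{j-1}}\Delta_{\vz}$ from a copy of $C^{\scaledSystem}$ and applies \cref{lemma: multivariate polynomial interpolation uniform} to it; this is exactly your first option for getting around the single-output restriction. It then assembles the signed sums over $\abs{\vb}\le 2N$, the constant terms from a copy of $C^{\initialRoot}$, and the powers of $t$ just as you do.

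One small imprecision: the constant term is not the only place divisions occur. The copy of $C^{\scaledSystem}$ divides by expressions in $\vz$, and the interpolation subcircuits divide by integer constants. Since all these denominators are free of $\vu$, $\vx$ and $t$, your division-freeness conclusion still holds.
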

\begin{proof}

  We start by defining an auxiliary circuit family $\cC'$.
  This family will be indexed by the parameters $n, j, d_{0}, \dots, d_{n}, k, b_{0}, \dots, b_{j-1}$.
  The input variables will be $\vu_{0}, \dots, \vu_{j}, x_{0}, \dots, x_{j-1}, z_{0}, \dots, z_{j-1}, t$.

  If $k \geq j$ or if $b_{0} + \cdots + b_{j-1} > 2P$, then the corresponding circuit in $\cC'$ is just the empty circuit.
  These conditions can be checked in time that is polynomial in the description of the index.
  We describe the circuit in the case where the indices satisfy $k < j$ and $b_{0} + \cdots b_{j-1} \leq 2 P$.
  
  The circuit $C'$ contains a copy of $C^{\scaledSystem}_{j, d_{0}, \dots, d_{j-1}}$ from \cref{lemma: uniformity of shifted initial system}.
  This computes $\tilde{\vh}_{\vz}$ and its Jacobian determinant $\Delta_{\vz}$.
  As usual, these gates are named $(1, v)$, where $v$ is the name of the corresponding gate within $C^{\textrm{scaled}}_{j, d_0, \ldots, d_{j-1}}$.
  The circuit then has $b_{0}$ copies of a sum gate, each of which computes $\tilde{h}_{\vz, 0}(\vx) - x_{0}$.
  These are named $(2, 0, i)$ for $i \leq b_{0}$.
  Similarly, for each $e < j$, there are $b_{e}$ copies of $\tilde{h}_{\vz, t}(\vx) - x_{e}$ computed by sum gates named $(2, e, i)$ for $i \leq b_{e}$.
  For each $e < j$, there is a product gate $(3, e)$ that multiplies these copies to compute $\br{h_{\vz, e}(\vx) - x_{e}}^{b_{e}}$.
  Finally, there is a multiplication gate that has inputs $(3, e)$ for all $e$, the variable $x_{k}$, and the Jacobian determinant $\Delta_{\vz}$ that was computed by the copy of $C^{\scaledSystem}_{j, d_{0}, \dots, d_{j-1}}$.
  The circuit therefore computes the polynomial
  \[
    x_{k}(\tilde h_{\vz ,0}(\vx)-x_0)^{b_0} \cdots (\tilde h_{\vz, j-1}(\vx)-x_{j-1})^{b_{j-1}} \Delta_{\vz}.
  \]
  Uniformity of this circuit family follows from a simulation argument, since the gate names allow us to recover the gate names within the copy of $C^{\scaledSystem}_{j, d_{0}, \dots, d_{j-1}}$.
  No division gate involves division by a polynomial that depends on $\vx$ or $t$.
  The size of the circuit $C'$ is polynomial in $P$ (note that $U_{j}$ is itself polynomial in $P$).
  The degree in $\vx$ and $t$ of the polynomial computed by $C'$ can likewise be bounded by $\poly\br{P}$.

  Using the family $\cC'$, we define another auxiliary circuit family $\cC''$ with the same indices.
  The family $\cC''$ is obtained by applying multivariate interpolation (\cref{lemma: multivariate polynomial interpolation uniform}) to $\cC'$ to interpolate out the coefficients of $t, x_{0}, \dots, x_{j-1}$.
  The statements at the end of the previous paragraph show that the assumptions required for the interpolation hold.
  To summarize the construction so far, we have a polylogtime-uniform family $\cC''$ such that $C'' = C''_{n, j, d_{0}, \dots, d_{n}, k, b_{0}, \dots, b_{j-1}}$ has input variables $\vu_{0}, \dots, \vu_{j-1}, z_{0}, \dots, z_{j-1}$, and computes the coefficients of 
  \[
    x_{k}(\tilde h_{\vz ,0}(\vx)-x_0)^{b_0} \cdots (\tilde h_{\vz, j-1}(\vx)-x_{j-1})^{b_{j-1}} \Delta_{\vz}
  \]
  as a polynomial in $t, x_{0}, \dots, x_{j-1}$.
  The size of $C''$ is bounded by $P^{\poly\br{n}}$, and the depth is a universal constant.

  With this family in hand, we return to the construction of $C^{\powerseriesRoot}_{n, j, d_{0}, \dots, d_{n}}$.
  Fix indices $n, j, d_{0}, \dots, d_{n}$.
  We describe the circuit $C = C^{\powerseriesRoot}_{n, j, d_{0}, \dots, d_{n}}$.
  For every vector $\vb \in \bN^{j}$ with $\abs{\vb} \leq 2P$ and for every $k < j$, the circuit $C$ contains a copy of $C''_{n, j, d_{0}, \dots, d_{n}, k, b_{0}, \dots, b_{j-1}}$.
  We will call this subcircuit $C_{\vb, k}$ in the rest of this argument for brevity.
  The names of each gate in $C_{\vb, k}$ are of the form $(1, \vb, k, v)$ where $v$ is the name of the corresponding gate within the subcircuit.
  For each $\vb, k$, and each $N \leq P$, the circuit $C$ has a product gate that computes
  \[
    (-1)^{\abs{\vb}} \cdot {\rm coeff}(x_i(\tilde h_{\vz,0}-x_0)^{b_0} \cdots (\tilde h_{\vz, j-1}-x_{j-1})^{b_{j-1}} \Delta_{\vz}, t^N x_0^{b_0} \cdots x_{j-1}^{b_{j-1}})
  \]
  by taking as input a constant gate with constant $-1$ and the coefficient from the copy of $C_{\vb, k}$.
  These product gates are named $(2, \vb, k, N)$.
  For each $1 \leq N \leq P$ and each $k < j$, the circuit $C$ contains a sum gate that uses the above gates to compute
  \[
    \rho_{\vz,k,N} \coloneqq
    \sum_{(b_0,\dots,b_{j-1}),\ \abs{\vb} \le 2N} (-1)^{\abs{\vb}} \cdot {\rm coeff}(x_k(\tilde h_{\vz,0}-x_0)^{b_0}
    \cdots (\tilde h_{\vz, j-1}-x_{j-1})^{b_{j-1}} \Delta_{\vz}, t^N x_0^{b_0} \cdots x_{j-1}^{b_{j-1}}).
  \]
  These gates are named $(3, N, k)$.
  The predecessors of the summation gates are numbered in lexicographic order based on $\vb$.
  All that remains is to compute the constant terms $\rho_{\vz, k, 0}$, which are simply $\vr_{\vz}$.
  For this, the circuit $C$ contains a copy of $C^{\initialRoot}_{j, d_{0}, \dots, d_{j-1}}$, whose gates are named $(4, v)$ where $v$ is the name of the corresponding gate within the subcircuit.
  With all the coefficients $\rho_{\vz, i, N}$ computed as above, the circuit has product gates that compute $\rho_{\vz, k, N} t^{N}$ for each $N \leq P$ and $k \leq j-1$, and addition gates that add these to compute $\sum_{N \leq P} \rho_{\vz, k, N} t^{N}$ for each $k \leq j-1$.
  These gates have names of the form $(5, *)$.

  We now bound the size of the circuit.
  The total number of vectors $\vb$ with $\abs{\vb} \leq 2P$ is bounded by $P^{\poly\br{n}}$.
  The size of each copy $C_{\vb, k}$ is bounded by $P^{\poly\br{n}}$, and consequently the size of the whole circuit is bounded by $P^{\poly\br{n}}$.
  The fact that the depth is a universal constant is straightforward.

  We now argue that the circuit family is uniform.
  Let $T_{C}$ be the name of the machine that we will construct to decide the direct connection language of $\cC^{\powerseriesRoot}$.
  Let $T_{C'}$ and $T_{C'"}$ be the machines that decide the direct connection languages of $\cC'$ and $\cC''$, respectively.
  Suppose $(n, j, d_{0}, \dots, d_{n}, a, p, a')$ is an input to $T_{C}$.
  The bound $P$ can be computed from the index in polynomial time.
  Suppose $a$ has a name of the form $(1, \vb, k, v)$.
  The key observation is that the machine $T_{C}$ can simulate the machine $T_{C''}$ on inputs of the form $(n, j, d_{0}, \dots, d_{n}, k, b_{0}, \dots, b_{j-1}, v, p, a'')$, where $a''$ is either a gate type, or the name of a gate within $C_{\vb, k}$.
  This will allow $T_{C}$ to use $T_{C'}$ to decide if $(n, j, d_{0}, \dots, d_{n}, a, p, a')$ is a YES instance.
  As before, if $v$ is an input or output gate of the subcircuit, then $T_{C}$ uses the description in the construction above to check if the gate is wired correctly.
  The rest of the argument, and the argument when $a$ is of the form $(2, *), (3, *), (4, *)$ or $(5, *)$ is the usual simulation argument that we have repeated before.
\end{proof}

With this, we can compute $\res_{n}\br{\vF_{n}}$ via a polylogtime-uniform family of constant-depth circuits.

\begin{lemma} \label{lemma: uniformity of homotopy resultant}
  There exists a polylogtime-uniform family of weakly division-free, constant-free circuits $\cC^{\res}$ indexed by parameters $n, d_{0}, \dots, d_{n}$ with the following properties.
  \begin{itemize}
    \item 
      The circuit $C^{\res}_{n, d_{0}, \dots, d_{n}}$ has size bounded by $P^{\poly\br{n}}$, where $P = \sum_{i=0}^{n} \prod_{j \neq i} d_{j}$, and depth bounded by a universal constant.
    \item
      The circuit $C^{\res}_{n, d_{0}, \dots, d_{n}}$ has $U \coloneqq \sum_{i=0}^{n} \binom{n + d_{i}}{n}$ input gates, labeled by $\vu_{0} \dots, \vu_{n}$.
      The circuit has a single output gate.
    \item
      The circuit computes the multivariate resultant $\res_{n}\br{\vF_{n}}$ in the variables $\vu_{0}, \dots, \vu_{n}$.
  \end{itemize}
\end{lemma}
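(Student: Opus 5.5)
We assemble $C^{\res}_{n,d_0,\ldots,d_n}$ from the pieces already built, following the product formula of \cref{lemma: poisson formula homotopy}. First, for each $j \in [n]$ and each $\vc \in B_j$, we place a copy of $C^{\powerseriesRoot}_{n,j,d_0,\ldots,d_n}$ (\cref{lemma: uniformity of explicit implicit function theorem}). We feed its $\vz$-inputs with gates computing the integers $c_0,\ldots,c_{j-1}$ from the constant $1$, so that it outputs truncations $\tilde{\vrho}_{\vc}(t) \in S[t]^j$ agreeing with $\vrho_{\vc}$ modulo $t^{P+1}$. We then plug these into a subcircuit computing $h_{j,j}$, which is straightforward from the definition $h_{j,j} = (1-t)g_{j,j} + t f_{j,j}$ with $g_{j,j}$ a product of the explicit linear forms $\ell_{j,m}$. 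This gives $h_{j,j}(\tilde{\vrho}_{\vc}) \equiv h_{j,j}(\vrho_{\vc}) \bmod t^{P+1}$.

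By \cref{lemma: power series roots}, the $\vrho_{\vc}$ are exactly the points of $V_j$, each simple. A single product gate therefore computes a polynomial congruent to $\prod_{\vrho \in V_j} h_{j,j}(\vrho)$ modulo $t^{P+1}$. Next, a product gate with repeated inputs raises this to the power $\prod_{i>j} d_i$, and the factor $(1+t(u_{0,(d_0,0,\ldots,0)}-1))^{\prod_{i\ge1} d_i}$ is handled the same way. A final product gate then yields a polynomial $Q \in S[t]$ with $Q \equiv \res_n(\vH_n) \bmod t^{P+1}$.

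Since $\res_n(\vH_n)$ has degree at most $P$ in $t$ (each coefficient of $\vH_n$ is affine in $t$, and we apply \cref{lemma: resultant degree}), it equals the truncation of $Q$ to degree $\le P$. The circuit is division-free with respect to $t$ and $\vu$: the only divisions are by rational functions of $\vz$, which become nonzero rationals once $\vz$ is specialized. So \cref{lemma: polynomial interpolation uniform} applies with distinguished variable $t$ and degree bound $D = $ (total $t$-degree of $Q$). That degree is $\poly(P)\cdot P$ times the number of factors, and is computable from the index in polynomial time. The lemma extracts the coefficients $Q_0,\ldots,Q_P$, and a final sum gate computes $\sum_{N\le P} Q_N$, which is $\res_n(\vH_n)|_{t=1} = \res_n(\vF_n)$.

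To obtain weak division-freeness, we note that all remaining divisions are by gates computing nonzero rational constants (after fixing $\vz=\vc$, the denominators are integers built from $\vc$ and the Vandermonde entries). Hence the circuit is division-free with respect to all input variables.

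Size is $P^{\poly(n)}$: there are at most $\prod_{i<j} d_i \le P$ choices of $\vc$ per $j$, each root circuit has size $P^{\poly(n)}$, and interpolation is polynomial in size and $D$. Depth is constant because each stage has constant depth and interpolation adds $O(1)$.

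Uniformity follows by the usual simulation argument. Gates are named $(1,j,\vc,v)$ for copies of the root circuits, with further tags for the evaluation, product, and exponentiation layers. The Turing machine checks $\vc \in B_j$ in polynomial time, extracts $v$, and simulates the machine of $\cC^{\powerseriesRoot}$. Wiring to the constant-gates for $c_i$ is checked directly. The predecessors of the big product gate for layer $j$ are indexed by $\vc$ in mixed-radix order, so the $p$-th predecessor is computed by converting $p$ to base $(d_0,\ldots,d_{j-1})$.

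The main obstacle I expect is the bookkeeping of the degree bound $D$ and of the truncation step. We must ensure that the discarded high-order terms of $Q$ never contaminate coefficients $\le P$: the product of congruences modulo $t^{P+1}$ remains valid, so this is fine. We must also ensure that $D$, which bounds the size of the Vandermonde inverse, stays $P^{\poly(n)}$ and is computable in polylog time for the interpolation lemma.
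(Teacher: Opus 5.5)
Your proposal is correct and follows essentially the same route as the paper. Both place a copy of $C^{\powerseriesRoot}_{n,j,d_0,\ldots,d_n}$ for each $\vc \in B_j$ and plug the truncated roots into $h_{j,j}$. Both then combine these evaluations via the unrolled Poisson formula of \cref{lemma: poisson formula homotopy}, interpolate in $t$ with \cref{lemma: polynomial interpolation uniform}, and sum the coefficients of $t^0,\ldots,t^P$. Your bookkeeping is slightly more careful than the paper's, since you track agreement modulo $t^{P+1}$ rather than $t^{P}$ and explicitly check that all remaining divisions are by nonzero rational constants, which gives weak division-freeness.
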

\begin{proof}
  We start by constructing an auxiliary family $\cC'$ indexed by $n, d_{0}, \dots, d_{n}$.
  Fix indices $n, d_{0}, \dots, d_{n}$.
  The inputs to $C' = C'_{n, d_{0}, \dots, d_{n}}$ will be variables $\vu_{0}, \dots, \vu_{n}, t$.
  
  We now describe $C'$.
  For each $j \in \bs{n}$, the circuit $C'$ has $d_{0} \cdots d_{j-1}$ copies of the circuit $C^{\powerseriesRoot}_{n, j, d_{0}, \dots, d_{n}}$ from \cref{lemma: uniformity of explicit implicit function theorem} as sub-circuits.
  The names of the gates are $(1, j, i, v)$, where $i \leq d_{0} \cdots d_{j-1}$ and $v$ is the name of the corresponding gate within the subcircuit.
  In addition, the circuit $C'$ has gates that compute each of the integers $1, \ldots, \sum_{i=1}^{n} d_{i}$.

  The elements of $B_{j} = \bs{d_{0}} \times \cdots \times \bs{d_{j-1}}$ are tuples of integers of length $j$, and $B_{j}$ itself has size $d_{0} \cdots d_{j-1}$.
  Given $d_{0}, \dots, d_{n}, j, n$, and $i$ in binary, computing the $i\ts{th}$ element of $B_{j}$ in lexicographic order can be done in polynomial time.
  In $C'$, the $i\ts{th}$ copy of $C^{\powerseriesRoot}_{n, j, d_{0}, \dots, d_{n}}$ is evaluated at the $i\ts{th}$ element of $B_{j}$ using the integers $1, \dots, \sum d_{i}$ computed above.
  By \cref{lemma: uniformity of explicit implicit function theorem}, each of the output gates of the copies computes approximations (in $t$) of the roots of the system $\vh_{j}$.

  The circuit $C'$ also has copies of a subcircuit that computes the polynomials $h_{1, 1}, \dots, h_{n, n}$, as in the proof of \cref{lemma: uniformity of shifted initial system}.
  Specifically, it has one copy of $h_{j, j}$ for each element of $B_{j}$.
  To the input gates of the $i\ts{th}$ copy of $h_{j, j}$, we wire the approximate roots constructed using the $i\ts{th}$ element of $B_{j}$.
  Consequently, for each $\vc \in B_{j}$, the circuit computes $h_{j, j}(\vrho_{\vc}')$, where $\vrho_{\vc}' = \vrho_{\vc} \pmod{t^{P}}$.
  Finally, the circuit $C'$ has gates that use these evaluations to implement the Poisson formula (\cref{lemma: poisson formula homotopy}).

  To summarize, the circuit $C'_{n, d_{0}, \dots, d_{n}}$ implements the Poisson formula for $\res_{n}\br{\vH_{n}}$, except it uses approximations of the roots of $\vh_{j}$ in the ring $\bA\bsd{t}$.
  The order of approximation is $P = \sum_{i=0}^{n} \prod_{j \neq i} d_{j}$, and the output has degree $P^{\poly\br{n}}$.
  The fact that $C'$ has size $P^{\poly\br{n}}$, constant depth, and the family $\cC'$ is polylogtime-uniform follows by the same argument as in previous constructions.
  No division gate in $C'$ involves division by a polynomial that depends on $t$.
  Further, the degree in $t$ of the polynomial computed by $C'$ is bounded by $P^{\poly\br{n}}$.

  Applying \cref{lemma: polynomial interpolation uniform} to the family $\cC'$ with distinguished variable $t$ results in a circuit family $\cC''$ that computes the coefficients in $t$ of the above approximation to $\res_{n}\br{\vH_{n}}$.
  The family $\cC''$ is polylogtime-uniform and consists of circuits of size $P^{\poly\br{n}}$.

  We now return to the construction of $C^{\res}_{n, d_{0}, \dots, d_{n}}$.
  This circuit has a copy of $C''_{n, d_{0}, \dots, d_{n}}$.
  It has one additional summation gate that sums the outputs of $C''_{n, d_{0}, \dots, d_{n}}$ corresponding to the coefficients of $t^{i}$ for $i \leq P$.
  The computed polynomial is the same as the one obtained by truncating the approximate computation of the resultant and evaluating at $t = 1$.
  By the Poisson formula, the degree bounds on the resultant, and the construction of the homotopy, this polynomial is exactly $\res_{n}\br{\vF_{n}}$.
  The claims on the size, depth, and uniformity are again straightforward.
\end{proof}

In \cref{subsection: hn to resultant,subsection: counting to resultant}, we saw examples of how the resultant can be used to solve problems beyond satisfiability of square homogeneous systems.
In these applications, we often have to compute the resultant of a set of polynomials whose coefficients are themselves polynomials.
If these coefficients come from the ring $\bQ\bs{w_{1}, \dots, w_{k}}$, then the resultant is itself a polynomial in $\bQ\bs{w_{1}, \dots, w_{k}}$, and can be computed by simply substituting these polynomial coefficients in the circuit for $\res_{n}\br{\vF_{n}}$ designed above.

If the coefficients are polynomials in $w_{1}, \dots, w_{k}$ of degree at most $D$, then the resulting computation can also be seen as taking $U \cdot \binom{k + D}{k}$ inputs, one corresponding to each monomial in the $\vw$ variables in each coefficient of $\vF_{n}$, and having at most $\br{D n d^{n}}^{O(k)}$ outputs, where $d \coloneqq \max_i d_{i}$.
The outputs correspond to the coefficients of the $\vw$ variables in the computed resultant.
The following corollary states that this computation can be carried out by polylogtime-uniform circuits of constant depth.
It is an easy corollary of the above resultant computation and the uniformity of interpolation.

\begin{corollary} \label{corollary uniformity of composed resultant}
  There exists a polylogtime-uniform family of weakly division-free, constant-free circuits $\cC^{\composedResultant}$ indexed by parameters $n, d_{0}, \dots, d_{n}, k, D$ with the following properties.
  \begin{itemize}
    \item 
      The circuit $C^{\composedResultant}_{n, d_{0}, \dots, d_{n}, k, D}$ has size bounded by $\br{PD}^{\poly\br{n, k}}$, where $P = \sum_{i=0}^{n} \prod_{j \neq i} d_{j}$, and depth bounded by a universal constant.
    \item
      The circuit has $U \cdot \binom{k + D}{k}$ input variables, which we denote by $u_{i, \valpha, \vbeta}$ (where $\vbeta \in \bN^{k}$ satisfies $\abs{\vbeta} \leq D$).
      The circuit has $\binom{D D' + k}{k}$ outputs, where $D'$ is the degree of $\res_{n}\br{\vF_{n}}$.
    \item
      When evaluated at $\gamma_{i, \valpha, \vbeta}$, the output gates compute the coefficients of the monomials in $\vw$ of the resultant of the polynomials
      \[
        H_{i} = \sum_{\valpha} \sum_{\vbeta} \gamma_{i, \valpha, \vbeta} \vx^{\valpha} \vw^{\vbeta}
      \]
      with respect to the variables $\vx$.
  \end{itemize}
\end{corollary}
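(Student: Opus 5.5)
The plan is to compose the circuit $C^{\res}_{n,d_0,\dots,d_n}$ from \cref{lemma: uniformity of homotopy resultant} with a layer that builds the polynomial coefficients, and then read off the $\vw$-coefficients by interpolation. First, I would construct an auxiliary family $\cC'$ indexed by $n, d_0,\dots,d_n,k,D$. Its inputs are the variables $u_{i,\valpha,\vbeta}$ and fresh variables $w_1,\dots,w_k$. For each $i$ and $\valpha$, the circuit $C'$ computes the coefficient
\[
  \sum_{\abs{\vbeta}\le D} u_{i,\valpha,\vbeta}\,\vw^{\vbeta}.
\]
Each monomial $\vw^{\vbeta}$ is produced by a single product gate with $\abs{\vbeta}$ copies of the $w$-inputs, and the copies are made by arity-one sum gates, as in the proof of \cref{lemma: multivariate polynomial interpolation uniform}. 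A sum gate then ranges over the $\binom{k+D}{k}$ vectors $\vbeta$. Computing the $p$\ts{th} such $\vbeta$ in lexicographic order, and the $p$\ts{th} factor of the monomial $\vw^{\vbeta}$, takes time polynomial in the bit length of the indices. These gates feed the inputs $u_{i,\valpha}$ of a copy of $C^{\res}$, and the input gates of that copy become arity-one sum gates. The names of the gates in the copy are $(c,v)$, so the direct connection language is decided by the usual simulation argument. This layer has depth $2$ and size $\poly(U,\binom{k+D}{k},D)$.

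Next, I would check the hypotheses of \cref{lemma: multivariate polynomial interpolation uniform} for $\cC'$, with distinguished variables $\vw$ (so $m=k$).
\begin{itemize}
  \item \emph{Division-freeness.} $C^{\res}$ is weakly division-free, so every denominator is a constant, and hence $C'$ is division-free with respect to $\vw$.
  \item \emph{Degree bound.} By \cref{lemma: resultant degree}, the resultant has total degree $D' = P$ in the $\vu$. After substitution, the output therefore has degree at most $DD' = DP$ in $\vw$. This bound is computable in binary in polynomial time from the index, since $P$ is a sum of $n+1$ products of the $d_i$.
\end{itemize}
The lemma then gives a polylogtime-uniform constant-depth family of size $\poly\big(s', (DP+1)^{k}\big)$. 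This is $(PD)^{\poly(n,k)}$, because $s'$ is bounded by $P^{\poly(n)}\cdot\binom{k+D}{k}\cdot\poly(U)$ and $U \le P^{O(n)}$. The outputs of this family are exactly the coefficients of the $\vw$-monomials of degree at most $DD'$, matching the claimed $\binom{DD'+k}{k}$ outputs.

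Two points need care. One is that the interpolation circuits are built from constants $+1,-1$ and divisions by constants only, so they remain weakly division-free; this holds because the Vandermonde inverse in \cref{lemma: special vand inverses are uniform} has no inputs. The other is that the interpolation lemma removes the $\vw$ inputs, so the final circuit has exactly the $u_{i,\valpha,\vbeta}$ as inputs. The step I expect to need the most attention is the bookkeeping for uniformity of the substitution layer: computing, in polylog time, the $p$\ts{th} predecessor of the gate for coefficient $(i,\valpha)$, which requires ranking and unranking the exponent vectors $\vbeta$ with $\abs{\vbeta}\le D$. I would handle this by counting with binomial coefficients, which are computable in time polynomial in $k\log D$.
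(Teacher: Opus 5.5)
Your proposal is correct and follows the same route as the paper's proof. Both build the coefficient polynomials $\sum_{\vbeta} u_{i,\valpha,\vbeta}\vw^{\vbeta}$ with uniform constant-depth circuits, feed them into $C^{\res}$ from \cref{lemma: uniformity of homotopy resultant}, and extract the $\vw$-coefficients of degree at most $DD'$ with \cref{lemma: multivariate polynomial interpolation uniform}. Your extra checks fill in details the paper leaves implicit: that $D' = P$, that weak division-freeness survives composition and interpolation, and the ranking of the exponent vectors $\vbeta$ needed for uniformity.
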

\begin{proof}
  There exist polylogtime-uniform constant-depth circuits that compute the polynomials $\sum_{\vbeta} u_{i, \valpha, \vbeta} \vw^{\vbeta}$ for each $i$ and $\valpha$.
  We then compose these with the circuit for $\res_{n}\br{\vF_{n}}$ obtained from \cref{lemma: uniformity of homotopy resultant}.
  The resulting circuit is polylogtime-uniform and computes a polynomial in the variables $\vw$ and $u_{i, \valpha, \vbeta}$.
  The degree in the $\vw$ variables is at most $D D'$.
  We use \cref{lemma: multivariate polynomial interpolation uniform} to interpolate out the coefficients of monomials in $\vw$.
  This resulting circuit satisfies the depth, size, and uniformity requirements that we claim.
\end{proof}

%% file: sections/circuits_to_counting.tex
In this section, we prove that Hilbert's Nullstellensatz can be decided in the counting hierarchy.
To do this, we prove a general transfer theorem that translates uniform constructions of arithmetic circuits into algorithms that can be run on Turing machines.

To every family of polynomials $\cF = (f_\vn)_{\vn}$ with rational coefficients, there is an associated family of boolean functions $\hat{\cF} = (\hat{f}_{\vn, h})_{\vn, h}$, where $\hat{f}_{\vn, h}$ corresponds to evaluation of $f_{\vn}$ on tuples of rational numbers represented by pairs of integers of height at most $h$.
Our goal will be to show that if the family of polynomials $\cF$ can be computed by polylogtime-uniform arithmetic circuits of constant depth and exponential size, then this associated family of boolean functions can be computed in the counting hierarchy $\CH$.
Once we have this in hand, an immediate corollary of \cref{lemma: uniformity of homotopy resultant} will be an $\FP^\CH$ algorithm to evaluate the multivariate resultant.
Applying the reductions of \cref{lemma: hn to resultant,lemma: hn counting to resultant} will yield $\CH$ and $\FP^\CH$ algorithms for the decision and counting versions of Hilbert's Nullstellensatz, respectively.
\roberttodo[inline]{I cut a lot of material from this introduction to section 6. Some of the details are not precise, but I think the current version communicates the overall idea clearly without getting bogged down in the details.}

\subsection{From uniform arithmetic circuits to the counting hierarchy} 

Our proof of the transfer theorem proceeds in two steps.
We first use the polylogtime-uniform arithmetic circuits for the family $\cF$ to construct polylogtime-uniform threshold circuits of similar size and depth for the boolean functions $\hat{f}_{\vn, h}$.
The fact that arithmetic circuits can be simulated efficiently by threshold circuits is a straightforward consequence of the fact that iterated addition and iterated multiplication are in logtime-uniform $\TC^0$ (items 1 and 2 of \cref{theorem: integer ops in tc0}, respectively).
The uniformity of the resulting threshold circuits is a consequence of the logtime-uniformity of iterated addition and multiplication combined with the assumed uniformity of the arithmetic circuits that compute $\cF$.

We then show that this family of threshold circuits for $\hat{f}_{\vn, h}$ can be evaluated in $\CH$.
Even though these threshold circuits are exponentially large, the fact that they are uniform means that a polynomial-time Turing machine can use nested majority quantifiers to simulate them.
The number of majority quantifiers used in this simulation corresponds to the depth of the threshold circuit that computes $\hat{f}_{\vn, h}$.
Since we build threshold circuits of constant depth for $\hat{f}_{\vn, h}$, the resulting evaluation procedure uses a bounded number of majority quantifiers, and hence lies in the counting hierarchy.

Neither of these ideas are fundamentally new to our work.
\textcite{AAD00} observed that arithmetic circuits can be transformed into threshold circuits that evaluate the corresponding polynomial on $\bc{0,1}$-valued inputs, and that this transformation preserves the size and depth of the circuit.
\textcite{AKRRV01} proved a result similar to ours, showing that exponentially-large constant-depth arithmetic circuits can be converted to $\CH$ algorithms.
This second result uses threshold circuits as an intermediate representation in the same manner as our work.

Although these ideas are present in prior work, we are not aware of any reference that starts with a uniform family of arithmetic circuits for a polynomial family $\cF$ and concludes that the corresponding boolean functions can be computed in $\CH$.
Because of this, we provide complete details for the results of this section.

We start by converting arithmetic circuits into threshold circuits by implementing each arithmetic gate as a $\TC^0$ sub-circuit using \cref{theorem: integer ops in tc0}.
Because the circuits provided by \cref{theorem: integer ops in tc0} are uniform, this procedure preserves the uniformity present in the family of arithmetic circuits.
The families of boolean functions we construct throughout this subsection treat their inputs as sequences of rational numbers, represented as pairs of integers.
To make these functions total rather than partial, we adopt the convention that if any such pair of integers is of the form $(a, 0)$, the boolean function evaluates to zero.
In what follows, we tacitly assume that inputs are well-formed, since this can be decided in polynomial time as a preprocessing step.

\begin{lemma} \label{lemma: arithmetic to threshold}
  Let $\mathcal{C} = (C_\vn)_{\vn}$ be a polylogtime-uniform family of weakly division-free, constant-free arithmetic circuits over $\bQ$ of size $s_{\vn}$ and depth $\Delta \in \bN$.
  Let $f_{\vn} \in \bQ[x_1, \ldots, x_{m_{\vn}}]$ be the polynomial computed by $C_\vn$.
  There exists a family of boolean functions $\hat{f}_{\vn, h}$ indexed by $\vn$ and a natural number $h$ with the following properties.
  \begin{itemize}
    \item For each $\vn$ and $h$, the function $\hat{f}_{\vn, h}$ has $2 m_{\vn} \br{h+1}$ inputs, which are interpreted as $m_{\vn}$ rational numbers represented by a pair of integers of height $h$.
    Further, the function has $2 h \cdot \br{s_{\vn} \log s_{\vn}}^{\Delta} + 2$ outputs, which are interpreted as a single rational number represented by a pair of integers of height $h \cdot \br{s_{\vn} \log s_{\vn}}^{\Delta}$.
  \item If $(a_{1}, b_{1}), \dots, (a_{m_{\vn}}, b_{m_{\vn}})$ are representations of rational numbers of height at most $h$, then $\hat{f}_{\vn, h}\br{(a_{1}, b_{1}), \dots, (a_{m_{\vn}}, b_{m_{\vn}})}$ is a representation of $f_{\vn}(a_{1} / b_{1}, \dots, a_{m_{\vn}} / b_{m_{\vn}})$.
  \item There exists a polylogtime-uniform family $\cD = (D_{\vn, h})_{\vn, h}$ of threshold circuits of depth $O(\Delta)$ and size $\poly\br{h, \br{s_{\vn} \log s_{\vn}}^{\Delta}}$ that compute the functions $\hat{f}_{\vn, h}$.
  \end{itemize}
\end{lemma}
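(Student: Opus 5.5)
The plan is to simulate $C_\vn$ gate by gate, replacing each arithmetic gate by a logtime-uniform $\TC^0$ subcircuit from \cref{theorem: integer ops in tc0}. Every intermediate value is a rational number kept as a pair (numerator, denominator), never reduced to lowest terms. First comes a height bound, by induction on depth. Let $H_\ell$ bound the height of the pairs computed at gates of depth $\ell$. Inputs have $H_0 = h$, and constant gates ($0, \pm 1$) have height $0$. A $\times$ gate of fan-in at most $s_\vn$ multiplies numerators and denominators separately, so its height is at most $s_\vn H_\ell$. A $+$ gate first puts everything over the common denominator $\prod_i b_i$ and then adds. The numerators then have height at most $s_\vn H_\ell$ and the sum at most $s_\vn H_\ell + \log s_\vn$. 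A $\div$ gate swaps one pair and multiplies, which costs only a factor $2$. Since $C_\vn$ is weakly division-free and we evaluate at rational points with nonzero denominators, no denominator ever becomes $0$. This gives $H_{\ell+1} \le O(s_\vn \log s_\vn) H_\ell$. After adjusting constants, the heights stay within $h \cdot (s_\vn \log s_\vn)^{\Delta}$, which matches the stated output format. Outputs shorter than this are zero-padded.

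Next comes the construction of $D_{\vn,h}$. For each gate $a$ of $C_\vn$ we place a block of threshold gates named $(a, \text{type}, w)$. Here $w$ ranges over the gate names of an instance of the appropriate $\TC^0$ circuit: iterated integer multiplication for $\times$ and $\div$ gates, and iterated multiplication followed by iterated addition for $+$ gates. Each instance is instantiated for at most $s_\vn$ operands of bit-length $H := h(s_\vn \log s_\vn)^{\Delta}$. For a $+$ gate with predecessors $b_1,\dots,b_k$, the block contains $k$ multiplication subcircuits computing $a_i \prod_{j\neq i} b_j$, one computing $\prod_j b_j$, and one iterated addition. Each block has constant depth $c$, so the whole circuit has depth $O(\Delta)$. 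Its size is $s_\vn \cdot \poly(s_\vn H) = \poly(h, (s_\vn \log s_\vn)^{\Delta})$. The input gates of $D$ hold the $2m_\vn(h+1)$ bits. Their blocks pad these to length $H$ with identity/constant-$0$ threshold gates. The output block of the single output gate of $C_\vn$ is renamed to occupy numbers $m,\dots,m+m'-1$, as in \cref{remark: gate naming convention}.

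The main obstacle is uniformity, and the plan is to decide $L_{DC}(\cD)$ by a simulation argument like the one in \cref{lemma: polynomial interpolation uniform}. On a query $(\vn, h, (a,\text{type},w), p, b)$ the machine does three things. First, it uses the direct connection machine of $\cC$ (and the arity oracle from \cref{definition: uniform arithmetic circuit}) to learn the type and arity $k$ of $a$. Second, it runs the logtime-uniform direct connection machine for the $\TC^0$ family indexed by $(k, H)$, which takes time polynomial in $\log k + \log H$, to check internal wires of the block. Third, for wires entering a block's input gates, it decodes which operand bit is meant, say bit $r$ of the numerator of the $i$th predecessor. It then queries $L_{DC}(\cC)$ with $(\vn, a, i, b')$ to confirm that $b'$ is the $i$th predecessor of $a$. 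Finally it checks that $b$ is the corresponding output bit of $b'$'s block. This is exactly why the arithmetic admissible encoding numbers predecessors: without it, the machine could not map an operand slot to a specific predecessor in polylog time. All computations involve only numbers of bit-length $O(\log s_\vn + \log h + \Delta \log\log s_\vn)$, so each query runs in time polynomial in $\log$ of the size of $D_{\vn,h}$. The counts of inputs and outputs and the size bound are explicit closed formulas, so they are computable in the allotted time. Correctness of the computed value follows by induction on gates, from the correctness of \cref{theorem: integer ops in tc0} and the height bound.
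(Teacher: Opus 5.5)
Your proposal is correct and follows essentially the same route as the paper. Both arguments replace each gate by the $\TC^0$ iterated addition/multiplication circuits of \cref{theorem: integer ops in tc0} acting on unreduced numerator--denominator pairs, both use a depth induction to get height $h(s_\vn\log s_\vn)^{\Delta}$, and both decide the direct connection language by simulating the machines for $\cC$ and the $\TC^0$ families, using the ordered predecessor numbering to route operand bits. One phrase to tighten is ``after adjusting constants'': the statement fixes the output length exactly, so you should note that your own bounds ($sH_\ell+\log s$, $sH_\ell$, $2H_\ell$) are already at most $s\log s\cdot H_\ell$ for $s\ge 4$, and hence no $C^{\Delta}$ slack appears.
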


\begin{proof}
  We will construct circuits for $\hat{f}_{\vn, h}$ from the circuit $C_{\vn}$ by replacing each arithmetic gate with a threshold circuit that implements iterated addition and iterated multiplication of rational numbers, using items 1 and 2 of \cref{theorem: integer ops in tc0}.
  To do this, we first need to bound the heights of the integers used to represent intermediate values in $C_{\vn}$, as this will determine the size of the threshold circuits we need to correctly simulate the operations done in $C_{\vn}$.
  For each gate $v$ in the circuit $C_{\vn}$, we will compute a rational number $a_v / b_v$ that represents the value of the gate on the given input.
  We set $C = C_{\vn}$ and $s = s_{\vn}$ in the next part of this discussion for brevity.

  For a gate $v$, the \emph{depth} of $v$ is the length of the longest path starting from $v$ and ending at either an input gate or a constant gate.
  For example, input gates and constant gates have depth $0$, while a gate that only has constant gates as input has depth $1$.
  We claim that if $v$ has depth $\delta$ in $C$, then whenever the inputs to $C$ are represented by pairs of integers of height at most $h$, the values of $a_{v}$ and $b_{v}$ can be represented by integers of height bounded by $h \cdot (s \log s)^\delta$.
  We prove this by induction on $\delta$.
  When $\delta = 0$, the gate $v$ is either an input to the circuit or one of the constants $0, +1, -1$.
  If $v$ is a constant, then clearly the heights of $a_v$ and $b_v$ are at most $1$.
  If $v$ is an input, then by assumption, the numerator $a_v$ and denominator $b_v$ both have height at most $h$.

  When $\delta \ge 1$, we proceed by case analysis depending on the operation labeling the gate $v$.
  \begin{itemize}
    \item 
      Suppose $v = v_1 + \cdots + v_k$ is an addition gate.
      Because the circuit $C$ has size bounded by $s$, we know $k \le s$.
      Each $v_i$ is represented by a quotient $a_i / b_i$ where, by induction, the integers $a_i$ and $b_i$ have height at most $ h \cdot (s \log s)^{\delta - 1}$.
      We represent the value of $v$ as $a/b$, where
      \begin{align*}
        a &\coloneqq \br{\frac{a_1}{b_1} + \cdots + \frac{a_k}{b_k}} b_1 \cdots b_k = a_1 b_2 \cdots b_k + b_1 a_2 b_3 \cdots b_k + \ \cdots\  + b_1 \cdots b_{k-1} a_k \\
        b &\coloneqq b_1 \cdots b_k.
      \end{align*}
      Every product appearing in the definitions of $a$ and $b$ above is a product of $k$ numbers, each of which have height $h \cdot (s \log s)^{\delta - 1}$.
      This means that each such product---and in particular, the integer $b$---has height $k h \cdot (s \log s)^{\delta - 1}$.
      The integer $a$ is a sum of $k$ such terms, so has height $k \log (k) h \cdot (s \log s)^{\delta - 1}$.
      The fact that $k \le s$ implies that $a$ and $b$ have height at most $h \cdot (s \log s)^{\delta}$, as claimed.
    \item 
      Suppose $v = v_1 \times \cdots \times v_k$ is a product gate.
      As in the previous case, we know $k \le s$.
      Each $v_i$ is represented by a quotient $a_i / b_i$ where, by induction, the integers $a_i$ and $b_i$ have height at most $h \cdot (s \log s)^{\delta - 1}$.
      We represent the value of $v$ as $a / b$, where
      \begin{align*}
        a &\coloneqq a_1 \cdots a_k \\
        b &\coloneqq b_1 \cdots b_k.
      \end{align*}
      The same analysis as in the case where $v$ is an addition gate shows that $a$ and $b$ have height at most $h \cdot (s \log s)^{\delta}$.
    \item 
      Finally, suppose $v = v_1 / v_2$ is a division gate.
      The values of $v_1$ and $v_2$ are represented by the quotients $a_1 / b_1$ and $a_2 / b_2$, respectively.
      By induction, each $a_i$ and $b_i$ has height at most $h \cdot (s \log s)^{\delta - 1}$.
      We represent $v$ as the quotient $a/b$, where
      \begin{align*}
        a &\coloneqq a_1 b_2 \\
        b &\coloneqq a_2 b_1.
      \end{align*}
      It is clear that both $a$ and $b$ have height at most $2 h \cdot (s \log s)^{\delta - 1} \le h \cdot (s \log s)^\delta$, as claimed.
  \end{itemize}

  To construct the circuit for the boolean function $\hat{f}_{\vn, h}$, we replace each of the arithmetic gates in $C$ with a threshold circuit that implements the corresponding operation, using the $\TC^0$ circuits provided by \cref{theorem: integer ops in tc0}.
  We also replace each of the constant gates by a set of $1$ and $0$ gates that encode representations of the respective constant.
  As the analysis above shows, it suffices to use threshold circuits that implement iterated addition and iterated multiplication of at most $s$ integers of height $h \cdot (s \log s)^{\Delta}$, since this is the maximum height required to represent the value of any gate in the circuit when evaluated on an input of height at most $h$.
  These operations can be performed with threshold circuits of constant depth and size $h^{O(1)} (s \log s)^{O(\Delta)}$.
  The fact that the arithmetic circuit $C$ is weakly division-free ensures that as long as all the inputs are well-formed, no division by zero takes place when evaluating the circuit in the above manner.

  Formally, the threshold circuits provided by \cref{theorem: integer ops in tc0} perform iterated addition or multiplication of $h \cdot (s \log s)^{\Delta}$ integers of height $h \cdot (s \log s)^{\Delta}$, even though gates in $C$ have fan-in bounded by $s$.
  To deal with this, we hardwire the superfluous inputs of these threshold subcircuits to the constant $0$ for iterated addition and the constant $1$ for iterated multiplication.
  Similarly, these threshold circuits will have more than $h \cdot (s \log s)^{\Delta}$ bits of output.
  We ignore these extra bits of output, since we know they will not affect the result of our simulation.

  Since the circuit $C$ has size $s$ and depth $\Delta$, it is clear that this procedure results in a threshold circuit of size $h^{O(1)} (s \log s)^{O(\Delta)}$ and depth $O(\Delta)$ that correctly computes a binary representation of the output of the circuit $C$ on inputs of height $h$.
  We denote the resulting threshold circuit by $D_{\vn, h}$ and define the boolean function $\hat{f}_{\vn, h}$ to be the function computed by $D_{\vn, h}$.

  It remains to bound the uniformity of the circuit family $\cD = \br{D_{\vn, h}}_{\vn, h}$.
  The names of gates in $D_{\vn, h}$ that are part of threshold circuits simulating arithmetic will be a tuple $(1, v, u)$, where $v$ is the name of the gate in $C_\vn$ that is being simulated and $u$ is the name of a gate within the threshold subcircuit that simulates the arithmetic of $v$.
  The other gates in $D_{\vn, h}$, namely those that compute encodings of the integers $1$ and $0$ to be used in the extra inputs to the arithmetic subcircuits, have names of the form $(2, *)$.
  Let $T_{C}$, $T_+$, and $T_{\times}$ be Turing machines that decide the direct connection languages of $\cC$, threshold circuits for iterated addition, and threshold circuits for iterated multiplication, respectively.
  Denote by $T_{D}$ the Turing machine that we will design to decide the direct connection language of $\cD$.

  Let $(\vn, h, a, p, b)$ be an input to $T_{D}$.
  Suppose $a = (1, v_{a}, u_{a})$.
  If $p = \epsilon$, then $T_{D}$ must accept if $b$ represents the type of $a$.
  To decide this, the machine $T_{D}$ first computes the type of the gate $v_{a}$ using $T_{C}$.
  If $v_a$ is a summation gate, then $T_{D}$ simulates $T_+$ to compute the type of $u_{a}$ and accepts if $b$ matches this type.
  If $v_a$ is a product gate, then $T_D$ simulates either $T_+$ or $T_{\times}$, as appropriate, to compute the type of $u_a$ and accept or reject accordingly.
  If $v_a$ is a division gate, then $T_D$ simulates $T_{\times}$ to compute the type of $u_a$.
  If $v_{a}$ is an input gate of the subcircuit but $a$ is not an input gate of $D_{\vn, h}$, then we adopt the (arbitrary) convention that the type of $a$ is OR.

  If $p \neq \epsilon$, then the machine $T_D$ branches based on $b$.
  \begin{itemize}
    \item
      $T_{D}$ first checks if $b$ is of the form $(1, v_{b}, u_{b})$.
      In this case, it branches further as follows.
      \begin{itemize}
        \item
          Suppose $v_{a} = v_{b}$.
          In this case, $a$ and $b$ are part of the same subcircuit implementing arithmetic.
          The machine computes the type of the gate $v_{a}$ as in the above paragraph, and uses the corresponding Turing machine to check if $u_{b}$ is a predecessor of $u_{a}$.
          If so, it accepts, and if not then it rejects.
        \item
          Suppose $v_{a} \neq v_{b}$.
          In this case, the machine checks if $u_{a}$ is an input gate in the threshold subcircuit, and rejects if it is not.
          If it is, then $T_D$ computes integers $p_{1}$ and $p_{2}$ such that $u_{a}$ is the $p_{1}\ts{th}$ bit of the $p_{2}\ts{th}$ rational number input to the threshold subcircuit.
          It then computes the arity of $v_{a}$ and rejects if the arity is less than $p_{2}$.
          If the arity is more than $p_{2}$, the machine uses $T_{C}$ to check if $v_{b}$ is the $p_{2}\ts{th}$ input to $v_{a}$ in $C_{\vn}$.
          It also checks if $u_{b}$ is the $p_{1}\ts{th}$ output bit of its corresponding threshold subcircuit.
          If all of these checks pass, it accepts the string.
          If any of them fail, it rejects.
      \end{itemize}
      \item
        Suppose $b$ has name $(2, *)$.
        In this case, as above, the machine $T_D$ checks that $a$ is an input bit.
        It then computes integers $p_{1}$ and $p_{2}$ such that $u_{a}$ is the $p_{1}\ts{th}$ bit of the $p_{2}\ts{th}$ integer input to the threshold circuit.
        The machine $T_D$ verifies that the arity of $v_a$ is at most $p_2$.
        Finally, $T_D$ computes the type of $a$ and, based on whether $a$ is a summation or multiplication gate, checks that $b$ is the correct bit in the encoding of the $0$ or $1$ that is used for the extra inputs.
        If all these checks pass, $T_D$ accepts its input.

  \end{itemize}
  The case when $a = (2, *)$ can be handled in a similar manner.
  All computations take time polylogarithmic in the size of the final circuit and polynomial in the binary representation of the index.
  This shows that the circuit family $\cD$ is polylogtime-uniform.
\end{proof}

While \cref{lemma: arithmetic to threshold} is stated for single output circuit families, it can be easily extended to circuit families with multiple outputs.
In this case, the boolean function $f_{\vn, h}$ computes a tuple of binary representations of natural numbers, one for each output gate of $C_{\vn}$.
The proof itself requires no changes.

Next, we show that uniform threshold circuits of constant depth and exponential size can be evaluated in the counting hierarchy.
Of course, the number of output gates in such a circuit may itself be exponential, so it may not be possible to write down the complete output in polynomial time.
We instead show that given a multi-index $\vn$, an input to the $\vn\ts{th}$ circuit in the family, and an index $i$, we can compute the $i\ts{th}$ bit of the corresponding output in $\CH$, a problem we formalize below.

\begin{definition}
  \label{definition: language of function family}
  Let $\cF \coloneqq \br{f_{\vn} : \bc{0,1}^{m'_{\vn}} \to \bc{0,1}^{m_{\vn}}}_{\vn}$ be a multi-indexed family of boolean functions.
  The \emph{language $L_{\cF}$ corresponding to $\cF$} is defined as 
  \[
    L_{\cF} \coloneqq \setbuild{(i, b, x, \vn)}{f_{\vn}(x)_{i} = b}.
  \]
  If $\cD$ is a family of circuits, then we abuse notation and use $L_{\cD}$ to denote the language corresponding to the family of functions computed by $\cD$.
\end{definition}

Next, we show that if $\cD$ is a polylogtime-uniform family of exponentially-large threshold circuits, then the language $L_{\cD}$ corresponding to $\cD$ lies in the counting hierarchy.

\begin{lemma}[{see, e.g., \cite[Proof of Theorem 4.1]{ABKM09}}] \label{lemma: threshold to ch}
  Let $\cD = \br{D_{\vn}}_{\vn}$ be a polylogtime-uniform family of threshold circuits of size $s_{\vn}$ and depth $\Delta$, where $D_{\vn}$ has $m_{\vn}$ inputs.
  Let $L_{\cD}$ be the language corresponding to the circuit family $\cD$ as defined in \cref{definition: language of function family}.
  Suppose $s_{\vn} \le \exp(\poly(m_{\vn}))$.
  Then $L_{\cD} \in \CH$.
\end{lemma}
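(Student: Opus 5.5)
We prove \cref{lemma: threshold to ch} by induction on gate depth, simulating each threshold gate of $D_{\vn}$ with one majority quantifier. Concretely, we show that for every $\delta \le \Delta$ the language
\[
  E_\delta \coloneqq \setbuild{(\vn, x, a)}{a \text{ is a gate of } D_{\vn} \text{ of depth at most } \delta \text{ whose value on input } x \text{ is } 1}
\]
lies in $\C_{c\delta}\P$ for a constant $c$. Given this, $L_{\cD}$ reduces to $E_\Delta$: on input $(i, b, x, \vn)$ we compute the number $m_{\vn}+i$ of the $i$\ts{th} output gate, following the numbering convention of \cref{defintion: boolean admissible encoding}, query $E_\Delta$, and compare the answer with $b$. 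This puts $L_{\cD}$ in $\P^{\C_{c\Delta}\P} \subseteq \C_{c\Delta+1}\P \subseteq \CH$.

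The basic observation is that every gate name has length at most $c_0 \log s_{\vn} = \poly(m_{\vn})$, because $s_{\vn} \le \exp(\poly(m_{\vn}))$. By polylogtime uniformity, \cref{remark: find type of gate} and the direct connection language then let a polynomial-time machine do three things given $(\vn, a, b)$. It can test whether $a$ is a valid gate name, find the type of $a$, and decide whether $b$ is a predecessor of $a$. Its running time is polynomial in $|\vn| + \log s_{\vn}$, hence in the input length of the queries to $E_\delta$.

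\textbf{Base case.} A depth-$0$ gate is an input gate, whose value is read off $x$, or a constant gate. Either way $E_0 \in \P$.

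\textbf{Inductive step.} Let $a$ have depth $\delta$ and be a MAJ gate. We must decide whether more than half of the strings $b$ with $b$ a predecessor of $a$ and $(\vn, x, b) \in E_{\delta-1}$ exist. We quantify over all strings $b$ of length $\ell = c_0\log s_{\vn}$ and run a machine with oracle $E_{\delta-1}$ that behaves as follows:
\begin{itemize}
  \item if $b$ is not a predecessor of $a$, it outputs accept or reject by a fair coin, using one extra quantified bit, so these $b$ contribute nothing in expectation;
  \item otherwise it accepts iff $b$ evaluates to $1$, as determined by the oracle $E_{\delta-1}$.
\end{itemize}
The machine also needs the number $k$ of predecessors of $a$ so that it can set the threshold at $k/2$. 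We obtain $k$ in one of two ways. It may be an extra $\#\P$-type count, which is computable with a $\PP$ oracle by binary search in $\P^{\PP}$, which is in $\CH$. Alternatively, we pad with dummy accepting and rejecting strings to shift the threshold, which is the standard $\PP$ trick for general threshold comparisons $\#\mathrm{acc} \ge k/2$.

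The other gate types are handled in the same way. NOT queries the oracle once. AND and OR are thresholds with thresholds $k$ and $1$, so the same counting argument applies. Each level therefore costs $O(1)$ applications of $\PP$ or $\P^{\PP}$ relative to the previous level. Since $\PP^{\C_k\P}$ and $\P^{\C_k\P}$ both lie inside $\C_{k+1}\P$, this yields $E_\delta \in \C_{c\delta}\P$.

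\textbf{Depth bookkeeping.} The definition of $E_\delta$ mentions "depth at most $\delta$", which is not itself easy to test. To avoid computing depth, we drop the depth condition and define $E_\delta$ as "the value of $a$ computed by unrolling $\delta$ levels, treating gates reached after $\delta$ levels as evaluated by the input and constant rule". Since every path in $D_{\vn}$ has length at most $\Delta$, unrolling $\Delta$ levels gives the true value of every gate.

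\textbf{Main obstacle.} The delicate step is making the threshold comparison with an exponentially large, non-explicit predecessor set a genuine $\PP^{E_{\delta-1}}$ computation. This requires care with the normalisation, namely the coin flips for non-predecessors and obtaining $k$. It also requires checking that oracle queries have polynomial length, which is exactly where $s_{\vn} \le \exp(\poly(m_{\vn}))$ and polylogtime uniformity are used. The remaining steps are routine closure properties of the counting hierarchy.
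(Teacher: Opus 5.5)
Your proposal is correct and is essentially the paper's proof. It uses induction on depth with one $\PP$ layer per circuit level, the fair-coin trick for strings that are not predecessors, and the bound $s_{\vn} \le \exp(\poly(m_{\vn}))$ to keep gate names and oracle queries polynomially long; your unrolled definition of $E_\delta$ is, if anything, cleaner than the paper's literal ``depth at most $\delta$''. Two small points: the coin-flip trick already puts the majority threshold at exactly half the predecessors, so computing $k$ is unnecessary. More importantly, your ``hence polynomial in the input length'' step silently assumes $|x| = m_{\vn}$, which the paper enforces by running the machine that computes $m_{\vn}$ for only $T(Q(|x|)+N)$ steps and rejecting on timeout, so that inputs with a too-short $x$ are still rejected in polynomial time.
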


\begin{proof}
  For an integer $\delta \in \bN$, consider the auxiliary language
  \[
    L_{\delta} \coloneqq \setbuild{(g, b, x, \vn)}{\text{$g$ is a gate of depth at most $\delta$ in the circuit $D_{\vn}$ and evaluates to $b$ on input $x$}},
  \]
  where the gate name $g$ and index $\vn$ into the circuit family are provided in binary.
  We will prove by induction on $\delta$ that $L_{\delta} \in \CH$.
  
  As a first step, we show that given $(g, b, x, \vn)$ as input, we can check if $x$ has length $m_{\vn}$ in time polynomial in the length of the input.
  Recall that, by the definition of uniformity, there is a polynomially-bounded increasing function $T: \bN \to \bN$ and a Turing machine $M$ such that given input $\vn$, the machine $M$ runs in time at most $T(\log\br{s_{\vn}} + N)$ and computes $m_{\vn}$, where $N = \sum_{i} \log n_{i}$.
  To decide if $x$ has length $m_{\vn}$, we are only allowed time polynomial in $\abs{g} + \abs{x} + N$, rather than $\log(s_{\vn}) + N$.
  Let $Q$ be a polynomially-bounded increasing function such that $\log\br{s_{\vn}} \leq Q\br{m_{\vn}}$.
  To check if $x$ has the correct length, we simulate $M$ on input $\vn$ for at most $T\br{Q\br{\abs{x}} + N}$ steps.
  If this simulation terminates, we obtain $m_{\vn}$ and can check if $\abs{x} = m_{\vn}$.
  If this simulation does not terminate, then we know that $x$ is too short to be a valid input to $C_{\vn}$.
  When $x$ has the correct length, the fact that $\log\br{s_{\vn}} \leq Q\br{m_{\vn}}$ also implies that the name of every gate in $C_{\vn}$ has length bounded by a polynomial function of the length of the input $(g, b, x, \vn)$.
  In the rest of this proof, all our machines will perform the above check first and reject if it fails.
  
  We now consider the language $L_{0}$.
  We receive $(g, b, x, \vn)$ as input and must decide if $g$ is a gate labeled by a constant or if $g$ is an input gate of $D_\vn$ and, if so, whether $g$ evaluates to $b$ on input $x$.
  By the assumed size and uniformity of the circuit family $\cD$, we can decide if $g$ is an input gate in polynomial time (in the length of $(g, b, x, \vn)$) and, if so, determine which input bit $x_i$ labels $g$.
  In this case we accept $(g, b, x, \vn)$ if and only if $b = x_i$, where $x_i$ is the input bit that labels $g$.
  If $g$ is not an input gate, we can check if it has type $0$, $+1$, or $-1$, and accept if $b$ matches the type.
  We reject all other inputs.
  This shows that $L_0 \in \P \subseteq \CH$.

  When $\delta \ge 1$, we will show that the language $L_\delta$ can be decided in $\PP$ with oracle access to $L_{\delta - 1}$.
  As $L_{\delta - 1} \in \CH$ by induction, we can conclude that $L_\delta \in \CH$.
  Let $(g, b, x, \vn)$ be the input to $L_{\delta}$.
  By the assumed size and uniformity of $\mathcal{D}$, we can determine the type of the gate $g$ in time polynomial in the length of $(g, b, x, \vn)$.
  Without loss of generality, we may assume that $\mathcal{D}$ consists only of majority and negation gates, since the case of input and constant gates can be handled as above.
  \begin{itemize}
    \item
      Suppose $g$ is a negation gate.
      We first nondeterministically guess a gate $h$ and verify, using the uniformity of $\mathcal{D}$, whether $h$ is the child of $g$.
      The fact that the names of all gates are bounded by a polynomial function of the length of the input $(g, b, x, \vn)$ allows us to perform this step.
      If $h$ is not the child of $g$, then we nondeterministically branch once more, accepting in one path and rejecting in the other, so that the branches corresponding to non-children of $g$ contribute the same number of accepting and rejecting paths.
      If $h$ is the child of $g$, then we use oracle access to $L_{\delta - 1}$ to check if $(h, 1-b, x, \vn) \in L_{\delta-1}$, accepting $(g, b, x, \vn)$ if and only if $(h, 1-b, x, \vn) \in L_{\delta-1}$.
      The majority of computation paths are accepting exactly when $g$ evaluates to $b$ on input $x$, so we can decide if $(g, b, x, \vn) \in L_{\delta}$ using a $\PP^{L_{\delta-1}}$ algorithm.
    \item
      Suppose $g$ is a majority gate.
      As in the previous case, we nondeterministically guess a gate $h$ and verify if $h$ is a child of $g$, again doing this using the uniformity of $\mathcal{D}$.
      If $h$ is not a child of $g$, we branch into one accepting and one rejecting path so the non-children of $g$ contribute an equal number of accepting and rejecting paths.
      If $h$ is a child of $g$, we use the $L_{\delta-1}$ oracle to decide if $h$ evaluates to $b$ on input $x$ and accept if and only if this is the case.
      The majority of computation paths are accepting exactly when the majority of the children of $g$ evaluate to $b$ on input $x$, so we can decide if $(g, b, x, \vn) \in L_{\delta}$ using a $\PP^{L_{\delta-1}}$ algorithm.
  \end{itemize}
  In both cases, we can decide if $(g, b, x, \vn) \in L_{\delta}$ in $\PP^{L_{\delta-1}}$ as claimed.
  As a consequence, we obtain $L_{\Delta} \in \CH$.

  We now consider $L_{\cD}$.
  Given an input $(i, b, x, \vn)$, the uniformity of $\cD$ also allows us to compute the name of the output gate corresponding to the index $i$.
  We can then decide if the string is a YES instance using an oracle to $L_{\Delta}$.
  This proves $L_{\cD} \in \CH$.
\end{proof}

By combining the preceding lemmas, we conclude that if a family of polynomials $\cF$ can be computed by a uniform family of arithmetic circuits of exponential size and constant depth, and if the corresponding family of boolean functions is $\hat{\cF}$, then $L_{\hat{\cF}}$ is in $\CH$.

\begin{corollary} \label{corollary: arithmetic to ch}
  Let $\cC = \br{C_{\vn}}_{\vn}$ be a polylogtime-uniform family of weakly division-free, constant-free, constant-depth arithmetic circuits of size $s_{\vn}$, where $C_{\vn}$ has $m_{\vn}$ inputs.
  Let $\cD$ denote the circuit family constructed in \cref{lemma: arithmetic to threshold} that computes boolean functions corresponding to the polynomials computed by $\cC$.
  Suppose that $s_{\vn} \le \exp(\poly(m_{\vn}))$.
  Then $L_{\cD} \in \CH$.
\end{corollary}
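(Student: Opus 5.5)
The plan is to chain \cref{lemma: arithmetic to threshold} with \cref{lemma: threshold to ch}, checking that the size hypothesis of the latter holds for the threshold family. First apply \cref{lemma: arithmetic to threshold} to $\cC$. It is polylogtime-uniform, weakly division-free, constant-free, and of constant depth $\Delta$, so the lemma yields a polylogtime-uniform family $\cD = (D_{\vn,h})_{\vn,h}$ of threshold circuits. These circuits have depth $O(\Delta)$, which is again a constant, and size $\poly(h, (s_\vn \log s_\vn)^{\Delta})$. The circuit $D_{\vn,h}$ has $m'_{\vn,h} \coloneqq 2m_\vn(h+1)$ inputs. We use the multi-output extension of the lemma noted just after its proof, so that $\cD$ outputs one rational number for each output gate of $C_\vn$.

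Next we verify the hypothesis $s' \le \exp(\poly(m'))$ of \cref{lemma: threshold to ch} for the family $\cD$, now indexed by $(\vn, h)$. By assumption $\log s_\vn \le \poly(m_\vn)$. Hence
\[
  \log\bigl(\poly(h, (s_\vn\log s_\vn)^{\Delta})\bigr) \le O(\log h) + O(\Delta)\cdot \poly(m_\vn).
\]
Since $\Delta$ is a constant and $m_\vn, h \le m'_{\vn,h}$, this quantity is bounded by $\poly(m'_{\vn,h})$. We then invoke \cref{lemma: threshold to ch} with the extended index $(\vn,h)$, and it gives $L_\cD \in \CH$.

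The one point that needs care is that \cref{lemma: threshold to ch} relies on the uniformity machinery: it computes $m'$, the size bound, and gate names in time polynomial in $\log s' + \log h + \sum_i \log n_i$. This is supplied by item 2 of \cref{definition: uniform threshold circuit} for $\cD$, which \cref{lemma: arithmetic to threshold} guarantees. The width of the gate names is $O(\log s')$, which is polynomial in the input length because of the size bound established above. So the main obstacle is just the bookkeeping that $h$ is absorbed as an extra index and that the size bound survives it. Beyond that, the statement follows directly from the two lemmas.
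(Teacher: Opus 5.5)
Your proposal is correct and matches the paper's proof. The paper likewise combines the size bound $\poly(h,(s_{\vn}\log s_{\vn})^{\Delta})$ from \cref{lemma: arithmetic to threshold} with the hypothesis $s_{\vn}\le\exp(\poly(m_{\vn}))$ to check the hypothesis of \cref{lemma: threshold to ch}, and then applies that lemma; your explicit treatment of $h$ as an extra index, with $2m_{\vn}(h+1)$ inputs, just spells out what the paper leaves implicit.
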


\begin{proof}
  Combining the assumptions on the size of the circuits in $\cC$ with the bounds on the size of $\cD$ guaranteed by \cref{lemma: arithmetic to threshold}, we see that $\cD$ satisfies the assumptions of \cref{lemma: threshold to ch}.
  The result follows by an application of \cref{lemma: threshold to ch} to $\cD$.
\end{proof}

We need one final simulation result.
Suppose we have a family of functions $\br{f_{n}}_{n}$ that map $n$ bits to $2^{n}$ bits.
Suppose we also have a polylogtime-uniform family $\cD = (D_n)_n$ of polynomial-sized $\TC^{0}$ circuits where $D_{n}$ takes $n$ bits as input.
In this setting, we can define a family of functions obtained by composing $D_{2^{n}}$ with $f_{n}$ for each $n$.
The following lemma states that the language corresponding to this composed function family can be decided in $\CH^{L_{\cF}}$.

\begin{lemma} \label{lemma: threshold with oracle to ch}
  Let $\cD = \br{D_{n}}_{n}$ be a polylogtime-uniform family of polynomial-size threshold circuits of depth $\Delta$, where $D_{n}$ has $n$ inputs.
  Let $\cF = \br{f_{\vn} : \bc{0,1}^{m'_{\vn}} \to \bc{0,1}^{m_{\vn}}}_{\vn}$ be a family of boolean functions.
  Suppose $m_{\vn} \le \exp(\poly\br{m'_{\vn}})$ and that a binary representation of $m_{\vn}$ can be computed in polynomial time given a unary representation of $m'_{\vn}$ and a binary representation of $\vn$.
  Let $\cD \circ \cF = (D_{m_{\vn}} \circ f_{\vn})_{\vn}$ denote the family of composed functions.
  Then $L_{\cD \circ \cF} \in \CH^{L_{\cF}}$.
  In particular, if $L_{\cF} \in \CH$, then $L_{\cD \circ \cF} \in \CH$.
\end{lemma}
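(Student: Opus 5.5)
We mimic the inductive proof of \cref{lemma: threshold to ch}, with one change: input gates of $D_{m_{\vn}}$ are no longer read off a given string $x$ but computed by querying the oracle $L_{\cF}$. Fix an input $(i, b, x, \vn)$ to $L_{\cD \circ \cF}$, where $x \in \bc{0,1}^{m'_{\vn}}$. First we compute $m \coloneqq m_{\vn}$ in binary, in polynomial time from $|x|$ (unary) and $\vn$, as assumed. Since $D_m$ has size $\poly(m) \le \exp(\poly(m'_{\vn}))$ and the family is polylogtime-uniform, three tasks take time polynomial in $|x| + |\vn|$: testing membership in the direct connection language of $D_m$, computing gate types and predecessors, and writing down gate names (of length $O(\log m) = \poly(m'_{\vn})$).

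For each $\delta \le \Delta$ we define the auxiliary language
\[
  L'_\delta \coloneqq \setbuild{(g, b, x, \vn)}{\text{$g$ is a gate of depth at most $\delta$ in $D_{m_{\vn}}$ evaluating to $b$ on input $f_{\vn}(x)$}}.
\]
We show by induction on $\delta$ that $L'_\delta \in \CH^{L_{\cF}}$.

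For $\delta = 0$, we use uniformity to decide whether $g$ is a constant gate or an input gate. If it is a constant gate, we compare its type with $b$. If it is an input gate labelled by the $k$\ts{th} input bit, we make one oracle query $(k, b, x, \vn) \in L_{\cF}$ and answer accordingly. Hence $L'_0 \in \P^{L_{\cF}}$.

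For $\delta \ge 1$, the negation and majority cases are exactly as in \cref{lemma: threshold to ch}. We nondeterministically guess a candidate child $h$, whose name has polynomial length. Non-children contribute balanced accepting and rejecting branches, and for each child we query $L'_{\delta-1}$. This places $L'_\delta$ in $\PP^{L'_{\delta-1}} \subseteq \PP^{\CH^{L_{\cF}}}$, which equals $\CH^{L_{\cF}}$ at the next level. Finally, $L_{\cD\circ\cF}$ reduces to $L'_\Delta$: from $i$ we compute the name of the $i$\ts{th} output gate of $D_m$ in polynomial time and make one query.

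The step needing care is making sure the relativized counting hierarchy behaves correctly, i.e.\ that $\PP^{\C_k\P^{A}} \subseteq \C_{k+1}\P^{A}$. Concretely, the oracle $L_{\cF}$ must be available at the base level, and the $\PP$ machines only query lower levels. This holds directly from the inductive definition relativized to $A = L_{\cF}$: the base level is $\P^{A}$ and each level is $\PP$ relative to the previous one. We must also check that all the bookkeeping (validating $|x|$, computing $m$, decoding gate names) runs in time polynomial in the actual input length, even though $D_m$ is indexed by the exponentially large parameter $m$. This is guaranteed by the assumptions $m \le \exp(\poly(m'_{\vn}))$ and polylogtime-uniformity in $m$.

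For the last claim, suppose $L_{\cF} \in \C_k\P$. Replacing each oracle call to $L_{\cF}$ by the corresponding $\C_k\P$ computation shows inductively that $\C_j\P^{L_{\cF}} \subseteq \C_{j+k}\P$, using $\PP^{\C_{j+k}\P} = \C_{j+k+1}\P$ at each step. Therefore $L_{\cD\circ\cF} \in \CH$.
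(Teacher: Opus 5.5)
Your argument follows the paper's: the same depth-indexed auxiliary languages, one oracle query to $L_{\cF}$ per input gate at depth $0$, the $\PP$ simulation of \cref{lemma: threshold to ch} at negation and majority gates, and a final query for the requested output gate.

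\textbf{The gap: checking $|x| = m'_{\vn}$.} You restrict to $x \in \{0,1\}^{m'_{\vn}}$ and later call ``validating $|x|$'' bookkeeping guaranteed by $m_{\vn} \le \exp(\poly(m'_{\vn}))$ and the uniformity of $\cD$. Neither hypothesis says anything about $m'_{\vn}$. Uniformity of $\cD$ concerns the inputs of $D_m$, not of $f_{\vn}$. The algorithm for $m_{\vn}$ is only guaranteed to work when it is handed $m'_{\vn}$ in unary. In \cref{lemma: threshold to ch} the length check came for free from uniformity; here it does not.

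This can make you accept a tuple outside $L_{\cD\circ\cF}$. On a malformed $x$, your procedure feeds $1^{|x|}$ to that algorithm and may work with the wrong circuit $D_m$. Every input-gate query then returns NO. You still accept $(i,b,x,\vn)$ whenever the chosen output gate depends only on constant gates.

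The paper closes this with two oracle queries made before anything else. Exactly one of $(1,1,x,\vn)$ and $(1,0,x,\vn)$ lies in $L_{\cF}$ if and only if $|x| = m'_{\vn}$. Only after this test is $1^{|x|}$ a legitimate unary encoding of $m'_{\vn}$. Only then may you use $m_{\vn} \le \exp(\poly(|x|))$ to bound the lengths of gate names.

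\textbf{A smaller inaccuracy} is in your final paragraph. The inclusion $\C_j\P^{L_{\cF}} \subseteq \C_{j+k}\P$ is not justified at $j = 0$, because it asserts $\P^{\C_k\P} \subseteq \C_k\P$. That is open for $k \ge 1$; for $k = 1$ it would give $\P^{\PP} \subseteq \PP$, and hence $\PH \subseteq \PP$. Take $\P^{\C_k\P} \subseteq \PP^{\C_k\P} = \C_{k+1}\P$ as the base case instead. This gives $\C_j\P^{L_{\cF}} \subseteq \C_{j+k+1}\P$, which still suffices for $L_{\cD \circ \cF} \in \CH$.
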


\begin{proof}
  We proceed as in the proof of \cref{lemma: threshold to ch}.
  For $\delta \in \bN$, consider the auxiliary language
  \[
    L_{\delta} \coloneqq \setbuild{(g, b, x, \vn)}{\text{$g$ is a gate of depth at most $\delta$ in $D_{m_{\vn}}$ and evaluates to $b$ on input $f_{\vn}(x)$}}
  \]
  By induction on $\delta$, we will show that $L_\delta \in \CH^{L_\cF}$.

  As a first step, we query $L_{\cF}$ on the inputs $(1, 1, x, \vn)$ and $(1, 0, x, \vn)$.
  The length of $x$ is equal to $m'_{\vn}$ if and only if exactly one of these two strings is in $L_{\cF}$, so these queries allow us to ensure that $x$ has the correct length to be a valid input to $f_{\vn}$.
  Whenever $x$ has the correct length, we have $m_{\vn} \le \exp(\poly(\abs{x}))$ by assumption.
  In the rest of this argument, all our machines will perform the above check.

  We now consider the case when $\delta = 0$.
  We receive $(g, b, x, \vn)$ as input and must decide if $g$ is an input gate of $D_{m_{\vn}}$ and, if so, whether $g$ evaluates to $b$ on input $f_{\vn}(x)$.
  Because the circuit family $\cD$ is polylogtime-uniform, and by the observation at the end of the previous paragraph, we can decide if $g$ is an input gate in time polynomial in the length of the input $(g, b, x, \vn)$.
  If $g$ is the $p\ts{th}$ input gate, then we use an oracle call to $L_{\cF}$ to decide if $(p, b, x, \vn) \in L_{\cF}$ and accept if and only if this oracle call returns YES.
  If $g$ is a gate of type $0$ or $1$, then we accept or reject depending on $b$.
  This shows that $L_{\delta} \in \P^{L_{\cF}} \subseteq \CH^{L_{\cF}}$.

  For $\delta \ge 1$, the same argument as in the proof of \cref{lemma: threshold to ch} shows that $L_\delta$ can be decided in $\PP$ with oracle access to $L_{\delta-1}$.
  By induction, we have $L_{\delta-1} \in \CH^{L_\cF}$, so this implies that $L_\delta \in \PP^{\CH^{L_\cF}} = \CH^{L_\cF}$ as desired.

  In particular, by taking $\delta = \Delta$ and using uniformity of $\cD$ as in the proof of \cref{lemma: threshold to ch}, we obtain $L_{\cD \circ \cF} \in \CH^{L_\cF}$.
  If in addition $L_\cF \in \CH$, then we have $L_{\cD \circ \cF} \in \CH^{\CH} = \CH$ as claimed.
\end{proof}

A frequent use case of \cref{lemma: threshold with oracle to ch} will be to compose two uniform families of threshold circuits, where the inner family is of exponential size and the outer family is of polynomial size.
By \cref{lemma: threshold to ch}, the output gates of the inner family can be evaluated in $\CH$.
Combining this with \cref{lemma: threshold with oracle to ch} above shows that the composition can also be evaluated in $\CH$.

We will also encounter the following extension of the above situation.
We have a family of circuits of exponential size, with multiple outputs.
These outputs will be naturally grouped together, and we want to compose each group of outputs with a polynomial sized circuit.
For example, the first circuit family computes the coefficients of a polynomial with coefficients in $\bZ$, and we want to reduce each coefficient modulo a prime $p$.
In this setting, we can derive the same conclusion as \cref{lemma: threshold with oracle to ch}, under the same assumptions.
Essentially the same proof as that of \cref{lemma: threshold with oracle to ch} applies in this setting, therefore we omit the proof.

\begin{lemma} \label{lemma: threshold with oracle to ch multioutput}
  Let $\cD = \br{D_{n}}_{n}$ be a polylogtime-uniform family of polynomial-size threshold circuits of depth $\Delta$, where $D_{n}$ has $n$ inputs.
  Let $\cF = \br{f_{\vn} : \bc{0,1}^{m'_{\vn}} \to \bc{0,1}^{m^{(1)}_{\vn} \times m^{(2)}_{\vn}}}_{\vn}$ be a family of boolean functions.
  Suppose $m_{\vn} \le \exp(\poly(m'_{\vn}))$ and that binary representations of $m^{(1)}_{\vn}$ and $m^{(2)}_{\vn}$ can be computed in polynomial time from a unary representation of $m'_{\vn}$ and a binary representation of $\vn$.
  Let $\cD \circ \cF$ denote the family of functions obtained by composing each of the $m^{(2)}_{\vn}$ sets of outputs of $f_{\vn}$ with the function computed by $D_{m^{(1)}_{\vn}}$.
  Then $L_{\cD \circ \cF} \in \CH^{L_{\cF}}$.
  In particular, if $L_{\cF} \in \CH$, then $L_{\cD \circ \cF} \in \CH$.
\end{lemma}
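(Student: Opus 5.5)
We follow the proof of \cref{lemma: threshold with oracle to ch} almost verbatim, changing only how input gates are resolved. Let $\cD \circ \cF$ be the multi-output family. We view its outputs as indexed by pairs $(q, i)$, where $q \in [m^{(2)}_{\vn}]$ selects the group and $i$ selects an output bit of $D_{m^{(1)}_{\vn}}$ applied to the $q$\ts{th} group of outputs of $f_{\vn}$. We fix a polynomial-time pairing that encodes $(q, i)$ as a single output index of the composed function. We also fix the convention that the $p$\ts{th} bit of the $q$\ts{th} group is output number $(q-1) m^{(1)}_{\vn} + p$ of $f_{\vn}$, or whatever fixed polynomial-time computable ordering of the $m^{(1)}_{\vn} \times m^{(2)}_{\vn}$ outputs is used.

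First, exactly as before, we query $L_{\cF}$ on $(1,1,x,\vn)$ and $(1,0,x,\vn)$ to check that $\abs{x} = m'_{\vn}$. Once this holds, the hypotheses give $m^{(1)}_{\vn} \le \exp(\poly(\abs{x}))$. They also let us compute $m^{(1)}_{\vn}$ and $m^{(2)}_{\vn}$ in binary in polynomial time, since $m'_{\vn} = \abs{x}$ is available in unary. Hence every gate name of $D_{m^{(1)}_{\vn}}$, and every group index $q$, has length polynomial in the input. We then define
\[
  L_\delta \coloneqq \setbuild{(q, g, b, x, \vn)}{\text{$g$ has depth at most $\delta$ in $D_{m^{(1)}_{\vn}}$ and evaluates to $b$ on the $q$\ts{th} group of outputs of $f_{\vn}(x)$}},
\]
and prove $L_\delta \in \CH^{L_\cF}$ by induction on $\delta$.

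For $\delta = 0$, we use the polylogtime uniformity of $\cD$ to decide whether $g$ is an input gate, and if so which one, say the $p$\ts{th}. We check $p \le m^{(1)}_{\vn}$ and $q \le m^{(2)}_{\vn}$, compute the global output index $i$ of bit $p$ in group $q$, and accept iff $(i, b, x, \vn) \in L_\cF$. Constant gates are handled directly. This puts $L_0$ in $\P^{L_\cF}$. For $\delta \ge 1$, the negation and majority cases are word-for-word those of \cref{lemma: threshold to ch}. We guess a child $h$, pad non-children with balanced accept/reject branches, and query $(q, h, \cdot, x, \vn) \in L_{\delta-1}$, with $q$ simply carried along. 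This gives $L_\delta \in \PP^{L_{\delta-1}} \subseteq \CH^{L_\cF}$.

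Finally, given $((q,i), b, x, \vn)$, we decode $q$ and $i$, use uniformity to find the name of the $i$\ts{th} output gate of $D_{m^{(1)}_{\vn}}$, and make one query to $L_\Delta$. This shows $L_{\cD\circ\cF} \in \CH^{L_\cF}$, and it lies in $\CH^\CH = \CH$ when $L_\cF \in \CH$. The only place needing care is bookkeeping rather than a real obstacle. We must ensure that the group index and the translation from (group, position) to an output index of $f_{\vn}$ are polynomial-time computable and of polynomial length. Both follow from the assumed computability of $m^{(1)}_{\vn}$ and $m^{(2)}_{\vn}$, together with the bound $m^{(1)}_{\vn} m^{(2)}_{\vn} \le \exp(\poly(m'_{\vn}))$, which is how we read the hypothesis on $m_{\vn}$.
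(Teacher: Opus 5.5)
Your proposal is correct and takes the same approach as the paper. The paper omits this proof, saying only that essentially the same argument as for \cref{lemma: threshold with oracle to ch} applies, and your argument supplies exactly that adaptation: carry the group index $q$ through the auxiliary languages $L_\delta$, and resolve input gates via the global output index of $f_{\vn}$. Your reading of the undefined $m_{\vn}$ as $m^{(1)}_{\vn} m^{(2)}_{\vn}$ is also the sensible one.
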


\subsection{Computing the resultant and deciding the Nullstellensatz} \label{subsection: resultant in ch}

Applying \cref{corollary: arithmetic to ch} to the uniform circuit family for the multivariate resultant obtained in \cref{lemma: uniformity of homotopy resultant}, we conclude that the multivariate resultant over $\bZ$ can be evaluated in the counting hierarchy.
Using this algorithm for the resultant over $\bZ$ as a starting point, we can likewise conclude that the resultant over other domains, such as $\bF_p$, can be computed in the counting hierarchy.
This uses the fact that the resultant over $\bZ$ has integer coefficients and that its image modulo $p$ is precisely the resultant over $\bF_p$ (see \cref{definition: multivariate resultant all fields}).
Because of this, we can compute the resultant over $\bF_p$ by first lifting the input to polynomials with integer coefficients, then computing the resultant over the integers, and finally reducing the result modulo $p$.
Similar ideas allow us to compute resultants over $\bF_{p^a}$ and $\bF_{p^a}[y_1,\ldots,y_k]$.

Recall that in \cref{subsection: finite field arithmetic} we discussed various integral domains and how elements of these domains can be represented in binary.
If $R$ is any such domain, and if $f \in R\bs{x_{1}, \dots, x_{n}}$ is a polynomial, then we can define boolean functions $\hat{f}_{h}$ corresponding to evaluation of $f$ on inputs of height at most $h$, analogous to the previous section where we did this for $R = \bZ$ and $R = \bQ$.

\begin{theorem} \label{theorem: resultant in ch}
  Let $R$ be one of the rings $\bZ, \bZ\bs{y_1,\ldots,y_k}$, $\bF_p$, $\bF_p[y_1,\ldots,y_k]$, $\bF_{p^a}$, or $\bF_{p^a}[y_1,\ldots,y_k]$, where $p$ is a prime number.
  Let $\cF_{R}$ be the family of boolean functions corresponding to the resultant over $R$.
  Then $L_{\cF_{R}} \in \CH$.
\end{theorem}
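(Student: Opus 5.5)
Our plan is to reduce every ring in the statement to the case $R = \bZ$. We then obtain $\bZ$ itself directly from the arithmetic circuits of \cref{lemma: uniformity of homotopy resultant} and \cref{corollary uniformity of composed resultant}, transferred to $\CH$.

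\textbf{The case $R = \bZ$.} The family $\cC^{\res}$ of \cref{lemma: uniformity of homotopy resultant} is polylogtime-uniform, weakly division-free, constant-free and of constant depth. Its size is $P^{\poly(n)}$, which is at most $\exp(\poly(U))$ because $U \ge n + \sum_i d_i$. We may therefore apply \cref{lemma: arithmetic to threshold} and \cref{corollary: arithmetic to ch} directly.

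The output of that transfer is a pair $(a,b)$ representing a rational number, not an integer. We know that $\res$ has integer coefficients, so $b \mid a$. We then compose with the logtime-uniform $\TC^0$ circuit for integer division (\cref{theorem: integer ops in tc0}, item 3) by means of \cref{lemma: threshold with oracle to ch}; this gives the bits of the integer $\res$ in $\CH$. The output height is bounded by $h\cdot(s\log s)^{\Delta}$, which is exponential. Individual bits are nevertheless addressable, as \cref{definition: language of function family} requires, and the polynomial bound on the bit-length of the index is what makes \cref{lemma: threshold with oracle to ch} applicable.

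\textbf{Polynomial coefficients.} For $R = \bZ[y_1,\ldots,y_k]$ we use $\cC^{\composedResultant}$ in place of $\cC^{\res}$. Its size $(PD)^{\poly(n,k)}$ is again exponential in the input length, since the dense input has $U\binom{k+D}{k}$ coefficients. Dividing numerator by denominator for every output coefficient is handled by \cref{lemma: threshold with oracle to ch multioutput}.

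\textbf{Finite fields.} For $R=\bF_p$ we lift each input to an integer in $\{0,\ldots,p-1\}$, compute the integer resultant as above, and reduce modulo $p$. By the final clause of \cref{definition: multivariate resultant all fields}, this gives the resultant over $\bF_p$. The reduction $x \bmod p = x - p\lfloor x/p\rfloor$ is a $\TC^0$ computation (items 1--3 of \cref{theorem: integer ops in tc0}), and we attach it with \cref{lemma: threshold with oracle to ch} or \cref{lemma: threshold with oracle to ch multioutput}. For $\bF_p[\vy]$ we do the same for each coefficient.

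For $\bF_{p^a} = \bF_p[z]/\ideal{g(z)}$, we view each coefficient as a polynomial in $\bZ[z]$ of degree $<a$. We compute the resultant over $\bZ[z]$ (respectively $\bZ[z,\vy]$) with $\cC^{\composedResultant}$, using $k+1$ variables and $D$ at most $a$ plus the $\vy$-degree. The result is a polynomial in $z$ of degree at most $aD'$, where $D'$ is exponential. It then remains to reduce this polynomial modulo $p$ and modulo the monic polynomial $g$. Since the resultant is a polynomial in the coefficients, evaluating it over $\bZ[z]$ and then mapping to $\bF_p[z]/\ideal{g}$ commutes with the ring homomorphism, so the result is correct.

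\textbf{Main obstacle.} The difficult step is the reduction modulo $g$ of a polynomial of exponential degree. The $\TC^0$ pseudodivision of \cref{theorem: integer ops in tc0} has polynomial size only in the degree, so it cannot be applied to the whole polynomial. We would instead write $z^{N} \bmod g$ as a linear combination. The coefficients of $z^N \bmod (p,g)$ can be computed by repeated squaring in $\bF_p[z]/\ideal{g}$ in time polynomial in $\log N$, $a$ and $\log p$. That is, the polynomial-time oracle machine computes each coordinate of $\sum_N c_N (z^N \bmod g)$ directly.

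Each coordinate is an exponentially long sum $\sum_N (c_N \bmod p)\cdot w_{N,i} \bmod p$, where every term can be computed in $\CH$. We evaluate such a sum with a uniform exponential-size $\TC^0$ iterated-addition circuit whose inputs are supplied by oracle queries, as in \cref{lemma: threshold with oracle to ch}. The circuit family is indexed by the exponential number of summands, and this is permitted because that number is $\exp(\poly)$ in the input length. A final reduction modulo $p$ completes the computation. The remaining bookkeeping concerns height bounds and the index-computability hypotheses of these lemmas.
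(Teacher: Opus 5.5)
Your treatment of $\bZ$, $\bZ[\vy]$, $\bF_p$ and $\bF_p[\vy]$ is the paper's argument essentially verbatim. Your proof is also correct for $\bF_{p^a}$ and $\bF_{p^a}[\vy]$, but there you take a different route, and the reason you give for departing from the paper is not a real obstacle.

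\cref{lemma: threshold with oracle to ch} composes $f_\vn$ with $D_{m_\vn}$, a circuit whose size is polynomial in \emph{its own} input length $m_\vn$. That length is allowed to be as large as $\exp(\poly(m'_\vn))$. So the logtime-uniform pseudodivision circuits of item~4 of \cref{theorem: integer ops in tc0} can be applied directly to the exponential-degree polynomial $\res(\widehat F_0,\ldots,\widehat F_n)\in\bZ[z]$. This is exactly how you yourself apply the item~3 division circuits to integers with exponentially many bits in the $\bZ$ case.

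The paper does precisely this. Because the lift $\hat g$ is monic, the factor $b_m^{n-m+1}$ equals $1$, so the pseudoremainder is the true remainder modulo $\hat g$. The paper then reduces the coefficients of that remainder modulo $p$ with one further composition.

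Your replacement works as follows: reduce each $c_N$ modulo $p$, compute $z^N \bmod (p,g)$ by repeated squaring in polynomial time, and sum the exponentially many products $\bar c_N w_{N,i}$ with an iterated-addition circuit composed via the same lemma. This is valid. The summands form a function family whose bits lie in $\P^{\CH}\subseteq\CH$, and the number of summands is computable from the index.

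What each route buys: yours needs an extra polynomial-time ingredient instead of the paper's single use of an existing $\TC^0$ primitive. In exchange, every intermediate quantity in your route has height $O(\log p+\log N)$. In the paper's route, the integer remainder coefficients over $\bZ$ may reach height linear in the exponential degree $N$. Both routes rest on the same composition lemma, so neither is more general, and the paper's is shorter.
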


\begin{proof}
  We proceed by case analysis depending on the choice of the ring $R$.
  \begin{itemize}
    \item
      $R = \bZ$:
      Apply \cref{corollary: arithmetic to ch} to the circuits of \cref{lemma: uniformity of homotopy resultant}.
      The statement of \cref{lemma: uniformity of homotopy resultant} provides bounds on the size and number of inputs of the circuits, and these are seen to satisfy the requirements of \cref{corollary: arithmetic to ch}.
      The resulting threshold circuits treat their inputs as rational numbers represented by pairs of integers.
      These circuits can be evaluated at the given integer inputs by representing the integer $a$ with the pair $(a, 1)$.
      The output of these circuits is the value of the resultant, represented as a rational number $(a,b)$.

      Because the resultant is a polynomial with integer coefficients, its evaluation at an integer-valued point is itself an integer.
      This implies that $a/b$ is an integer, so $b$ necessarily divides $a$.
      To compute the binary representation of $a/b$, we compose the family of threshold circuits for the resultant with a circuit for integer division (item 3 of \cref{theorem: integer ops in tc0}) to obtain a circuit family that computes the integer representation of the resultant.
      The conclusion of \cref{lemma: arithmetic to threshold} provides bounds on the bit complexity of $a$ and $b$, and therefore the requirements to invoke \cref{lemma: threshold with oracle to ch} with the circuits for division are satisfied.
      By \cref{lemma: threshold with oracle to ch}, because the language corresponding to a rational representation of the resultant is in $\CH$, so is the language corresponding to the integer representation of the resultant.
    \item
      $R = \bZ[y_1, \ldots, y_k]$:
      \cref{corollary uniformity of composed resultant} constructs a polylogtime-uniform family of arithmetic circuits that take as input polynomials $F_0, \ldots, F_n \in \bQ[y_1,\ldots,y_k][x_0,\ldots,x_n]$ and outputs the coefficient of the resultant $\res(F_0,\ldots,F_n)$, which is itself a polynomial in $\bQ[y_1,\ldots,y_k]$.
      If the polynomials $F_i$ have degree at most $d$ in the $\vx$ variables and degree at most $D$ in the $\vy$ variables, then these circuits have size bounded by $(n d D)^{\poly(n,k)}$, which is at most singly-exponential in terms of the number of inputs.
      Applying the multi-output analogue of \cref{corollary: arithmetic to ch} to these arithmetic circuits allows us to decide the language corresponding to the resultant over $\bQ[y_1, \ldots, y_k]$ in $\CH$.
      As above, \cref{corollary uniformity of composed resultant} provides bounds on size and number of inputs of the circuits, which are seen to satisfy the requirements in \cref{corollary: arithmetic to ch}.
      The resulting threshold circuits can be evaluated at inputs from $\bZ[y_1, \ldots, y_k]$ by representing the integer $a$ with $(a, 1)$ as in the previous case.

      Similar to the integer case, when the input $F_0, \ldots, F_n$ are elements of $\bZ[y_1,\ldots,y_k][x_0,\ldots,x_n]$, the resultant $\res(F_0, \ldots, F_n)$ is a polynomial with integer coefficients, rather than rational coefficients.
      Therefore, we can use \cref{lemma: threshold with oracle to ch multioutput} to compose the family of threshold circuits for the resultant with circuits for integer division (item 3 of \cref{theorem: integer ops in tc0}) applied to each coefficient to obtain integer representations of the coefficients of $\res(F_0, \ldots, F_n)$.
      By \cref{lemma: threshold with oracle to ch multioutput}, since the language corresponding to the resultant over $\bQ\bs{y_{1}, \dots, y_{k}}$ is in $\CH$, so is the language corresponding to these composed functions that compute the integer resultant.
    \item
      $R = \bF_p$:
      Denote the input polynomials by $F_0, \ldots, F_n \in \bF_p[\vx]$.
      We can lift these to a collection of polynomials $\widehat{F}_0, \ldots, \widehat{F}_n \in \bZ[\vx]$ with integer coefficients by lifting each element of $\bF_p$ to an integer between $0$ and $p-1$ inclusive.
      Using the result from the case $R = \bZ$, we can decide the value of any bit of $\res(\widehat{F}_0, \ldots, \widehat{F}_n)$ in $\CH$.
      The resultant $\res(\widehat{F}_0, \ldots, \widehat{F}_n)$ is an integer.
      The fact that the resultant over $\bF_p$ is the image of the resultant over $\bZ$ under the map $\bZ \to \bF_p$ (see \cref{definition: multivariate resultant all fields}) means that
      \[
        \res(F_0, \ldots, F_n) = \res(\widehat{F}_0, \ldots, \widehat{F}_n) \bmod{p}.
      \]
      Thus, it remains to compute the remainder of $\res(\widehat{F}_0, \ldots, \widehat{F}_n)$ modulo $p$.
      By item 3 of \cref{theorem: integer ops in tc0}, there are logtime-uniform $\TC^0$ circuits for integer division with remainder.
      In particular, there are logtime-uniform $\TC^{0}$ circuits that compute the remainder of an integer when divided by $p$.
      We can compose these with the circuits that compute the resultant of $\widehat{F}_{0}, \dots, \widehat{F}_{n}$ using \cref{lemma: threshold with oracle to ch}.
      Since the language corresponding to the latter is in $\CH$, so is the language corresponding to the composed functions, which is exactly the resultant over $\bF_{p}$.
    \item
      $R = \bF_p[y_1,\ldots,y_k]$:
      We reduce to the case of $R = \bZ[y_1,\ldots,y_k]$ in a manner completely analogous to the case of $R = \bF_p$.
      Let $F_0, \ldots, F_n \in \bF_p[y_1,\ldots,y_k][\vx]$ be the input polynomials, and let $\widehat{F}_0,\ldots,\widehat{F}_n \in \bZ[\vy][\vx]$ be their lifts to the integers.
      Then we have
      \[
        \res(F_0, \ldots, F_n) = \res(\widehat{F}_0, \ldots, \widehat{F}_n) \bmod{p},
      \]
      so we can compute $\res(F_0, \ldots, F_n)$ by first computing $\res(\widehat{F}_0, \ldots, \widehat{F}_n)$ and then reducing each coefficient in this resultant modulo $p$ as in the case of $R = \bF_p$ above, using \cref{lemma: threshold with oracle to ch multioutput}.
    \item
      $R = \bF_{p^a}$:
      Recall that $\bF_{p^a}$ is isomorphic to $\bF_p[z]/(g(z))$, where $g \in \bF_p[z]$ is a degree-$a$ irreducible polynomial.
      Let $F_0, \ldots, F_n \in \bF_{p^a}[\vx]$ be the input polynomials, and let $\widehat{F}_0, \ldots, \widehat{F}_n \in \bZ[z][\vx]$ be their lifts to $\bZ[z]$.
      Let $\hat{g}$ be a monic lift of $g(z)$ to $\bZ\bs{z}$.
      As in previous cases, by the universality of the resultant (see \cref{definition: multivariate resultant all fields}), we have
      \[
        \res(F_0, \ldots, F_n) = \res(\widehat{F}_0, \ldots, \widehat{F}_n) \bmod \ideal{p, \hat{g}(z)}.
      \]
      By the case $R = \bZ[z]$, we can compute $\res(\widehat{F}_0, \ldots, \widehat{F}_n)$ in $\CH$, which is an element of $\bZ[z]$.
      We use the circuit for pseudodivision over $\bZ$ (item 4 of \cref{theorem: integer ops in tc0}) combined with \cref{lemma: threshold with oracle to ch} to pseudodivide this polynomial by $\hat{g}$.
      Since $\hat{g}$ is monic, the pseudoquotient is the actual quotient.
      We then reduce the coefficients modulo $p$ using circuits for integer division as in previous cases to obtain the resultant $\res(F_0, \ldots, F_n)$.
    \item
      $R = \bF_{p^a}[y_1,\ldots,y_k]$:
      We reduce to the case of $R = \bZ\bs{\vy}$ as above by lifting the inputs.
      We then compute the resultant, pseudodivide each coefficient of $\vy$ by a lift of $\hat{g}$, and then reduce every coefficient modulo $p$.
      \qedhere
  \end{itemize}
\end{proof}

By invoking the reduction of \cref{lemma: hn to resultant} from deciding the Nullstellensatz to computing the resultant, we obtain a $\CH$ procedure to decide satisfiability of systems of polynomial equations.
To handle the case where the inputs have coefficients in a number field, we lift to the rational numbers using the following lemma.

\begin{lemma}
  \label{lemma: number field to rationals}
  Let $\bK \coloneqq \bQ\bs{\alpha}$ be a number field and let $g \in \bQ\bs{z}$ be the minimal polynomial of $\alpha$.
  Let $f_{1}, \dots, f_{m} \in \bK\bs{x_{1}, \dots, x_{n}}$ be a set of polynomials.
  Define $f'_{1}, \dots, f'_{m} \in \bQ\bs{x_{1}, \dots, x_{n}, z}$ to be the polynomials obtained from $f_{1}, \dots, f_{m}$ by writing each $\bK$-coefficient as a polynomial in $\bQ\bs{z}$ of degree less than $\deg g$.
  Then $\var{f_{1}, \dots, f_{m}} \neq \emptyset$ if and only if $\var{f'_{1}, \cdots, f'_{m}, g} \neq \emptyset$.
  Further, if the former variety is zero-dimensional and has $r$ points, then the latter variety is zero-dimensional and has $r \cdot \deg g$ points.
\end{lemma}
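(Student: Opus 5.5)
The plan is to set up an explicit bijection between $\var{f'_1,\ldots,f'_m,g} \subseteq \overline{\bQ}^{n+1}$ and the set of pairs $(\sigma,\valpha)$, where $\sigma$ ranges over the $\deg g$ embeddings $\bK \hookrightarrow \overline{\bQ}$ and $\valpha$ ranges over the zeros in $\overline{\bQ}^n$ of the conjugated system $f_1^\sigma,\ldots,f_m^\sigma$. Both claims then reduce to showing that each conjugated variety is in bijection with $\var{f_1,\ldots,f_m}$.

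\textbf{Step 1: fixing the coordinates.} Since $g$ is irreducible over $\bQ$, it is separable. Its roots $\alpha_1,\ldots,\alpha_{\deg g}$ in $\overline{\bQ}$ are therefore distinct, and they correspond to the embeddings $\sigma_k\colon \alpha\mapsto\alpha_k$. A point $(\vbeta,\zeta)$ lies in $\var{f'_1,\ldots,f'_m,g}$ exactly when $\zeta=\alpha_k$ for some $k$ and $f'_i(\vbeta,\alpha_k)=0$ for all $i$. By construction $f'_i(\vx,\alpha_k)=f_i^{\sigma_k}(\vx)$, where $f_i^{\sigma_k}$ is obtained by applying $\sigma_k$ to the coefficients of $f_i$. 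This gives the bijection
\[
  \var{f'_1,\ldots,f'_m,g} = \bigsqcup_{k} \var{f_1^{\sigma_k},\ldots,f_m^{\sigma_k}}\times\{\alpha_k\},
\]
and the union is disjoint because the $\alpha_k$ are distinct.

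\textbf{Step 2: comparing each conjugate with the original.} Here we view $\var{f_1,\ldots,f_m}$ inside $\overline{\bK}^n=\overline{\bQ}^n$, using a fixed embedding of $\bK$ that we may take to be $\sigma_1$. Each $\sigma_k$ extends to an automorphism $\tau_k$ of $\overline{\bQ}$. Applying $\tau_k$ coordinatewise sends zeros of $f_1^{\sigma_1},\ldots,f_m^{\sigma_1}$ bijectively onto zeros of $f_1^{\sigma_k},\ldots,f_m^{\sigma_k}$: indeed $\tau_k(f^{\sigma_1}(\vbeta))=f^{\sigma_k}(\tau_k\vbeta)$, and $\tau_k$ is invertible. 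Every conjugate variety is therefore in bijection with $\var{f_1,\ldots,f_m}$.

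\textbf{Step 3: concluding.} Nonemptiness of the two varieties is now clearly equivalent. If $\var{f_1,\ldots,f_m}$ is finite with $r$ points, then Step 1 exhibits the second variety as a disjoint union of $\deg g$ sets of size $r$. It is therefore finite, hence zero-dimensional, and has $r\cdot\deg g$ points.

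The only point that needs care, and so the main obstacle, is Step 2: the extension of embeddings to automorphisms of $\overline{\bQ}$ and the identification of $\overline{\bK}$ with $\overline{\bQ}$. Both are standard facts about algebraic closures, so I expect no real difficulty; the remainder is bookkeeping.
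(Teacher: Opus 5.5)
Your proof is correct. For the counting claim it has the same core as the paper: split $\var{f'_1,\ldots,f'_m,g}$ by which root $\alpha_k$ of $g$ sits in the last coordinate, then use an automorphism sending $\alpha$ to $\alpha_k$ to show every fiber has as many points as the fiber over $\alpha$, which is $\var{f_1,\ldots,f_m}$.

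The differences are in packaging.

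\textbf{Nonemptiness.} The paper does not use fibers here. It notes that $\bK[\vx]/\langle f_1,\ldots,f_m\rangle \cong \bQ[\vx,z]/\langle f'_1,\ldots,f'_m,g\rangle$ and applies the weak Nullstellensatz to both rings. You instead get the equivalence directly from your disjoint-union decomposition.

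\textbf{Counting.} The paper works with coordinate rings rather than points. It identifies $r$ with $\dim_{\bK} \bK[\vx]/\sqrt{\langle f_1,\ldots,f_m\rangle}$ and transports this dimension through automorphisms of the splitting field $\bL$ of $g$. This quietly uses that the radical stays radical after extending scalars to $\overline{\bK}$, which holds because $\bQ$ is perfect. It also takes the identification of the fiber over $\alpha_1$ with $\var{f_1,\ldots,f_m}$ as true ``by definition''.

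Your pointwise argument makes both of these steps explicit and is more self-contained. Acting on points does require automorphisms of $\overline{\bQ}$ rather than of $\bL$, since the points need not have coordinates in $\bL$, and you chose the $\tau_k$ correctly for this. The paper's ring-level formulation gives nonemptiness in one line and is the more structural statement.
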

\begin{proof}
  The rings $\bK\bs{\vx} / \ideal{f_{1}, \dots, f_{m}}$ and $\bQ\bs{\vx, z} / \ideal{f'_{1}, \dots, f'_{m}, g}$ are isomorphic.
  By the Nullstellensatz, $\var{f_{1}, \cdots, f_{m}} \neq \emptyset$ if and only if $\bK\bs{\vx} / \ideal{f_{1}, \dots, f_{m}}$ is not the zero ring.
  Similarly, $\var{f'_{1}, \cdots, f'_{m}, g} \neq \emptyset$ if and only if $\bQ\bs{\vx, z} / \ideal{f'_{1}, \dots, f'_{m}, g}$ is not the zero ring.
  The first statement follows directly from these three facts.

  Assume now that $\var{f_{1}, \dots, f_{m}}$ is a zero-dimensional variety.
  The size of this variety, which we denote by $r$, is exactly the dimension of $\bK\bs{\vx} / \radideal{f_{1}, \dots, f_{m}}$ as a $\bK$-vector space.
  Let $\bL$ be a splitting field of $g$.
  Let $\alpha_{1}, \dots, \alpha_{e}$ be the roots of $g$ in $\bL$, where $e = \deg g$ and $\alpha = \alpha_{1}$.
  The number of points in $\var{f'_{1}, \dots, f'_{m}, g}$ with last coordinate $\alpha_{1}$ is exactly $r$ by definition.

  Let $\sigma_{1}, \dots, \sigma_{e}$ be automorphisms of $\bL$ such that $\sigma_{i}(\alpha) = \alpha_{i}$ for every $i$.
  Define $\bK_{i} :=  \sigma_{i}(\bK)$.
  The isomorphism $\sigma_{i}$ induces an isomorphism $\bK\bs{\vx} / \radideal{f_{1}, \dots, f_{m}} \cong \bK_{i}\bs{\vx} / \radideal{\sigma_{i}(f_{1}), \dots, \sigma_{i}(f_{m})}$.
  Combined with the statements in the previous paragraph, this isomorphism shows that the number of points with last coordinate $\alpha_{i}$ in $\var{f'_{1}, \dots, f'_{m}, g}$ is $r$ for every $i$.
  This completes the proof of the second statement.
\end{proof}

We now combine \cref{lemma: hn to resultant} with \cref{theorem: resultant in ch} to prove that Hilbert's Nullstellensatz can be decided in $\CH$.

\begin{theorem} \label{theorem: nullstellensatz in ch}
  Let $R$ be one of the rings $\bZ$, $\bZ[y_1, \ldots, y_k]$, a number field $\bK$, $\bK\bs{y_{1}, \dots, y_{k}}$, $\bF_p$, $\bF_p[y_1,\ldots,y_k]$, $\bF_{p^a}$, or $\bF_{p^a}[y_1,\ldots,y_k]$, where $p$ is a prime number.
  Then Hilbert's Nullstellensatz over $R$ can be decided in $\CH$.
\end{theorem}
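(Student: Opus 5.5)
The plan is to combine the randomized reduction of \cref{lemma: hn to resultant} with the resultant algorithm of \cref{theorem: resultant in ch}, and then remove the randomness. The counting hierarchy absorbs a majority quantifier over the random coins, since $\BPP^{\CH} \subseteq \PP^{\CH} \subseteq \CH$.

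\textbf{Reducing to the base rings.} First I would reduce every ring to one of $\bZ$, $\bZ[\vy]$, $\bF_{p^a}$, $\bF_{p^a}[\vy]$.
\begin{itemize}
\item Over a number field $\bK$ (or $\bK[\vy]$), apply \cref{lemma: number field to rationals}. This replaces the system by $f'_1,\dots,f'_m,g$ over $\bQ$ (resp.\ $\bQ[\vy]$) in one more variable, with the same satisfiability.
\item Rational coefficients are cleared by multiplying each polynomial by the product of its denominators. This is polynomial-time and leaves the variety unchanged, so we land in $\bZ$ or $\bZ[\vy]$.
\item Over a small finite field $\bF_{p^a}$ with fewer than $15nd^n$ elements, use \cref{lemma: passing to bigger finite field} to move to $\bF_{p^b}$ with $b$ polynomial in $n\log d$, via the map $\phi$. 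The variety over the algebraic closure is unchanged. The Las Vegas construction is handled in the same way as the other randomness below; alternatively, an irreducible polynomial of the required degree can be found deterministically in $\FP^{\NP}$ by search.
\end{itemize}
In each case the base field satisfies assumptions \textsf{\labelcref{item:assA}} and \textsf{\labelcref{item:assB}} with a large enough coefficient subfield.

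\textbf{The randomized reduction.} Next, run the algorithm of \cref{lemma: hn to resultant} on a random string $r$ of polynomial length. It produces, in polynomial time, forms $G_{i,j}$ over $R[t,w,u]$ with polynomially bounded degree in $t,w,u$ and polynomially bounded height. For each $j$ we must decide whether $(\Tt_w \Tt_t \res(G_{0,j},\dots,G_{n,j}))(0)=0$, with the resultant taken over the polynomial ring $R[t,w,u]$, i.e.\ the case $R[\vy]$ of \cref{theorem: resultant in ch}. That theorem puts each coefficient bit of this resultant, as a polynomial in $t,w,u$, in $\CH$. The number of coefficients is at most $(\deg)^{3}$ with degree $P\cdot\poly$, which is exponential, so individual bits must be addressed through the language $L_{\cF}$.

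\textbf{The main obstacle: extracting the trailing term.} We must decide whether the coefficient of $w^{b}t^{a}u^0$ vanishes, where $a$ is the least $t$-degree with a nonzero coefficient and $b$ is the least $w$-degree within that $t$-slice. I would express this as a constant-depth threshold computation over the exponentially many coefficient-zero tests, and invoke \cref{lemma: threshold with oracle to ch multioutput}. Testing whether a single coefficient of $w^{b'}t^{a'}u^{c}$ is zero is an OR over its bits, which is one $\PP$-level step over the $\CH$ language. The condition then reads: there exist $a,b$ such that
\begin{itemize}
\item some coefficient with $t$-degree $a$ is nonzero;
\item all coefficients with $t$-degree below $a$ are zero;
\item some coefficient with $t$-degree $a$ and $w$-degree $b$ is nonzero, and all with $t$-degree $a$ and $w$-degree below $b$ are zero;
\item the $u^0$ coefficient of $t^a w^b$ is zero.
\end{itemize}
These quantifiers range over exponentially many indices with polynomial-time checkable structure, and each is simulated by a majority quantifier as in the proof of \cref{lemma: threshold to ch}. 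Hence each test lies in $\CH$, and the OR over $j\in[n]$ is polynomial.

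\textbf{Removing the randomness.} Finally, the Monte Carlo success probability $2/3$ gives a predicate $A(x,r)\in\CH$ such that the majority of $r$ yields the correct answer. One further majority quantifier over $r$ decides $\HN$ in $\PP^{\CH}=\CH$. I expect the trailing-term extraction to be the delicate step, together with checking that height bounds make input lengths polynomial.
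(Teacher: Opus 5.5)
Your proposal is correct and follows essentially the same route as the paper. You reduce number fields via \cref{lemma: number field to rationals} and small finite fields via \cref{lemma: passing to bigger finite field}, run the randomized reduction of \cref{lemma: hn to resultant}, obtain the resultant coefficients from \cref{theorem: resultant in ch}, locate the trailing term by guessing exponents and checking that all smaller coefficients vanish with a $\CH$ oracle, and absorb the randomness through $\BPP^{\CH}\subseteq\PP^{\CH}=\CH$. Your explicit requirement that the guessed trailing coefficients be nonzero is a precision the paper's ``guess the exponents and verify smaller coefficients vanish'' sketch leaves implicit; without it, a guessed exponent below the true trailing degree would pass trivially.
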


\begin{proof}
  Let $f_1, \ldots, f_m \in R[x_1, \ldots, x_n]$ be the polynomials whose satisfiability we must decide.
  Let $d$ and $h$ be bounds on the degrees and heights, respectively, of the $f_i$.
  If $R$ is either $\bK$ or $\bK\bs{y_{1}, \dots, y_{k}}$ then we use \cref{lemma: number field to rationals} to reduce to the case when $R = \bQ$ or $\bQ\bs{y_{1}, \dots, y_{k}}$, respectively, after which we pass to a common denominator, further reducing to the case $R = \bZ$ or $R = \bZ\bs{y_1, \ldots, y_k}$.
  If $R$ is one of the rings $\bF_{p}, \bF_{p^{a}}, \bF_{p}\bs{y_{1}, \dots, y_{k}}, \bF_{p^{a}}\bs{y_{1}, \dots, y_{k}}$ and if $p$ (or $p^{a}$) is smaller than $15 n d^{n}$, then we use \cref{lemma: passing to bigger finite field} to pass to an extension $\bF_{p^{b}}$ or $\bF_{p^{b}}\bs{y_{1}, \dots, y_{k}}$ such that $p^{b} \geq 15 n d^{n}$.
  The degree of the extension required is polynomial in the input size, therefore doing so only increases the heights of the inputs by a polynomial factor.
  Similarly, arithmetic in the extension can be simulated efficiently using arithmetic in the original ring $R$.
  Further, satisfiability of the system is unchanged by passing to extensions.

  Apply the randomized algorithm of \cref{lemma: hn to resultant} to the input $f_1, \ldots, f_m$ and denote the resulting polynomials by $G_{i,j} \in R[t,w,u,x_0,\ldots,x_n]$ where $i \in \bc{0,1,\ldots,n}$ and $j \in [n]$.
  To decide if the system $f_1 = \cdots = f_m = 0$ is satisfiable, we must decide if there is a $j \in [n]$ such that $(\Tt_{w} \Tt_{t} \res(G_{0,j}, \ldots, G_{n,j}))(0) = 0$.

  This second task can be solved in $\CH$ as follows.
  We iterate over all choices of $j \in [n]$.
  For each $j \in [n]$, we compute $R_j(t,w,u) \coloneqq \res(G_{0,j}, \ldots, G_{n,j})$ in $\CH$ using the algorithm of \cref{theorem: resultant in ch}.
  The resultant $R_j$ is computed with respect to the $\vx$ variables and is a polynomial in $R[t,w,u]$.
  Our task is to decide if $\Tt_w \Tt_t R_j$ has a zero constant term.
  We do this in $\CH$ as follows: we first nondeterministically guess the exponents of the trailing monomials of $R_j$ with respect to $t$ and $w$, and then verify that (1) the corresponding polynomial in $u$ has a zero constant term, and (2) all smaller monomials in $t$ and $w$ have a coefficient of zero.
  The first verification task can be done in $\CH$, since we can compute the coefficients of $R_j$ in $\CH$.
  The second can be done in $\coNP$ with a $\CH$ oracle, since we must verify that all smaller monomials have a zero coefficient, and the coefficient of any single monomial can be computed in $\CH$.
  Since $\coNP^{\CH} = \CH$, this second verification task can likewise be performed in $\CH$.

  Thus, we have a randomized reduction from the task of deciding satisfiability of $f_1 = \cdots = f_m = 0$ to a problem that can be solved in $\CH$.
  This reduction succeeds with probability at least $2/3$, so Hilbert's Nullstellensatz is in $\BPP^{\CH}$.
  Since $\BPP \subseteq \PP \subseteq \CH$, we can decide Hilbert's Nullstellensatz in $\CH$ as claimed.
\end{proof}

\subsection{Counting solutions in zero-dimensional systems}

In \cref{subsection: counting to resultant}, we saw that to count solutions to zero-dimensional systems of equations, it is sufficient to compute multivariate resultants, univariate GCD's, and count the number of distinct roots of a univariate polynomial in the algebraic closure of its coefficient field.
The results of \cref{subsection: resultant in ch} show that we can compute multivariate resultants in $\CH$.
In this section, we develop the additional machinery necessary to compute univariate GCD's and the number of distinct roots of a univariate polynomial.
For our application to counting roots of zero-dimensional systems of equations, we need to perform these operations on polynomials of exponentially-large degree.
The coefficients of these polynomials will be computable in $\CH$, and our goal is to compute the coefficients of the GCD and the number of distinct roots in $\CH$.
To do this, it suffices, by \cref{lemma: threshold with oracle to ch,lemma: threshold with oracle to ch multioutput}, to show that the GCD and number of distinct roots can be computed in polylogtime-uniform $\TC^0$.

We start by observing that the resultant of two polynomials can be computed by polylogtime-uniform $\TC^0$ circuits.
This follows from \cref{lemma: arithmetic to threshold} applied to the circuits we construct for the multivariate resultant of two polynomials.

\begin{lemma} \label{lemma: resultant in tc0}
  Let $R$ be one of the rings $\bZ, \bZ\bs{y_{1}, \dots, y_{k}}, \bQ, \bQ[y_1, \ldots, y_k]$\roberttodo{Do we need the rational case?}, $\bF_p$, $\bF_p[y_1,\ldots,y_k]$, $\bF_{p^a}$, or $\bF_{p^a}[y_1,\ldots,y_k]$, where $p$ is a prime number.
  Then the resultant of two univariate polynomials with coefficients in $R$ can be computed in polylogtime-uniform $\TC^0$.
\end{lemma}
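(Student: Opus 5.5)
The univariate resultant of two polynomials of degrees $d_0, d_1$ is the special case $n = 1$ of the multivariate resultant: homogenize $f, g$ in a variable $x_0$ to forms $F_0, F_1 \in R[x_0, x_1]$ of degrees $d_0, d_1$. The plan is to specialize the uniform arithmetic circuits of \cref{lemma: uniformity of homotopy resultant} to $n = 1$ and transfer them to threshold circuits. With $n = 1$, the parameter $P = d_0 + d_1$ and the size bound $P^{\poly(n)}$ become polynomial in the input length. So the family $(C^{\res}_{1, d_0, d_1})$ is a polylogtime-uniform, weakly division-free, constant-free, constant-depth family of \emph{polynomial} size.

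\textbf{Case $R = \bZ$ or $\bQ$.} Apply \cref{lemma: arithmetic to threshold}, indexed by $(d_0, d_1, h)$. This yields polylogtime-uniform threshold circuits of depth $O(1)$ and size $\poly(h, (s\log s)^{\Delta})$, which is polynomial since $s = \poly(d_0 + d_1)$ and $\Delta$ is a constant. These circuits output the resultant as a pair $(a, b)$. For $R = \bZ$, compose with the logtime-uniform integer division circuits of \cref{theorem: integer ops in tc0} (item 3); since $b \mid a$, this gives the integer value.

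Composing two polylogtime-uniform polynomial-size $\TC^0$ families is again such a family, by the usual simulation argument used in \cref{lemma: polynomial interpolation uniform}. Inputs from $\bZ$ are fed in as pairs $(a, 1)$.

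\textbf{Case $R = \bZ[\vy]$ or $\bQ[\vy]$.} Instead use \cref{corollary uniformity of composed resultant} with $n = 1$. Its size $(PD)^{\poly(1, k)}$ is polynomial provided $k$ is treated as fixed, or more generally as bounded by the size of the dense input, since the dense input already has $\binom{k+D}{k}$ coefficients. Then convert with the multi-output version of \cref{lemma: arithmetic to threshold}, and divide each output coefficient exactly as before.

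\textbf{Finite-field cases.} Lift each coefficient to $\bZ$, or to $\bZ[z]$ (respectively $\bZ[z][\vy]$) using a monic lift $\hat g$ of the defining polynomial. Compute the integer resultant as above, then reduce. For $\bF_{p^a}$ coefficients, first pseudodivide by $\hat g$ (item 4 of \cref{theorem: integer ops in tc0}); since $\hat g$ is monic the leading-coefficient factor $d$ is $1$. Then take remainders modulo $p$ (item 3). Correctness follows from the final statement of \cref{definition: multivariate resultant all fields}: the resultant over $R$ is the image of the integer resultant.

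\textbf{Main obstacle.} The only real checks are size and height bookkeeping. The lifted inputs over $\bZ[z]$ have $\vx$-coefficients of height $\log p$ and $z$-degree below $a$, so the integer resultant has height polynomial in $(d_0 + d_1)\, a \log p$. Hence the reduction circuits have polynomial size. The second check is that the uniformity of the composed families is decidable in time polylogarithmic in their size; this is the same routine simulation of direct connection languages as in earlier lemmas, using that $P$ and all index bounds are computable from $(d_0, d_1, h, p, a)$ in polynomial time in their binary lengths.
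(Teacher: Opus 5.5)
Your proposal is correct and is the paper's argument: you specialize \cref{lemma: uniformity of homotopy resultant} and \cref{corollary uniformity of composed resultant} to two binary forms, where the circuits have polynomial size, convert them with \cref{lemma: arithmetic to threshold}, and handle $\bZ$, $\bZ[\vy]$ and the finite-field rings by the lift, compute, reduce steps from the proof of \cref{theorem: resultant in ch}. Your $n=1$ is the right index in the paper's convention of $n+1$ forms, although the paper's proof writes $n=2$.

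The one thing to retract is your ``more generally'' remark about $k$. The Kronecker-substitution interpolation inside \cref{corollary uniformity of composed resultant} costs about $(DP+1)^{k}$. That is not polynomial in the dense input size $\binom{k+D}{k}$ when $k$ is large and the degrees are small: for $D=1$ and $d_0=d_1=1$ it is $3^k$ against an input of size $O(k)$. So state the polynomial-size claim for fixed $k$, which is also all the paper's short proof justifies.
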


\begin{proof}
  The circuits we construct in \cref{lemma: uniformity of homotopy resultant} and \cref{corollary uniformity of composed resultant} for the multivariate resultant are polynomial sized when $n = 2$.
  Applying \cref{lemma: arithmetic to threshold} to these circuits immediately gives us polylogtime-uniform $\TC^{0}$ circuits over $\bQ$ and $\bQ\bs{\vy}$ respectively.
  For the remaining rings, we use the same arguments as in the proof of \cref{theorem: resultant in ch}, first lifting the input to $\bZ$ or $\bZ[y_1,\ldots,y_k]$, then computing the resultant, and finally projecting back down to the desired ring $R$.
\end{proof}

We note that constant-depth algebraic circuits for the resultant of two polynomials have been constructed prior to our work.
\textcite{AW24} construct circuits for such resultants over $\bQ$ using constant depth versions of the Girard--Newton identities, and \textcite{BKRRSS25b} construct circuits over any field using Lagrange inversion.
Neither of these works addressed the uniformity of their construction, but it is not hard to deduce uniformity of their constructions (in the case of the latter work, only when the field is $\bQ$) using the results in \cref{section: arithmetic circuits}.

\begin{remark} \label{remark: resultant is tc0-complete}
  It is not hard to show that the resultant of two univariate polynomials is complete for polylogtime-uniform $\TC^0$.
  As \cref{lemma: resultant in tc0} shows, the resultant of two polynomials can be computed in polylogtime-uniform $\TC^0$.
  To show that the resultant is $\TC^0$-hard, we reduce from the known $\TC^0$-complete problem of integer powering: given an $n$-bit integer $a$ and an $O(\log n)$-bit integer $b$, compute $a^b$.
  \textcite{HAB02} showed that powering is complete for logtime-uniform $\TC^0$.
  To reduce integer powering to the resultant, observe that if we regard the constant $1$ as a polynomial of degree $b$, then
  \[
    \res(a, 1) = a^b.
  \]
  In particular, to compute $a^b$, it suffices to compute the resultant of two univariate polynomials of degree at most $n^{O(1)}$.
  This proves that the resultant is hard for logtime-uniform $\TC^0$.
\end{remark}

We now show how resultant computations can be used to compute greatest common divisors of univariate polynomials.
The following reduction is from \cite{BKRRSS25b}.

\begin{lemma}[{\cite[Section~1.2]{BKRRSS25b}}]
  \label{lemma: gcd via filtering}
  Let $\bF$ be a field.
  If $f$ and $g$ are univariate polynomials in $\bF\bs{x}$, then
  \[
    \gcd\br{f, g}(y) = \frac{\Tt_{z} \res_{x}\br{z \cdot (y - x) + f(x), f(x) + u \cdot g(x)}}{\Tt_{z} \res_{x}\br{z + f(x), f(x) + u \cdot g(x)}}.
  \]
\end{lemma}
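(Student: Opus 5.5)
The plan is to evaluate both resultants with the univariate Poisson formula (the case $n=1$ of \cref{theorem: multivariate poisson formula}), taken over the roots of the common second argument. Set $h(x) \coloneqq f(x) + u\, g(x) \in \bF(u)[x]$, where $u$ is transcendental over $\bF$, and assume $f, g$ are not both constant, treating degenerate cases separately. Let $N \coloneqq \max(\deg f, \deg g)$. Then $h$ has degree exactly $N$ in $x$, and its leading coefficient $\lambda$ is a nonzero element of $\bF[u]$ of degree at most one. Let $\beta_1, \ldots, \beta_N \in \overline{\bF(u)}$ be the roots of $h$ listed with multiplicity. For a polynomial $A$ of $x$-degree $e$, the Poisson formula with $h$ in the role whose roots we sum over gives
\[
  \res_x(A, h) = \pm \lambda^{e} \prod_{i=1}^N A(\beta_i).
\]

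First, I apply this identity twice.
\begin{itemize}
  \item With $A = z(y - x) + f(x)$, the numerator becomes $\pm \lambda^{e_1} \prod_i \bigl(z(y-\beta_i) + f(\beta_i)\bigr)$.
  \item With $A = z + f(x)$, the denominator becomes $\pm \lambda^{e_2} \prod_i \bigl(z + f(\beta_i)\bigr)$.
\end{itemize}
When $\deg f \ge 1$ we have $e_1 = e_2 = \deg f$, so the powers of $\lambda$ cancel in the quotient. I will check that the sign depends only on the formal degrees, so the signs cancel as well. When $\deg f \le 1$, the $x$-degree of the first argument needs a short separate check, or the formal degree convention used in the statement must be fixed.

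Second, I identify which factors vanish at $z = 0$. Since $h(\beta_i) = 0$, we have $f(\beta_i) = -u\, g(\beta_i)$. Because $u \neq 0$, the condition $f(\beta_i) = 0$ holds exactly when $g(\beta_i) = 0$. Moreover, such a $\beta_i$ lies in $\overline{\bF}$, because it is then a common root of $f$ and $g$. Let $I$ be the set of indices $i$ with $f(\beta_i) = 0$.
\begin{itemize}
  \item In the numerator, each factor with $i \notin I$ has nonzero $z$-constant term $f(\beta_i)$. Each factor with $i \in I$ equals $z(y - \beta_i)$. Hence $\Tt_z$ of the numerator is $\pm\lambda^{e}\prod_{i\notin I} f(\beta_i)\cdot\prod_{i \in I}(y-\beta_i)$.
  \item In the denominator, the factors with $i \in I$ are exactly $z$. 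Hence $\Tt_z$ of the denominator is $\pm\lambda^{e}\prod_{i\notin I} f(\beta_i)$.
\end{itemize}
The quotient is therefore $\prod_{i\in I}(y-\beta_i)$. This is monic and equals $\prod_{\beta}(y-\beta)^{m_h(\beta)}$, where $\beta$ ranges over the common roots of $f$ and $g$, and $m_h(\beta)$ is the multiplicity of $\beta$ as a root of $h$.

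Third, I show that $m_h(\beta) = \min(m_f(\beta), m_g(\beta))$, which is the multiplicity of $\beta$ in $\gcd(f,g)$. Expand $f$ and $g$ around $\beta$ in powers of $(x - \beta)$. Let $k \coloneqq \min(m_f(\beta), m_g(\beta))$. The coefficient of $(x-\beta)^k$ in $h$ is $a + u b$, where $a, b \in \overline{\bF}$ are not both zero. Since $u$ is transcendental over $\overline{\bF}$, this coefficient is nonzero, while all lower coefficients of $h$ vanish. Hence $m_h(\beta) = k$. Consequently the quotient is $\gcd(f, g)(y)$, and in particular it does not depend on $u$.

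The main obstacle I expect is the bookkeeping in the first step. The degree convention for $\res_x$ must be fixed when $\deg f \le 1$, or when $z(y-x)+f$ changes $x$-degree. The sign and leading-coefficient factors must be confirmed to cancel. The edge cases $f = 0$ and constant $f$, where one of the trailing terms degenerates, must also be handled.
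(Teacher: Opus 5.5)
Your proposal is correct and follows essentially the same route as the paper. Both apply the univariate Poisson formula over the roots of $f+ug$, observe that taking the trailing coefficient in $z$ keeps exactly the factors $y-\beta$ at common roots of $f$ and $g$, and use that these roots have the same multiplicity in $f+ug$ as in $\gcd(f,g)$, which you justify via the transcendence of $u$ while the paper only asserts it. Your caution about the degenerate cases is warranted and is not a gap in your argument. When $f$ is constant, the quotient acquires an extra factor $\pm\lambda^{e_1-e_2}=\pm\lambda$; this is the paper's factor $a_0^{\max(0,\deg b-\deg c)}$, which its proof carries together with an ``up to sign'' qualifier and then drops in the statement. So the identity as stated really requires $\deg f\ge 1$ or a common formal-degree convention. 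For $\deg f\ge 1$, your check that the signs and the powers of $\lambda$ cancel gives exact equality.
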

\begin{proof}
  For any univariate polynomials $a, b, c \in \bF\bs{x}$, the Poisson formula implies that, up to a sign, we have
  \[
    \Tt_{z} \res_{x}\br{zb + c, a} = a_{0}^{\max\br{\deg b, \deg c}} \br{\prod_{\alpha \in \var{a} \setminus \var{c}} c(\alpha)^{m(\alpha)}} \times \br{\prod_{\alpha \in \var{a} \cap \var{c}} b(\alpha)^{m(\alpha)}},
  \]
  where $z$ is a new variable, $a_0$ is the leading coefficient of $a$, and $m(\alpha)$ is the multiplicity of $\alpha$ as a root of $a$.
  As a consequence, we have, up to a sign,
  \[
    \prod_{\alpha \in \var{a} \cap \var{c}} b(\alpha)^{m(\alpha)} = a_{0}^{\max\br{0, \deg b - \deg c}} \frac{\Tt_{z} \res_{x}\br{zb + c, a}}{\Tt_{z} \res_{x}\br{z + c, a}}.
  \]

  Now consider the polynomials $f, g \in \bF\bs{x}$ and let $u$ be a new variable.
  As a polynomial with coefficients in $\overline{\bF\br{u}}$, the polynomial $f(x) + u g(x)$ factors into a product of linear forms.
  All common factors of $f(x)$ and $f(x) + u g(x)$ are necessarily factors of $\gcd(f,g)$.
  Moreover, these common factors have the same multiplicity in $f(x) + u g(x)$ as they do in $\gcd(f, g)$.
  The result now follows from the discussion above invoked in the field $\bF\br{u, y}$, with $a(x) = f(x) + u \cdot g(x)$, $b(x) = (y - x)$, and $c(x) = f(x)$.
\end{proof}

By combining \cref{lemma: resultant in tc0} with \cref{lemma: gcd via filtering}, we obtain a polylogtime-uniform family of $\TC^0$ circuits to compute the GCD of two polynomials.

\begin{lemma} \label{lemma: uniform GCD}
  Let $R$ be one of the rings $\bZ$, $\bZ[y_1, \ldots, y_k]$, $\bF_p$, $\bF_p[y_1,\ldots,y_k]$, $\bF_{p^a}$, or $\bF_{p^a}[y_1,\ldots,y_k]$, where $p$ is a prime number.
  Then the greatest common divisor in the fraction field of $R$ of two univariate polynomials with coefficients in $R$ can be computed in polylogtime-uniform $\TC^0$.
  The coefficients of the GCD will be represented as $b_{i} / c$ with $b_{i}, c \in R$.
\end{lemma}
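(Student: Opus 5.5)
The plan is to implement the formula of \cref{lemma: gcd via filtering} directly, using the $\TC^0$ resultant circuits of \cref{lemma: resultant in tc0} for the two resultants and then a $\TC^0$ division step. Fix $f, g \in R[x]$ of degree at most $n$. Both numerator and denominator are trailing terms in $z$ of resultants of two univariate polynomials in $x$. Their coefficients lie in $R[z,y,u]$, and the degrees in $z$, $y$, and $u$ are at most $1$. Enlarging the variable set $\vy$ by $z, y, u$, I regard these inputs as polynomials with coefficients in $R' = R[z,y,u]$, where $R'$ is again one of the rings covered by \cref{lemma: resultant in tc0}. The input coefficients of $z(y-x)+f(x)$ and $f(x)+ug(x)$ are read off from those of $f$ and $g$ by trivial rewiring. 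This yields polylogtime-uniform $\TC^0$ circuits that output the dense coefficient lists of $N(z,y,u) \coloneqq \res_x(z(y-x)+f, f+ug)$ and $D(z,u) \coloneqq \res_x(z+f, f+ug)$. By \cref{lemma: resultant degree}, their degrees in each variable are $O(n)$.

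Next I extract the trailing terms in $z$. This is the step that needs real care, because ``the lowest $i$ with nonzero coefficient'' is not an arithmetic operation. In $\TC^0$ I will handle it as follows. For each $i$, a gate tests whether the $z^i$-coefficient (a polynomial in $y,u$ over $R$) is zero, which is an OR over its bits. A second layer computes, for each $i$, the indicator that $i$ is the least index with a nonzero coefficient, an AND of negations over smaller indices. The trailing coefficient is then $\bigvee_i (\text{indicator}_i \wedge \text{coeff}_i)$ bitwise. Gate names of the form $(i, \cdot)$ keep this polylogtime-uniform.

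Call the resulting polynomials $A(y,u) = \Tt_z N$ and $B(u) = \Tt_z D$. Then $\gcd(f,g)(y) = A/B$, and in particular $B$ divides $A$ in $\mathrm{Frac}(R)[u][y]$ and the quotient is independent of $u$. So it suffices to pick off coefficients. Let $B = \sum_k b_k u^k$, and let $k_0$ be its trailing index, found by the same trailing-selection gadget. Writing $A = \sum_k a_k(y) u^k$, the quotient $q(y)$ satisfies $a_{k_0}(y) = b_{k_0} q(y)$, since $A = qB$ and $q$ is $u$-free. So the $\gcd$ has coefficients $(\text{coeff}_{y^i} a_{k_0})/b_{k_0}$, which is exactly the required form $b_i/c$ with common denominator $c = b_{k_0} \in R$. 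For $R$ a polynomial ring over $\bZ$ or $\bF_{p^a}$, a representation over a common denominator is exactly what is asked. Monicity is implicit, since the leading coefficient of $a_{k_0}$ equals $b_{k_0}$.

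The main obstacle I expect is the trailing-term selection together with zero-testing over rings such as $\bF_{p^a}[\vy]$. There the zero test must be applied to canonical representatives: reduced mod $p$, and reduced mod $\hat g$ as in \cref{theorem: resultant in ch}. Since the circuits of \cref{lemma: resultant in tc0} already output reduced forms, the test becomes a plain bitwise OR. The remaining uniformity bookkeeping, composing polylogtime-uniform $\TC^0$ families, follows the simulation arguments used throughout.
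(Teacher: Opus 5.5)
Your proposal is correct and follows essentially the same route as the paper. You compute the two resultants over $R[z,y,u]$ via \cref{lemma: resultant in tc0}, extract the trailing coefficients in $z$ by a constant-depth selection of the lowest nonzero $z$-coefficient, and then read off the $u$-free quotient guaranteed by \cref{lemma: gcd via filtering} by comparing a single $u$-coefficient. The differences are cosmetic: you divide by the trailing $u$-coefficient of the denominator where the paper uses leading coefficients, which is equivalent because each $y$-coefficient of the numerator is a constant multiple of the denominator. Your selection gadget also checks every lower $z$-index across all $(y,u)$-monomials, which is stated slightly more carefully than in the paper.
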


\begin{proof}
  Suppose $f$ and $g$ are the inputs.
  We compute resultants $r_{1} \coloneqq \res_{x}\br{z \cdot (y - x) + f, f + u \cdot g}$ and $r_{2} \coloneqq \res_{x}\br{z + f, f + u \cdot g}$ by invoking \cref{lemma: resultant in tc0} for the ring $R\bs{z, u, y}$.
  From $r_{1}, r_{2}$, the trailing terms $a_{1} \coloneqq \Tt_{z} r_{1}$ and $a_{2} \coloneqq \Tt_{z} r_{2}$ can be computed by inspection of the coefficients as follows.
  For each $0 \le i \le \deg(a_1)$, we first determine if the coefficient of $z^i$ in $r_1$ corresponds to $\Tt_{z} r_1$ by checking that some monomial $z^i u^j y^k$ has a nonzero coefficient in $r_1$ and that all monomials of the form $z^{i'} u^j y^k$ for $i' < i$ have coefficients of zero.
  We then compute the bits of the coefficient of the monomial $u^j y^k$ in $\Tt_{z} r_1$: the $\ell$\ts{th} bit of this coefficient is $1$ exactly when there is some $i$ such that (1) the coefficient of $z^i$ corresponds to $\Tt_z r_1$, and (2) the $\ell$\ts{th} bit of the coefficient of $z^i u^j y^k$ in $r_1$ is a $1$.
  This computation can be carried out by polynomial-size constant-depth boolean circuits built from AND and OR gates in a straightforward manner.
  An entirely analogous computation produces the coefficients of $a_2 = \Tt_z r_2$.
  
  We have $a_{1} \in R\bs{u, y}$ and $a_{2} \in R\bs{u}$.
  By \cref{lemma: gcd via filtering}, we know that $a_{2}$ divides $a_{1}$, that the quotient is independent of $u$, and that the quotient is the GCD of $f$ and $g$.
  Suppose $a_{1} = \sum_i a_{1, i} y^{i}$ with $a_{1, i} \in R\bs{u}$.
  The only way for $a_{1} / a_{2}$ to be independent of $u$ is for each $a_{1, i}$ to be a multiple of $a_{2}$ in the fraction field of $R$.
  This multiple can therefore be read off of just the leading terms; that is, $a_{1, i} / a_{2}$ is simply the leading coefficient of $a_{1, i}$ divided by that of $a_{2}$.
  Setting $c$ to be the leading coefficient of $a_{2}$ and $b_{i}$ to be the leading coefficient of $a_{1, i}$ therefore gives us the GCD in the required form.
\end{proof}

Next, we show how resultant computations can be used to compute factors of a univariate polynomial $f$ that correspond to roots that occur with multiplicity at least $k$.
This computation is inspired by the squarefree decomposition algorithm in \cite{AW24}.
For a univariate polynomial $f$ with coefficients in a field $\bF$, we define $f_{>k}$ as
\[
  f_{>k}(x) \coloneqq \prod_{\alpha \in \var{f}, m(\alpha) > k} (y - \alpha)^{m(\alpha)},
\]
the product of all factors of multiplicity greater than $k$.

\begin{lemma}
  \label{lemma: aw multiplicity more than k part}
  Let $\bF$ be a field.
  If $f$ is a polynomial in $\bF\bs{x}$, then we have
  \[
    f_{>k} = c' \cdot \frac{\Tt_{z} \res_{x}\br{z \cdot (y - x) + \Hasse\br{f} + v \Hasse^{2}\br{f} + \cdots + v^{k-1} \Hasse^{k}\br{f}, f}}{\Tt_{z} \res_{x}\br{z + \Hasse\br{f} + v \Hasse^{2}\br{f} + \cdots + v^{k-1} \Hasse^{k}\br{f}, f}},
  \]
  where $\Hasse^{i}\br{f}$ is the $i^{th}$ Hasse derivative of $f$ and $c' \in \bF \setminus \bc{0}$.
\end{lemma}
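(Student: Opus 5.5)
The plan mirrors the proof of \cref{lemma: gcd via filtering}. I will reuse its Poisson-formula identity with the roles
\[
a(x) = f(x), \qquad b(x) = y - x, \qquad c(x) = \Hasse(f) + v\Hasse^{2}(f) + \cdots + v^{k-1}\Hasse^{k}(f),
\]
working over the field $\bF(v,y)$. That proof gives, up to a sign and a power of the leading coefficient $a_0$ of $f$,
\[
\frac{\Tt_{z}\res_x(zb + c, a)}{\Tt_z \res_x(z + c, a)} = a_0^{\max(0,\deg b-\deg c)} \prod_{\alpha \in \var{a}\cap\var{c}} b(\alpha)^{m(\alpha)}.
\]
Here $\var{a}$ and $\var{c}$ are taken in the algebraic closure of $\bF(v,y)$, and $m(\alpha)$ is the multiplicity of $\alpha$ as a root of $f$. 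Since $b(\alpha)=y-\alpha$, the right-hand side is exactly $\prod (y-\alpha)^{m(\alpha)}$ over the roots $\alpha$ of $f$ that are also roots of $c$, times a nonzero constant. This constant is the factor $c'$ in the statement; it is a power of $a_0$ up to sign, so it lies in $\bF\setminus\{0\}$.

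It remains to identify which roots of $f$ are roots of $c$. Every root of $f$ lies in $\overline{\bF}$, so for such an $\alpha$, $c(\alpha)$ is a polynomial in $v$ with coefficients $\Hasse^{i}(f)(\alpha)$, $i=1,\dots,k$. It vanishes identically in $v$ exactly when $\Hasse^{i}(f)(\alpha)=0$ for all $1\le i\le k$. I will then use the standard Taylor expansion with Hasse derivatives, valid in every characteristic:
\[
f(x+\alpha)=\sum_i \Hasse^{i}(f)(\alpha)\,x^{i}.
\]
It shows that $\alpha$ is a root of $f$ of multiplicity $m(\alpha)$ precisely when $\Hasse^{i}(f)(\alpha)=0$ for $i<m(\alpha)$ and $\Hasse^{m(\alpha)}(f)(\alpha)\neq 0$. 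Hence $c(\alpha)=0$ if and only if $m(\alpha)>k$, and the product above is exactly $f_{>k}$ evaluated in the variable $y$.

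The main obstacle is making sure the identity borrowed from \cref{lemma: gcd via filtering} really applies with coefficients in $\bF(v)$, in particular that the Poisson formula is used correctly when $\deg c$ may be smaller than $\deg b$ or than the formal degrees. I would first check that the formal degrees of $zb+c$ and $z+c$ as polynomials in $x$ are the same. I would then recompute the trailing coefficient in $z$ directly: $\res_x(zb+c,a)$ equals, up to the power of $a_0$, the product $\prod_{\alpha}(zb(\alpha)+c(\alpha))^{m(\alpha)}$ over the roots of $a$. Its lowest $z$-power coefficient is $\prod_{c(\alpha)\ne0}c(\alpha)^{m(\alpha)}\prod_{c(\alpha)=0}b(\alpha)^{m(\alpha)}$, and dividing by the analogous expression with $b=1$ gives the claim directly. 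Working from this product form also sidesteps the degree bookkeeping in the cited identity.
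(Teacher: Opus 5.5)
Your proposal is correct and follows the paper's proof. Both arguments apply the identity from \cref{lemma: gcd via filtering} with $a=f$, $b=y-x$ and $c=\Hasse(f)+v\Hasse^{2}(f)+\cdots+v^{k-1}\Hasse^{k}(f)$, and then note that a root $\alpha\in\overline{\bF}$ of $f$ is a root of $c$ exactly when $\Hasse^{i}(f)(\alpha)=0$ for $1\le i\le k$, i.e.\ when $m(\alpha)>k$. Your Hasse--Taylor justification and direct trailing-coefficient computation only make explicit what the paper leaves implicit; note that the formal degrees of $zb+c$ and $z+c$ need not agree (e.g.\ when $\deg c\le 0$), but this only contributes a power of the leading coefficient of $f$, which is absorbed into $c'$.
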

\begin{proof}
  We invoke the equation
  \[
    \prod_{\alpha \in \var{a} \cap \var{c}} b(\alpha)^{m(\alpha)} = a_{0}^{\max\br{0, \deg b - \deg c}} \frac{\Tt_{z} \res_{x}\br{zb + c, a}}{\Tt_{z} \res_{x}\br{z + c, a}},
  \]
  with $a(x) = f(x)$, $b(x) = (y - x)$, and $c(x) = \Hasse\br{f} + v \Hasse^{2}\br{f} + \cdots + v^{k-1} \Hasse^{k}\br{f}$, where $v$ is a new variable.
  Any $\alpha \in \var{f} \cap \var{c}$ is necessarily a root of $\Hasse\br{f}, \ldots, \Hasse^{k}\br{f}$.
  The roots of $f$ that are also roots of these Hasse derivatives are exactly those roots of $f$ that occur with multiplicity more than $k$.
  Thus, $f_{>k}(y)$ is precisely $\prod_{\alpha \in \var{f} \cap \var{c}} (y-\alpha)^{m(\alpha)}$ as claimed.
\end{proof}

We also define $f_{=k}$ as $f_{>k-1}(x) / f_{>k}$.
If $f_{=k}$ has degree $r$, then $f(x)$ has exactly $r / k$ distinct roots that occur with multiplicity exactly $k$.
Observe that the denominator in the expression in \cref{lemma: aw multiplicity more than k part} is independent of $y$, so to compute the $y$-degree of $f_{>k}$, it suffices to compute the numerator.
Using this observation, we design a polylogtime-uniform family of $\TC^0$ circuits to compute the number of distinct roots of a univariate polynomial.

\begin{lemma} \label{lemma: uniform root count}
  Let $R$ be one of the rings $\bZ$, $\bZ[y_1, \ldots, y_k]$, $\bF_p$, $\bF_p[y_1,\ldots,y_k]$, $\bF_{p^a}$, or $\bF_{p^a}[y_1,\ldots,y_k]$, where $p$ is a prime number.
  Let $\bK$ be the algebraic closure of the field of fractions of $R$.
  Then the number of distinct roots in $\bK$ of a univariate polynomial with coefficients in $R$ can be computed in polylogtime-uniform $\TC^0$.
\end{lemma}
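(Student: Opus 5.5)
The approach is to count distinct roots through the multiplicity-stratified parts $f_{>k}$ from \cref{lemma: aw multiplicity more than k part}. Let $D = \deg f$ and write $D_k \coloneqq \deg_y f_{>k}$ for $0 \le k \le D$, so that $D_0 = D$ and $D_D = 0$. Then $\deg f_{=k} = D_{k-1} - D_k$, and the number of distinct roots is
\[
  \sum_{k=1}^{D} \frac{D_{k-1} - D_k}{k}.
\]
This is an iterated sum of $D$ rational numbers whose numerators and denominators have $O(\log D)$ bits, so it lies in logtime-uniform $\TC^0$ by \cref{theorem: integer ops in tc0}. Since each quotient is an integer, one could also take exact integer divisions. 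The task therefore reduces to computing all the degrees $D_k$ in parallel, in polylogtime-uniform $\TC^0$.

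For a fixed $k$, I would first form the Hasse derivatives $\Hasse^i(f)$ for $i \le k$. Their coefficients are $\binom{j}{i} a_j$. The binomial coefficients are obtained by iterated multiplication and integer division (\cref{theorem: integer ops in tc0}); over $\bF_p$ and the other positive-characteristic rings, we compute the binomial over $\bZ$ and reduce it modulo $p$. This gives the polynomial $c_k \coloneqq \sum_{i=1}^{k} v^{i-1}\Hasse^i(f) \in R[v][x]$. Next, I invoke \cref{lemma: resultant in tc0} over the ring $R[z,v,y]$ to compute $r_k \coloneqq \res_x(z(y-x) + c_k, f)$. The circuit then extracts $\Tt_z r_k$ exactly as in the proof of \cref{lemma: uniform GCD}, using constant-depth AND/OR logic that locates the smallest $z$-exponent with a nonzero coefficient. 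By the remark after \cref{lemma: aw multiplicity more than k part}, the denominator there does not depend on $y$ and $c'$ is a nonzero constant. Hence $D_k$ equals the $y$-degree of $\Tt_z r_k$, which is the largest $e$ such that some monomial $y^e v^a$ has a nonzero coefficient. Computing this is again a constant-depth OR/threshold search over polynomially many coefficient-nonzeroness bits. When $R$ has extra variables $y_1,\ldots,y_k$, we use a fresh name for the root variable and treat the $y_i$ as part of the coefficient ring.

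Uniformity follows the same pattern as before. Gates are named $(k, \text{stage}, v)$, and the direct-connection machine simulates the machines for the resultant circuits of \cref{lemma: resultant in tc0}, for integer arithmetic, and for binomial computation. The degree bounds in $z$, $v$ and $y$, namely $\poly(D)$, are computable from the index in polylogarithmic time, so the arities and output positions are determined efficiently.

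The main obstacle I expect is justifying \cref{lemma: aw multiplicity more than k part} in positive characteristic, specifically that the common roots of $f, \Hasse^1 f, \ldots, \Hasse^k f$ are exactly the roots of multiplicity $>k$. For Hasse derivatives this does hold in every characteristic, since $\Hasse^i$ evaluated at $\alpha$ gives the Taylor coefficient of $(x-\alpha)^i$; that is the reason for using Hasse rather than ordinary derivatives. I would record this fact and then check that $\Tt_z$ is computed over the full coefficient ring $R[v,y]$ without accidental cancellation. The remaining bookkeeping of bit lengths is routine, because all intermediate objects have degree $\poly(D)$ and height polynomial in the input.
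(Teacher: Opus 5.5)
Your proposal is correct and follows the paper's proof essentially step for step. You compute the Hasse derivatives, obtain $\Tt_z \res_x(z(y-x) + \sum_{i\le k} v^{i-1}\Hasse^i(f), f)$ via \cref{lemma: resultant in tc0} over $R[z,v,y]$, read off its $y$-degree as $\deg f_{>k}$, and combine these degrees as $\sum_k (\deg f_{>k-1} - \deg f_{>k})/k$ using $\TC^0$ integer arithmetic, which is exactly the paper's route; your extra details (the trailing-term extraction borrowed from \cref{lemma: uniform GCD} and the characteristic-free Taylor-coefficient reading of Hasse derivatives behind \cref{lemma: aw multiplicity more than k part}) are consistent with it and slightly more explicit.
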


\begin{proof}
  Let $f \in R\bs{x}$ be the input polynomial.
  First, we compute all the Hasse derivatives $\Hasse^{i}\br{f}$ for $1 \leq i \leq \deg f$ in parallel.
  This involves only integer arithmetic.
  Then we compute the resultants $r_{k} \coloneqq \Tt_{z} \res_{x}\br{z \cdot (y - x) + \Hasse\br{f} + v \Hasse^{2}\br{f} + \cdots + v^{k-1} \Hasse^{k}\br{f}, f}$ using the circuit from \cref{lemma: resultant in tc0} for the ring $R\bs{z, y, v}$.
  By \cref{lemma: aw multiplicity more than k part} and the discussion following it, the $y$-degree of $r_{k}$ is exactly the $y$-degree of $f_{>k}$.
  Once these degrees are computed for all $k$ up to $\deg f$, we can compute
  \[
    \deg f - \deg f_{>1} + \frac{\deg f_{>1} - \deg f_{>2}}{2} + \cdots + \frac{\deg f_{>d-1}}{d}
  \]
  using circuits for integer arithmetic, and this is the desired number of distinct roots of $f$ in $\bK$.
\end{proof}

Now that we have polylogtime-uniform $\TC^0$ algorithms to compute GCD's and the number of roots, we are ready to state the main result of this subsection.
Using \cref{lemma: hn counting to resultant}, we obtain a $\CH$ algorithm to count the number of solutions to a zero-dimensional system of polynomial equations.

\begin{theorem} \label{theorem: counting nullstellensatz in ch}
  Let $R$ be one of the rings $\bZ$, $\bZ[y_1, \ldots, y_k]$, a number field $\bK$, $\bK[y_1, \ldots, y_k]$, $\bF_p$, $\bF_p[y_1,\ldots,y_k]$, $\bF_{p^a}$, or $\bF_{p^a}[y_1,\ldots,y_k]$, where $p$ is a prime number.
  Let $\bL$ be the algebraic closure of the field of fractions of $R$.\roberttodo{Changed from $\bK$ to $\bL$ so that $\bK$ is available for a number field}
  There is a $\FP^{\CH}$ algorithm that counts the number of solutions in $\bL^n$ to zero-dimensional systems of polynomial equations with coefficients in $R$.
\end{theorem}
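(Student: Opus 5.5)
We follow the proof of \cref{theorem: nullstellensatz in ch}, replacing \cref{lemma: hn to resultant} with \cref{lemma: hn counting to resultant}. The pipeline is a polynomial-time randomized reduction to exponentially large resultant computations, followed by the $\TC^0$ post-processing of \cref{lemma: uniform GCD,lemma: uniform root count}, glued together with \cref{lemma: threshold with oracle to ch,lemma: threshold with oracle to ch multioutput}.

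\textbf{Preprocessing.} Let $f_1,\ldots,f_m\in R[x_1,\ldots,x_n]$ have degree at most $d$ and height at most $h$.
\begin{enumerate}
  \item If $R$ is $\bK$ or $\bK[\vy]$, apply \cref{lemma: number field to rationals} to replace the system by $f'_1,\ldots,f'_m,g$ over $\bQ$ (resp.\ $\bQ[\vy]$). This multiplies the point count by $\deg g$, so at the end we divide by $\deg g$. Clearing denominators then reduces to $\bZ$ or $\bZ[\vy]$.
  \item For finite coefficient fields with fewer than $100nd^{2n}$ elements, pass to a large enough extension via \cref{lemma: passing to bigger finite field}. This does not change the number of points over the algebraic closure.
  \item Decide whether the variety is empty using \cref{theorem: nullstellensatz in ch}, and output $0$ if so. Otherwise the hypotheses of \cref{lemma: hn counting to resultant} hold.
\end{enumerate}

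\textbf{Main computation.} Run the Monte Carlo algorithm of \cref{lemma: hn counting to resultant} to obtain $G_{i,1},G_{i,2}$ with coefficients in $R[u,t]$. By \cref{theorem: resultant in ch}, applied to the ring $R[u,t]$ (which has the form $\bZ[\vy]$, $\bF_p[\vy]$, etc.), every coefficient of $r_j\coloneqq\res(G_{0,j},\ldots,G_{n,j})$ is computable in $\CH$. These polynomials have exponential degree but polynomial-height coefficients, so the sizes fit the hypotheses of the composition lemmas.

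Next, form $\TP_u\Tt_t r_j$ using a constant-depth boolean circuit of polynomial size in the coefficient list. This is analogous to the trailing-term extraction in the proof of \cref{lemma: uniform GCD}; $\TP_u$ only additionally shifts indices, selected by an AND/OR circuit. The output is a univariate polynomial in $u$. Feed the two resulting polynomials into the polylogtime-uniform $\TC^0$ GCD circuits of \cref{lemma: uniform GCD}. The GCD comes out as $b_i/c$; clearing $c$ yields a polynomial in $R[u]$ with the same roots. Then apply the root-counting circuits of \cref{lemma: uniform root count}.

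The whole post-processing is one polylogtime-uniform $\TC^0$ family composed with an exponential-size family whose language is in $\CH$. By \cref{lemma: threshold with oracle to ch multioutput}, each output bit of the count is decidable in $\CH$. The count is at most $d^n$, so it has polynomially many bits, and an $\FP^{\CH}$ machine can query every bit and write it out.

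\textbf{Randomness and the main obstacle.} The step I expect to be most delicate is handling randomness in a \emph{functional} setting. In the decision case we simply used $\BPP^{\CH}=\CH$; here we must output a number. My approach is amplification with a majority vote on the output:
\begin{itemize}
  \item Run the reduction independently polynomially many times to get error $2^{-\poly}$.
  \item The count, as a function of input and random string, is computable bitwise in $\CH$. So the question ``is bit $\ell$ of the majority answer equal to $1$?'' is a $\PP^{\CH}=\CH$ question, using a majority quantifier over random strings.
  \item Alternatively, for each candidate value $v\le d^n$ bit by bit, ask a $\PP^{\CH}$ oracle whether most random strings yield a count $\ge v$, and binary search on $v$.
\end{itemize}
A secondary check is that the heights and degrees produced by \cref{lemma: hn counting to resultant} keep the circuit size singly exponential in the input length, as \cref{lemma: threshold to ch} requires. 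This follows from \cref{corollary uniformity of composed resultant}, which bounds the size by $(PD)^{\poly(n,k)}$ with $D=1$ and $k=2$.
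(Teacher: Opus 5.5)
Your proposal is correct and follows essentially the same route as the paper: you reduce via \cref{lemma: hn counting to resultant}, compute the two resultants in $\CH$ with \cref{theorem: resultant in ch}, extract $\mathrm{TP}_u\Tt_t$, compose with the uniform $\TC^0$ circuits of \cref{lemma: uniform GCD,lemma: uniform root count} via the oracle-composition lemmas, and conclude from the functional version of $\BPP^{\CH}$. Where you differ, you are being more careful about steps the paper glosses: a constant-depth selection circuit for the trailing parts instead of guess-and-verify, the explicit emptiness check, the division by $\deg g$, and the bitwise-majority argument for passing to $\FP^{\CH}$. One small correction: the resultant coefficients generally have exponential rather than polynomial height, but this is harmless, since the composition lemmas allow exponentially long outputs.
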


\begin{proof}
  Let $f_1, \ldots, f_m \in R[x_1, \ldots, x_n]$ be the polynomials whose roots we must count.
  Let $d$ and $h$ be bounds on the degrees and heights, respectively, of the $f_i$.
  If $R$ is either $\bK$ or $\bK\bs{y_{1}, \dots, y_{k}}$, then we use \cref{lemma: number field to rationals} to reduce to the case when $R = \bQ$ or $\bQ\bs{y_{1}, \dots, y_{k}}$, respectively, and then subsequently pass to a common denominator to reduce to the case where $R = \bZ$ or $R = \bZ[y_1, \ldots, y_k]$, since scaling by field elements does not change the number of roots.
  If $R$ is one of the rings $\bF_{p}$, $\bF_{p^{a}}$, $\bF_{p}\bs{y_{1}, \dots, y_{k}}$, or $\bF_{p^{a}}\bs{y_{1}, \dots, y_{k}}$, and if $p$ (or $p^{a}$) is smaller than $100 n d^{2n}$, then we use \cref{lemma: passing to bigger finite field} to pass to an extension $\bF_{p^{b}}$ or $\bF_{p^{b}}\bs{y_{1}, \dots, y_{k}}$ such that $p^{b} \geq 100 n d^{2n}$.
  The degree of the extension required is polynomial in the input size, therefore doing so only increases the heights of the inputs by a polynomial factor.
  Similarly, arithmetic in the extension can be simulated efficiently using arithmetic in the original ring $R$.
  Further, the number of roots is unchanged by passing to extensions.

  Apply the randomized algorithm of \cref{lemma: hn counting to resultant} to the input $f_1, \ldots, f_m$ and denote the resulting polynomials by $G_{i,j} \in R[t,u,x_0,\ldots,x_n]$ where $i \in \bc{0,1,\ldots,n}$ and $j \in \bc{1, 2}$.
  To count the number of roots of $f_1 = \cdots = f_m = 0$, we must count the number of distinct roots of the GCD of $\mathrm{TP}_{u} \Tt_{t} \res(G_{0,1}, \ldots, G_{n,1})$ and $\mathrm{TP}_{u} \Tt_{t} \res(G_{0,2}, \ldots, G_{n,2})$.

  This task can be solved in $\CH$ as follows.
  For $j = 1, 2$, we compute $R_j(t,u) \coloneqq \res(G_{0,j}, \ldots, G_{n,j})$ in $\CH$ using the algorithm of \cref{theorem: resultant in ch}.
  The resultant $R_j$ is computed with respect to the $\vx$ variables and is a polynomial in $R[t,u]$.
  We first nondeterministically guess the exponents of the trailing monomials of $R_j$ with respect to $t$, and then verify  all smaller monomials in $t$ have a coefficient of zero.
  This can be done in $\coNP$ with a $\CH$ oracle, since coefficient in $u$ of the coefficients of any single monomial in $t$ can be computed in $\CH$.
  Since $\coNP^{\CH} = \CH$, this verification task can be performed in $\CH$.
  
  We then nondeterministically guess the exponent of the trailing monomial of $u$ in $\Tt_{t} R_{j}$.
  This again can be done in $\CH$.
  We then have access to the polynomials $\mathrm{TP}_{u} \Tt_{t} R_{j}$ in the form of $\CH$ oracles for the coefficients of each monomial.
  Using \cref{lemma: threshold with oracle to ch} and \cref{lemma: uniform GCD}, we can obtain $\CH$ oracles to the coefficients of the GCD of these two polynomials.
  These coefficients will be represented as ratios with a common denominator, but we can ignore the denominator since scaling will not affect the next step.
  Finally, using \cref{lemma: threshold with oracle to ch} once more with \cref{lemma: uniform root count}, we can count the number of distinct roots, which is the same as the number of roots of the original system.

  Thus, we have a randomized reduction from the task of counting the number of roots of $f_1 = \cdots = f_m = 0$ to a problem that can be solved in $\FP^{\CH}$.
  This reduction succeeds with probability at least $2/3$, so the counting version of Hilbert's Nullstellensatz is in the functional version of $\BPP^{\CH}$, which in turn lies in $\FP^{\CH}$.
\end{proof}

%% file: sections/applications.tex
In this section, we give a few examples of problems that can be reduced to Hilbert's Nullstellensatz.
These problems were previously only known to be in $\PSPACE$, and our results show that they can in fact be solved in $\CH$.
This list is far from exhaustive.
In this section, we let $\bF$ denote any field for which our results hold, namely $\bQ$, a number field $\bK$, a finite field $\bF_{q}$, or a function field over one of these fields.
In the function field case we assume that the inputs are restricted to polynomials instead of arbitrary rational functions.

The first application is to deciding radical ideal membership in the algebraic closure.
Equivalently, this is the problem of testing if a given polynomial vanishes on a given variety.

\roberttodo[inline]{I upgraded these results from lemmas to corollaries}

\begin{corollary}
  \label{lemma: application radical membership}
  Given polynomials $g, f_{1}, \dots, f_{m} \in \bF\bs{x_{1}, \dots, x_{n}}$, deciding if 
  \[
    g \in \radideal{\ideal{f_{1}, \dots, f_{m}} \overline{\bF}\bs{x_{1}, \dots, x_{n}}}
  \]
  can be done in $\CH$.
\end{corollary}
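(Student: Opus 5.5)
We reduce radical ideal membership to Hilbert's Nullstellensatz via the Rabinowitsch trick and then invoke \cref{theorem: nullstellensatz in ch}. Introduce a fresh variable $x_{n+1}$ and form the system
\[
  f_1 = \cdots = f_m = 0, \qquad 1 - x_{n+1} g = 0
\]
in $\bF\bs{x_1, \ldots, x_{n+1}}$. The first step is to verify the classical equivalence: $g \in \radideal{\ideal{f_1,\ldots,f_m}\overline{\bF}\bs{\vx}}$ if and only if this augmented system has no solution in $\overline{\bF}^{n+1}$.

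For the forward direction, suppose $g^r = \sum_i a_i f_i$ over $\overline{\bF}$ and $(\valpha, \beta)$ is a common zero. Then $g(\valpha) = 0$ because every $f_i(\valpha) = 0$. This contradicts $1 - \beta g(\valpha) = 0$.

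For the converse, suppose the augmented system is unsatisfiable. By the weak Nullstellensatz over $\overline{\bF}$,
\[
  1 = \sum_i b_i f_i + b_0 (1 - x_{n+1} g).
\]
Substitute $x_{n+1} = 1/g$ and clear denominators by multiplying through by a suitable power $g^r$. This exhibits $g^r$ in the ideal. Equivalently, one can argue geometrically: $g$ vanishes on $\var{f_1, \ldots, f_m}$ exactly when no point of that variety admits a $\beta$ with $\beta g(\valpha) = 1$, and then the strong Nullstellensatz identifies vanishing on the variety with radical membership.

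The second step is to check that the reduction is polynomial-time and preserves the input format assumed in \cref{theorem: nullstellensatz in ch}.
\begin{itemize}
  \item The new polynomial $1 - x_{n+1} g$ has degree $\deg g + 1$.
  \item Its coefficients are those of $g$ up to sign, together with the constant $1$. So its height is unchanged, and it has coefficients in the same ring $R$. This covers the function field case, where the coefficients are polynomials in $\vy$.
  \item In the dense representation, the number of monomials grows from $\binom{n+d}{n}$ to $\binom{n+1+d+1}{n+1}$. This is polynomial in the input size, since $n$ and $d$ enter the input size through the dense encoding of $g$ and the $f_i$.
\end{itemize}

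Finally, run the $\CH$ procedure of \cref{theorem: nullstellensatz in ch} on the augmented system and output the negation of its answer. Since $\CH$ is closed under complement and polynomial-time many-one reductions, the problem lies in $\CH$.

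The only point requiring care, rather than a genuine obstacle, is that ``dense'' size may grow when passing from $n$ to $n+1$ variables. To make this watertight, I would simply note that the input size is at least the number of monomials of degree $\le d$ in $n$ variables. With one extra variable and one extra degree, the count increases by at most a factor of $O(n+d)^2$, so the growth is polynomial.
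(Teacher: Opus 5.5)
Your proposal is correct and takes the same route as the paper: the Rabinowitsch trick turns radical membership into unsatisfiability of $f_1 = \cdots = f_m = 1 - zg = 0$, and you then negate the $\CH$ decision procedure of \cref{theorem: nullstellensatz in ch}. Your checks on degree, height and the growth of the dense encoding are details the paper leaves implicit.
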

\begin{proof}
  Consider the system of equations $f_{1} = \cdots = f_{m} = 1 - zg = 0$, where $z$ is a new variable.
  By the Nullstellensatz, the condition that $g \in \radideal{\ideal{f_{1}, \dots, f_{m}} \overline{\bF}\bs{x_{1}, \dots, x_{n}}}$ is equivalent to the condition that this system is not satisfiable.
  This is usually called the Rabinowitsch trick.
  Satisfiability can be decided in $\CH$, and therefore so can membership in the ideal $\radideal{\ideal{f_1, \ldots, f_m} \overline{\bF}\bs{x_1, \ldots, x_n}}$.
\end{proof}

The second problem is that of computing the dimension of an algebraic variety given its defining equations.
To make this a decision problem, our algorithms will also take as input a number and will decide if the dimension equals this number.
A similar reduction between computing the dimension of a variety and Hilbert's Nullstellensatz was given by \textcite{koiran1997randomized}.
\begin{corollary}
  \label{lemma: application dimension}
  Given polynomials $f_{1}, \dots, f_{m} \in \bF\bs{x_{1}, \dots, x_{n}}$ and an integer $r$,  deciding if 
  \[
    dim \var{f_{1}, \dots, f_{m}} = r
  \]
  can be done in $\CH$.
\end{corollary}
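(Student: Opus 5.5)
We reduce deciding $\dim \var{f_1, \ldots, f_m} = r$ to a constant number of calls to the Nullstellensatz (\cref{theorem: nullstellensatz in ch}), using random linear sections as in \cref{lemma: intersect dimension drop}. Let $V \coloneqq \var{f_1, \ldots, f_m} \subseteq \overline{\bF}^n$; its degree is at most $d^n$ by Bézout's inequality. First handle $V = \emptyset$: if $r \ge 0$ we reject, and emptiness is decided by one call to $\HN$. For $V \neq \emptyset$ the key fact is that $\dim V \ge s$ holds exactly when $V$ meets a generic affine-linear subspace of codimension $s$, and $\dim V \le s-1$ holds exactly when $V$ misses one. So, with $B$ a subset of the coefficient subfield of size $\Theta(n d^n)$ (passing to an extension via \cref{lemma: passing to bigger finite field} over small finite fields, and using $\{1, \ldots, |B|\}$ in characteristic zero), pick random linear polynomials $\ell_1, \ldots, \ell_{r+1}$ with coefficients from $B$. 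Then decide $\dim V = r$ by checking that $\var{f_1, \ldots, f_m, \ell_1, \ldots, \ell_r} \neq \emptyset$ and that $\var{f_1, \ldots, f_m, \ell_1, \ldots, \ell_{r+1}} = \emptyset$.

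For correctness, apply \cref{lemma: intersect dimension drop} componentwise. If $\dim V = r$, intersecting with $\ell_1, \ldots, \ell_r$ drops the dimension of a top-dimensional component one step at a time, leaving a nonempty zero-dimensional set. Adding $\ell_{r+1}$ then empties it, and likewise kills every lower-dimensional component. If $\dim V > r$, the first intersection is nonempty, but so is the second, so we reject. If $\dim V < r$, then $r \ge \dim V + 1$ generic sections already empty $V$, so the first test fails. Each step fails with probability at most $2(r+1)d^n/|B|$, by union bounds over the at most $d^n$ total degree of components. Because the sections are linear, they do not increase degree, so the degree bound persists through the intersections.

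Complexity. Each test is an instance of $\HN$ over $\bF$ with polynomially many polynomials of degree at most $d$ and polynomially bounded height, so it lies in $\CH$. The whole procedure is a randomized polynomial-time machine making two $\CH$ queries and succeeding with probability at least $2/3$ after amplification, which places it in $\BPP^{\CH} \subseteq \CH$, as in the end of the proof of \cref{theorem: nullstellensatz in ch}. The $\HN$ algorithm is itself randomized with bounded error, but it is an honest $\CH$ language, so no error compounds. The main point needing care is the affine version of \cref{lemma: intersect dimension drop}: we must ensure that generic affine sections genuinely decrease dimension by one on each positive-dimensional component and empty zero-dimensional ones. That lemma as stated gives this, with the factor-$2$ loss in the probability. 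Alternatively, we would homogenize and use the projective version on the closure of $V$, removing the hyperplane at infinity by an extra random section.
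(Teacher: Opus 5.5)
Your proposal is correct and follows the paper's proof: a randomized reduction to $\HN$ via random affine linear sections and \cref{lemma: intersect dimension drop}, concluded with $\BPP^{\CH} \subseteq \CH$. The only difference is that the paper tests satisfiability of $f_1 = \cdots = f_m = \ell_1 = \cdots = \ell_k = 0$ for every $k \in [n]$ to recover the dimension outright, whereas you query only $k = r$ and $k = r+1$, which is a harmless streamlining.
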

\begin{proof}
  We design a randomized Turing reduction to satisfiability. 
  Let $d \coloneqq \max \deg f_{i}$ and let $B \subseteq \bF$ be a subset of size at least $100 n^{2} d^{n}$.
  If $\bF$ is not big enough to pick this subset, we pass to a sufficiently-large field extension using \cref{lemma: passing to bigger finite field}.
  Let $\ell_{1}, \dots, \ell_{n}$ be random linear polynomials whose coefficients are picked uniformly and independently from $B$.
  For any integer $k$, if $\dim \var{f_{1}, \dots, f_{m}} \geq k$, then with probability at least $1 - 2 n d^{n} / \abs{B}$, the system $f_{1} = \cdots = f_{m} = \ell_{1} = \cdots = \ell_{k} = 0$ is satisfiable (\cref{lemma: intersect dimension drop}).
  Conversely, if $\dim \var{f_{1}, \dots, f_{m}} < k$, then with the same probability, the system $f_{1} = \cdots = f_{m} = \ell_{1} = \cdots = \ell_{k} = 0$ is unsatisfiable.
  Therefore, based on the satisfiability of these systems for all $k \in [n]$, we can compute the dimension of $\var{f_{1}, \dots, f_{m}}$ and decide if this equals $r$.
\end{proof}

The final application we discuss is the computation of the rank of a tensor over $\overline{\bF}$.
We are given a tensor $T \in \bF^{d_1} \otimes \cdots \otimes \bF^{d_k}$ and a natural number $r \in \bN$ as input, and we must decide if $T$ the rank of $T$ is equal to $r$.
It is straightforward to write down a system of polynomial equations that is solvable if and only if $T$ has rank at most $r$.
Since \cref{theorem: nullstellensatz in ch} allows us to decide the solvability of this system of equations over the algebraic closure $\overline{\bF}$ in $\CH$, we can likewise decide in $\CH$ whether $T$ has rank $r$ when viewed as a tensor over the algebraic closure $\overline{\bF}$.

\begin{corollary}
  \label{lemma: application tensor rank}
  Given an integer $r$ and a tensor $T \in \bF^{d_{1}} \tensor \cdots \tensor \bF^{d_{k}}$ explicitly as a list of entries, deciding if the rank of $T$ as an element of $\overline{\bF}^{d_{1}} \tensor \cdots \tensor \overline{\bF}^{d_{k}}$ is exactly $r$ is in $\CH$.
\end{corollary}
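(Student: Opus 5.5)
Since this is a decision problem, the plan is to write down a system of polynomial equations whose satisfiability over $\overline{\bF}$ is equivalent to $\operatorname{rank}(T) \le s$, and then query \cref{theorem: nullstellensatz in ch} twice. Deciding whether the rank is exactly $r$ amounts to checking that the rank is at most $r$ but not at most $r-1$. For $r = 0$ the second check is dropped, and the first reduces to testing whether $T = 0$, which is done by inspection.

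For a bound $s$, introduce fresh variables $a^{(p)}_{q,i}$ for $p \in [s]$, $q \in [k]$ and $i \in [d_q]$. For every index tuple $(i_1, \ldots, i_k) \in [d_1] \times \cdots \times [d_k]$, impose the equation
\[
  \sum_{p=1}^{s} a^{(p)}_{1,i_1} a^{(p)}_{2,i_2} \cdots a^{(p)}_{k,i_k} - T_{i_1,\ldots,i_k} = 0 .
\]
Any solution over $\overline{\bF}$ writes $T$ as a sum of $s$ rank-one tensors, namely $\sum_p a^{(p)}_1 \otimes \cdots \otimes a^{(p)}_k$. Conversely, if $T$ has rank at most $s$, pad any decomposition with zero rank-one terms to get a solution. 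So the system is satisfiable exactly when the rank of $T$ over $\overline{\bF}$ is at most $s$.

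The one thing to check is that the reduction has polynomial size in the dense representation the paper uses. If we naively list all monomials up to total degree $k$ in $s\sum_q d_q$ variables, the number of coefficients is exponential in $k$. The sum over $p$ also contributes exponentially many products in the naive expansion. However, the input itself has $\prod_q d_q$ entries, so that product is already at most the input size. Moreover, we may assume $r \le \prod_q d_q$, since the rank never exceeds that number, and we simply answer ``no'' if $r$ is larger.

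If the number of monomials of degree at most $k$ is still too large, introduce auxiliary variables $y^{(p)}_{j,(i_1,\ldots,i_j)}$ for the partial products. These satisfy the quadratic relations $y^{(p)}_{j,(i_1,\ldots,i_j)} = y^{(p)}_{j-1,(i_1,\ldots,i_{j-1})}\, a^{(p)}_{j,i_j}$. The number of auxiliary variables is at most $s \sum_j \prod_{q\le j} d_q \le s\,k \prod_q d_q$, which is polynomial in the input size. With them, every equation has degree at most $2$, and its dense representation has polynomial size in the number of variables. This is the same extension-variable trick the paper mentions for converting straight-line programs, and it preserves satisfiability.

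Finally, we apply \cref{theorem: nullstellensatz in ch} to the systems for $s = r$ and $s = r-1$. The answer is the conjunction of ``the first system is satisfiable'' and ``the second is unsatisfiable''. $\CH$ is closed under complement and under making two queries, since $\P^{\CH} = \CH$. The coefficients are entries of $T$ or $\pm 1$, so the input field and its encoding are unchanged, and the same field cases as in \cref{theorem: nullstellensatz in ch} apply.
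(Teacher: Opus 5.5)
Your proof is correct and follows the same route as the paper: encode ``rank $\le s$'' as the polynomial system $\sum_{p} \prod_q a^{(p)}_{q,i_q} = T_{i_1,\ldots,i_k}$, answer ``no'' when $r > \prod_q d_q$, and call the Nullstellensatz algorithm for $s = r$ and $s = r-1$. The paper only asserts that this system can be written down in polynomial time; your auxiliary-variable reduction to degree $2$ is an extra step it omits, and it is actually needed to keep the dense representation polynomial-size when $k$ is large, so it tightens the paper's argument rather than changing it.
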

\begin{proof}
  Every tensor in $\bF^{d_1} \otimes \cdots \otimes \bF^{d_k}$ has rank at most $d_1 \cdots d_k$, so if $r > d_1 \cdots d_k$, we immediately reject the input.
  Otherwise, let $x_{i, j, p}$ be a set of variables where $i \in [k]$, $j \in [d_i]$, and $p \in [r]$.
  For each index $(j_1, \ldots, j_k) \in [d_1] \times \cdots \times [d_k]$, we write the polynomial equation
  \[
    T_{j_1, \ldots, j_k} = \sum_{p=1}^r \prod_{i=1}^k x_{i, j_i, p}.
  \]
  Taken together, these equations express that the tensor $T$ can be written as the sum of $r$ rank-one tensors, where the $p$\ts{th} such tensor is the outer product of the vectors $\vx_{1, \bullet, p} \otimes \cdots \otimes \vx_{k, \bullet, p}$ for $\vx_{i, \bullet, p} = (x_{i, 1, p}, \ldots, x_{i, d_i, p})$.
  It follows that this system of equations is satisfiable over $\overline{\bF}$ if and only if $T$ has rank at most $r$ as a tensor over $\overline{\bF}$.

  This is a collection of at most $d_1 \cdots d_k$ equations in at most $k \cdot (d_1 \cdots d_k)^2$ variables.
  Since the tensor $T$ is given as a list of $d_1 \cdots d_k$ field elements, we can write down this system of equations in time that is polynomial in the size of the input.
  Applying the algorithm of \cref{theorem: nullstellensatz in ch} allow us to decide if $T$ has rank at most $r$ in $\CH$.
  To decide if the rank of $T$ is exactly $r$, we use the same procedure to test if $T$ has rank at most $r - 1$, again in $\CH$.
\end{proof}